\newcount\Comments  

\newif\iflong
\longtrue

\newif\iffinal
\finaltrue

\Comments=1

\iffinal
\Comments=11
\fi

\newif\ifold

\documentclass[a4paper,UKenglish,cleveref,autoref, thm-restate]{lipics-v2021}

\nolinenumbers
\hideLIPIcs

\usepackage{graphicx} \usepackage{amsmath}
\usepackage{mathtools}
\usepackage{amsthm}
\usepackage{subcaption}
\usepackage{amssymb}
\usepackage{tikz}
\usepackage{xspace}
\usepackage{color}
\usepackage{colortbl}

\newcommand{\mg}{\mathcal{G}}
\newcommand{\Oh}{\mathcal{O}}
\newcommand{\fes}{\mathrm{fes}}
\newcommand{\mc}{\mathcal{C}}

\newcommand{\minlab}{frugal\xspace}

\newcommand{\prob}[3]{\begin{quote}  \textsc{#1}\\  \textbf{Input:} #2\\  \textbf{Question:} #3\end{quote}}

\newcommand{\RGR}{\textsc{RGR}\xspace}

\newcommand{\RGDlong}{\textsc{Reachability Graph Realizability}\xspace}
\newcommand{\RGD}{\textsc{RGR}\xspace}

\newcommand{\URGDlong}{\textsc{Undirected} \RGDlong}
\newcommand{\URGD}{\textsc{U}\RGD}

\newcommand{\DRGDlong}{\textsc{Directed} \RGDlong}
\newcommand{\DRGD}{\textsc{D}\RGD}

\newcommand{\str}{\textsc{Strict}\xspace}
\newcommand{\strict}{\str}
\newcommand{\nstr}{\textsc{Non-strict}\xspace}

\newcommand{\simp}{\textsc{Simple}\xspace}
\newcommand{\pro}{\textsc{Proper}\xspace}
\newcommand{\happy}{\textsc{Happy}\xspace}
\newcommand{\any}{\textsc{Any}\xspace}

\newcommand{\SC}{\textsc{Set Cover}\xspace}
\newcommand{\mf}{\mathcal{F}}

\newcommand{\NP}{\textrm{NP}}
\newcommand{\bth}{$\textrm{NP} \subseteq \textrm{coNP/poly}$}

\newcommand{\pa}{\Gamma}

\newcommand{\ol}[1][x]{{\overline{#1}}}

\usepackage{cleveref}

\usepackage{algpseudocode}
\usepackage[linesnumbered,lined,commentsnumbered,ruled]{algorithm2e}

\usetikzlibrary{calc,shapes,through,arrows,matrix,backgrounds,fit,decorations.pathreplacing}
\usetikzlibrary{decorations.pathreplacing, decorations.pathmorphing, calc, shapes.arrows, positioning, matrix,fit}

\iffinal
\newcommand{\tnew}[1]{{\color{black}#1}}

\newcommand{\nnew}[1]{{\color{black}#1}}
\else
\newcommand{\tnew}[1]{{\color{blue}#1}}

\newcommand{\nnew}[1]{{\color{gray}#1}}
\fi

\iflong
\title{Recognizing and Realizing Temporal Reachability Graphs}\else
\title{Recognizing and Realizing Temporal Reachability Graphs}
\fi

\author{Thomas Erlebach}{Department of Computer Science, Durham University, UK}{thomas.erlebach@durham.ac.uk}{https://orcid.org/0000-0002-4470-5868}{}
\author{Othon Michail}{Department of Computer Science, University of Liverpool, UK}{othon.michail@liverpool.ac.uk}{https://orcid.org/0000-0002-6234-3960}{}

\author{Nils Morawietz}{Friedrich Schiller University Jena, Institute of Computer Science, Germany\\ LaBRI, Université de Bordeaux, France}{nils.morawietz@uni-jena.de}{https://orcid.org/0000-0002-7283-4982}{}

\authorrunning{Thomas Erlebach, Othon Michail, Nils Morawietz}

\ccsdesc[500]{Theory of computation~Design and analysis of algorithms}
\keywords{parameterized complexity, temporal graphs, FPT algorithm, feedback edge set, directed graph recognition}
\begin{document}

\maketitle

\begin{abstract}
A temporal graph $\mathcal{G}=(G,\lambda)$ can be represented by an underlying
graph $G=(V,E)$ together with a function $\lambda$ that
assigns to each edge $e\in E$ the set of time steps
during which $e$ is present. The reachability graph
of $\mathcal{G}$ is the directed graph $D=(V,A)$ with
$(u,v)\in A$ if only if there is a temporal path from $u$ to~$v$.
We study
the Reachability Graph
Realizability (RGR) problem that asks whether a given directed graph
$D=(V,A)$ is the reachability graph of some
temporal graph. The question can be asked for undirected or
directed temporal graphs, for reachability
defined via strict or non-strict temporal paths, and
with or without restrictions on $\lambda$ (proper, simple,
or happy). 
Answering an open question
posed by Casteigts et al.~(Theoretical Computer Science 991 (2024)), 
we show that all variants
of the problem are $\NP$-complete, except for two variants that
become trivial in the directed case.
For undirected temporal graphs, we consider the complexity of the problem with respect to the solid graph, that is, \tnew{the} graph containing all edges that
could potentially receive a label in any realization.
We show that the RGR problem is polynomial-time solvable if the solid
graph is a tree and fixed-parameter tractable with respect
to the feedback edge set number of the solid graph.
As we show, the latter parameter can presumably not be replaced by smaller parameters like feedback vertex set or treedepth, since  the problem is W[2]-hard with respect
to these parameters. 
\end{abstract}
\section{Introduction}
Temporal graphs are graphs whose edge set can change over time. The vertex
set is often assumed to be fixed, and the edge set can differ from one
time step to the next. The study of temporal graphs has attracted significant
attention in recent years~\cite{casteigts2012time,michail2016introduction,michail2018elements}.
A temporal graph $\mathcal{G}$ with vertex set $V$ can be
represented as a sequence $(G_i)_{i\in [L]}$, where $G_i=(V,E_i)$ is the
graph containing the edges that are present in time step~$i$.
An alternative way to represent
$\mathcal{G}$ is to specify
a graph $G=(V,E)$ together
with a labeling function $\lambda:E\to 2^{[L]}$ that maps each edge
$e\in E$ to the (possibly empty) set of time steps in which $e$ is present.
We write $\mathcal{G}=(G,\lambda)$ in this case.
If an edge $e$ is present in time step~$i$, we refer to $(e,i)$
as a time edge. A strict \emph{temporal path} from $u$ to $v$ in
$\mathcal{G}$ is a sequence of time edges such that the edges
form a $u$-$v$-path in the underlying graph and the time steps
are strictly increasing. A non-strict temporal path is defined
analogously, except that the time steps only need to be non-decreasing.
Given a temporal graph $\mathcal{G}$, the reachability relation (with
respect to strict or non-strict temporal paths) can
be represented as a directed graph $D=(V,A)$, called the
\emph{reachability graph}, with $(u,v)\in A$ for $u\neq v$ if
and only if there exists a (strict or non-strict, respectively)
temporal path from $u$ to $v$ in $\mathcal{G}$.
Note that self-loops, which would represent the trivial
reachability from a vertex to itself, are omitted from~$D$.

A natural question
is then which directed graphs $D=(V,A)$ can arise as reachability
graphs of temporal graphs, and how difficult it is to determine
for a given directed graph $D=(V,A)$ whether there exists
a temporal graph $\mathcal{G}$ with reachability graph~$D$.
This question was posed as an open problem by
Casteigts et al.~\cite[Open question~5]{DBLP:journals/tcs/CasteigtsCS24}.
In this paper, we answer this open question by showing that this
decision problem is NP-hard. Actually, there are a number
of variations of the question, as we may ask
for an undirected or a directed temporal graph~$\mathcal{G}$,
for a temporal graph with a restricted kind
of labeling (simple, proper, happy), and may consider reachability with
respect to strict or non-strict temporal paths. We show that all these
\tnew{variations are NP-hard if we ask for an undirected temporal graph,
and also if we ask for a directed temporal graph except for
two variations} (strict temporal
paths with arbitrary or simple labelings) that are known to become trivial in the directed case~\cite{D25}. See Table~\ref{table for complexity} for an overview
of these complexity results.

From the positive side, we present the following algorithmic
results. For a given digraph $D=(V,A)$, we refer to
$\{u,v\}$ for $u,v\in V$ as a \emph{solid} edge if both $(u,v)$
and $(v,u)$ are in~$A$. Let $G=(V,E)$ be the undirected
graph on $V$ whose edge set is the set of solid edges.
We refer to $G$ as the \emph{solid graph} of~$D$.
We show that all undirected problem variants can be solved in polynomial
time if the solid graph is a tree.
Furthermore, we give
an FPT algorithm with respect to the feedback edge set number
of the solid graph.
This parameter can presumably not be replaced by smaller parameters like feedback vertex set, treedepth, or pathwidth, since two undirected versions of our problem turn out to be W[2]-hard for these parameters. 

\begin{table}
\caption{The complexity results for all variants of \RGD.
Red cells indicate NP-hard cases and green cells indicate cases that are trivial (and thus polynomial-time solvable).}
\label{table for complexity}

\begin{subfigure}{0.48\textwidth}
\begin{tabular}{l|l|l}
 & \str & \nstr\\\hline
\any & \cellcolor{red!25}\Cref{hardness trianglefree} & 
\cellcolor{red!25}\Cref{hardness proper}\\\hline
\simp & \cellcolor{red!25}\Cref{hardness trianglefree}  & 
\cellcolor{red!25}\Cref{hardness proper}\\\hline
\pro & 
\multicolumn{2}{c}{\cellcolor{red!25}\Cref{hardness proper}}  \\\hline
\happy & 
\multicolumn{2}{c}{\cellcolor{red!25}\Cref{hardness proper}}\\\hline
\end{tabular}
\caption{undirected}
\end{subfigure}
\begin{subfigure}{0.48\textwidth}
\begin{tabular}{l|l|l}
 & \str & \nstr\\\hline
\any & \cellcolor{green!25}\Cref{easy parts} and \cite{D25} & 
 \cellcolor{red!25}\Cref{hardness directed fes}\\\hline
\simp & \cellcolor{green!25}\Cref{easy parts} and \cite{D25} & 
\cellcolor{red!25}\Cref{hardness directed fes}\\\hline
\pro & 
\multicolumn{2}{c}{\cellcolor{red!25}\Cref{hardness directed fes}}  \\\hline
\happy & 
\multicolumn{2}{c}{\cellcolor{red!25}\Cref{hardness directed fes}}\\\hline
\end{tabular}
\caption{directed}
\end{subfigure}

\end{table}

\subparagraph{Related work}
Casteigts et al.~\cite{DBLP:journals/tcs/CasteigtsCS24}
studied the relationships between the classes of reachability graphs
that arise from undirected temporal graphs if different restrictions
are placed on the graph (proper, simple, happy~\cite{DBLP:journals/tcs/CasteigtsCS24}) and depending on
whether strict or non-strict temporal paths are considered.
They showed that reachability with respect to strict
temporal paths in arbitrary temporal graphs yields the widest
class of reachability graphs while reachability in happy
temporal graphs yields the narrowest class.
The class of reachability graphs that arise from proper temporal
graphs is the same as for non-strict paths in arbitrary temporal
graphs, and this class is larger than the class of reachability
graphs arising from non-strict reachability in simple temporal
graphs. Strict reachability in simple temporal graphs
was also shown to lie between the happy case and the general
strict case. D\"oring~\cite{D25} showed that the class
``strict \& simple'' is incomparable to ``non-strict \& simple''
and to ``non-strict / proper'', completing the picture of a
two-stranded hierarchy for undirected temporal graphs.
She also extended the study to directed temporal graphs and
showed that their classes of reachability graphs form a single-stranded
hierarchy from happy to strict \& simple, the latter being
equivalent to the general strict case.

Casteigts et al.~\cite{DBLP:journals/tcs/CasteigtsCS24}
posed the open question whether there is a characterization
of the directed graphs that arise as reachability graphs
of temporal graphs (or of some restricted subclass of
temporal graphs), and how hard it is to decide whether
a given directed graph is the reachability graph of
some temporal graph. These questions were posed
again by D\"oring~\cite{D25}, also in relation to the
setting of directed temporal graphs. 
This is in particular of interest because Casteigts et al.~\cite{CMW24} showed that several temporal graph problems can be solved in FPT time with respect to temporal parameters defined over the reachability graph.
In this paper
we resolve these
open questions regarding the complexity of all directed
and undirected variants.

\nnew{
Besides that, our work falls into the field of temporal graph realization problems.
In these problems, one is given some data about the behavior of a temporal graph, and the goal is to detect whether there actually is a temporal graph with this behavior (and to compute such a temporal graph if one exists).}
Klobas et al.~\cite{Klobas/SAND24} introduced the
problem of deciding for a given matrix of fastest
travel durations and a period $\Delta$ whether there exists a
simple temporal graph $\mathcal{G}$ with period $\Delta$ with the property that the duration
of the fastest temporal path between any pair of
nodes in $\mathcal{G}$ is equal to the value specified in the input
matrix. 
Erlebach et al.~\cite{EMW/SAND24} extended
the problem to a multi-label version, assuming that
the input specifies a bound $\ell$ on the maximum number
of labels per edge.

\nnew{
Motivated by the constraints in the design of transportation networks, Mertzios et al.~\cite{MMS24} considered the modified version of the problem (with one label per edge) where the fastest temporal path between any pair of
nodes is
\tnew{only} upper-bounded by the value specified in the input
matrix.
}

The problem of generating a temporal graph realizing a given reachability graph is an instance of the more general class of temporal network design problem. In such problems, the objective is to construct a temporal network that satisfies specified constraints while optimizing certain network-quality measures. These problems have natural applications in transportation and logistics, communication networks, social networks, and epidemiology. In an early example studied by Kempe et al.~\cite{kempe2000connectivity}, the goal was to reconstruct a temporal labeling restricted to a single label per edge, so that a designated root reaches via temporal paths all vertices in a set $P$ while avoiding those in a set $N$. For multi-labeled temporal graphs, Mertzios et al.~\cite{mertzios2019temporal} studied the problem of designing a temporal graph that preserves all reachabilities or paths of an underlying static graph while minimizing either the \emph{temporality} (the maximum number of labels per edge) or the \emph{temporal cost} (the total number of labels used).
\tnew{G{\"{o}}bel et al.~\cite{GCV91} showed that it is NP-complete to decide whether
the edges of a given undirected graph can be labeled with a single label per edge
in such a way that each vertex can reach every other vertex via a strict temporal
path.}
Other studies have focused on variants of minimizing edge deletions~\cite{enright2021deleting}, vertex deletions~\cite{zschoche2020complexity}, or edge delays~\cite{deligkas2022optimizing} to restrict reachability, motivated, for instance, by epidemic containment strategies that limit interactions. Temporal network design is an active area of research, with further related questions explored for example in~\cite{akrida2017complexity,klobas2024complexity,christiann2024inefficiently}. \iflong For general introductory texts to the area of dynamic networks the reader is referred to \cite{casteigts2012time,michail2016introduction,michail2018elements}.\fi

\subparagraph{Organization of the paper.}

\nnew{
In~\Cref{sec:pre} we provide a formal problem definition and define the notions used in this work.
In~\Cref{sec:basic}, we show upper and lower bounds for the required number of labels per edge in any realization and provide a single exponential algorithm for all problem variants.
Afterwards, in~\Cref{sec:bridge}, we analyze properties and define splitting operations based on bridge edges in the solid graph of the undirected versions of our problem.
These structural insights will be mainly used in our FPT algorithm in~\Cref{sec:fes}, but also immediately let us describe a polynomial-time algorithm for instances where the solid graph is a tree in~\Cref{sec:tree}.
In~\Cref{sec:hardundir}, we then provide NP-hardness results for all undirected problem versions as well as parameterized intractability results for two of them with respect to feedback vertex set number and treedepth.
Afterwards, in~\Cref{sec:fes}, we provide our main algorithmic result: an FPT algorithm with respect to the feedback edge set number~$\fes$ of the solid graph.
This algorithm is achieved in three steps:
Firstly, we apply our splitting operations of~\Cref{sec:bridge} and provide a polynomial-time reduction rule to simplify the instance at hand, such that all we need to deal with is a subset~$X^*$ of vertices of size~$\Oh(\fes)$ for which the remainder of the graph decomposes into edge-disjoint trees that only interact with~$X^*$ via two leaves each.
We call such trees~\emph{connector trees}.
Secondly, we show that we can efficiently extend the set~$X^*$ to a set~$W^*$ of size~$\Oh(\fes)$ such that each respective connector tree is more or less independent from the remainder of the graph with respect to the interactions of temporal paths in any realization.
All these preprocessing steps run in polynomial time.
Afterwards, our algorithm enumerates all reasonable labelings on the edges incident with vertices of~$W^*$ and tries to extend each such labeling to a realization for~$D$.
As we show, there are only FPT many such reasonable labelings and for each such labeling, there are only FPT many possible extensions that need to be checked.
Finally, in~\Cref{sec:harddir} we briefly discuss the directed version of the problem and provide NP-hardness results for all but the two trivial cases.
}

\iflong
\else
Proofs of statements marked with~$(\star)$ are (partially) deferred to the (attached) full version.
\fi

\section{Preliminaries}\label{sec:pre}
For definitions on parameterized complexity, the Exponential Time Hypothesis (ETH), or parameters like treedepth, we refer to the textbooks~\cite{DF13,C+15}.

For natural numbers $i,j$ with $i\le j$ we write $[i]$ for the set $\{1,2,3,\ldots,i\}$
and $[i,j]$ for the set $\{i,i+1,i+2,\ldots,j\}$.
For an undirected graph $G=(V,E)$, we denote an edge between vertices $u$ and $v$
as $\{u,v\}$ or $uv$.  
By $N(v)=\{u\in V\mid vu \in E\}$ we denote the set of neighbors of~$u$.
The \emph{degree} of~$v$ is the size of~$N(v)$.
An edge $e\in E$ is a \emph{bridge} or \emph{bridge edge} in a graph
$G=(V,E)$ if deleting $e$ increases the number of connected components.
\tnew{A vertex of degree~$1$ is called a \emph{pendant} vertex, and
an edge incident with a pendant vertex is called a \emph{pendant} edge.}

For directed graphs, we denote an arc from
$u$ to $v$ by $(u,v)$.
A directed graph~$D$ is \emph{simple} if it has no parallel arcs
and no self-loops, and it is a \emph{directed acyclic graph} (DAG)
if it does not contain a directed cycle.
\iflong
\nnew{The \emph{in-degree} of a vertex~$v$ in~$D$ is the number of incoming arcs of~$v$, i.e., the number of arcs with head~$v$.
Similarly, the \emph{out-degree} of a vertex~$v$ is the number of arcs
with tail~$v$.
The~\emph{degree} of a vertex in~$D$ is then the sum of its in-degree and out-degree.}
\else
The~\emph{degree} of a vertex~$v$ in is the number of arcs containing~$v$.
\fi

A temporal graph $\mathcal{G}$ with vertex set $V$ and lifetime $L$
is given by a sequence $(G_t)_{t\in [L]}$ of $L$ static graphs
$G_t=(V,E_t)$ referred to as \emph{snapshots} or \emph{layers}.
The graph $\mathcal{G}_\downarrow=(V,E_\downarrow)$ with $E_\downarrow=\bigcup_{i\in [L]} E_i$ is called the
\emph{underlying graph} of~$\mathcal{G}$.
Alternatively, $\mathcal{G}$ can be represented
by an undirected graph $G=(V,E)$ with $E\supseteq \bigcup_{t\in[L]} E_t$ 
and a labeling function $\lambda:E\to 2^{[1,L]}$
that assigns to each edge $e$ the (possibly empty) \tnew{set of} time steps during which $e$ is
present, i.e., $\lambda(e)=\{t\in[L] \mid e\in E_t\}$.
We write $\mathcal{G}=(G,\lambda)$ in this case.
In this representation we allow $G$ to contain extra edges
in addition to the edges of the underlying graph; such edges $e$ satisfy
$\lambda(e)=\emptyset$. This is useful because in the problems
we consider in this paper, there is a natural choice of a
graph $G=(V,E)$, the so-called solid graph defined below,
that contains all edges of the underlying
graph of every realization but may contain additional edges.

We assume that each layer of a temporal graph is an undirected
graph unless we explicitly refer to directed temporal graphs.
If $t\in \lambda(e)$, we refer to $(e,t)$ as a \emph{time edge}.
A \emph{temporal path} from $u$ to $v$ in $\mathcal{G}$ is a sequence
of time edges $((e_j,t_j))_{j\in [\ell]}$ for some $\ell$
such that $(e_1,e_2,\ldots,e_\ell)$ is a $u$-$v$ path in $G$
and $t_1\le t_2\le \cdots\le t_\ell$ holds in the non-strict case and
$t_1<t_2<\cdots<t_\ell$ holds in the strict case.

A temporal graph is \emph{proper} if no two adjacent edges
share a label, \emph{simple} if every edge has a single label,
and \emph{happy} if it is both proper and simple~\cite{DBLP:journals/tcs/CasteigtsCS24}.
Note that for proper and happy temporal graphs
there is no distinction between strict and non-strict
temporal paths.

The reachability graph $\mathcal{R}(\mathcal{G})$ of a temporal graph $\mathcal{G}$ with vertex set~$V$
is the directed graph $(V,A)$ with the same vertex set $V$ and $(u,v)\in A$ if and only if
$u\neq v$ and $\mathcal{G}$ contains a temporal path from $u$ to~$v$.
Note that $\mathcal{R}(\mathcal{G})$ depends on whether we consider
strict or non-strict temporal paths and can be computed in polynomial time in both cases~\cite{XFJ03}.

We are interested in the following problem:
\prob{Reachability Graph Realizability (RGR)}{A simple directed graph $D=(V,A)$.}{Does there exist a temporal graph $\mathcal{G}$
with $\mathcal{R}(\mathcal{G})=D$?}

For yes-instances of \RGR, we are also interested in computing a
temporal graph $\mathcal{G}$ with $\mathcal{R}(\mathcal{G})=D$.
We refer to such a temporal graph as a \emph{solution} or
a \emph{realization} for~$D$, and we typically represent it by a
labeling function.
With the adjacency matrix representation of~$D$
in mind, we also write $D_{uv}=1$ for $(u,v)\in A$ and
$D_{uv}=0$ for $(u,v)\notin A$.

We can consider \RGR with respect to reachability via strict temporal
paths or with respect to non-strict temporal paths.
Furthermore, we can require the realization of~$D$ to be simple, proper, or happy.
For proper and happy temporal graphs, strict and non-strict reachability
coincide. Therefore, the distinct problem variants that we can consider
are \any\str\RGR, \any\nstr\RGR, \simp\str\RGR, \simp\nstr\RGR, \pro\RGR, and \happy\RGR. Finally, we write \DRGD instead of \RGR
if we are asking for a directed temporal graph that realizes~$D$.
We sometimes write \URGD if we want to make it explicit that we are asking
for an undirected realization.

\subparagraph{The solid graph.}
If $D_{uv}=1$ and $D_{vu}=1$ for some $u\neq v$, we say that
there is a \emph{solid edge} between $u$ and $v$.
If $D_{uv}=1$ and $D_{vu}=0$, we say that there is a
\emph{dashed arc} from $u$ to~$v$.
We use $G=(V,E)$ to denote the graph on $V$ whose
edge set is the set of solid edges, and we refer to
this graph as the \emph{solid graph} (of~$D$).
It is clear that only
solid edges can receive labels in a realization of~$D$,
as the two endpoints of an edge that is present in at least
one time step can reach each other.
Solid edges that are bridges of the solid graph must receive
labels, but a solid edge~$e$ that is not a bridge
need not necessarily receive labels, as the endpoints of~$e$ could reach each other
via temporal paths of length greater than one in a realization.
\begin{observation}\label{incident same label}
Let~$D$ be an instance of~\URGD.
Let~$\lambda$ be a realization of~$D$ and let~$\{u,v\}$ and~$\{v,w\}$ be solid edges that both receive at least one label under~$\lambda$.
If~$D$ contains neither the arc~$(u,w)$ nor the arc~$(w,u)$, then there is a label~$\alpha$ such that~$\lambda(\{u,v\}) = \lambda(\{v,w\}) = \{\alpha\}$.
\end{observation}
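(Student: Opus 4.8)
The plan is to argue by contradiction: suppose that neither $\{u,v\}$ nor $\{v,w\}$ can be assigned a single common label, and then exhibit an arc in the reachability graph that is forbidden by hypothesis. The key structural fact I would exploit is that both edges carry a label under $\lambda$, so $u$ and $v$ are mutually reachable via a single time edge on $\{u,v\}$, and likewise $v$ and $w$ via a single time edge on $\{v,w\}$. First I would fix some label $a \in \lambda(\{u,v\})$ and some label $b \in \lambda(\{v,w\})$. I then consider the two natural length-two temporal paths through $v$: the path using $\{u,v\}$ at time $a$ followed by $\{v,w\}$ at time $b$, and the reverse concatenation $\{w,v\}$ at time $b$ followed by $\{v,u\}$ at time $a$.

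The core of the argument is a case analysis on how $a$ and $b$ compare. If $a < b$, then (in both the strict and non-strict settings) the sequence $((\{u,v\},a),(\{v,w\},b))$ is a valid temporal path from $u$ to $w$, forcing $(u,w) \in A$, contradicting the hypothesis. Symmetrically, if $b < a$, then $((\{w,v\},b),(\{v,u\},a))$ is a valid temporal path from $w$ to $u$, forcing $(w,u) \in A$, again a contradiction. Hence the only surviving possibility is $a = b$ for every choice of $a \in \lambda(\{u,v\})$ and $b \in \lambda(\{v,w\})$; in particular $\lambda(\{u,v\})$ and $\lambda(\{v,w\})$ are each a subset of the single value shared with the other, so both equal $\{\alpha\}$ for a common label $\alpha$. (Since each edge carries at least one label, neither set is empty, so $\alpha$ exists.)

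The one place that needs a little care is the strict case, where a temporal path needs \emph{strictly} increasing time steps: if $a = b$ then the concatenation $((\{u,v\},a),(\{v,w\},a))$ is \emph{not} a valid strict path, which is exactly why $a = b$ does not immediately yield a contradiction and is instead the conclusion we want. Conversely, in the non-strict case the inequality $a \le b$ would already suffice to route $u$ to $w$, so to keep the argument uniform I would phrase the case split as: either there exist $a \in \lambda(\{u,v\})$, $b \in \lambda(\{v,w\})$ with $a < b$, or there exist such with $b < a$, or $\lambda(\{u,v\}) = \lambda(\{v,w\})$ is a singleton. The first two possibilities are killed by the forbidden arcs $(u,w)$ and $(w,u)$ respectively, independently of strictness, leaving the third. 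I expect the main (and only) subtlety to be making sure the length-two path concatenations are genuinely valid temporal paths — i.e., that $\{u,v\}$ and $\{v,w\}$ are distinct edges sharing the single endpoint $v$, which holds since $u \neq w$ as otherwise the arcs $(u,w),(w,u)$ would be self-loops and the hypothesis vacuous, and $u,w \neq v$ since $\{u,v\},\{v,w\}$ are edges of the simple solid graph.
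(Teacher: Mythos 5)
Your proof is correct, and it fills in the argument that the paper leaves implicit (the paper states this as an \emph{observation} without supplying a proof). The pair-by-pair comparison is exactly the right idea: for any $a \in \lambda(\{u,v\})$ and $b \in \lambda(\{v,w\})$, if $a < b$ the length-two path through $v$ realizes $(u,w)$, if $b < a$ the reverse realizes $(w,u)$, so every pair must satisfy $a=b$, forcing both nonempty label sets to collapse to the same singleton. Your handling of the strict/non-strict distinction is also the right observation, though slightly under-stated: in the non-strict (and proper/happy) settings the surviving third case still produces a non-strict $u$-to-$w$ path, so the hypothesis is in fact unsatisfiable there and the observation holds vacuously — which is precisely what the paper exploits in the sentence immediately following the statement. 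The only other loose thread (that $u\neq w$ must be assumed for the statement to make sense) is a genuine reading issue with the observation itself, not a flaw in your argument, and your footnote on it is reasonable.
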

Note that \tnew{Observation~\ref{incident same label}} implies that for~\pro\URGD, \happy\URGD, and~\nstr\URGD, no realization for~$D$ can assign labels to \tnew{both of two adjacent
edges~$\{u,v\}$ and~$\{v,w\}$ if $D$ contains neither $(u,w)$ nor~$(w,u)$}.

A path $P=(u_0=u,u_1,u_2,\ldots,u_\ell=v)$ from $u$ to $v$
in $G$ is a \emph{dense} $u$-$v$-path if
there exist dashed arcs $(u_i,u_j)$ for
all $0\le i < j \le \ell$.
Note that in each undirected realization of~$D$, each temporal path is a dense path in~$G$.
Similarly, for directed realizations of~$D$, each temporal path is a dense path in~$D$, where we define dense paths in~$D$ analogously to dense paths in~$G$.

We say that a realization $\lambda$ is \emph{\minlab} if there is
no edge $e$ such that the set of labels assigned to $e$ can be
replaced by a smaller set while maintaining the property that
$\lambda$ is a realization.
We say that a realization $\lambda$
is \emph{minimal} on edge $e$ if it is impossible to obtain
another realization by replacing $\lambda(e)$ with a proper
subset. A realization is minimal if it is minimal on every
edge~$e$. Note that every frugal realization is also minimal.

\section{Basic Observations and an Exponential Algorithm}\label{sec:basic}
\nnew{
We first provide upper and lower bounds for the number of labels per edge in a minimal realization.
}

\iflong
\begin{lemma}
\else
\begin{lemma}
\fi
\label{lem:tree-atmost2}
    Let $D$ be an instance of \URGD, and let $e$ be a solid edge
    of $G$ that is not part of a triangle. In each minimal realization
    of $D$, $e$~receives at most two labels.
    Moreover, if~$D$ is an instance of any version of~\URGD besides \any\str\URGD, then in each minimal realization, $e$ receives at most one label.
    \end{lemma}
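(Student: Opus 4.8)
Throughout, the plan is to fix a minimal realization $\lambda$ of $D$, assume for contradiction that a triangle-free solid edge $e=\{u,v\}$ carries labels $\lambda(e)=\{t_1<t_2<\dots<t_k\}$ with $k$ larger than claimed, and then produce a realization whose label set on $e$ is strictly smaller, contradicting minimality. For the \simp\str, \simp\nstr, and \happy variants the bound $k\le 1$ is immediate from the definition of a simple (respectively happy) labeling; the genuine work is the bound $k\le 1$ for \pro\URGD and \nstr\URGD (where an edge may a priori carry several labels) and the bound $k\le 2$ for \any\str\URGD. For the \pro, \happy and \nstr variants we also keep in mind the consequence of Observation~\ref{incident same label} recorded after it: two adjacent labeled edges, one of which carries more than one label, force an arc between the two far endpoints.

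The main device is a rerouting argument. Let $P$ be any temporal path of $\lambda$ that traverses $e$, say from $u'$ to $v'$ with $\{u',v'\}=e$, using label $t_j$ on $e$. If $P$ starts at $u'$, then $e$ is the first edge of $P$ and we try to replace $t_j$ by $t_1$; if $P$ ends at $v'$, then $e$ is the last edge and we try $t_k$; otherwise $P$ enters $u'$ along an edge $\{a,u'\}$ at some time $p<t_j$ (with $a\neq v'$) and leaves $v'$ along an edge $\{v',b\}$ at some time $q>t_j$ (with $b\neq u'$). In this last case, replacing $t_j$ by $t_1$ yields a valid temporal path from the start of $P$ to its end whenever $p<t_1$, and replacing it by $t_k$ works whenever $q>t_k$.

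The key step shows that $p<t_1$ fails only when $p=t_1$. Indeed, $a\to u'\to v'$ (using $\{a,u'\}$ at time $p$ and $e$ at time $t_j$) is a temporal path, so $(a,v')\in A$; since $a$ is a neighbor of $u'$ with $a\neq v'$ and $e$ is not in a triangle, the edge $\{a,v'\}$ is not solid, so from $(a,v')\in A$ we get $(v',a)\notin A$. But if $p>t_1$, then $v'\to u'\to a$ (using $e$ at time $t_1$, then $\{a,u'\}$ at time $p>t_1$) is also a temporal path, giving $(v',a)\in A$, a contradiction. Symmetrically, $q>t_k$ fails only when $q=t_k$. For \nstr\URGD this settles the lemma: the replacement of $t_j$ by $t_1$ only requires $p\le t_1$ (non-decreasing times suffice), and we have just shown $p\le t_1$ always; hence every temporal path through $e$ at a label other than $t_1$ reroutes onto $t_1$, so replacing $\lambda(e)$ by $\{t_1\}$ still realizes $D$ and $k\le 1$. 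For \pro\URGD and \happy\URGD we have $p\le t_1$ as well, and $p=t_1$ is impossible because $\{a,u'\}$ and $e$ are adjacent while a proper labeling forbids them to share the label $t_1$; hence the reroute onto $t_1$ is always available and again $k\le 1$.

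Finally, for \any\str\URGD, a temporal path $P$ through $e$ at label $t_j=t_2$ that cannot be rerouted onto $t_1$ or $t_k$ must, by the above, have $p=t_1$ and $q=t_k$; but if $k\ge 4$ we can then replace $t_2$ by the intermediate label $t_3$, since $t_1=p<t_3<t_k=q$, so such a $P$ reroutes after all. Consequently $k\le 3$, and ruling out $k=3$ is the main obstacle. In that case every witness path of the (essential, by minimality) time edge $(e,t_2)$ has the rigid shape: enter $u'$ along an edge at time $t_1$, cross $e$ at time $t_2$, leave $v'$ along an edge at time $t_3$; and neither $t_1$ nor $t_3$ can be substituted directly, because each is shared with the neighboring edge of $P$ at that end. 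To finish, I would analyze in the same manner the witness paths of the equally essential time edges $(e,t_1)$ and $(e,t_3)$, which forces additional edges incident to $u$ and $v$ to carry the label $t_2$; combining all these label coincidences with triangle-freeness and Observation~\ref{incident same label} at $u$ and at $v$ should then produce either a forbidden triangle through $e$ or a reachability that is present in $D$ but not realized by $\lambda$. Keeping track of which endpoint of $e$ plays the role of $u'$ in each witness path, and exhausting the resulting configurations, is the delicate part of this case.
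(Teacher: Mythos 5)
Your rerouting argument correctly establishes the key structural fact: whenever a temporal path $P$ traverses $e$, entering one endpoint $u'$ along an adjacent edge at time $p$ and exiting the other endpoint $v'$ along an adjacent edge at time $q$, triangle-freeness forces $p\le t_1$ and $q\ge t_k$ (with strict inequalities in the proper and non-strict settings). This is the same content as the paper's observation that no solid edge adjacent to $e$ can carry a label strictly between $t_1$ and $t_k$; you simply derive it path-by-path rather than all at once. For the \pro, \happy, \simp, and \nstr variants your argument is complete and essentially coincides with the paper's, the only stylistic difference being that the paper performs a single global collapse of $\lambda(e)$ to a singleton while you reroute each witness path individually.

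For \any\str\URGD, however, you leave a genuine gap at $k=3$, and the source of the difficulty is a misdirected target. You fix the \emph{middle} label $t_2$ and try to reroute its witnesses onto $t_1$ or $t_k$, which fails exactly when $p=t_1$ and $q=t_k$. The correct move (and the one the paper makes) is the opposite: keep a middle label and discard the extremes. Pick any $M_e\in\{t_2,\dots,t_{k-1}\}$ (which is nonempty once $k\ge 3$). By the bounds you already established, every path through $e$ satisfies $p\le t_1 < M_e < t_k\le q$, so after replacing $\lambda(e)$ by $\{M_e\}$ every such path still traverses $e$ at a time strictly between $p$ and $q$; paths beginning or ending at an endpoint of $e$ are constrained only on one side and are handled the same way. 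Since only labels are removed, no new reachability appears, so minimality is contradicted for every $k\ge 3$. In particular the $k=3$ case you flag as delicate is no different from $k\ge 4$: replace $\lambda(e)$ by $\{t_2\}$ and you are done, with no need for the extra case analysis via Observation~\ref{incident same label} that your final paragraph gestures at. The tools you developed are exactly sufficient; they just need to be aimed at $t_1$ and $t_k$ rather than at $t_2$.
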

    \iflong
\begin{proof}
    First, consider \any\str\URGD.
    Consider a minimal realization and assume the edge $e$ has at least three labels. 
    Let $L_e$, $U_e$, and $M_e$ be
    its smallest label, its largest label, and an arbitrary label in between.
    Note that no solid edge $f\neq e$ incident with an endpoint of $e$ can have a label
    in $(L_e,U_e)$ as this would imply a solid edge forming a triangle containing~$e$.
    Hence, all other edges incident with endpoints of $e$ only have
    labels $\le L_e$ or $\ge U_e$. Replacing the labels of $e$ by
    the single label $M_e$ maintains exactly the temporal paths
    containing $e$ that exist before the change, a contradiction
    to the original realization being minimal.
    Furthermore, if the original
    labeling was proper, than the resulting labeling is also proper.

    For \simp\str\URGD, \simp\nstr\URGD and \happy\URGD, it is trivially true
    that $e$ has at most one label in any realization. Now,
    consider \any\nstr\URGD and \pro\URGD. Assume
    that $e$ has at least two labels in a minimal realization,
    and let $L_e$ and $U_e$ be the smallest and largest label
    of~$e$.
    Note that no solid edge $f\neq e$ incident with an endpoint
    of $e$ can have a label in $[L_e,U_e]$: A label in $(L_e,U_e)$
    would create a solid edge forming a triangle, $f$ cannot
    share a label with $e$ in a proper labeling, and $f$ sharing
    a label with $e$ would create a solid edge forming a triangle
    in the non-strict case. Thus, all other edges incident with
    endpoints of $e$ only have labels $<L_e$ or $>U_e$.
    Replacing the labels of $e$ by $L_e$ maintains exactly the
    temporal paths containing $e$ that exist before the change,
    a contradiction to the original realization being minimal.
    \end{proof}
\fi

\nnew{
For general edges, we show a linear upper bound with respect to the number of vertices.}

\iflong
\begin{lemma}
\else
\begin{lemma}
\fi\label{max number label}
If a graph~$D=(V,A)$ is realizable, then each minimal realization for~$D$ assigns at  most~$n=|V|$ labels per edge.
\end{lemma}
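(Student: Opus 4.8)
The plan is to argue that if some edge $e=\{u,v\}$ receives more than $n$ labels in a realization $\lambda$, then $\lambda$ cannot be minimal on $e$: we can delete one of the ``middle'' labels without changing the reachability graph. First I would fix a realization $\lambda$ of $D$ and suppose for contradiction that $|\lambda(e)| \ge n+1$ for some solid edge $e=\{u,v\}$; write the labels of $e$ as $t_1 < t_2 < \cdots < t_k$ with $k \ge n+1$. The key observation is that a time edge $(e,t_i)$ is only ever relevant for a temporal path in the following way: it is used to ``cross'' from one endpoint of $e$ to the other at time $t_i$, extending some temporal path that arrives at $u$ (or $v$) at a time $\le t_i$ (in the non-strict case) or $< t_i$ (strict) and continuing with a temporal path leaving $v$ (or $u$) at a time $\ge t_i$ (or $> t_i$). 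So what matters about $e$ for reachability is, for each direction, the set of ``earliest arrival / latest departure'' thresholds it enables.

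Concretely, I would define for each label $t_i$ of $e$ a small piece of data: in the direction $u\to v$, the label $t_i$ lets a walk that has reached $u$ by time $t_i$ continue from $v$ at time $t_i$; combined with temporal paths in $\mathcal{G}-e$ (more precisely, temporal paths that do not use the time edge $(e,t_i)$ as their final relevant crossing), this contributes exactly the pairs $(x,w)$ such that $x$ can temporally reach $u$ by time $t_i$ and $w$ is temporally reachable from $v$ starting at time $t_i$ (and symmetrically for $v\to u$). The crucial point is monotonicity: "reachable to $u$ by time $t$" is a monotone-increasing family of vertex sets in $t$, and "reachable from $v$ starting at time $t$" is a monotone-decreasing family. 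Hence, as $i$ ranges over $1,\dots,k$, the set $R^{\mathrm{in}}_u(t_i)$ of vertices reaching $u$ by time $t_i$ is nondecreasing and the set $R^{\mathrm{out}}_v(t_i)$ of vertices reachable from $v$ from time $t_i$ on is nonincreasing, each taking at most $n$ distinct values; likewise for the pair $(R^{\mathrm{in}}_v, R^{\mathrm{out}}_u)$. Since $k \ge n+1$, by pigeonhole there is an index $1 < i < k$ such that $R^{\mathrm{in}}_u(t_i) = R^{\mathrm{in}}_u(t_{i-1})$ and $R^{\mathrm{in}}_v(t_i) = R^{\mathrm{in}}_v(t_{i-1})$ — that is, removing $t_i$ still leaves a strictly smaller neighbouring label $t_{i-1}$ (resp.\ $t_{i+1}$ on the out-side) achieving the same thresholds; one has to be slightly careful to get one index that is simultaneously "redundant on the left for both in-sets" or symmetrically handle the four monotone families, but a counting argument with $4$ monotone chains of length $\le n$ each and $k \ge n+1$ labels (or iterating the simpler two-chain argument) yields a deletable label. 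Deleting $t_i$ from $\lambda(e)$ then preserves every temporal path: any temporal path using $(e,t_i)$ as a crossing can be rerouted through $(e,t_{i-1})$ or $(e,t_{i+1})$ since the relevant arrival/departure sets are unchanged, and no new temporal path is created. This contradicts minimality of $\lambda$ on $e$.

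I expect the main obstacle to be making the "reroute through a neighbouring label" step fully rigorous, in particular handling the strict case (where the inequalities are strict, so one must be careful that $t_{i-1} < t_i$ genuinely suffices) and handling temporal paths that use the edge $e$ more than once or use it as their first/last edge; the clean way is to phrase everything in terms of the four monotone vertex-set families indexed by the labels of $e$ and show that a label whose four associated sets all agree with those of an appropriate neighbour is removable. A convenient simplification is to first note, using that realizations can be assumed minimal and that we only need an upper bound, that it suffices to bound $|\lambda(e)|$ by the number of distinct values among these monotone families, which is at most $n$. This also automatically respects properness and simplicity constraints in the relevant variants (deleting a label never violates properness, and in simple variants $k=1$ trivially), so the bound holds for every version of \URGD.
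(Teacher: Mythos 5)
Your core intuition is right — what matters about a label of $e$ is only the ``arrival'' thresholds it enables, and this is a monotone quantity — but the counting you build on top of it does not actually deliver the bound of $n$, and the gap you flag (``one has to be slightly careful... a counting argument with $4$ monotone chains... yields a deletable label'') is real. Each of your monotone chains $R^{\mathrm{in}}_u(t_i)$, $R^{\mathrm{in}}_v(t_i)$ can ``jump'' at up to $n$ of the indices $i$, so to guarantee a consecutive pair $t_{i-1},t_i$ on which both in-sets agree you would need on the order of $2n+2$ labels, not $n+1$, and with all four chains you would need on the order of $4n$. Likewise, the closing claim that ``the number of distinct values among these monotone families is at most $n$'' is off by a constant factor: it is $n+1$ per chain. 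So as written the argument only yields $|\lambda(e)| = \Theta(n)$ with a worse constant, not $|\lambda(e)| \le n$.

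The paper sidesteps the consecutive-pigeonhole issue with a per-vertex threshold rather than a per-index comparison: for each vertex $v$, let $\alpha_v$ be the \emph{earliest} label of $e$ at which some temporal path from $v$ traverses $e$ (if any). There are at most $n$ such values. If $|\lambda(e)| > n$, pick any label $\beta$ with $\beta \neq \alpha_v$ for all $v$; then for every temporal $(x,y)$-path $P$ that crosses $e$ at time $\beta$ we have $\alpha_x < \beta$, and rerouting $P$ through a crossing at time $\alpha_x$ (or, if the witness for $\alpha_x$ enters $e$ from the ``wrong'' endpoint, simply stopping before the crossing) lands us at the same endpoint of $e$ that $P$'s suffix departs from, no later than $\alpha_x < \beta$, so the suffix still attaches. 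This is exactly your ``earliest is best'' rerouting step, but because the selection of $\beta$ is done vertex-by-vertex rather than index-by-index, it gives the bound $n$ outright, and one does not need the $R^{\mathrm{out}}$ side at all (going earlier can only enlarge what the suffix can reach). If you replace your pigeonhole over consecutive label pairs with this per-vertex threshold and keep your rerouting paragraph, the proof goes through.
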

\iflong
\begin{proof}
Let~$e$ be an edge with more than~$n$ labels. 
We show that at least one can be removed.
For each vertex~$v$ define~$\alpha_v$ as the earliest time step in which vertex~$v$ can traverse the edge~$e$ (if any).
Let~$\beta$ be a label of~$e$ that is unequal to all~$\alpha_v$ labels.
Then, removing~$\beta$ from~$e$ preserves at least one temporal~$(x,y)$-path, if such a path existed before.
\end{proof}
\fi

\nnew{
Note that this implies that all versions of \URGD and \DRGD under consideration are in NP.}
Next, we show that this bound on the number of labels per edge is essentially tight for~\any\str\URGD.
\iflong
\begin{theorem}
\else
\begin{theorem}
\fi
\tnew{For \any\str\URGD,}
there is an infinite family of directed graphs $\mathcal{B}$, such that for every $D=(V,A)\in\mathcal{B}$, where $G=(V,E)$ is the solid graph of $D$, all of the following properties hold:
\begin{itemize}
\item $G$ is planar.
\item $G$ has a feedback vertex set of size 2 and a feedback edge set of size $\Theta(n)$, where $n=|V|$.
\item $D$ is realizable and there is an edge $e\in E$ such that every realization of $D$ uses $\Omega(n)$ labels on $e$.
\end{itemize}
\end{theorem}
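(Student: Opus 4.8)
The plan is to give an explicit family $\mathcal{B}=\{D_k : k\ge 2\}$ with $n=\Theta(k)$, to check the four properties directly (planarity, the feedback vertex and edge set numbers, and realizability), and to put the real work into the lower bound on the number of labels on one edge. \Cref{lem:tree-atmost2} already tells us that the distinguished edge must lie in a triangle, so I would take the solid graph $G_k$ of $D_k$ to be a \emph{fan of triangles on a central edge}: two hub vertices $x,y$ joined by an edge $e=\{x,y\}$, and for each $i\in[k]$ a vertex $w_i$ adjacent to both $x$ and $y$, so that $e$ lies in $k$ triangles $xw_iy$. To this I add a constant number of pendant (leaf) vertices per index $i$, attached to $w_i$ and to $x$ or $y$, which will carry the timing demands, and (purely to push the feedback vertex number up from $1$ to exactly $2$) one extra private triangle at $x$ and one at $y$ on fresh degree-two vertices. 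Then $G_k$ is planar by the obvious fan drawing; $\{x,y\}$ is a feedback vertex set because deleting $x$ and $y$ leaves the $w_i$'s with their pendant trees plus a few isolated edges, i.e.\ a forest, and it cannot be done with a single vertex because of the two private triangles; and since $G_k$ is connected its feedback edge set number is $|E(G_k)|-|V(G_k)|+1=\Theta(k)=\Theta(n)$, each triangle on $e$ contributing exactly one independent cycle.

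The arc set $A_k$ contains both orientations of every solid edge of $G_k$, plus for each $i\in[k]$ a bundle of dashed arcs that (i) asserts one reachability demand ``$u_i$ reaches $v_i$'' between suitable pendant vertices and hubs and (ii) forbids enough arcs so that, in every realization, the only temporal $u_i$-$v_i$ paths are the ones that cross the edge $e$, and they can only cross it at a time step inside a window $I_i$, where $I_1,\dots,I_k$ are pairwise disjoint intervals fixed in advance. The forbidden arcs have to do two jobs: pin down the (forced) labels of the pendant bridge edges, and force the labels of the triangle edges $\{x,w_j\},\{w_j,y\}$ into ranges where an interleaving pattern such as $\lambda(\{x,w_1\})<\lambda(\{w_1,y\})<\dots<\lambda(\{x,w_k\})<\lambda(\{w_k,y\})$ holds; under such a pattern the vertices $w_1,\dots,w_k$ are only reachable from one another in index order, so the solid graph does not accidentally gain the edges $\{w_i,w_j\}$, and every ``shortcut'' $\dots\to w_j\to x\to w_{j'}\to\dots$ or $\dots\to w_j\to y\to w_{j'}\to\dots$ lands on a time interval incompatible with $I_i$. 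Realizability is then witnessed by one explicit temporal graph $\mathcal{G}_k=(G_k,\lambda_k)$: put the prescribed labels on the pendant and triangle edges and exactly one element of each window $I_i$ on $e$, and verify $\mathcal{R}(\mathcal{G}_k)=D_k$ by inspection — in particular that no vertex pair becomes mutually reachable except along an edge already in $G_k$.

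For the lower bound, let $\lambda$ be an arbitrary realization of $D_k$ and fix $i\in[k]$. Since $(u_i,v_i)\in A_k$, there is a temporal $u_i$-$v_i$ path $P_i$ in $(G_k,\lambda)$. By the forbidden arcs of index $i$, no $u_i$-$v_i$ path avoids $e$, and every crossing between the $x$-side and the $y$-side that does not use $e$ is either impossible or uses time steps outside $I_i$; hence $P_i$ traverses $e$ at some time $t_i\in I_i$. Because the $I_i$ are pairwise disjoint, $t_1,\dots,t_k$ are distinct, so $|\lambda(e)|\ge k=\Omega(n)$, and since $\lambda$ was arbitrary this proves the last bullet.

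The main obstacle is the middle step: designing the dashed and forbidden arcs so that (1) the explicit labeling $\lambda_k$ realizes exactly $D_k$ with solid graph $G_k$ — delicate because reachability through a fan of triangles is very promiscuous, a $w_j$ adjacent to both hubs tending to make many pairs mutually reachable, so the triangle-edge labels must be forced into ``late''/interleaved positions that neutralize the $w_j$'s as shortcuts while keeping every demand realizable through $e$ — and (2) in \emph{every} realization (not just $\lambda_k$) these orderings are forced and all $e$-avoiding detours are time-incompatible with the windows. Both amount to a finite but careful case analysis over which labels are forced and over the possible shapes of temporal $u_i$-$v_i$ paths; everything else (planarity, the two parameters, and the final count of distinct labels) is routine.
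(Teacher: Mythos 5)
Your high-level plan is the same as the paper's: a central edge lying in $\Theta(n)$ triangles, with dashed arcs forcing that edge's labels into pairwise-disjoint time windows, one per reachability demand. The paper realizes this with a ``book'' whose spine edge $u_0u'_0$ sits in two triangles per page (one via each diagonal $u'_0u_i$, $u_0u'_i$); your fan of triangles on $e=\{x,y\}$ is the same core gadget with the page-boundary vertices $v_i,v'_i$ and the truncated end pages stripped away. But those are not decoration. The boundary cycle through $v_i,v'_i$ (with its missing arcs to $u_0,u'_0$) is precisely how the paper forces all labeled edges of a page to carry one shared label $l_i$ and the diagonals to carry none, so that the spine really is the only strict crossing between sides; and the truncated end pages are what make the page labels strictly ordered and pin the spine label \emph{exactly} to $l_{i+1}$ (not merely to somewhere in $(l_i,l_{i+2})$, which would wrongly realize $(u_i,u'_{i+1})$).

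The gap in your proposal is the step you yourself label ``the main obstacle,'' and it is the bulk of the proof, not a residual check. Concretely: (i) you have not forced each of $\{x,w_j\}$ and $\{w_j,y\}$ to carry a single controlled label --- these edges lie in triangles, so \Cref{lem:tree-atmost2} does not cap them at two, and the paper devotes a dedicated argument to proving each page has one label and each diagonal has none; (ii) you posit the interleaving $a_1<b_1<\cdots<a_k<b_k$ but never argue it is forced by the dashed arcs; and (iii) you never write down $A_k$ or verify that the proposed $\lambda_k$ realizes $D_k$ exactly and nothing more, which is where the fan is treacherous --- every $w_j$ with $a_j<b_j$ is a strict crossing $x\to w_j\to y$, so $x$-side pendants reach $y$-side vertices through many routes. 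Your window arithmetic is in fact sound \emph{given} single forced labels and the forced interleaving (take $I_i=(b_i,a_{i+1})$; then $b_i<a_j$ forces $j\ge i+1$ while $b_j<a_{i+1}$ forces $j\le i$, so no detour through any $w_j$ lands in $I_i$), so the shape of the final count is right. But everything upstream of that arithmetic is missing, and the paper's machinery (boundary cycle, incomplete end pages) exists precisely to supply it; a bare fan with a handful of ad hoc pendants is likely to underdetermine the triangle-edge labels and over-realize reachability.
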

\iflong
\begin{proof}
The graphs of solid edges of the reachability graphs in $\mathcal{B}$ resemble books with vertices around the sides of their pages. The spine of the book consists of an edge $u_0u'_0$ joining an upper spine-vertex $u_0$ and a lower spine-vertex $u'_0$. Every page has an index $1\leq i\leq L$, where $L\geq 5$ is the total number of pages, and consists of two upper vertices, $u_i$ and $v_i$, and two lower vertices, $u'_i$ and $v'_i$, located at the middle and end of the upper and lower boundaries, respectively. We call $u_i$ and $u'_i$ the inner and $v_i$ and $v'_i$ the outer vertices of page $i$. The set of edges of page $i$ consists of edges $u_0u_i$, $u_iv_i$, $v_iv'_i$, $u'_iv'_i$, and $u'_0u'_i$, around the sides of the page, as well as two additional diagonal edges, $u_0u'_i$ and $u'_0u_i$, between the spine and the inner vertices.
We then modify the constructed graph by making pages $1$, $2$, $L-1$ and $L$ incomplete: For $i=1,2$,
page~$i$ does not contain the lower-side vertices $u_i'$ and $v_i'$, and the
diagonal between $u_i$ and $u_0'$ is a dashed arc $(u_i,u_0')$ instead
of a solid edge. For $i=L-1,L$, page~$i$ does not contain the
upper-side vertices $u_i$ and $v_i$, and the diagonal between
$u_i'$ and $u_0$ is a dashed arc $(u_0,u_i')$ instead of a solid edge.
Furthermore, there is a dashed arc from every inner vertex of a page $i$ to every same-side (i.e., upper to upper and lower to lower) inner vertex of a page $j$ if and only if $i<j$ and both vertices exist, and there is a dashed arc to every opposite-side (i.e., upper to lower and lower to upper) inner vertex of a page $k$ if and only if $i+1<k$ and the target vertex exists. All remaining ordered pairs of vertices $(u,v)$ are specified as unreachable by $D$ (i.e., $(u,v)$ is not a dashed arc). See Figure \ref{fig:lower-bound-on-labels}(a). 
Note that $n=4(L-4)+2\cdot 4+ 2 =4L-6$ and the solid graph
has $m=7(L-4)+2\cdot 4+1 = 7L-19$ edges.

\begin{figure}[!hbtp]
\centering
\begin{subfigure}{.45\textwidth}
\centering
\includegraphics[width=0.9\linewidth]{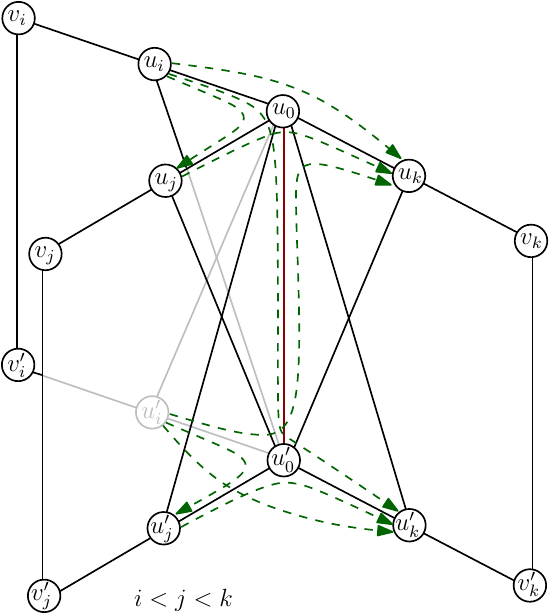}
\caption{}
\end{subfigure}
\hspace{1cm}
\begin{subfigure}{.45\textwidth}
\centering
\includegraphics[width=0.7\linewidth]{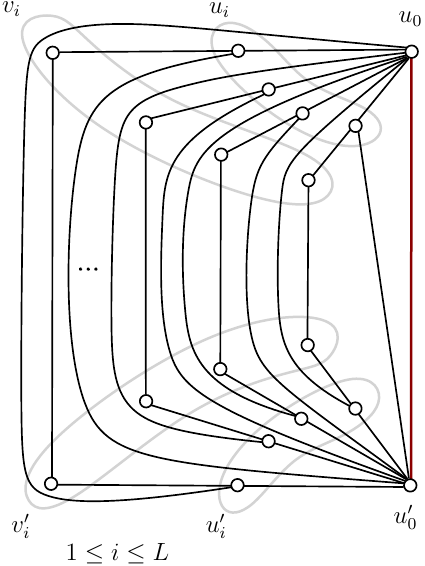}
\caption{}
\end{subfigure}
\caption{(a) An example of the graph of solid edges and dashed arcs for a $D\in\mathcal{B}$ restricted to three pages. (b) A planar drawing of the solid edges.}
\label{fig:lower-bound-on-labels}
\end{figure}

We start by proving the last property, fixing the edge $u_0u'_0$ of the spine as the edge required by it. We will show that $u_0u'_0$ must have at least $L-2=n/4-1/2=\Theta(n)$ labels. We claim that no two edges of a page $i$ can have distinct labels, so edges of page $i$ can be labeled by at most one label, which we denote by $l_i$. 
The two upper edges cannot have distinct labels, otherwise a missing arc between $u_0$ and $v_i$ would be realized. The same holds for the two lower edges. Similarly, $v_iv'_i$ and the diagonal edges cannot have a label different from that of the upper and lower edges, otherwise a missing arc would be realized. This implies that the only way to realize the arcs of $v_iv'_i$ is to label it by $l_i$. Edges $u_iv_i$ and $u'_iv'_i$ must also be labeled by $l_i$ as any other path to realize their arcs would need to use two consecutive edges of page $i$.
For $i=1,2$, both upper edges of page $i$ must be labeled
as they are bridges. For $i=L-1,L$, both lower edges of page $i$
must be labeled for the same reason. For
$3\le i\le L-2$, at least one of the edges $u_i u_0$ and
$u_i u_0'$ must be labeled, and at least one of the edges
$u_i'u_0'$ and $u_i'u_0$ must be labeled, as the arcs
between $u_i$ or $u_i'$ and $u_0$ and $u_0'$ cannot be
realized otherwise. We will show that $u_0 u_i$ and
$u_0' u_i'$ receive a label (which must be equal to $l_i$)
while the diagonals $u_0 u_i'$ and $u_0' u_i$ do not receive
a label. First, observe that $l_j>l_i$ must hold for any
$j>i$, otherwise the arc $(u_i,u_j)$ or $(u_i',u_j')$ cannot
be realized.
Now, assume for a contradiction that there is a page~$i$
with $3\le i \le L-2$ in which a diagonal receives a label.
Consider the smallest such~$i$.
Observe that $u_0 u_i'$ cannot
receive label $l_i$ as this would realize an
arc $(u_{i-1},u_i')$. Thus, $u_i' u_0'$ must
receive label~$l_i$. Furthermore, $u_i u_0$
must also receive label~$l_i$ as the arc
$(u_{i-1},u_i)$ must be realized via the
path $u_{i-1}, u_0, u_i$; it cannot be
realized via a temporal path ending with the edge $u_0' u_i$
as the existence of such a path would imply a temporal
path from $u_{i-1}$ to $u_i'$, a contradiction.
Finally, edge $u_i u_0'$ cannot receive a label;
otherwise, $u_i$ can reach both $u_0$ and $u_0'$
at time $l_i$ and therefore reach $u_{i+1}'$
at time $l_{i+1}$, a contradiction. This shows
that it is impossible that a diagonal edge in page
$i$ receives a label.
For every two distinct pages $i$ and $j$ such that $i<j$, the arcs from the inner vertices of $i$ to the same-side inner vertices of $j$ imply that $u_0u_i$, $u'_0u'_i$ and $u_0u_j$, $u'_0u'_j$ must be labeled by $l_i$ and $l_j$, respectively, and that $l_i<l_j$ as observed above.
Thus, the pages must have increasing labels according to the order of their indices.
For every page $1\leq i\leq L-2$, in order to realize the arc $(u_i,u'_{i+2})$ and/or $(u'_i,u_{i+2})$ (for every $i$ at least one of these two arcs exists),
the spine edge $u_0u'_0$ 
must have a label $l$ such that $l_{i}<l<l_{i+2}$.
Furthermore, that label must be equal to $l_{i+1}$, as
$l_{i}<l<l_{i+1}$ would create a temporal
path from an inner vertex of page $i$ to an opposite-side inner vertex
of page $i+1$, and $l_{i+1}<l<l_{i+2}$ can be excluded analogously.
Thus, the spine edge must receive labels $l_2,l_3,\ldots,l_{L-1}$,
a total of $L-2$ distinct labels.

The remaining part of the property follows by observing that the labeling assigning label $i$ to all non-diagonal edges of page $i$, for all $1\leq i\leq L$, and labels $2,3,4,\ldots,L-2,L-1$ to the spine, realizes $D$. For all $1<i+1<j\leq L$, all arcs from the inner vertices of a side of page $i$ to the opposite-side inner vertices of page $j$ are realized by using label $i+1$ of the spine. These temporal paths also realize the arcs $(u_i,u_0')$
for all $i=1,2,\ldots L-2$, $(u_0',u_i)$ for $i=3,\ldots, L-2$,
$(u_0,u_i')$ for $i=3,\ldots,L$,
and the arcs $(u_i',u_0)$ for $i=3\ldots,L-2$. Thus, all solid
diagonals in pages $3,\ldots,L-2$ and the diagonal arcs in the
remaining levels are realized.
For same-side inner vertices, the temporal paths realizing them go directly through the respective spine-vertex. As required, no dashed arc between the outer vertices and the rest of the graph is realized as all these paths go through inner vertices, and no two edges incident to an inner vertex use different labels. For all \tnew{$1\leq i< j\leq L$}, no dashed arc from the inner vertices of page $j$ to those of page $i$ is realized because \tnew{$j>i$}. 

For the first property we draw the book pages as nested trapezoids with their area increasing by increasing index, and having the spine as their shared basis (see Figure \ref{fig:lower-bound-on-labels}(b)). This gives a planar drawing of all edges but the diagonals. For all $1\leq i\leq L-1$, the diagonals $u'_iu_0$ and $u_{i+1}u_0'$ can be drawn on the face between pages $i$ and $i+1$ without crossing each other. Diagonals $u_1u_0'$ and $u_L'u_0$ can be drawn on the face formed by page 1 and on the outer face, respectively. This completes a planar drawing of a supergraph of $G$ in which all $L$ pages are complete;
the graph $G$ (in which some vertices and edges have been removed from
pages $1,2,L-1,L$) thus also admits a planar drawing.

The spine vertices form a feedback vertex set of size 2. The 
union of the diagonal edges and the edges joining the outer vertices of pages $3$ to $L-2$ forms a feedback edge set of size $3(L-4)= 3n/4-15/2$ and every feedback edge set has size at least $(7L-19)-(n-1)=3n/4-15/2=\Theta(n)$. Thus, the second property holds.
\end{proof}
\fi

We now present single exponential algorithms for all directed and undirected version of~\RGD under consideration based on our upper bounds on the number of assigned labels per edge/arc.

\iflong
\begin{theorem}
\else
\begin{theorem}
\fi\label{exact algos}
Each version of \URGD and \DRGD under consideration can be solved in \nnew{$2^{\Oh(|A|)}\cdot n^{\Oh(1)}$~time, where~$A$ denotes the arc set of the input graph.}
\end{theorem}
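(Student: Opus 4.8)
The plan is a straightforward enumerate-and-verify scheme whose feasibility rests entirely on the per-edge label bounds proved above. First, I would note that it suffices to look for a \emph{minimal} realization: from any realization of $D$ one reaches a minimal one by repeatedly replacing the label set of some edge/arc by a proper subset as long as this is still a realization, and every such step preserves the type of the temporal graph (undirected or directed) as well as the properties proper, simple, and happy. Hence, in each variant, $D$ is a yes-instance if and only if it admits a minimal realization of the prescribed type, and it is enough to enumerate candidate temporal graphs of that restricted form and to test each of them.

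Second, and this is the crux, I would bound the number of candidates that need to be tested by $2^{\Oh(|A|)}\cdot n^{\Oh(1)}$. Three ingredients go into this. (i)~Only solid edges of $D$ can receive a label in a realization of \URGD (and only arcs of $D$ in a realization of \DRGD), so at most $|A|$ edges/arcs are labeled. (ii)~By \Cref{max number label} (and, for all variants except \any\str, already by \Cref{lem:tree-atmost2}), each of these edges/arcs receives at most $n$ labels in a minimal realization. (iii)~The reachability graph of a temporal graph is invariant under any monotone relabeling of the time steps, and such a relabeling also preserves being proper/simple/happy; hence we may normalize a minimal realization so that the labels it uses form a set $\{1,\dots,L\}$ with $L=\Oh(n\cdot|A|)$. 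The naive count of labelings compatible with (i)--(iii) is still roughly $L^{\,n\cdot|A|}$, which is too weak; to bring the exponent down to $\Oh(|A|)$ one has to exploit the structure of minimal realizations more tightly, in particular that the labeled edges/arcs together carry only $\Oh(|A|)$ time edges in total, so that a normalized minimal realization can be described compactly by the multiset of its $\Oh(|A|)$ time edges together with the total order of the layers they form, and that reachability-equivalent descriptions need not be tested twice.

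Third, for each candidate labeling I would build the corresponding temporal graph, compute its reachability graph in $n^{\Oh(1)}$ time (using the polynomial-time temporal-reachability algorithm~\cite{XFJ03}; the directed case is analogous), and check whether it equals $D$ and, for the restricted variants, whether the labeling is proper/simple/happy. The algorithm outputs the first candidate that passes all checks and reports ``unrealizable'' otherwise; correctness follows from the reduction to minimal realizations in the first step (a realizable $D$ has a minimal realization, which after normalization is one of the enumerated candidates, and an unrealizable $D$ produces no passing candidate). The total running time is the number of candidates times $n^{\Oh(1)}$, i.e., $2^{\Oh(|A|)}\cdot n^{\Oh(1)}$.

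I expect the main obstacle to be step two: getting a clean single-exponential-in-$|A|$ bound, rather than the weaker $n^{\Oh(|A|)}$ that the straightforward product bound yields, requires pinning down the total number of time edges of a minimal realization and a redundancy-free enumeration of the normalized realizations; the remaining steps are routine, and the directed variants only need the observation that the label-bound lemmas transfer verbatim with ``solid edge'' replaced by ``arc of $D$''.
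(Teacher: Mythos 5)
Your proposal and the paper's proof part ways at exactly the point you flag as the ``main obstacle.'' You set up a brute-force enumeration of normalized minimal realizations, whereas the paper does not enumerate realizations at all: it runs a dynamic program over subsets $A'\subseteq A$ with a table entry $T[A',i]$ recording whether some temporal graph of lifetime at most $i$ has reachability graph exactly $(V,A')$, and the recurrence guesses the edge set $S\subseteq E$ of the last snapshot together with the subset $A''\subseteq A'$ of arcs that snapshot $i$ newly realizes. Because $|E|\leq |A|/2$ (a solid edge is a pair of opposite arcs) and, by \Cref{max number label}, the lifetime can be capped at $n\cdot m$, this gives $2^{|A|}\cdot n\cdot m$ table entries each computable in $2^{\Oh(|A|)}\cdot n^{\Oh(1)}$ time, and the claimed bound falls out cleanly; the proper, non-strict, simple, and directed variants are then handled by small modifications to what $S$ is allowed to be and by, for the simple case, adding a third table dimension.

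Your steps (i), (ii) and the verification step are fine and match the ingredients the paper also uses (the per-edge label bound, the lifetime bound, and polynomial-time temporal reachability). But step (iii) is a genuine gap, and you acknowledge it yourself: the straightforward count of normalized minimal realizations is on the order of $n^{\Oh(|A|)}$ or worse, and you only gesture at ``pinning down the total number of time edges'' and a ``redundancy-free enumeration'' without establishing either. Worse, even the sharper estimate you hope for --- $\Oh(|A|)$ time edges in total --- would not suffice: specifying which $\Oh(|A|)$ time edges occur and their relative temporal order already yields on the order of $2^{\Oh(|A|\log|A|)}$ candidates, which is not $2^{\Oh(|A|)}$. (And the $\Oh(|A|)$ total-time-edge claim is itself unproven; the paper only gives at most $n$ labels per edge, i.e., up to $n\cdot|A|$ in total.) The missing idea is to change the object being enumerated from realizations to \emph{realized arc sets}, which collapses the state space to $2^{|A|}$ and lets a snapshot-by-snapshot recurrence absorb the rest --- this is precisely the subset DP the paper uses.
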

\begin{proof}
We describe a dynamic program.
We state the program only explicitly for \any\str\URGD and afterwards argue how the dynamic program can be adapted to the other \tnew{problem variants considered}.

Let~$D=(V,A)$ be an instance of~\any\str\URGD and let~$n:= |V|$.
Moreover, let~$E$ denote the set of solid edges of~$D$ and let~$m:= |E|$.
Due to~\Cref{max number label}, there is a realization for~$D$ if and only if there is a realization for~$D$ with lifetime at most~$m\cdot n$.

For each subset~$A'\subseteq A$ and each~$i\in [0,m\cdot n]$, our dynamic programming table~$T$ has an entry~$T[A',i]$.
This entry stores whether there is a temporal graph~$\mg$ with lifetime at most~$i$, such that the strict reachability graph of~$\mg$ is~$D':=(V,A')$.

For~$i = 0$, we set~$T[\emptyset,0] = 1$ and~$T[A',0] = 0$ for each nonempty~$A'\subseteq A$.
For each~$i \in [1,m\cdot n]$ and each~$A' \subseteq A$, we set 

\begin{align*}
T[A', i] =~& \exists A'' \subseteq A'.  \exists S \subseteq E.  T[A' \setminus A'',i-1]  \\
&\land(\forall (u,v)\in A''. \exists w. (w = u \lor (u,w)\in A'\setminus A'') \land w \text{ reaches } v \text{ via an edge of }S)\\
&\land(\forall (u,v)\notin A'. \not\exists w. (w = u \lor (u,w)\in A'\setminus A'') \land w \text{ reaches } v \text{ via an edge of }S)
\end{align*}

Intuitively, the recurrence resembles splitting the reachability of arcs~$A'$ into two sets~$A''$ and~$A'\setminus A''$.
We recursively check for the existence of a temporal graph~$\mg''$ of lifetime at most~$i-1$ with strict reachability graph equal to~$(V,A'\setminus A'')$, and if such a temporal graph exists, we append a new snapshot to the end with edges~$S$.
The second and the third line of the recurrence then ensure that (i)~all arcs of~$A'\setminus A''$ are realized by introducing this new snapshot~$i$, and that (ii)~no arcs outside of~$A'$ are realized by introducing this new snapshot~$i$.

We omit the formal correctness proof and proceed with the evaluation of the dynamic program.
The graph~$D$ is realizable if and only if $T[A,n\cdot m] = 1$.
Since there are~$2^{|A|} \cdot n \cdot m$ table entries and each such entry can be computed in $2^{|A|} \cdot 2^{|E|} \cdot n^{\Oh(1)} \subseteq 2^{\Oh(|A|)} \cdot n^{\Oh(1)}$~time, the whole algorithm runs in~$2^{\Oh(|A|)} \cdot n^{\Oh(1)}$~time.
This completes the proof for~\any\str\URGD.
Moreover, note that a corresponding realization (if one exists) can be found in the same time via traceback.

\iflong
If we consider proper labelings, we can modify the recurrence to only allowed to take subsets of edges~$S\subseteq E$ that are matchings.

If we consider non-strict paths, we can modify the recurrence to ask whether~$w$ reaches~$v$ via a path using only edges of~$S$ instead of only a single edge of~$S$.

If we consider simple labelings, we can modify the dynamic program by a third dimension~$E' \subseteq E$ which stores the edges of~$E$ we may possibly assign labels to.
In the recurrence, we then have to ensure that~$S$ is only a subset of~$E'$ and also remove~$S$ from the allowed edges in the recursive call, that is, we have to replace~$T[A'\setminus A'', i-1]$ by~$T[A'\setminus A'',i-1, E' \setminus S]$.

For the versions of~\DRGD \tnew{under consideration}, we do essentially the same, except that~$S$ has to be a subset of~$A$.

If we consider proper labelings, instead of picking a matching for~$S$, we must pick a set of arcs that do not \tnew{induce} a path of length more than one.

If we consider simple labelings, we do essentially the same as in the undirected case.
We extend the table by a third dimension that stores the arcs~$A^*$ of~$A$ we may possibly assign labels to.
In the recurrence, we then have to ensure that~$S$ is only a subset of~$A^*$.
\else
We defer the description of the necessary modifications for the other versions under considerations to the full version.
\fi
\end{proof}

\section{Solid Bridge Edges: Properties and Splitting Operations}\label{sec:bridge}
In this section, we show several structural results as well as three splitting operations for graphs with bridge edges in the solid graph.
These insights will be important for our FPT algorithm and will also simplify instances of~\URGD where the solid graph is a tree.

Consider a bridge edge $e=\{u,v\}$ whose deletion
splits the solid graph $G$ into connected components $G_u$ and $G_v$, where
$G_u$ contains $u$ and $G_v$ contains $v$.
A dashed arc $(a,b)$ \emph{spans} $e$ if $a\in V(G_u)$ and $b\in V(G_v)$
or vice versa. Two arcs $(a,b)$ and $(c,d)$ span $e$ in the same direction
if they span $e$ and either $a,c\in V(G_u)$ or $a,c\in V(G_v)$.
If $e$ has a single label in a realization, then it must be the case
that the dashed arcs spanning $e$ are ``transitive'' in the following sense:
If dashed arcs $(a,b)$ and $(c,d)$ both span $e$ in the same direction, then $(a,d)$
and $(c,b)$ must also be dashed arcs. This is because for any two temporal
paths that pass through $e$ in the same direction, the part of one path
up to edge $e$ can be combined with the part of the other path after $e$.

\begin{definition}
    \label{def:special}
Consider a bridge edge $e=\{u,v\}$ whose deletion
splits the solid graph $G$ into connected components $G_u$ and $G_v$.
The edge $e$ is a \emph{special} bridge
edge if there exist vertices $a$ in $G_u$ and $b$ in $G_v$
such that $D_{av}=1$, $D_{ub}=1$ and $D_{ab}=0$ (or if the
same condition holds with $u$ and $v$ exchanged). A bridge edge
that is not special is called \emph{non-special}.
\end{definition}

Intuitively,
a special bridge \tnew{edge} is a bridge edge $e$ such that there are dashed arcs
spanning $e$ in the same direction for which transitivity (as outlined
above) is violated.
We also refer to special bridge edges as special bridges, special solid edges, or just as special edges.

\iflong
\begin{lemma}
\else
\begin{lemma}
\fi
\label{lem:special}
    In each \minlab realization of $D$,
    every special bridge edge
    is assigned two     labels and every non-special bridge edge is
    assigned a single label.
\end{lemma}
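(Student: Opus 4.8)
I would split the proof into a size bound that applies to all bridges and two directions (special $\Rightarrow$ two labels, non-special $\Rightarrow$ one label). First, every bridge edge $e=\{u,v\}$ must receive at least one label: $e$ is the only $u$–$v$ path in the solid graph, so otherwise $u$ could not reach $v$. And $e$ lies in no triangle, so \Cref{lem:tree-atmost2} caps the number of labels on $e$ at two in any minimal — hence any frugal — realization. So in a frugal realization every bridge carries exactly one or exactly two labels, and it suffices to exclude the "wrong" one in each case.

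\textbf{Special $\Rightarrow$ two.} Suppose $e$ is special with witnesses $a\in V(G_u)$, $b\in V(G_v)$ satisfying $D_{av}=D_{ub}=1$ and $D_{ab}=0$, and suppose for contradiction some realization assigns $e$ the single label $t$. Since $e$ is a bridge, every temporal path from $a$ to $v$ ends with the time edge $(e,t)$ and its initial part is a temporal path from $a$ to $u$ living inside $G_u$ with all time steps $<t$; likewise every temporal path from $u$ to $b$ begins with $(e,t)$ and its final part is a temporal path from $v$ to $b$ inside $G_v$ with all time steps $>t$. These two segments lie in the two components, which meet only at $e$, so concatenating them with $(e,t)$ yields a temporal path from $a$ to $b$, forcing $D_{ab}=1$ — a contradiction. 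Hence $e$ gets at least two labels, so exactly two.

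\textbf{Non-special $\Rightarrow$ one.} Suppose $e=\{u,v\}$ is non-special and, for contradiction, a frugal realization $\lambda$ assigns $e$ the labels $t_1<t_2$; I would contradict frugality by replacing $\lambda(e)$ with a single label. Since $e$ is a bridge, changing $\lambda(e)$ affects only arcs whose endpoints lie in different components, and each such crossing arc is realizable only through $e$. For $x\in V(G_u)$ let $r(x)$ be the earliest time a temporal path inside $G_u$ reaches $u$ from $x$, and $d(x)$ the latest starting time of a temporal path inside $G_u$ from $u$ to $x$ (with $r(u)=-\infty$, $d(u)=+\infty$), and define $r',d'$ analogously inside $G_v$ relative to $v$; none of these change when $\lambda(e)$ is altered. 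Using non-specialness one shows the crossing arcs directed from $G_u$ to $G_v$ equal $\{x: r(x)<t_2\}\times\{y: d'(y)>t_1\}$, and that this is a Cartesian product whose two factors are determined by $D$ alone (the vertices reaching $v$, resp.\ reachable from $u$); symmetrically for the reverse direction. So it is enough to pick one time $t^*$ with $\{x:r(x)<t^*\}=\{x:r(x)<t_2\}$, $\{y:d'(y)>t^*\}=\{y:d'(y)>t_1\}$, and the two symmetric equalities, as then $\lambda$ with $\lambda(e)$ replaced by $\{t^*\}$ still realizes $D$. Any integer in the open interval $(\max(M,M'),\min(m,m'))$ works, where $M=\max\{r(x):r(x)<t_2\}$, $m=\min\{d(x):d(x)>t_1\}$, and $M',m'$ are the analogues in $G_v$ (with the obvious reading when a value is $\pm\infty$).

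\textbf{Main obstacle.} The crux is showing that interval is nonempty. The two cross-component gaps ($M$ versus $m'$ and $M'$ versus $m$) are forced open by non-specialness: a too-small gap exhibits vertices $x_0\in V(G_u)$, $y_0\in V(G_v)$ with $D_{x_0v}=D_{uy_0}=1$, hence (non-special) $D_{x_0y_0}=1$, yet this crossing arc can be realized at neither $t_1$ nor $t_2$. The two same-component gaps are forced open by the bridge property: no solid edge other than $e$ joins the two components, which translates into ``for every $x\in V(G_u)\setminus\{u\}$, $r(x)\ge t_2$ or $d(x)\le t_1$'' (and the same in $G_v$); feeding this along a witnessing path (examining the first/last edge incident to $u$) yields the remaining inequalities. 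This last sub-case is the delicate bookkeeping — in particular the regime $t_2=t_1+1$ needs separate care — and is where I expect the real work of the proof to lie; the rest is routine.
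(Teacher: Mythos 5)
Your plan follows the paper's route in essence, with a reorganization of the non-special direction. The preliminaries (every bridge gets at least one label; at most two by Lemma~\ref{lem:tree-atmost2} since a bridge lies in no triangle) and the special direction are correct; your concatenation argument is a fully spelled-out version of the paper's one-line remark that with a single label "the violation of transitivity would be impossible."

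For the non-special direction, the paper picks the replacement label $(t_1+t_2)/2$ directly and verifies it by a two-case analysis on whether every vertex of $V(G_u)$ that reaches $v$ already does so via the earlier label $t_1$; you instead set up explicit thresholds $M,m,M',m'$ and must exhibit a value in $(\max(M,M'),\min(m,m'))$. The ingredients coincide: non-specialness makes the crossing-arc set a Cartesian product (the paper's "making $e$ a special edge" clause), and the bridge property constrains which labels can appear on edges adjacent to $e$. Your cross-component inequalities and product-factorization are correct as argued. The same-component inequalities $M<m$ and $M'<m'$, however, are exactly the step you leave as "delicate bookkeeping," and as written the proposal does not prove them. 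They are fillable along the line you gesture at: if $\alpha$ is a label on an edge $\{u,w\}$ with $w\in V(G_u)$, then $r(w)\le\alpha\le d(w)$, so your pointwise disjunction "$r(w)\ge t_2$ or $d(w)\le t_1$" forces $\alpha\ge t_2$ or $\alpha\le t_1$; since $M$ and $m$ are themselves such labels (or $\pm\infty$), this gives $M\le t_1<t_2\le m$, and symmetrically in $G_v$. The paper reaches the same label constraint more directly, observing that a label in $(t_1,t_2)$ on any edge incident with $u$ or $v$ would create a triangle through $e$, impossible for a bridge. Finally, the regime $t_2=t_1+1$ that worries you is handled in the paper uniformly by letting the replacement label be the half-integer $(t_1+t_2)/2$ and rescaling all labels to integers at the end while preserving order; adopting that trick dissolves the case split you anticipate.
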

\iflong
\begin{proof}
    Let $\lambda$ be a \minlab realization. Consider a bridge edge $e=\{u,v\}$.
    It is clear that $e$ must receive at least one label in $\lambda$, as otherwise
    the endpoints of $e$ cannot reach each other.
    As a bridge edge is not contained in a triangle,
    it receives at most two labels in $\lambda$ by Lemma~\ref{lem:tree-atmost2}.
    It is clear that every special edge must receive two labels
    as otherwise the violation of transitivity that happens
    on a special edge would be impossible.
    Assume that $e=\{u,v\}$ is a non-special edge that
    has received two labels $x$ and $y$ with $x<y$. We claim that replacing the double label
    of $e$ by the new label $(x+y)/2$ produces a new solution $\lambda'$ that also
    realizes~$D$, a contradiction to $\lambda$ being \minlab.
    (The possibly half-integral label $(x+y)/2$ can be made integral in the end by replacing
    all labels by integers while maintaining their order.)

    Consider strict temporal paths first. Note again that the edges $f\neq e$ incident
    with endpoints of $e$ can only have labels $\le x$ or $\ge y$.
    We consider the reachability from $G_u$ to $G_v$. (The arguments for the reachability from
    $G_v$ to $G_u$ are analogous.)
    Case 1: If every $a\in V(G_u)$ can reach $v$ in step $x$ in $(G,\lambda)$, then it can reach $v$
    in step $(x+y)/2$ in $(G,\lambda')$. The vertices in $V(G_v)$ that $u$ can reach starting
    in time step $x$ in $\lambda$ are precisely the vertices in $V(G_v)$ that can
    be reached starting in time step $(x+y)/2$ in $\lambda'$, and hence the reachability
    from \tnew{$G_u$ to $G_v$} is the same in $(G,\lambda')$ as in $(G,\lambda)$.
    Case 2: If there exists $a\in V(G_u)$ that can reach $v$ only in step $y$ in $(G,\lambda)$,
    then every vertex $b$ in $V(G_v)$ that can be reached from $u$ in $(G,\lambda)$ must be reachable
    starting from $u$ in time step $y$ in $(G,\lambda)$ (otherwise $a$ would not be able
    to reach $b$, making $e$ a special edge). All vertices from $V(G_u)$ that can
    reach $v$ in $(G,\lambda)$ can reach $v$ in $(G,\lambda')$ in time step $(x+y)/2$. Furthermore,
    the vertices in $V(G_v)$ that $u$ can reach starting
    in time step $x$ in $(G,\lambda)$ (which are the same as those that can be reached
    starting at $u$ in time step $y$ in $(G,\lambda)$) are precisely the vertices in $V(G_v)$ that can
    be reached starting at $u$ in time step $(x+y)/2$ in $(G,\lambda')$. Hence, the reachability
    from $V(G_u)$ to $V(G_v)$ is the same in $(G,\lambda')$ as in $(G,\lambda)$.

    Repeating such replacements then yields a valid solution in which all non-special
    edges have a single label.

    The same approach works for non-strict temporal graphs (where the labels
    of edges $f\neq e$ incident with endpoints of $e$ must be $<x$ or $>y$ in $(G,\lambda)$).
    Furthermore, if the original labeling is proper, then the final labeling
    is also proper.
\end{proof}
\fi

Note that this implies the following due to~\Cref{lem:tree-atmost2}.

\begin{corollary}
\label{cor:nospecial}
Let~$D$ be an instance of any version of \URGD under consideration besides~\any\str\URGD.
If the solid graph of~$D$ contains a special solid edge, then~$D$ is a no-instance.
\end{corollary}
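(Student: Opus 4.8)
The plan is to derive Corollary~\ref{cor:nospecial} as an immediate consequence of the two lemmas already established, namely Lemma~\ref{lem:tree-atmost2} and Lemma~\ref{lem:special}. The argument is a short proof by contradiction, so I do not expect any real obstacle; the work is entirely in combining the bounds correctly.

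First I would assume, for contradiction, that $D$ is an instance of some version of \URGD other than \any\str\URGD, that the solid graph $G$ of $D$ contains a special solid edge $e$, and yet that $D$ is a yes-instance. From any realization of $D$ we may pass to a \minlab (hence minimal) realization $\lambda$ of $D$: indeed, repeatedly shrinking label sets of edges until no further reduction is possible yields a \minlab realization, and this preserves the property of being a realization of $D$ by definition. Note that $e$, being a bridge, is not contained in any triangle, so both lemmas apply to it.

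Next I would combine the two bounds. By Lemma~\ref{lem:special}, in the \minlab realization $\lambda$ the special bridge edge $e$ is assigned exactly two labels. On the other hand, Lemma~\ref{lem:tree-atmost2} states that for any instance of a version of \URGD other than \any\str\URGD, every solid edge not in a triangle receives at most one label in every minimal realization. Since $e$ is not in a triangle and $\lambda$ is minimal (every \minlab realization is minimal, as remarked in the preliminaries), $e$ receives at most one label under $\lambda$. This contradicts $e$ receiving two labels, and the contradiction completes the proof.

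I would keep the write-up to a couple of sentences, essentially: ``Suppose $D$ were realizable; take a \minlab realization $\lambda$. By Lemma~\ref{lem:special}, the special solid edge receives two labels under $\lambda$, but by Lemma~\ref{lem:tree-atmost2} it receives at most one, since it is a bridge (hence triangle-free) and $\lambda$ is minimal. This contradiction shows $D$ is a no-instance.'' The only point requiring a little care is making explicit that a \minlab realization exists whenever a realization does, and that \minlab implies minimal so that Lemma~\ref{lem:tree-atmost2} is applicable; both facts are stated in the preliminaries, so no new argument is needed.
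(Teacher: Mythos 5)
Your proof is correct and matches the paper's (unstated but clearly implied) argument: take a frugal realization, apply Lemma~\ref{lem:special} to get two labels on the special bridge, and contradict the one-label bound of Lemma~\ref{lem:tree-atmost2} using the fact that a bridge is triangle-free and frugal implies minimal. Nothing is missing.
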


\ifold
In view of Lemma~\ref{lem:special}, we also refer to
special bridges as \emph{double-labeled} edges and
to non-special bridges as \emph{single-labeled} edges.
\fi

\iflong
We are now interested in when a pair of bridge edges that share an endpoint
must, must not, or may share a label. For the non-strict problem
variants (and thus also the happy and proper cases for strict paths),
adjacent bridges obviously cannot share a label.
The following definition captures
the condition for when adjacent bridges must share a label in
the other cases.

\begin{definition}
\label{def:bundled}Consider \any\strict\URGD or \simp\strict\URGD.
Let~$D$ be an instance of~\URGD and let~$G$ be the solid graph of~$D$.
Let~$\{u,c\}$ and~$\{v,c\}$ be bridges in~$G$, and let~$V_u$ ($V_v$) denote the set of vertices in the same connected component as~$u$ ($v$) in~$G-\{\{u,c\}\}$ ($G-\{\{v,c\}\}$).
We say that~$\{u,c\}$ and~$\{v,c\}$ are~\emph{bundled}, if 
\begin{itemize}
\item $(u,v)\notin A$ and~$(v,u)\notin A$, or
\item at least one of the two edges is special with respect to~$D[V_u \cup V_v \cup \{c\}]$.
\end{itemize}
\end{definition}

Note that the second condition of Definition~\ref{def:bundled}
cannot occur in a yes-instance of \simp\strict\URGD.

\begin{lemma}\label{meaning of bundled}
In every realization, each pair of bundled bridges shares a label.
\end{lemma}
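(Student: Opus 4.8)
The plan is to argue by contradiction directly on an arbitrary realization $\lambda$, splitting according to which condition of \Cref{def:bundled} makes the two bridges $e_1=\{u,c\}$ and $e_2=\{v,c\}$ bundled. Fix $\lambda$ and suppose $\lambda(e_1)\cap\lambda(e_2)=\emptyset$; note both label sets are nonempty, since every bridge must receive a label (the endpoints of $e_1$ can reach each other only along $e_1$ itself). If the first condition holds, i.e.\ $(u,v),(v,u)\notin A$, the contradiction is immediate: as $\lambda(e_1),\lambda(e_2)$ are disjoint and nonempty, either some label of $e_1$ is smaller than some label of $e_2$, so that $u\to c\to v$ is a strict temporal path and $(u,v)\in A$, or the reverse holds and $v\to c\to u$ gives $(v,u)\in A$. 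So I may assume the second condition holds and, by the symmetry between $u$ and $v$, that $e_1$ is special with respect to $D'=D[V_u\cup V_v\cup\{c\}]$.

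Next I would record the bridge structure. Since $e_1,e_2$ are bridges, $V_u$, $V_v$ and $\{c\}$ are pairwise disjoint, the only edge of $G$ leaving $V_u$ is $e_1$, and the only edge leaving $V_v$ is $e_2$; because temporal paths traverse simple paths of $G$, every temporal path from a vertex $a\in V_u$ to a vertex $b\in V_v$ therefore has the form $a\rightsquigarrow u\xrightarrow{e_1}c\xrightarrow{e_2}v\rightsquigarrow b$ with the two outer segments lying inside $V_u$ and $V_v$, and symmetrically for paths from $V_v$ to $V_u$ and for paths to $c$. Let $x=\min\lambda(e_1)$ and $y=\max\lambda(e_1)$. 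As $e_1$ is a bridge it lies in no triangle, so (by the argument in the proof of \Cref{lem:tree-atmost2}) $\lambda(e_2)$ contains no label in $[x,y]$: a label $\ell$ with $x<\ell<y$ would realize both $(u,v)$ and $(v,u)$, making $\{u,v\}$ a solid edge and $u,v,c$ a triangle, while $\ell\in\{x,y\}$ is ruled out by disjointness. Hence every label of $e_2$ is $<x$ or $>y$.

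Then I would use the special witnesses. Specialness of $e_1$ in $D'$ yields, up to its symmetric variant, vertices $a\in V_u$ and $b\in V_v$ (with $b\neq c$, since $D_{ac}=1\neq 0=D_{ab}$) such that $D_{ac}=D_{ub}=1$ and $D_{ab}=0$. Decomposing the two witnessing paths as above: $D_{ac}=1$ provides a path $a\rightsquigarrow u$ inside $V_u$ reaching $u$ at some time $\tau<y$ (it then crosses $e_1$, so $\tau$ lies below some label of $e_1$), and $D_{ub}=1$ provides a path $u\xrightarrow{e_1,p}c\xrightarrow{e_2,q}v\rightsquigarrow b$ with $x\le p<q$ and all labels on $v\rightsquigarrow b$ exceeding $q$; since $q\in\lambda(e_2)$ and $q>x$, the previous paragraph forces $q>y$. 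Concatenating $a\rightsquigarrow u$, then $e_1$ at time $y$, then $e_2$ at time $q>y$, then $v\rightsquigarrow b$, gives a strict temporal path from $a$ to $b$, contradicting $D_{ab}=0$. The symmetric variant ($a\in V_v$, $b\in V_u$, $D_{au}=D_{cb}=1$, $D_{ab}=0$) is handled the same way, except one reroutes through the \emph{smallest} label $x$ of $e_1$: from $D_{au}=1$ the $e_2$-step $q$ of the witnessing path satisfies $q<p\le y$ and $q\notin[x,y]$, hence $q<x$, leaving room to cross $e_1$ at time $x$ and then append the $u\rightsquigarrow b$ tail from $D_{cb}=1$ (whose labels exceed some label $\ge x$ of $e_1$, hence exceed $x$).

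The routine parts are the bridge bookkeeping and the two nearly identical concatenations; the one point needing care is establishing that temporal paths crossing the two bridges do so monotonically through $c$ without detours — which is precisely where bridge-ness and the simplicity of temporal paths enter — together with keeping the two symmetric forms of ``special'' apart and, in each, picking the correct extreme label of $e_1$ ($y$ in the first form, $x$ in the second) at which to reroute. I expect this case bookkeeping, rather than any single estimate, to be the main obstacle. Finally I would remark that the whole argument applies to an arbitrary realization, so no passage to a minimal or frugal realization is needed, and that for \simp\strict\URGD the second condition of \Cref{def:bundled} cannot occur in a yes-instance, so only the trivial first case is relevant there.
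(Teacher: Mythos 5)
Your proof is correct, and the high-level strategy matches the paper's: extract a specialness witness $(a,b)$ with $D_{ab}=0$, constrain the labels of the two bridges, and concatenate two temporal sub-paths through $c$ to force $D_{ab}=1$, a contradiction. The route differs in two substantive ways from the paper's. First, the paper breaks the $u/v$ symmetry early: from $(u,v)\in A$ and $(v,u)\notin A$ it deduces that every label of $\{c,v\}$ is at least $\max\lambda(\{c,u\})$, and uses this stronger directional constraint both to obtain the concatenation directly and to show that the witness must have the form $a\in V_u$, $b\in V_v$ with either $D_{ac}=D_{ub}=1$ or $D_{av}=D_{cb}=1$ (the ``backwards'' form, with $a\in V_v$ reaching $u$ or $c$, is impossible since it would give $(v,u)\in A$). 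You instead use only the weaker symmetric constraint $\lambda(e_2)\cap[x,y]=\emptyset$ (from bridge $\Rightarrow$ triangle-free, plus disjointness), which does not by itself exclude the backwards form; you then handle both forms of the witness with explicit concatenations, picking $y$ as the rerouting label for one form and $x$ for the other. Second, because you never use which of $(u,v)$, $(v,u)$ is present, one of your two cases is in fact vacuous (the paper shows the backwards witness cannot occur), but establishing the contradiction in a vacuous case is harmless and your derivation there is sound. What the paper's approach buys is a shorter proof (one concatenation, one case); what yours buys is a proof that never breaks the $u/v$ symmetry and is entirely local to the label sets $\lambda(e_1),\lambda(e_2)$. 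Your Case 1 argument (disjoint nonempty label sets force an arc) is a correct direct derivation in the spirit of Observation~\ref{incident same label}, which is what the paper implicitly invokes.
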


\begin{proof}
Consider bridges $\{u,c\}$ and~$\{v,c\}$ that are bundled.
If $(u,v)\notin A$ and~$(v,u)\notin A$, both edges must receive
the same single label.
Assume for the rest of the proof that one of the two arcs, say
the arc $(u,v)$ is in $A$ (and hence $(v,u)\notin A$) and
at least one of the two edges $\{u,c\}$ and $\{v,c\}$ is special.
This can only happen in the \any\strict\URGD model.
The labels on $\{c,v\}$ must be at least as large as the largest label on $\{c,u\}$,
as $v$ cannot reach~$u$.

As at least one of the two edges $\{u,c\}$ and $\{v,c\}$ is special
in~$G[V_u \cup V_v \cup \{c\}]$ by Definition~\ref{def:bundled},
there must exist
vertices $a\in V_u$ and $b\in V_v$ such that $D_{ac}=D_{ub}=1$ and $D_{ab}=0$
if $\{u,c\}$ is special or such that $D_{av}=D_{cb}=1$ and $D_{ab}=0$
if $\{v,c\}$ is special. (It cannot be that $\{u,c\}$ or $\{v,c\}$ is
special because there are vertices $x\in V_v$ and $y\in V_u$ with $D_{xc}=D_{vy}=1$ or $D_{xu}=D_{cy}=1$ and $D_{xy}=0$; this would
contradict the absence of a temporal path from $v$ to~$u$.)
If $\{u,c\}$ and $\{v,c\}$ did not share a label, any temporal path from
$a$ to $c$ could be concatenated with any temporal path from $c$ to $b$,
contradicting $D_{ab}=0$. Hence, the two edges must share a label.
\end{proof}

Next, we characterize pairs of adjacent bridges that must not share a label
in any realization.

\begin{definition}
Let~$D$ be an instance of~\URGD and let~$G$ be the solid graph of~$D$.
Let~$\{u,c\}$ and~$\{v,c\}$ be bridges in~$G$.
We say that~$\{u,c\}$ and~$\{v,c\}$ are~\emph{separated}, if 
\begin{itemize}
\item we consider \pro\URGD, \happy\URGD, or \nstr\URGD,
\item we consider \simp\str\URGD and~$A$ contains an arc of~$\{(u,v),(v,u)\}$,
\item one arc of~$\{(u,v),(v,u)\}$ is contained in~$A$ and neither of the bridges~$\{u,c\}$ and~$\{v,c\}$ is special,
\item there are~$w_1,w_2 \in N(c)$ such that $\{c,w_1\}$ and $\{c,w_2\}$ are also bridges and~$(u,w_1,w_2,v)$ or $(v,w_1,w_2,u)$ is a path in~$D$, or
\item there is~$w \in N(c)$ such that $\{w,c\}$ is also a bridge and~$(u,w,v)$ (or $(v,w,u)$) is a path in~$D$ such that (i)~$\{u,c\}$ is non-special, (ii)~$\{w,c\}$ is special, or (iii)~$\{v,c\}$ is non-special.
\end{itemize}
\end{definition}

\begin{lemma}\label{meaning of separated}
In every \minlab realization, each pair of separated bridges shares no label.
\end{lemma}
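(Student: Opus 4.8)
The proof will go through the five cases in the definition of ``separated'' one at a time, in each case deriving a contradiction from the assumption that the two bridges $\{u,c\}$ and $\{v,c\}$ receive a common label in some frugal realization $\lambda$. The first two cases are immediate: if we are in \pro\URGD, \happy\URGD, or \nstr\URGD, then no two adjacent solid edges can carry a common label at all (adjacent edges sharing a label in a proper labeling is forbidden by definition, and in the non-strict case \Cref{incident same label} together with the remark following it forbids two adjacent labeled edges whenever the ``missing triangle arc'' is absent, which it is here since $\{u,c\}$ and $\{v,c\}$ being bridges means $\{u,v\}$ is not a solid edge). In the \simp\str\URGD case with an arc of $\{(u,v),(v,u)\}$ present: both bridges must be labeled (bridges always receive a label), each with a single label by simplicity, and if that label is common then $u$ and $v$ reach each other, so both $(u,v)$ and $(v,u)$ would be arcs of $D$ — but then the two bridges would be bundled and not separated, or in any case we can argue directly that a single shared label forces reachability in both directions, contradicting that only one of the two arcs is present. (I will phrase this as: a shared single label makes $c$ a cut vertex through which $u\leftrightarrow v$ both ways, contradicting $D_{uv}=0$ or $D_{vu}=0$.)

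For the third case — one arc of $\{(u,v),(v,u)\}$ present, say $(u,v)$, and neither bridge special — I would combine \Cref{lem:special} (non-special bridges get a single label in a frugal realization) with the transitivity discussion preceding \Cref{def:special}. If $\{u,c\}$ and $\{v,c\}$ share their single label $\alpha$, then every vertex of $V_u$ that can reach $c$ does so via paths ending with time step $\alpha$, and likewise every vertex of $V_v$ reachable from $c$ is reached starting at time step $\alpha$ through the edge $\{c,v\}$; hence $v$ itself (which reaches $c$ at step $\alpha$ along the single-labeled bridge) would reach $u$, giving $D_{vu}=1$, contradiction. The key point is that non-specialness plus a shared single label collapses the two bridge-crossings into one time step, forcing full two-way reachability between $u$ and $v$.

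For the fourth and fifth cases the obstruction — and the real heart of the lemma — is an argument through a third bridge at $c$. In case four there are bridges $\{c,w_1\}$ and $\{c,w_2\}$ with $(u,w_1,w_2,v)$ a path in $D$, meaning $D_{uw_1}=D_{w_1w_2}=D_{w_2v}=1$; the plan is to show that if $\{u,c\}$ and $\{v,c\}$ shared a label $\alpha$, then a temporal path from $w_1$ to $c$ (which must exist since $D_{w_1 c}=1$ as these are bridges at $c$, or at least $w_1$ reaches $c$; I will need to check the arc directions carefully from ``path in $D$'') composed with the crossing of $\{c,v\}$ at time $\alpha$ and the $\{u,c\}$-crossing would create an unwanted reachability among $\{u,w_1,w_2,v\}$ — concretely, $u$ reaching $c$ at time $\alpha$ and then $c$ reaching $w_2$ or $v$ in a way inconsistent with the prescribed arcs, or $w_1$ reaching $v$ or $w_2$ reaching $u$. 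Case five is the analogous single-intermediate-vertex version, split according to which of the three bridges is (non-)special; here \Cref{lem:special} tells us the exact number of labels on each bridge (one or two), and I would argue that a shared label between $\{u,c\}$ and $\{v,c\}$ lets the ``wrong'' concatenation through $w$ realize an arc between $u$ and $w$, or $w$ and $v$, or $u$ and $v$, contradicting that $(u,w,v)$ is a (shortest, non-transitive) path. I expect the bookkeeping of arc directions and of which crossings happen at which of the (at most two) labels on each bridge to be the main obstacle; the cleanest route is probably to fix, in each sub-case, a pair of ``witness'' vertices (either among $u,v,w,w_1,w_2$ or the vertices $a,b$ witnessing specialness) and exhibit the concatenated temporal path explicitly, then read off the forbidden arc. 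Once the right witnesses are identified, each contradiction is a one-line concatenation argument.
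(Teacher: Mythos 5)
Your case split is the paper's, and case 1 is fine, but cases 2 and 3 contain a concrete error: you claim that a shared single label on $\{u,c\}$ and $\{v,c\}$ would force two-way reachability between $u$ and $v$, contradicting $D_{vu}=0$. These cases live in the strict regime (the non-strict, proper, and happy settings are all absorbed into case 1), and under strict paths the opposite is true: if both bridges carry only the common label $\alpha$, then no strict temporal path can enter $c$ on one of them and leave on the other, so \emph{neither} of $u$ and $v$ reaches the other. The contradiction is a failure to realize the required arc $(u,v)\in A$, not the appearance of $(v,u)$. Flipping this gives the correct argument, which is what the paper uses: $(u,v)\in A$ together with both bridges being single-labeled forces $\lambda(\{u,c\})<\lambda(\{c,v\})$, so the labels cannot coincide. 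This is not a cosmetic slip: the same misconception, if carried into cases 4--5, would have you looking for the wrong kind of contradiction.

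For cases 4--5 you stop at a plan and explicitly flag the bookkeeping as the obstacle; the paper carries out exactly the idea you are groping toward, but packaged so the bookkeeping is automatic. Set $\ell_x=\min\lambda(\{c,x\})$ and $h_x=\max\lambda(\{c,x\})$ for each bridge endpoint $x$. Since there is no solid edge among $u,w_1,w_2,v$ (every pair is joined through the bridge star at $c$), each dashed arc $(x_i,x_{i+1})$ on the hypothesized path forces $\ell_{x_i}<h_{x_{i+1}}$, while the absence of the reverse arc $(x_{i+1},x_i)\notin A$ forces $h_{x_i}\le\ell_{x_{i+1}}$. Chaining along $(u,w_1,w_2,v)$ gives $h_u\le\ell_{w_1}<h_{w_2}\le\ell_v$, so the label sets of $\{u,c\}$ and $\{v,c\}$ are disjoint. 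Case 5 is the three-vertex analogue: the bare chain $h_u\le\ell_w\le h_w\le\ell_v$ is only non-strict, and each of the three alternative hypotheses (the non-specialness of $\{u,c\}$ or of $\{v,c\}$, or the specialness of $\{w,c\}$) is precisely what supplies the one strict inequality needed to tighten it to $h_u<\ell_v$. Your ``concatenation realizes a forbidden arc'' sketch collapses to this: a shared label $\gamma$ would give $\gamma\le h_u\le\ell_{w_1}$ and $h_{w_2}\le\ell_v\le\gamma$, contradicting $\ell_{w_1}<h_{w_2}$. So the approach is right, but the argument is not yet there: you need to write down these inequalities and verify the strictness in each sub-case, rather than hope the right witness vertices appear.
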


\begin{proof}
If we consider non-strict paths, adjacent bridges cannot share a label because a shared label would imply a triangle containing at least one of these bridges. 
If we consider proper or happy labeling, no two adjacent edges can share a label.

If we consider \simp\str\URGD and $A$ contains, say,
the arc $(u,v)$, then the label of $\{c,v\}$ must be larger than the
label of $\{c,u\}$ because each edge receives at most one label and
the two edges must form a temporal path from $u$ to $v$.

If neither of the bridges $\{u,c\}$ and~$\{v,c\}$ is special and
$A$ contains one of the arcs $(u,v)$ or $(v,u)$ (note that it cannot
contain both), the same argument as in the previous paragraph
shows that the single labels of the two bridges cannot be the same.

Now consider the case that there exist $w_1,w_2 \in N(c)$ such that $\{c,w_1\}$ and $\{c,w_2\}$ are also bridges and~$(u,w_1,w_2,v)$ is a path in~$D$. (The argument for the case that $(v,w_1,w_2,u)$ is a path in~$D$ is analogous.)
As the four edges from $c$ to $u,w_1,w_2,v$ are all bridges,
there cannot be any solid edge among the vertices $u,w_1,w_2,v$.
Thus, there are dashed arcs $(u,w_1)$, $(w_1,w_2)$ and $(w_2,v)$.
Let $x_1=u$, $x_2=w_1$, $x_3=w_2$, and $x_4=v$.
Let $\lambda$ be a \minlab realization.
For $i\in[4]$, let $\ell_{x_i}=\min \lambda(\{c,x_i\})$ and
$h_{x_i}=\max \lambda(\{c,x_i\})$.
We have $\ell_{x_i}< h_{x_{i+1}}$ for $i \in [3]$ as
$(x_i,x_{i+1})\in A$ and $h_{x_i}\le \ell_{x_{i+1}}$ as
$(x_{i+1},x_{i})\notin A$. Thus, we have
$h_{x_1}\le \ell_{x_2} < h_{x_3}\le \ell_{x_4}$,
showing that $h_{x_1}<\ell_{x_4}$ and hence
$\{c,x_1\}$ and $\{c,x_4\}$ cannot share a label.

Finally, consider the case that
there is~$w \in N(c)$ such that $\{w,c\}$ is also a bridge and~$(u,w,v)$ is a path in~$D$ such that (i)~$\{u,c\}$ is non-special, (ii)~$\{w,c\}$ is special, or (iii)~$\{v,c\}$ is non-special. (The argument for the
case that $(v,w,u)$ is a path in $D$ is analogous.)
Again, there cannot be any solid edges among the vertices $u,w,v$
and thus there are dashed arcs $(u,w)$ and $(w,v)$.
Let $x_1=u$, $x_2=w$ and $x_3=v$ and let $\lambda$ be a \minlab
realization, and define $\ell_{x_i}$ and $h_{x_i}$ for $i\in[3]$
as above. We have
$\ell_{x_1}\le h_{x_1}\le \ell_{x_2} \le h_{x_2} \le \ell_{x_3} \le h_{x_3}$
and $\ell_{x_i}<h_{x_{i+1}}$ for $i\in[2]$.
If $\{u,c\}=\{x_1,c\}$ is non-special, we have $h_{x_1}=\ell_{x_1}<h_{x_2}\le \ell_{x_3}$.
If $\{w,c\}=\{x_2,c\}$ is special, we have $h_{x_1}\le \ell_{x_2} < h_{x_2}\le \ell_{x_3}$.
If $\{v,c\}=\{x_3,c\}$ is non-special, we have $h_{x_3}=\ell_{x_3}>\ell_{x_2}\ge h_{x_1}$.
In all three cases, we have $h_{x_1}<\ell_{x_3}$, showing
that $\{c,x_1\}$ and $\{c,x_3\}$ cannot share a label.
\end{proof}

\nnew{

\Cref{meaning of bundled,meaning of separated}, thus imply the following.
\begin{corollary}\label{bundledseparated}
Let~$e$ and~$e'$ be adjacent bridges that are both bundled and separated in~$D$.
Then~$D$ is not realizable.
\end{corollary}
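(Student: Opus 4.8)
The plan is to obtain a contradiction directly from \Cref{meaning of bundled} and \Cref{meaning of separated}. Assume, for contradiction, that $D$ is realizable, and fix an arbitrary realization $\lambda_0$ of $D$. First I would pass to a \minlab realization: as long as the current realization $\lambda$ is not \minlab, by definition there is a solid edge whose label set can be replaced by one of strictly smaller cardinality while $\lambda$ remains a realization of $D$; performing such a replacement strictly decreases the nonnegative integer $\sum_{f}|\lambda(f)|$ (the sum taken over all solid edges), so after finitely many steps the process halts at a labeling $\lambda$ that is still a realization of $D$ and is now \minlab.

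Next I would apply the two lemmas to this $\lambda$ and to the adjacent bridges $e$ and $e'$. Since $e$ and $e'$ are bundled, \Cref{meaning of bundled} — whose statement holds for \emph{every} realization — gives $\lambda(e)\cap\lambda(e')\neq\emptyset$. On the other hand, since $e$ and $e'$ are separated and $\lambda$ is \minlab, \Cref{meaning of separated} gives $\lambda(e)\cap\lambda(e')=\emptyset$. This contradiction shows that no realization of $D$ can exist, i.e., $D$ is not realizable.

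I do not expect any real obstacle here; the only point that needs care is the mismatch in the hypotheses of the two lemmas (``every realization'' for bundled pairs versus ``every \minlab realization'' for separated pairs), which is exactly what the reduction to a \minlab realization in the first step takes care of. Once that reduction is in place, the two lemmas apply simultaneously to the same labeling and the contradiction is immediate.
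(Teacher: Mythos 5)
Your proof is correct and takes essentially the same approach as the paper, which simply asserts that the corollary follows from Lemmas~\ref{meaning of bundled} and~\ref{meaning of separated}. Your explicit reduction to a \minlab realization (by iteratively shrinking label sets, with termination by the strictly decreasing nonnegative total $\sum_f|\lambda(f)|$) is exactly the standard step left implicit in the paper.
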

}
\fi

\subsection{Splitting Into Subinstances}
Next, we show that bridge edges in the solid graph can be used to split an instance
into smaller subinstances that can be solved independently and have the property
that the original instance is realizable if and only if the two subinstances
are realizable. First, consider non-special bridges.

\begin{lemma}[Splitting at a non-special bridge]
    \label{lem:split-nonspecial}
    Consider any variant of \URGD.
    Let $e=\{u,v\}$ be a non-special bridge edge, and let $G_u$ and $G_v$
    be the connected components of $G$ resulting from the deletion of~$e$.
    Then the instance $D$ is realizable if and only if the subinstances
    induced by $V(G_u)\cup\{v\}$ and $V(G_v)\cup\{u\}$ are realizable.
\end{lemma}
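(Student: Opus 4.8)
The plan is to prove both directions by relating realizations of $D$ to realizations of the two subinstances, using the fact that a non-special bridge receives exactly one label in a \minlab realization (Lemma~\ref{lem:special}) and that every temporal path is a dense path in $G$.

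For the forward direction, suppose $\lambda$ is a \minlab realization of $D$. By Lemma~\ref{lem:special}, $e=\{u,v\}$ receives a single label, say $\alpha$. Define $D_u$ to be the subinstance induced by $V(G_u)\cup\{v\}$ and $D_v$ the one induced by $V(G_v)\cup\{u\}$; let $\lambda_u$ (resp.\ $\lambda_v$) be the restriction of $\lambda$ to the solid edges of $D_u$ (resp.\ $D_v$), noting that in each case $e$ is the unique edge crossing between $G_u$ and $G_v$, so the solid edges of $D_u$ are exactly those of $G_u$ together with $e$ (and symmetrically for $D_v$). I would argue that $\mathcal{R}(G_u\cup\{e\},\lambda_u)=D_u$: every temporal path inside $V(G_u)\cup\{v\}$ under $\lambda_u$ is also a temporal path in $(G,\lambda)$ (labels only decreased in availability, not changed), so it realizes an arc of $D$, hence of $D_u$; conversely, any arc $(x,y)$ of $D_u$ with $x,y\in V(G_u)\cup\{v\}$ is realized in $(G,\lambda)$ by some temporal path $P$, and since $e$ is a bridge, $P$ stays inside $V(G_u)\cup\{v\}$ unless it crosses into $G_v$ — but to reach $y\in V(G_u)\cup\{v\}$ again it would have to cross back, impossible since $e$ is the only crossing edge and a temporal path uses each vertex at most once. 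Hence $P$ lies entirely in $D_u$ and uses only edges available under $\lambda_u$. The symmetric argument handles $D_v$. The same reasoning works for the simple/proper/happy/non-strict variants since restricting a labeling preserves these properties, and in the non-strict/proper/happy cases the bridge still receives a single label by Lemma~\ref{lem:special} or trivially.

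For the converse, suppose $\lambda_u$ realizes $D_u$ and $\lambda_v$ realizes $D_v$. In each, the bridge $e$ must receive at least one label (its endpoints must reach each other), and by Lemma~\ref{lem:special} (applied within each subinstance, where $e$ is still a non-special bridge — this needs a short check that non-specialness is inherited by the subinstance, which holds because the witnessing vertices $a,b$ for specialness would lie in $V(G_u)$ and $V(G_v)$ respectively and hence are present in neither single subinstance) we may assume $e$ gets a single label in each. By shifting and rescaling the time steps of $\lambda_u$ and $\lambda_v$ — a standard order-preserving relabeling — I can arrange that the single label of $e$ is the same value $\alpha$ in both, that all labels of $\lambda_u$ on edges of $G_u$ that are ``$\le\alpha$ in $\lambda_u$'' come before all labels of $\lambda_v$ on edges of $G_v$ that are ``$\le\alpha$ in $\lambda_v$'', and symmetrically for labels $>\alpha$; more precisely, interleave the two timelines so that within $G_u$ and within $G_v$ the relative order is preserved, the label on $e$ is a common value, and every label strictly below (resp.\ above) $\alpha$ in one instance is strictly below (resp.\ above) $\alpha$ in the combined instance. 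Define $\lambda$ on $G$ as the union of these rescaled labelings (consistent on $e$). I then check $\mathcal{R}(G,\lambda)=D$: any temporal path in $(G,\lambda)$ either stays within $G_u+e$ or within $G_v+e$ — in which case it is (a rescaling of) a temporal path of $\lambda_u$ or $\lambda_v$ and realizes an arc of $D_u$ or $D_v$, hence of $D$ — or it crosses $e$ exactly once, traversing $e$ at time $\alpha$; such a path splits as $P_1$ followed by $e$ followed by $P_2$ with $P_1$ in $G_u$ ending at $u$ at time $\le\alpha$ and $P_2$ in $G_v$ starting at $v$ at time $\ge\alpha$, so its endpoints $x\in V(G_u)$, $y\in V(G_v)$ satisfy $(x,u)\in D_u$ (or $x=u$) and $(v,y)\in D_v$ (or $y=v$), which forces $(x,v)\in D_u$ and $(u,y)\in D_v$, and I must deduce $(x,y)\in A$. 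This last implication is where I expect the main obstacle: it is exactly the transitivity property of dashed arcs spanning a non-special bridge discussed before Definition~\ref{def:special}, so I would state and use precisely that — since $e$ is non-special, $D_{xv}=1$ and $D_{uy}=1$ (with $x$ on the $u$-side, $y$ on the $v$-side) imply $D_{xy}=1$ — together with the observation that $D$ restricted to $V(G_u)\cup\{v\}$ is $D_u$ (so $(x,v)\in A$), and symmetrically $(u,y)\in A$. Conversely, every arc $(x,y)$ of $D$ is realized: if $x,y$ are on the same side it is an arc of the corresponding subinstance and realized there; if $x\in V(G_u)$ and $y\in V(G_v)$ (the case $x\in V(G_v), y\in V(G_u)$ being symmetric and requiring $(x,y)$ to also span $e$, handled the same way), then since $D$ induces $D_u$ on $V(G_u)\cup\{v\}$ we get $(x,v)\in A$ or $x=v$, and $(u,y)\in A$ or $y=u$; realize $(x,v)$ by a temporal path $P_1$ in $(G_u,\lambda_u)$ ending at $u$ at the largest available time — here I need that $P_1$ can be taken to end at $u$ rather than only reach $v$, which holds because $(x,u)$ is forced to be an arc of $D_u$ when $(x,v)\in A$ and $e$ is non-special, by the same transitivity applied to the one-vertex extension — then append $e$ at time $\alpha$, then a temporal path $P_2$ in $(G_v,\lambda_v)$ from $v$ realizing $(u,y)$ but starting from $v$, again obtainable since $(v,y)\in D_v$. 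Assembling these checks, and noting the construction preserves simplicity, properness, happiness, and works verbatim for non-strict paths, gives the equivalence.

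I expect the routine parts (preservation of labeling restrictions, order-preserving rescaling, the ``bridge is the unique crossing edge so paths don't escape'' argument) to be mechanical; the genuinely load-bearing step is the two uses of the non-special transitivity property — once to show crossing paths only realize genuine arcs, and once to show every genuine crossing arc can be decomposed and realized — so I would front-load a clean restatement of that property (ideally as a small claim: for a non-special bridge $e=\{u,v\}$, if $x\in V(G_u)$ with $D_{xv}=1$ then $D_{xu}=1$, and symmetrically) and cite it at both points.
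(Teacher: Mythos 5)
Your approach matches the paper's (restrict a realization of $D$ to obtain realizations of $D_u$ and $D_v$; for the converse, align the single label on $e$ and take the union), and your forward direction is correct. But in the converse direction there is a genuine failure in a load-bearing step. You assert that if $(x,y)\in A$ with $x\in V(G_u)$ and $y\in V(G_v)$, then ``since $D$ induces $D_u$ on $V(G_u)\cup\{v\}$ we get $(x,v)\in A$.'' That inference is not valid: the induced-subgraph relation only says $(x,v)\in A \Leftrightarrow (x,v)$ is an arc of $D_u$, and gives no link between $(x,y)$ and $(x,v)$. Non-specialness gives the \emph{opposite} implication --- $(x,v)\in A$ and $(u,y)\in A$ together imply $(x,y)\in A$ --- not the one you need. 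And the implication $(x,y)\in A \Rightarrow (x,v)\in A$ is false for an arbitrary input $D$: take $V=\{x,u,v,y\}$ with solid edges forming the path $x,u,v,y$ and, in addition, only the dashed arc $(x,y)$. Then $e=\{u,v\}$ is a non-special bridge, $D_u$ and $D_v$ are realizable in the strict variants (give both edges of each subinstance the same label), but $D$ is not, since realizing $(x,y)$ forces strictly increasing labels along the path, which in turn realizes the non-arc $(x,v)$.

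What your more careful verification actually surfaces is that the paper's own proof is elliptical at exactly this point --- it asserts that ``the union of the two realizations is a realization of $D$'' without checking the arcs spanning $e$ --- and the stated equivalence requires an additional consistency hypothesis: for every $x\in V(G_u)$ and $y\in V(G_v)$ (and symmetrically), $(x,y)\in A$ must imply $(x,v)\in A$ and $(u,y)\in A$. That condition is a necessary condition for $D$ to be realizable, and it is checkable in polynomial time; in the algorithmic applications the paper performs such checks separately (e.g.\ the preliminary consistency checks at the start of Lemma~\ref{lem:treeLP}), but it is not implied by non-specialness nor by realizability of $D_u$ and $D_v$, so the converse as you have written it does not go through. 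To repair the argument, you would need to either add this hypothesis to the lemma statement or show that the context in which the lemma is invoked always guarantees it.
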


\begin{proof}
Let $D_u$ and $D_v$ denote the subinstances induced by
$V(G_u)\cup\{v\}$ and $V(G_v)\cup\{u\}$, respectively.
If $D$ is realizable, then the realization $\lambda$ induces realizations
of the two subinstances $D_u$ and $D_v$. If $D_u$ and $D_v$
are realizable, their realizations assign a single label to $e$,
and these realizations can be chosen so that
$e$ receives the same label in both realizations. Hence,
the union of the two realizations is a realization of $D$.
\end{proof}

By applying Lemma~\ref{lem:split-nonspecial} repeatedly to a non-special
bridge edge that is not pendant until no such edge exists, we
obtain subinstances in which all non-special bridge are pendant.

Now consider special bridges.
Special bridge edges cannot occur in yes-instances
of any variant of \URGD except \any\str\URGD (see \Cref{cor:nospecial}), so we only consider
\any\str\URGD in the following. Let edge $e=\{u,v\}$ be a special
bridge with $G_u$ and $G_v$ defined as above. 
If the instance is realizable, for every vertex
in $V(G_u)$ there are at most three possibilities for which vertices
in $V(G_v)$ it can reach, depending on whether it cannot reach $v$,
can reach $v$ at time $\alpha$, or can reach $v$ only at time $\beta$,
where $\alpha$ and $\beta$ with $\alpha<\beta$ are the labels assigned
to $e$ in the realization. The same holds with $G_u$ and $G_v$ exchanged.
If this condition is violated, then the instance cannot be
realized. The following definition captures this condition.
For any vertex $w\in V(G_u)$, let $R_{u\to v}(w)=\{x\in V(G_v)\mid (w,x)\in A\}$,
and define $R_{v\to u}(x)$ for $x\in V(G_v)$ analogously.

\begin{definition}
    Let $e=\{u,v\}$ be a special bridge edge with $G_u$ and $G_v$
defined as above. We say that $e$ is a special bridge edge with \emph{plausible reachability}
if the following conditions hold:
\begin{itemize}
\item  If there exist vertices $a\in V(G_u)$ and $b\in V(G_v)$
such that $D_{av}=D_{ub}=1$ and $D_{ab}=0$, then every vertex
$w\in V(G_u)$ satisfies $R_{u\to v}(w)\in\{\emptyset,R_{u\to v}(a),R_{u\to v}(u)\}$ with $\emptyset\subset R_{u\to v}(a)\subset R_{u\to v}(u)$.
Otherwise, every vertex satisfies $R_{u\to v}(w)\in\{\emptyset,R_{u\to v}(u)\}$.
\item The same condition holds with the roles of $G_u$ and $G_v$ exchanged.
\end{itemize}
\end{definition}

\begin{lemma}[Splitting at a special bridge]
    \label{lem:split-special}
    Consider \any\str\URGD.
    Let $e=\{u,v\}$ be a special bridge edge with plausible reachability,
    and let $G_u$ and $G_v$ be the connected components of $G$ resulting from the deletion of~$e$.
    Then the instance $D$ is realizable if and only if the two subinstances
    constructed as follows are realizable:
    \begin{itemize}
        \item To construct subinstance $D_u$, take the subgraph $D'$ of $D$
        induced by $V(G_u)\cup\{v\}$ and attach one or two leaves to $v$
        as follows:
        \begin{itemize}
            \item If there exist vertices $a\in V(G_u)$ and $b\in V(G_v)$
such that $D_{av}=D_{ub}=1$ and $D_{ab}=0$,
then attach a leaf (called \emph{out-leaf})
$z$ to $v$ with $D'_{zv}=1$ and $D'_{az}=1$ for all \tnew{$a\in\{v\}\cup\{w\in V(G_u)\mid 
R_{u\to v}(w)=R_{u\to v}(u)\}$}, and all other entries of $D'$ involving $z$ equal to~$0$.
\item If there exist vertices $c\in V(G_v)$ and $d\in V(G_u)$
such that $D_{cu}=D_{vd}=1$ and $D_{cd}=0$, then attach a leaf (called \emph{in-leaf})
$z'$ to $v$ with $D'_{vz'}=1$ and $D'_{z'a}=1$ for all \tnew{$a\in \{v\}\cup R_{v\to u}(c)$} and $D'_{z'z}=1$ if an out-leaf $z$ has been added, and
all other entries of $D'$ involving $z'$ equal to~$0$.
        \end{itemize}
        Note that at least one of the two conditions above must be satisfied
        because $e$ is a special bridge edge.
        The resulting instance $D'$ is the desired subinstance~$D_u$.
        \item Subinstance $D_v$ is constructed analogously, with the roles
        of $u$ and $v$ exchanged.
    \end{itemize}
\end{lemma}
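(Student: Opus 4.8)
The plan is to prove the two directions of the equivalence separately, using the special bridge structure together with the "plausible reachability" condition to control how temporal paths interact across the bridge.

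\textbf{Forward direction ($D$ realizable $\Rightarrow$ subinstances realizable).} Suppose $\lambda$ is a \minlab realization of $D$. By Lemma~\ref{lem:special}, the special bridge $e=\{u,v\}$ receives exactly two labels, say $\alpha<\beta$. I would first restrict $\lambda$ to the edges of $G_u$ together with $e$, obtaining a labeling $\lambda_u$ of the solid graph of $D'$ (before the added leaves). The arcs inside $V(G_u)$ are realized exactly as before, and I need to show that no extra arcs among $V(G_u)\cup\{v\}$ appear and that I can choose labels on the new pendant edges $\{v,z\}$ and $\{v,z'\}$ to realize precisely the arcs incident to $z$ and $z'$ specified in the construction. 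The key point: in the original realization, every $w\in V(G_u)$ reaches $v$ either never, at time $\alpha$ (and hence reaches everything $u$ reaches after time $\alpha$, i.e.\ $R_{u\to v}(a)$), or only at time $\beta$ (and hence reaches $R_{u\to v}(u)$); this trichotomy is exactly the content of plausible reachability, and it tells me that the out-leaf $z$ (which should be reachable exactly from $v$ and from those $w$ with $R_{u\to v}(w)=R_{u\to v}(u)$, i.e.\ the "time-$\beta$" vertices plus $v$ itself) can be implemented by giving $\{v,z\}$ the single label $\beta$ — more precisely a label slightly above $\beta$ so $z$ is not reached at time $\alpha$; similarly the in-leaf $z'$ gets a label slightly below $\alpha$. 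I would verify that with these leaf labels, $D_u=\mathcal{R}(\lambda_u \cup \{\text{leaf labels}\})$, using that $z$ and $z'$ are pendant so any temporal path through them starts or ends there. The symmetric argument gives a realization of $D_v$.

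\textbf{Backward direction (subinstances realizable $\Rightarrow$ $D$ realizable).} This is where the real work is. Take \minlab realizations $\lambda_u$ of $D_u$ and $\lambda_v$ of $D_v$. In $\lambda_u$, the bridge $e=\{u,v\}$ is still a special bridge (its "special witness" $a,b$-type configuration survives because of how the leaves were attached), so by Lemma~\ref{lem:special} it has two labels; similarly in $\lambda_v$. By rescaling labels (preserving order), I would align the two realizations so that $e$ receives the same pair of labels $\alpha<\beta$ in both, the labels of $\lambda_u$ on edges of $G_u$ other than on $e$ lie either below $\alpha$ or above $\beta$ (possible since $e$ is a bridge not in a triangle, so incident edges avoid the open interval, by the argument in Lemma~\ref{lem:special}), and likewise for $\lambda_v$; I also need the "low" part of $\lambda_u$ and the "low"/"high" parts of $\lambda_v$ interleaved correctly in time. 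Then I would define $\lambda$ on $D$ as the union of $\lambda_u$ restricted to $G_u$, $\lambda_v$ restricted to $G_v$, and the common double label $\{\alpha,\beta\}$ on $e$, discarding the leaf edges. The claim is $\mathcal{R}(\lambda)=D$. Arcs within $V(G_u)$ (resp.\ $V(G_v)$) are unchanged. For an arc from $w\in V(G_u)$ to $x\in V(G_v)$: $w$ reaches $v$ in $\lambda_u$ at time $\alpha$, at time $\beta$ only, or never; this is recorded by whether $w$ reaches the out-leaf $z$ and the in-leaf — but really the point is that the set $R_{u\to v}(w)$ in the \emph{original} $D$ is determined (by plausible reachability) by which of the three cases holds, and the leaves were designed so that $\lambda_u$ reproduces this case distinction. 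Concatenating $w\leadsto v$ at time $\alpha$ (or $\beta$) with $v\leadsto x$ in $\lambda_v$ starting at time $\ge\alpha$ (or $\ge\beta$) yields exactly the arcs prescribed by $R_{u\to v}(u)$ and $R_{u\to v}(a)$, and plausible reachability says these are exactly the arcs of $D$. The symmetric check handles arcs from $V(G_v)$ to $V(G_u)$, and I must also confirm no spurious arc appears — e.g.\ that a time-$\alpha$ vertex of $G_u$ does not accidentally reach a "time-$\beta$-only" target in $G_v$, which again follows from the nested inclusion $\emptyset\subset R_{u\to v}(a)\subset R_{u\to v}(u)$ and the time ordering.

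\textbf{Main obstacle.} The delicate step is the time-rescaling/interleaving in the backward direction: I must simultaneously arrange that $e$ has matching labels $\alpha,\beta$ in both subinstance realizations, that all other $G_u$-edges and all other $G_v$-edges avoid the interval $(\alpha,\beta)$, and that the relative order of the sub-$\alpha$ labels and super-$\beta$ labels coming from the two sides is consistent so that concatenated temporal paths exist precisely when they should and no unwanted concatenation arises. I expect to handle this by first normalizing each subinstance realization so that its edges split cleanly into a "before $\alpha$" block, the bridge, and an "after $\beta$" block (justified exactly as in the proof of Lemma~\ref{lem:special}), then placing all of $\lambda_u$'s blocks and $\lambda_v$'s blocks on disjoint time intervals in the order: [$\lambda_u$ low][$\lambda_v$ low]$\alpha$[nothing]$\beta$[$\lambda_u$ high][$\lambda_v$ high], and checking that temporal paths crossing $e$ behave as required. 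Once the normalization lemma is in hand, the reachability bookkeeping is routine but must be done for all four classes of vertex pairs ($G_u\to G_u$, $G_u\to G_v$, $G_v\to G_u$, $G_v\to G_v$).
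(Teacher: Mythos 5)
Your overall strategy matches the paper's (restrict $\lambda$ to the two sides for the forward direction; normalize the subinstance realizations into low-block / bridge / high-block shape and merge for the backward direction), but the forward direction has the time--reachability correspondence reversed, which leads to wrong leaf labels. A vertex $w\in V(G_u)$ that reaches $v$ at the \emph{earlier} time $\alpha$ can continue from $v$ at all times $>\alpha$, just as $u$ can, so it gets the larger set $R_{u\to v}(w)=R_{u\to v}(u)$; a vertex that reaches $v$ only at $\beta$ gets the smaller set $R_{u\to v}(a)$. You assert the opposite, and consequently propose giving the out-leaf edge $\{v,z\}$ a label slightly above $\beta$. That label lets \emph{both} classes reach $z$: an early vertex arrives at $v$ at $\alpha$ and takes the leaf edge at $\beta+\epsilon$, but so does a late vertex arriving at $v$ at $\beta$, since $\beta+\epsilon>\beta$, so $\lambda_u$ would not realize $D_u$ (the late vertex $a$ must satisfy $D'_{az}=0$). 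The correct label is exactly $\beta$: a vertex arriving at $v$ at $\alpha$ can take $\{v,z\}$ at $\beta>\alpha$, while one arriving at $\beta$ cannot, since a strict path needs a label $>\beta$. Symmetrically the in-leaf edge must get label exactly $\alpha$, not slightly below: with a label $<\alpha$, the in-leaf $z'$ would reach $v$ early enough to cross $e$ at time $\alpha$ and hence reach all of $R_{v\to u}(v)$ rather than only the smaller set $R_{v\to u}(c)$ prescribed for $z'$.

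In the backward direction you also omit a shift that the argument needs when one of the leaf conditions does not fire. If $D_u$ has no out-leaf, plausible reachability forces $R_{u\to v}(w)\in\{\emptyset,R_{u\to v}(u)\}$ for every $w$, yet $\lambda_u$, having no gadget that penalizes late arrival at $v$, may still route some such $w$ to $v$ only at time $\beta$. In the merged labeling $w$ would then reach only the (possibly strictly smaller) subset of $V(G_v)$ reachable from $v$ after time $\beta$, under-realizing $D$. The fix, which you do not mention, is to shift all labels $\le\alpha$ on $G_u$-edges downward (when $D_u$ has no out-leaf) so that every $w$ reaching $v$ does so at $\alpha$, and to apply the analogous upward shift on the high side when $D_u$ has no in-leaf. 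Your ``disjoint-block interleaving'' of $\lambda_u$ and $\lambda_v$ does not substitute for this, because placing the blocks on disjoint intervals preserves, rather than changes, whether a given vertex first reaches $v$ at $\alpha$ or at $\beta$.
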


\iflong
\begin{proof}
    Observe that $e$ is a special bridge edge also in both subinstances
    $D_u$ and $D_v$. Furthermore $D_u$ has an out-leaf if and only if
    $D_v$ has an in-leaf, and vice versa.
    If $D$ is realizable, then a realization of $D_u$ can be obtained as follows:
    Let $\lambda$ be a \minlab realization of $D$. Let $\lambda_u$ be the restriction
    of $\lambda$ to edges in the subgraph of $G$ induced by
    $V(G_u)\cup\{v\}$, and let $\alpha,\beta$ with $\alpha<\beta$
    be the two labels assigned to $e$ by $\lambda_u$.
    If $D_u$ has an in-leaf $z$, set $\lambda_u(vz)=\alpha$.
    If $D_u$ has an out-leaf $z'$, set $\lambda(vz')=\beta$.
    We claim that $\lambda_u$ is a realization of $D_u$:
    It definitely realizes the reachability between all
    vertices in $V(G_u)\cup\{v\}$, so we only need to consider
    the in- and/or out-leaf attached to $v$. If $v$ has an
    out-leaf~$z$, then the \tnew{vertices} in $V(G_u)$ that can reach
    $z$ are exactly those that can reach $v$ at time $\alpha$, as desired.
    If $v$ has an in-leaf $z'$, then the vertices in $V(G_u)$
    that $z'$ can reach are exactly the vertices that can be
    reached from $v$ if traversing the edge $vu$ at time $\beta$,
    as desired. Hence, $\lambda_u$ is a realization of $D_u$.
    A realization $\lambda_v$ of $D_v$ can be obtained analogously.

    Now assume that $D_u$ and $D_v$ have realizations $\lambda_u$
    and $\lambda_v$, respectively. By shifting labels and inserting
    empty time steps if necessary, we can assume that both labelings
    assign the same two labels $\alpha,\beta$ with $\alpha<\beta$ to
    $uv$ (by possibly introducing empty time steps in both realizations). Consider $D_u$ and $\lambda_u$.
        If $D_u$ has an in-leaf $z$, the label of the non-special bridge
    edge $vz$ must equal $\alpha$, because $z$ must reach some vertices
    in $V(G_u)$ but not all vertices that $v$ can reach.
    If $D_u$ has an out-leaf $z'$, the label of the non-special bridge
    edge $vz'$ must equal $\beta$ as not all vertices from $V(G_u)$
    that can reach $v$ can reach~$z'$.
    
    We make the following modification to $\lambda_u$:
    If $D_u$ has no out-leaf and at least one edge $f\neq e$ incident with $u$
    receives label $\alpha$, we reduce all labels $\le \alpha$ on edges
    of $G_u$ by one. This ensures that all vertices in $V(G_u)$ that
    can reach $v$ can reach $v$ at time $\alpha$, without changing
    any reachabilities. If $D_u$ has no in-leaf and at least one
    edge $f\neq e$ incident with $u$ that has label $\beta$,
    we increase all labels $\ge\beta$ on edges of $G_u$ by one.
    This ensures that all vertices in $V(G_u)$ that can be
    reached from $v$ can still be reached from $v$ if traversing
    edge $vu$ at time $\beta$. Intuitively, these modifications
    ensure that temporal paths leaving $G_u$ can reach $v$
    at time $\alpha$ (if there is no out-leaf) and temporal
    paths entering $G_u$ can \tnew{reach
    all vertices in $G_u$ that $v$ can reach even if these paths
    reach $u$ only at time $\beta$} (if there is no in-leaf).
    We apply the same modification to $\lambda_v$.
    
    Now let $\lambda$ be
    the union of $\lambda_u$ and $\lambda_v$. We claim that $\lambda$
    is a realization of~$D$. Reachability requirements within $V(G_u)\cup\{v\}$
    and $V(G_v)\cup \{u\}$ are clearly satisfied because $\lambda_u$ and
    $\lambda_v$ realize $D_u$ and $D_v$, respectively.
    Consider an arbitrary $a\in V(G_u)$. Case 1: $D_u$ has no out-leaf.
    If $R_{u\to v}(a)=\emptyset$, then $a$ cannot reach $v$ in $\lambda$
    as it cannot reach $v$ in $\lambda_u$. If $R_{u\to v}(a)=R_{u\to v}(u)$,
    then $a$ can reach $v$ at time $\alpha$ in $\lambda_u$ and can hence
    reach all vertices in $R_{u \to v}(u)$ in $\lambda$.
    Case 2: $D_u$ has an out-leaf~$z$.
    If $R_{u\to v}(a)=\emptyset$, then $a$ cannot reach $v$ in $\lambda$
    as it cannot reach $v$ in $\lambda_u$. If $R_{u\to v}(a)=R_{u\to v}(u)$,
    then $a$ can reach $v$ at time $\alpha$ in $\lambda_u$ as otherwise
    it wouldn't be able to reach $z$. Hence, it can reach all vertices
    in $R_{u\to v}(u)$ in $\lambda$.
    If $\emptyset\neq R_{u\to v}(a)\subset R_{u\to v}(u)$,
    then $a$ can reach $v$ only at time $\beta$ in $\lambda_u$ as otherwise
    it would be able to reach $z$, a contradiction to the definition of $D_u$.
    Hence, in $\lambda$ it can reach all vertices in $V(G_v)$ that can be reached
    after traversing the edge $uv$ from $u$ to $v$ at time $\beta$.
    By definition of $D_v$, these are the vertices in $V(G_v)$ that
    are reachable from the in-leaf $z''$ of $D_v$, and by definition of $D_v$ that set of
    vertices is $R_{u\to v}(c)$, where
    $c$ is an arbitrary vertex in $V(G_u)$ with $\emptyset\neq R_{u\to v}(c)\subset R_{u\to v}(u)$. As the special edge has plausible reachability,
    $R_{u\to v}(c)=R_{u\to v}(a)$. Therefore, reachability from $G_u$ to
    $G_v$ is realized correctly by~$\lambda$. The arguments for
    reachability from $G_v$ to $G_u$ are analogous.
\end{proof}
\fi

Finally, we consider the case of two pendant edges that
must receive the same single label in any realization
and argue that we can remove one of them from the instance
provided their reachability requirements from/to the rest
of the graph are the same.

\begin{lemma}[Removal of redundant pendant vertices]
\label{lem:remove-pendant}
    Let $v$ and $w$ be two vertices of degree $1$ that are both
    adjacent to the same vertex $u$ in $G$ and satisfy
    $(v,w)\notin A$ and $(w,v)\notin A$.
    For all variants of \nstr\URGD (including \pro\URGD and \happy\URGD),
    the instance is a no-instance.
    For \any\str\URGD and \any\simp\URGD, we have that
    if $D_{vx}=D_{wx}$ and $D_{xv}=D_{xw}$ hold
    for all $x\in V\setminus\{v,w\}$, then $D$ is realizable if and only
    if the subinstance $D'$ resulting from $D$ by deleting $v$ is realizable;
    otherwise, $D$ is a no-instance.
\end{lemma}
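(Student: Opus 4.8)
The plan is to prove the three assertions of the lemma separately, treating the non-strict variants first and then the two strict variants together.

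For the non-strict variants (\any\nstr\URGD, \simp\nstr\URGD, \pro\URGD, \happy\URGD) the argument is short. Since $v$ and $w$ have degree~$1$ in $G$, both $\{u,v\}$ and $\{u,w\}$ are bridges of $G$, so in any realization each of them must receive at least one label (otherwise the two endpoints of such a bridge could not reach each other). But $\{u,v\}$ and $\{u,w\}$ are adjacent, sharing the vertex $u$, and by hypothesis $D$ contains neither $(v,w)$ nor $(w,v)$. Hence the consequence of \Cref{incident same label} recorded right after that observation forbids assigning labels to both of these edges in any realization; and for \simp\nstr\URGD one sees directly that the two single labels would form a non-strict temporal path through $u$ realizing $(v,w)$ or $(w,v)$. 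Either way no realization exists, so $D$ is a no-instance.

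For the strict variants \any\str\URGD and \simp\str\URGD I would first establish a structural fact: in \emph{every} realization $\lambda$ of $D$, the edges $\{u,v\}$ and $\{u,w\}$ carry one and the same single label. Writing $L_e,H_e$ for the smallest and largest label on an edge $e$, the absence of the temporal path $v\to u\to w$ (i.e.\ $(v,w)\notin A$) forces $L_{\{u,v\}}\ge H_{\{u,w\}}$, and symmetrically $(w,v)\notin A$ forces $L_{\{u,w\}}\ge H_{\{u,v\}}$; chaining these with $L_e\le H_e$ yields $L_{\{u,v\}}=H_{\{u,v\}}=L_{\{u,w\}}=H_{\{u,w\}}=:\alpha$. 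Consequently $v$ and $w$ are true twins in $(G,\lambda)$: the involution $\psi$ swapping $v$ and $w$ and fixing all other vertices is a graph automorphism of $G$ preserving $\lambda$, hence an automorphism of the reachability graph $\mathcal{R}((G,\lambda))=D$, which immediately gives $D_{vx}=D_{wx}$ and $D_{xv}=D_{xw}$ for all $x\in V\setminus\{v,w\}$. This proves the ``otherwise'' clause: if any of these equalities fails, $D$ is a no-instance. For the equivalence under the assumption that these equalities hold, the forward direction is routine — deleting $v$ from a realization $\lambda$ (equivalently, dropping the pendant edge $\{u,v\}$) gives a labeling of $G-v$, which is exactly the solid graph of $D'$ since deleting $v$ changes no arc among $V\setminus\{v\}$, and since no temporal path among $V\setminus\{v\}$ uses the pendant vertex $v$ the induced reachabilities are unchanged and equal $D'$. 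For the converse I would take a realization $\lambda'$ of $D'$ in which the pendant edge $\{u,w\}$ has a single label $\alpha$ — automatic in \simp\str\URGD, and in \any\str\URGD obtained by passing to a frugal realization, using that a pendant edge is never a special bridge (a quick check against \Cref{def:special}) and hence single-labeled by \Cref{lem:special}. Then define $\lambda$ on $D$ by keeping all labels on $G-v$, setting $\lambda(\{u,v\}):=\{\alpha\}$ and leaving $\lambda(\{u,w\})=\{\alpha\}$. Once more $v$ and $w$ are twins, so $\psi$ is an automorphism of $(G,\lambda)$, and combining this with the hypothesis $D_{vx}=D_{wx}$, $D_{xv}=D_{xw}$ and with the fact that reachabilities not involving $v$ coincide with those of $\lambda'$ (pendant vertices again), one verifies $\mathcal{R}((G,\lambda))=D$.

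The main obstacle is the final verification in the converse, namely that no spurious arc between $v$ and $w$ is created: the only candidate temporal path is $v,u,w$, and in the strict setting it would need two strictly increasing labels on $\{u,v\}$ and $\{u,w\}$, which is impossible because these edges share the single label $\alpha$. This is precisely why the reduction must start from a \emph{frugal} realization of $D'$ (so that $\{u,w\}$ is single-labeled) rather than an arbitrary one, and it is the point where \Cref{lem:special} — together with the observation that pendant edges are never special bridges — does the real work.
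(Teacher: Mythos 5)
Your proof is correct and follows essentially the same route as the paper's: adjacent labeled bridges with no arc between their other endpoints are impossible in the non-strict settings, and in the strict settings the two pendant edges are forced to share a single label, which makes $v$ and $w$ interchangeable. You are in fact slightly more careful than the paper at the one delicate point of the converse — you explicitly pass to a frugal realization of $D'$ and invoke \Cref{lem:special} to guarantee that the pendant edge $\{u,w\}$ is single-labeled before copying its label onto $\{u,v\}$, whereas the paper's phrase ``assigning $uv$ the label that $uw$ has received'' quietly assumes this.
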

\iflong
\begin{proof}
For all variants of \nstr\URGD, it is impossible
to have adjacent bridge edges without a temporal
path between their endpoints different from the
common endpoint. Therefore, consider \any\str\URGD and \any\simp\URGD
for the remainder of the proof.
We know that $uv$ and $uw$ must receive
the same single label in any realization and thus
must reach the same set of vertices in $V\setminus\{v,w\}$
and be reached from the same set of vertices in $V\setminus\{v,w\}$.
If they have different reachability, it is clear that $D$ is
a no-instance. If \tnew{their} reachability is the same,
the claim can be shown as follows:
A solution to $D$ clearly induces a solution to $D'$.
A solution to $D'$ can be extended into a solution to $D$ by
assigning $uv$ the label that $uw$ has received in the solution
to~$D'$.
\end{proof}
\fi

\iflong
\section{Algorithms for Instances with a Tree as Solid Graph}\label{sec:tree}
\else
\subsection{Algorithms for Instances with a Tree as Solid Graph}\label{sec:tree}
\fi
Since all edges of trees are bridges, we now describe, based on our insights about labels in~\minlab realizations, an algorithm for instances of~\URGD where the solid graph is a tree.

\iflong
\begin{lemma}
\label{lem:tree-nogap}
    If an instance of \URGD\ \tnew{that has a tree as solid graph is realizable}, then there is a \minlab realization such that each special bridge receives two consecutive labels.
\end{lemma}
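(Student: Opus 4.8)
By \Cref{cor:nospecial}, a tree instance can contain a special bridge only in the model \any\str\URGD (no solid edge of a tree lies on a triangle), so I will assume this model; in every other model a frugal realization assigns a single label to each solid edge and the statement is vacuous. Throughout I read ``two consecutive labels'' as ``two labels with no label strictly between them'', which after normalising a realization to use the labels $1,\dots,L$ is exactly ``the labels $t$ and $t+1$''. The plan is to induct on the size of the solid graph, using the splitting operation of \Cref{lem:split-special}. In the base case $D$ has no special bridge: take any realization and reduce it to a frugal one; by \Cref{lem:special} it assigns a single label to every solid edge, so the claim holds vacuously.

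For the inductive step I would pick a special bridge $e=\{u,v\}$. A pendant bridge cannot be special (the condition in \Cref{def:special} cannot be met when one side of the bridge is a single vertex), so $e$ is non-pendant and the two components $G_u,G_v$ of $G-e$ are both non-trivial trees. Since $D$ is realizable and $e$ is special, $e$ has plausible reachability, so \Cref{lem:split-special} produces two realizable subinstances $D_u$ and $D_v$, each again with a tree as solid graph, in both of which $e$ is still a special bridge. By the inductive hypothesis $D_u$ and $D_v$ have frugal realizations $\lambda_u$ and $\lambda_v$ in which every special bridge receives two consecutive labels; in particular $e$ gets a consecutive pair of labels in each of $\lambda_u$ and $\lambda_v$.

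I would then recombine $\lambda_u$ and $\lambda_v$ into a realization $\lambda$ of $D$ as in the proof of \Cref{lem:split-special}: align the two copies so that $e$ receives one common consecutive pair of integer labels, carry out the interface adjustments of that proof, and take the union. By \Cref{lem:split-special}, $\lambda$ realizes $D$. The pleasant point is that consecutiveness is robust here because all labels are integers: $e$ now has a consecutive integer pair and no label can lie strictly between them; every other special bridge $e'$ of $D$ lies entirely in $G_u$ or $G_v$, where it is still special --- otherwise a frugal realization of that subinstance would assign it a single label, contradicting that $e'$, being special in $D$, needs at least two labels in every realization of $D$ --- so $e'$ carries a consecutive pair in $\lambda_u$ or $\lambda_v$, an order-preserving shift keeps that pair consecutive, and, since the labels contributed by the other side are integers, none of them lands strictly between $e'$'s two labels. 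Thus every special bridge of $D$ receives two consecutive labels in $\lambda$. Finally I would reduce $\lambda$ to a frugal realization: reduction only shrinks label sets and never alters the two-element label set of a special bridge (that set cannot be replaced by a strictly smaller one, since a special bridge provably needs at least two labels), so the resulting frugal realization of $D$ retains the desired property.

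The hardest part will be making the induction bottom out: when one side of $e$ is tiny --- in the worst case a single pendant edge --- the subinstance produced by \Cref{lem:split-special} is essentially $D$ with a subtree pruned and one or two leaves re-attached to $v$, hence not strictly smaller under a naive size measure. I would handle this by choosing the induction measure carefully (for instance, the number of non-pendant solid edges together with a tie-breaker), by first normalising the instance with \Cref{lem:split-nonspecial} and \Cref{lem:remove-pendant}, and by observing that in these degenerate configurations the pendant neighbour of $v$ already plays the role of the leaf the split would add, so that the split makes genuine progress or can be avoided. A secondary technical point is that the interface adjustments in the proof of \Cref{lem:split-special} (the $\pm1$ shifts of labels inside $G_u$ or $G_v$) must be realised via freshly inserted empty time steps rather than by moving existing labels, so that no special bridge interior to $G_u$ or $G_v$ gets a third used label squeezed between its two labels; one checks that this is always possible.
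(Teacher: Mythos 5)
Your inductive approach via the splitting operation of Lemma~\ref{lem:split-special} is genuinely different from the paper's, which is a short direct argument: take any realization with at most two labels per edge (Lemma~\ref{lem:tree-atmost2}), observe that every edge adjacent to an edge $e$ with labels $a<b$ has all its labels $\le a$ or all $\ge b$ (a label strictly between, or labels on both sides, would force a solid triangle, impossible in a tree), then shift all labels $\le a$ in the two stars around $e$ up by $b-a-1$, propagate the shift outward through the tree, and iterate until no edge has a gap. That argument never needs the splitting operations. Your proposal, however, leaves two genuine gaps --- both of which you flag yourself but do not actually close.

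First, the induction does not terminate under any measure you propose. When $G_v$ consists of $v$ together with a single pendant neighbour, the subinstance $D_u$ has at least as many vertices as $D$, the same set of non-pendant edges, and (by your own correct observation) the same set of special bridges, so none of the natural measures strictly decrease. The suggested rescue that ``the pendant neighbour of $v$ already plays the role of the leaf the split would add'' does not hold in general: the out-/in-leaves of $D_u$ are defined from $R_{u\to v}$ and $R_{v\to u}$ and their reachability to and from the rest of the graph need not coincide with that of the existing pendant neighbour, so $D_u$ can genuinely differ from $D$ without being smaller. Second, the interface adjustment inside the proof of Lemma~\ref{lem:split-special} shifts only the labels $\le\alpha$ (resp.\ $\ge\beta$) on edges of $G_u$. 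A special bridge $e'\subseteq G_u$ not incident with $u$ is not constrained to avoid $\alpha$, so it may carry $\{\alpha,\alpha+1\}$; the adjustment sends $\alpha\mapsto\alpha-1$ while leaving $\alpha+1$ fixed, and $e'$ ends up with the non-consecutive pair $\{\alpha-1,\alpha+1\}$ with $e$'s own label $\alpha$ now strictly between them. Recasting the adjustment as ``freshly inserted empty time steps'' cannot avoid this, because what is required is a \emph{relative} shift of $G_u$ against $G_v$ precisely across a label that $e'$ uses; ``one checks that this is always possible'' is where the actual work would have to go, and closing either of these holes looks at least as hard as the paper's one-paragraph direct argument.
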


\begin{proof}
    By Lemma~\ref{lem:tree-atmost2}, we can assume that
    no edge has more than two labels.
    Assume that an edge $e$ has two labels $a$ and $b$ with $b>a+1$.
    Every edge adjacent to $e$ either only has labels $\le a$
    or only has labels $\ge b$, as an adjacent edge with a label strictly
    between $a$ and $b$ or with a label $\le a$ and a label $\ge b$
    would create a triangle.
    Hence, the gap between $a$ and $b$ can be eliminated: In the stars
    centered at the endpoints of~$e$, shift all labels $\le a$ by adding
    $b-a-1$. Then, propagate the shifts to the remainder of the tree as
    follows: If the labels on an edge $f\neq e$ in one of the stars centered at
    endpoints of $e$ were shifted by $b-a+1$, apply that same shift to \emph{all} labels on all edges
    in the subtree reachable over $f$. This modification maintains a valid
    solution and reduces the number of edges with two labels that have
    a gap between them by at least $1$. Repeating this modification yields
    the claim of the lemma. 
    \end{proof}

Let an instance $D$ be given such that the graph of solid edges is a tree.
For any internal node of the tree, we refer to the subgraph induced by
the node and its neighbors as an induced star.
Because we can apply the splitting rule of Lemma~\ref{lem:split-nonspecial} to
any internal tree edge that is not special, it suffices
to consider instances in which all internal tree edges are special.
Furthermore, we can apply Lemma~\ref{lem:remove-pendant}
to reduce the problem to a tree in which there are no two nodes $u$ and $w$
adjacent to the same vertex $v$ such that both $(u,w)\notin A$ and $(w,u)\notin A$.
To see this, note that if neither $(u,w)$ nor $(w,u)$ are in~$A$,
then Observation~\ref{incident same label} together with Lemma~\ref{lem:special} implies that $vu$ and $vw$ are non-special
bridges and hence leaves of the tree, so Lemma~\ref{lem:remove-pendant}
can indeed be used to eliminate one of $u$ and~$w$.
For every induced star in the
resulting instance, the auxiliary graph of dashed arcs between its leaves is an
orientation of a complete graph. Furthermore, this orientation
is acyclic, as a directed cycle on the neighbors
$u_1,u_2,\ldots,u_k$ of $v$, with $\ell_i=\min\lambda(vu_i)$ and
$h_i=\max\lambda(vu_i)$ for $1\le i\le k$, would imply
that $\ell_1<h_2\le \ell_3\le h_3\le\cdots \le \ell_k \le h_k\le \ell_1$,
a contradiction. Therefore, the auxiliary graph of dashed arcs
between the leaves of the induced star is a complete DAG (also called
an acyclic tournament) if the instance is realizable. The following
definition captures the properties of the resulting simplified instances.

\begin{definition}
\label{def:simplifiedinstance}
    An instance of \RGD is a \emph{simplified tree \tnew{instance}}
if the solid edges form a tree where every
internal edge is special, every pendant edge is non-special,
and the auxiliary graph of dashed arcs between the leaves
of each induced star is a complete DAG.
\end{definition}

In order to solve the problem on an instance~$D$
such that the solid graph of solid edges is a tree, we can
split the instance into simplified tree
instances in such a way that the original instance
is realizable if and only if all the resulting simplified instances are
realizable. The total size of the resulting simplified instances
is polynomial in the size of the original instance as every
edge of the solid graph of the original instance occurs in at
most two of the simplified instances. Furthermore,
the splitting operations can be applied in polynomial time.
The discussion can be summarized in the following lemma.

\begin{lemma}
\label{lem:reductiontournament}
    There is a polynomial-time algorithm for the
    realization problem for instances where the solid edges
    form a tree if and only if there is a polynomial-time
    algorithm for the realization problem for simplified
    tree instances.
\end{lemma}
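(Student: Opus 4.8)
The plan is to prove the two directions of the equivalence separately; the backward direction is immediate and the forward direction is the substantial one. For the backward direction I would simply note that a \emph{simplified tree instance} is, by \Cref{def:simplifiedinstance}, in particular an instance whose solid edges form a tree, so any polynomial-time algorithm for the latter class already decides simplified tree instances.

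For the forward direction, given an instance $D$ whose solid graph is a tree, I would run the following preprocessing: exhaustively apply the splitting rule of \Cref{lem:split-nonspecial} to non-special non-pendant edges, and the reduction of \Cref{lem:remove-pendant} to pairs of degree-$1$ vertices with a common neighbour and no arc between them (reporting that $D$ is a no-instance whenever \Cref{lem:remove-pendant}, or its non-strict special case, forces this), in any interleaved order until neither rule applies. First I would establish termination and a polynomial size bound: each application of \Cref{lem:split-nonspecial} turns the split edge into a pendant edge in both halves and distributes the remaining non-pendant edges, so it strictly decreases the total number of non-pendant edges over all current subinstances while raising the total vertex count by exactly one, and each application of \Cref{lem:remove-pendant} strictly decreases the total vertex count without creating non-pendant edges. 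Since there are at most $|E|$ splits and the total vertex count never exceeds $|V|+|E|$, only polynomially many rule applications occur, each computable in polynomial time. Moreover, a pendant edge is never chosen as a split edge and \Cref{lem:remove-pendant} only deletes vertices, so each solid edge of $D$ has at most two descendants among the final subinstances, whence their total size is $\Oh(|E|)$. As each of \Cref{lem:split-nonspecial} and \Cref{lem:remove-pendant} preserves the invariant ``$D$ is realizable iff all current subinstances are realizable'', the final collection of subinstances satisfies it as well.

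Next I would show that each surviving subinstance is either a simplified tree instance or recognizable as a no-instance in polynomial time. Once the process halts, no non-special non-pendant edge remains, and a pendant edge $\{v,u\}$ with $\deg(u)=1$ is never special (a special witness on the $u$-side would have to be $u$ itself, which is impossible). Hence the only condition of \Cref{def:simplifiedinstance} that can still fail is that the auxiliary digraph of dashed arcs between the leaves of some induced star is a complete DAG. Here I would reuse the argument from the discussion preceding the lemma: after exhausting \Cref{lem:remove-pendant}, for any two neighbours $u,w$ of an internal vertex $c$, \Cref{incident same label} together with \Cref{lem:special} forces an arc between $u$ and $w$ in any realization (otherwise the two edges $\{c,u\}$ and $\{c,w\}$ would receive a common single label, contradicting that an internal edge is double-labelled in a \minlab realization), and a directed cycle on the neighbours of $c$ is excluded by the nesting $\ell_1<h_2\le\ell_3\le\cdots\le\ell_k\le h_k\le\ell_1$ of the label intervals. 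Thus the failure of this condition certifies a no-instance, detectable directly; otherwise the subinstance is a simplified tree instance. Feeding every simplified subinstance to the assumed polynomial-time algorithm and returning ``yes'' iff all of them are yes-instances (and no no-instance was flagged during preprocessing) decides $D$ in polynomial time.

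I expect the main obstacle to be the bookkeeping around the preprocessing: verifying that the two reduction rules terminate after polynomially many steps in any interleaving and produce subinstances of total polynomial size — in particular the claim that each solid edge of the original tree survives in at most two of them — and checking that applying \Cref{lem:remove-pendant} after \Cref{lem:split-nonspecial} cannot revive a non-special internal edge. The latter holds because a removed redundant leaf can always be replaced by its surviving twin as a ``special'' witness for every internal edge, so special-ness of the remaining internal edges is preserved; making this replacement argument precise (and similarly checking that the ``$D$ realizable iff all subinstances realizable'' invariant is maintained by each individual rule application, which follows directly from \Cref{lem:split-nonspecial} and \Cref{lem:remove-pendant}) is the only delicate part.
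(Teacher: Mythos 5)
Your proof is correct and follows essentially the same strategy as the paper: exhaustively split at non-special non-pendant bridges via \Cref{lem:split-nonspecial}, eliminate redundant pendant twins via \Cref{lem:remove-pendant}, and observe that any surviving subinstance is either a polynomial-time-recognizable no-instance (the auxiliary digraph of some star is not a complete DAG) or a simplified tree instance. The extra bookkeeping you supply—explicit termination and size bounds, the witness argument that pendant edges are never special, and the observation that removing a redundant leaf preserves special-ness of every remaining bridge because the surviving twin can serve as the witness—makes precise a step the paper leaves implicit and is a genuine tightening rather than a detour.
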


\ifold
Recall that we call an edge $e$ \emph{single-labeled} if $|\lambda(e)|=1$ (i.e., if it is a non-special bridge) and
\emph{double-labeled} if $|\lambda(e)|=2$ (i.e., if it is a special bridge).
\fi

\begin{observation}
\label{obs:nothreedouble}
    Let~$\lambda$ be an undirected realization of a directed graph~$D$.
    No three special bridges sharing one common endpoint share a label.
\end{observation}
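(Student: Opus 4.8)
The plan is a proof by contradiction that rests on two facts: in the only setting where special bridges can occur, every special bridge carries at least two labels, and the two ``outer'' endpoints of two bridges sharing a common vertex cannot be joined by a solid edge. First I would note that, by \Cref{cor:nospecial}, no realizable instance of any version of \URGD other than \any\str\URGD has a special bridge at all, so the statement is vacuous for those variants and we may assume strict temporal paths. Then I would argue that \emph{in every} realization $\lambda$ --- not merely frugal ones --- a special bridge $e=\{c,u\}$ gets at least two labels: if $\lambda(e)=\{\alpha\}$, the vertices $a,b$ and the temporal paths witnessing \Cref{def:special} cross $e$ in the same direction, hence both at time $\alpha$, and since the two components of $G-e$ are vertex-disjoint, the prefix of one witness (times $<\alpha$), the edge $e$ at time $\alpha$, and the suffix of the other witness (times $>\alpha$) concatenate into a strict temporal path from $a$ to $b$, contradicting $D_{ab}=0$; this is exactly the transitivity argument from the proof of \Cref{lem:special}, which never uses frugality.

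Suppose now that three special bridges $\{c,u_1\},\{c,u_2\},\{c,u_3\}$ all receive a common label $\alpha$, and write $L_i=\lambda(\{c,u_i\})$, $\ell_i=\min L_i$, $h_i=\max L_i$. Each $L_i$ contains $\alpha$ and, by the previous paragraph, some $\beta_i\ne\alpha$, which is either strictly below or strictly above $\alpha$; by pigeonhole two indices $i\ne j$ land on the same side. If $\beta_i,\beta_j>\alpha$ then $\min(h_i,h_j)>\alpha\ge\max(\ell_i,\ell_j)$, and if $\beta_i,\beta_j<\alpha$ then $\max(\ell_i,\ell_j)<\alpha\le\min(h_i,h_j)$; either way $\ell_i<h_j$ and $\ell_j<h_i$. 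Since $\{c,u_i\}$ and $\{c,u_j\}$ are bridges, every $u_i$--$u_j$ path in $G$ must use both of these edges and therefore visits $u_i,c,u_j$ consecutively; hence $\ell_i<h_j$ certifies a temporal path from $u_i$ to $u_j$ (traverse $\{c,u_i\}$ at time $\ell_i$, then $\{c,u_j\}$ at time $h_j$) and $\ell_j<h_i$ certifies one from $u_j$ to $u_i$. Therefore $D_{u_iu_j}=D_{u_ju_i}=1$, so $\{u_i,u_j\}$ is a solid edge of $G$; but then $u_i,u_j,c$ is a path in $G$ avoiding $\{c,u_i\}$, contradicting that $\{c,u_i\}$ is a bridge.

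The step I expect to require the most care is the claim that a special bridge has at least two labels even in a non-frugal realization, because \Cref{lem:special} is stated only for frugal realizations; the remedy is to re-extract the transitivity argument from its proof, as sketched above. Beyond that, the proof is routine: the pigeonhole split, the two inequalities $\ell_i<h_j$ and $\ell_j<h_i$, and the fact that two incident bridges force a length-two path between their outer endpoints.
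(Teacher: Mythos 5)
Your proof is correct and follows the same core argument as the paper: since each special bridge carries at least two labels (one equal to the shared label and one off to a side), pigeonhole places the extra labels of two of the three bridges on the same side of the shared label, and the resulting pair of strict temporal paths forces a solid edge between the two outer endpoints, contradicting that the incident edges are bridges. You are more explicit than the paper on one point it leaves implicit — that a special bridge must receive at least two labels even in a realization that is not frugal — which is a genuine (if small) improvement in rigor rather than a different route.
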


\begin{proof}
    Let~$\lambda$ be an undirected realization for~$D$.
     Assume three solid bridges that share a common endpoint have the common label~$p$. There must exist
    two edges whose labels are $p$ and a label greater than $p$,
    or two edges whose labels are $p$ and a label smaller than $p$.
    In both cases, there must be a solid edge between the non-shared endpoints of these edges, a contradiction to the assumption
    that all edges are bridges.
\end{proof}

\begin{lemma}
    \label{lem:bundledmiddle}
    Consider an induced star in a simplified tree instance
    with center node $u$ and leaves $v_1,\ldots,v_k$ indexed in topological order. If the instance is realizable, the following must hold:
    If $uv_i$ and $uv_{i+2}$ are bundled for some $i\in [k-2]$, then $uv_i$ and
    $uv_{i+1}$ as well as $uv_{i+1}$ and $uv_{i+2}$ are also bundled.
\end{lemma}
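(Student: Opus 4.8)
The plan is to fix a \minlab realization $\lambda$ of $D$ and exploit the rigid structure that the topological order forces on the labels of the star centered at $u$ (the only relevant setting is \any\strict\URGD, since in \simp\strict\URGD no solid edge is special and hence, by the complete-DAG property, no pair of star edges is bundled). For $m\in[k]$ put $\ell_m=\min\lambda(uv_m)$ and $h_m=\max\lambda(uv_m)$. As the unique $v_a$--$v_b$ path in $G$ passes through $u$, the presence of the arc $(v_a,v_b)$ (for $a<b$) forces $\ell_a<h_b$ and the absence of $(v_b,v_a)$ forces $h_a\le\ell_b$; hence $\ell_1\le h_1\le\ell_2\le h_2\le\cdots\le\ell_k\le h_k$ and $\ell_m<h_{m+1}$ for every $m<k$. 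By \Cref{meaning of bundled} the bundled edges $uv_i$ and $uv_{i+2}$ share a label in $\lambda$; since their labels lie in $\{\ell_i,h_i\}$ and $\{\ell_{i+2},h_{i+2}\}$ while $h_i\le\ell_{i+1}\le h_{i+1}\le\ell_{i+2}$, this common label can only be $h_i=\ell_{i+2}$, and then the whole window collapses: $p:=h_i=\ell_{i+1}=h_{i+1}=\ell_{i+2}$. In particular $\lambda(uv_{i+1})=\{p\}$, so $uv_{i+1}$ is non-special (\Cref{lem:special}), hence a pendant edge of the tree, hence --- as $u$ has degree $k\ge3$ --- $v_{i+1}$ is a leaf with $V_{v_{i+1}}=\{v_{i+1}\}$; moreover $\ell_i<h_i=p$ and $h_{i+2}>\ell_{i+2}=p$ (else $\ell_i=h_{i+1}$ resp.\ $\ell_{i+1}=h_{i+2}$), so $uv_i$ and $uv_{i+2}$ are both special.

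Next I would reformulate bundledness of the two consecutive pairs. Because $(v_i,v_{i+1})$ and $(v_{i+1},v_{i+2})$ are arcs of $D$, the first alternative of \Cref{def:bundled} never applies; and since a pendant edge is never special in any (sub)instance (deleting it isolates one endpoint, so the transitivity condition of \Cref{def:special} cannot be violated), the edge $uv_{i+1}$ is irrelevant to the second alternative. Thus $uv_i$ and $uv_{i+1}$ are bundled iff $\{v_i,u\}$ is special in $D[V_{v_i}\cup\{v_{i+1},u\}]$, which --- after unfolding \Cref{def:special} and discarding the degenerate witness choice $u$ --- means that some $a\in V_{v_i}$ reaches $u$ but not $v_{i+1}$; symmetrically, $uv_{i+1}$ and $uv_{i+2}$ are bundled iff some $b\in V_{v_{i+2}}$ is reached by $u$ but not by $v_{i+1}$. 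Using $\lambda(uv_{i+1})=\{p\}$ I would restate these in timing terms: the first says that the earliest moment at which $a$ can be at $u$ is $p$ (i.e.\ $a$ cannot traverse $uv_i$ already at time $\ell_i$), and the second says that $v_{i+2}$ has a temporal path to $b$ using only times $>p$ but none using only times $>h_{i+2}$.

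It then remains to manufacture such a pair $(a,b)$ from the hypothesis. By assumption $\{v_i,u\}$ or $\{v_{i+2},u\}$ is special in $D[V_{v_i}\cup V_{v_{i+2}}\cup\{u\}]$; unfolding \Cref{def:special} and discarding the inapplicable branches (a witness vertex equal to $u$, or the reverse alternative, which would need a temporal transfer at $u$ from the $v_{i+2}$-side to the $v_i$-side and hence $\ell_{i+2}<h_i$, contradicting $\ell_{i+2}=h_i$) yields $a\in V_{v_i}$ and $b\in V_{v_{i+2}}$ with $a\to u$, $u\to b$, and $a\not\to b$. Now $u\to b$ means $u$ traverses $uv_{i+2}$ at some time $\ge p$ and then $v_{i+2}$ reaches $b$ using only later times, so in particular $v_{i+2}$ reaches $b$ using only times $>p$. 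If $a$ could be at $u$ already at time $\ell_i<p$, it would traverse $uv_{i+2}$ at time $p$ and follow that path, giving $a\to b$ --- a contradiction. Hence $a$'s earliest arrival at $u$ is exactly $p$, so $a$ witnesses that $uv_i$ and $uv_{i+1}$ are bundled; and then $a$ can leave $u$ towards $v_{i+2}$ only at time $h_{i+2}$, so $a\not\to b$ forces that no $v_{i+2}$--$b$ temporal path uses only times $>h_{i+2}$, while $u\to b$ still provides one using only times $>p$, so $b$ witnesses that $uv_{i+1}$ and $uv_{i+2}$ are bundled.

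The step I expect to be the main obstacle is the bookkeeping in the middle two paragraphs: faithfully translating the purely combinatorial notions ``special with respect to a subinstance'' (\Cref{def:special}) and ``bundled'' (\Cref{def:bundled}) into timing statements about $\lambda$, and carefully discharging every degenerate branch of those definitions using the collapsed window $p$ together with the facts that $uv_{i+1}$ is a single-labeled pendant edge while $uv_i$ and $uv_{i+2}$ are special. Once these reformulations are in place, the single pair $(a,b)$ extracted from the hypothesis simultaneously witnesses bundledness of both consecutive pairs, and the proof closes.
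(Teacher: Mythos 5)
Your proof is correct and follows the same underlying strategy as the paper's: extract a witness pair $a\in V_{v_i}$, $b\in V_{v_{i+2}}$ from the hypothesis that one of $uv_i$, $uv_{i+2}$ is special in $D[V_{v_i}\cup V_{v_{i+2}}\cup\{u\}]$, and reuse it together with $v_{i+1}$ to certify that both consecutive pairs are bundled. The paper is much terser---it directly asserts the key facts $D_{av_{i+1}}=0$ and $D_{v_{i+1}b}=0$ without the realization-level justification---so the explicit timing analysis around the collapsed label $p$ that you carry out is exactly the reasoning that makes those asserted facts rigorous, and your observation that the claim is vacuous for \simp\strict\URGD is also correct.
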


\begin{proof}
    If $uv_i$ and $uv_{i+2}$ are bundled, then
    $uv_i$ and $uv_{i+2}$ must be special,
    $uv_{i+1}$ must be non-special, and all three
    edges share a label $\gamma$ in any realization.
    By Definition~\ref{def:bundled} one of the two edges
    $uv_i$ and $uv_{i+2}$
    must be special in $G[V_{v_i}\cup V_{v_{i+2}} \cup \{u\}]$.
    Assume that $uv_i$ is special.
    Thus, there must be a vertex $x$ in $V_{v_i}$ and
    a vertex $y$ in $V_{v_{i+2}}$ such that $D_{xu}=D_{v_iy}=1$
    and $D_{xy}=0$. But then we also have $D_{xu}=D_{v_i v_{i+1}}=1$
    and $D_{xv_{i+1}}=0$, and hence
    $uv_i$ is also special in $G[V_{v_i}\cup V_{v_{i+1}} \cup \{u\}]$,
    and thus $uv_i$ and $uv_{i+1}$ are bundled. Similarly, we have
    $D_{v_{i+1}v_{i+2}}=D_{uy}=1$ but $D_{v_{i+1}y}=0$, and hence
    $uv_{i+2}$ is also special in $G[V_{v_{i+1}}\cup V_{v_{i+2}} \cup \{u\}]$,
    and thus $uv_{i+1}$ and $uv_{i+2}$ are bundled.
    The case that $uv_{i+2}$ is special in $G[V_{v_i}\cup V_{v_{i+2}} \cup \{u\}]$
    can be handled analogously.
\end{proof}

We now give an algorithm for solving instances whose solid edges
form a tree.

\fi

\begin{theorem}
\label{th:treeCombinatorial}
    There is a polynomial-time algorithm for solving
    instances of \any\str\URGD for which the solid edges
    form a tree.
\end{theorem}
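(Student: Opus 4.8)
By Lemma~\ref{lem:reductiontournament} it suffices to give a polynomial-time algorithm for simplified tree instances; by the discussion preceding Definition~\ref{def:simplifiedinstance} the solid graph is then a tree in which every internal edge is special, every pendant edge is non-special, and for each induced star $S_u$ centered at an internal node $u$ with leaves $v_1,\dots,v_k$ the auxiliary graph of dashed arcs between the $v_i$ is a complete DAG, so we may index $v_1,\dots,v_k$ in topological order. Each edge $uv_i$ is either non-special (a single unknown label $\ell_i=h_i$) or special (two unknown labels $\ell_i<h_i$). The plan is to observe that a realization of a simplified tree instance is, up to order-preserving relabeling (using Lemma~\ref{lem:tree-nogap} so that special edges get two \emph{consecutive} labels), completely determined by, for each internal node $u$, a linear order on the multiset of endpoints $\{\ell_i,h_i : i\in[k]\}$ of the labels on the star $S_u$, subject to: (i) $\ell_i\le h_i$ with equality iff $uv_i$ is non-special; (ii) the reachability constraints $D_{v_iv_j}=1 \iff \ell_i<h_j$ for $i<j$ and analogously for pendant/adjacent vertices reachable through $u$; (iii) the "bundling" constraints from Lemma~\ref{meaning of bundled} (bundled adjacent bridges share a label, i.e. $h_i=\ell_{i+1}$ forced) and the "separation" constraints from Lemma~\ref{meaning of separated} ($h_i<\ell_{i+1}$ forced); and (iv) the global consistency that a label value shared between two incident stars $S_u,S_w$ at their common edge $uw$ is the same in both orders. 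The first task is to prove that for a \emph{single} star these local constraints can be solved in polynomial time: walking $v_1,\dots,v_k$ in topological order, the relative position of $\ell_{i+1},h_{i+1}$ with respect to $\ell_i,h_i$ is forced by whether $uv_i,uv_{i+1}$ are bundled (Definition~\ref{def:bundled}), separated, or neither (Lemma~\ref{meaning of separated}, Observation~\ref{obs:nothreedouble}, Lemma~\ref{lem:bundledmiddle}); one simply greedily builds the chain $\ell_1\le h_1\,?\,\ell_2\le h_2\,?\,\dots$, inserting $=$ when bundled and $<$ when separated or when forced by a required non-arc $D_{v_iv_{i+2}}=0$, and at the end verifies that the resulting order realizes \emph{exactly} the prescribed dashed arcs among the $v_i$ and their further pendant neighbors (rejecting if not). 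Lemma~\ref{lem:bundledmiddle} is exactly what guarantees that these pairwise forcings never conflict across a window of three consecutive leaves, so the single-star order, if it exists, is essentially unique.

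The second task is to glue the stars across internal edges. Here the point is that each internal (special) edge $uw$ belongs to exactly two stars $S_u$ and $S_w$, and it carries two labels $\ell<h$; the star $S_u$ fixes how $\{\ell,h\}$ sits among $S_u$'s labels and $S_w$ fixes how it sits among $S_w$'s labels, but the two stars otherwise share no labels, so we are free to \emph{interleave} the label sets of the two sides of $uw$ arbitrarily as long as $\ell$ stays to the left of $h$ and we respect "no gap" only internally to each star (Lemma~\ref{lem:tree-nogap}, applied after gluing by re-scaling). Concretely I would root the tree at an arbitrary internal node and process it bottom-up: for each internal edge $uw$ with $w$ the child, the child-subtree realization (computed recursively, via its own star orders) assigns some two consecutive integer labels to $uw$; we then shift/scale all labels in the child-subtree by an order-preserving map so that $uw$ receives the two values the parent's star $S_u$ wants, and insert empty time steps as needed so the child-subtree's label range does not collide with the rest — this is exactly the same shifting-and-interleaving argument already used in the proof of Lemma~\ref{lem:split-special}. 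Because the only interaction between subtrees hanging off $u$ is through $u$ itself, and all of that interaction is captured by the single-star constraint at $S_u$ (which we already solved), the bottom-up combination never needs to revisit a choice: either every star admits its forced order and they glue, or some star is infeasible and we reject.

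The last task is correctness: one direction is immediate (a realization, after applying Lemmas~\ref{lem:tree-atmost2}, \ref{lem:special}, \ref{lem:tree-nogap} to make it frugal with $\le 2$ consecutive labels per edge, induces exactly such a family of star orders), and the converse is the bottom-up gluing above, where the key verification is that within each star the constructed order realizes the required arcs and \emph{no} extra arcs — both among the leaves $v_i$ (handled by the $\ell_i<h_j$ criterion and the completeness/acyclicity of the tournament) and from vertices outside the star that reach into it through $u$ (handled by the plausible-reachability bookkeeping of $R_{u\to v}$-sets already set up for Lemma~\ref{lem:split-special}, which the simplification step inherited). I expect the main obstacle to be precisely this "no spurious arcs" bookkeeping at the interface between a star and the pendant leaves/ subtrees attached to its leaves: one must argue that the way $S_u$'s order places $\ell_i,h_i$ is simultaneously compatible with all the reachability demands of the \emph{subtree} behind $v_i$ (not just of $v_i$ as an isolated leaf), and conversely that the subtree computation never produces a label placement for $uv_i$ that $S_u$ cannot accommodate. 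This is where the earlier structural lemmas (\ref{lem:special}, \ref{lem:split-special}, \ref{def:bundled}, \ref{meaning of bundled}, \ref{meaning of separated}, \ref{lem:bundledmiddle}) do all the heavy lifting, so the proof reduces to carefully assembling these facts and checking the forced single-star order can be computed and verified in polynomial time per node, for a total polynomial running time.
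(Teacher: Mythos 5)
Your proposal takes essentially the same route as the paper's proof: reduce to simplified tree instances via Lemma~\ref{lem:reductiontournament}, process each star by walking the leaves in topological order and deciding consecutive pairs by the bundled/not-bundled dichotomy, and glue the per-star labelings across internal (special) edges by order-preserving shifts. The paper's version is just a more explicit spelling-out of the same plan: it instantiates the ``greedy chain'' as five concrete cases on whether the two consecutive edges are single or double and whether they are bundled, and it proves correctness by showing any frugal realization can be transformed into the algorithm's canonical output via repeated $+1$ shifts, while you leave correctness as the claim that ``the earlier structural lemmas do all the heavy lifting.''

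The one step you flag as ``the main obstacle'' --- no spurious arcs across the interface of a star and the subtrees behind its leaves --- is in fact the crux, and the paper resolves it with a short but necessary observation your plan does not make explicit: when two edges $uv_i$ and $uv_j$ are \emph{not} bundled, giving them disjoint label sets is always safe, because every vertex that can reach $u$ via $uv_i$ must then reach everything $u$ can reach via $uv_j$, which matches what the reachability matrix demands (this is essentially the negation of the first bullet of Definition~\ref{def:bundled}). Without this, your ``greedy chain with $=$ exactly when bundled'' is not obviously the unique safe choice: a priori one might worry that failing to share a label between two non-bundled edges could realize an unwanted arc from a subtree behind $v_i$ to a subtree behind $v_j$, and the fact that it cannot is precisely what Lemma~\ref{meaning of bundled} plus the definition of bundled give you, not merely a consequence of Lemma~\ref{lem:bundledmiddle} or Observation~\ref{obs:nothreedouble}. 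You also invoke the plausible-reachability bookkeeping of Lemma~\ref{lem:split-special} at this point, but that machinery belongs to the reduction step and has already been consumed by the time one is at a simplified tree instance; inside the simplified instance the only remaining freedom per star is the bundled/not-bundled decision, and the correctness argument should be closed entirely at that level, as the paper does.
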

\iflong
\begin{proof}
    By Lemma~\ref{lem:reductiontournament}, it suffices to consider simplified tree instances. In particular,
    the auxiliary graphs of dashed arcs for all induced stars are
    complete DAGs.
    We compute a labeling for every star and merge the labelings in DFS order starting with an arbitrary star.

    Consider the computation of the labeling for the star centered at a node $u$.
    We know that every special edge incident with $u$ receives two consecutive
    labels and every non-special edge receives a single label. Furthermore, we know
    that the dashed arcs between the leaves of the star form a complete DAG.
    Denote the leaves by $v_1,v_2,\ldots,v_k$ in topological
    order. For $1\le i\le k$, call the edge $uv_i$ a \emph{double edge}
    if it is special and a \emph{single edge} otherwise.
    We label the edges in order from $uv_1$ to $uv_k$. Initially, we
    assign $uv_1$ the label $1$ if it is a single edge and the labels $1,2$
    if it is a double edge.
    For $i=2,3,\ldots,k$, distinguish the following cases:
    \begin{itemize}
        \item Case 1: $uv_{i-1}$ is a double edge labeled $a-1,a$, $uv_i$ is a single edge, and $uv_{i+1}$ is a single edge or $i=k$. If there is a vertex $w$ that can reach $u$ via $uv_{i-1}$ but cannot reach $v_{i}$ (which implies that $uv_{i-1}$ and $uv_i$ are bundled), assign label $a$ to $uv_i$. Otherwise, assign label $a+1$ to $uv_i$.
        \item Case 2: $uv_{i-1}$ is a double edge labeled $a-1,a$, $uv_i$ is a single edge, and $uv_{i+1}$ is a double edge. If there is a vertex $w$ that can reach $u$ via $uv_{i-1}$ but
        cannot reach $v_i$ or cannot reach all vertices that $u$ can reach via $v_{i+1}$ (implying
        that $uv_{i-1}$ and $uv_i$ are
        bundled, cf.\ Lemma~\ref{lem:bundledmiddle}), assign label $a$ to $uv_i$. Otherwise, assign
        label $a+1$ to $uv_i$.
        \item Case 3: $uv_{i-1}$ is a single edge labeled $a$, and $uv_i$ is a single edge. Label $uv_i$ with label $a+1$.
        \item Case 4: $uv_{i-1}$ is a single edge labeled $a$, and $uv_i$ is a double edge. If there
        is a vertex $w$ that can be reached from $u$ via the edge $uv_i$ but cannot be
        reached from \tnew{$v_{i-1}$} (implying that \tnew{$uv_{i-1}$} and $uv_i$ are bundled), assign labels $a,a+1$ to $uv_i$. Otherwise, assign
        labels $a+1,a+2$ to $uv_i$.
        \item Case 5: $uv_{i-1}$ is a double edge labeled $a-1,a$, and $uv_i$ is a double edge.
        If there is a vertex that can reach $u$ via $uv_{i-1}$ but cannot reach
        all vertices that $u$ can reach via $uv_i$ (implying
        that $uv_{i-1}$ and $uv_i$ are bundled), assign labels $a,a+1$ to $uv_i$.
        Otherwise, assign labels $a+1,a+2$ to $uv_i$.
    \end{itemize}
    The resulting solutions for the stars can be merged by starting with an
    arbitrary star and always adding an adjacent star, shifting the labels of
    that star so that the labels on the edge common to the solution so far
    and the new star are identical. If the resulting labeling realizes
    $D$, we output it. Otherwise, we report that the given instance
    is a non-instance.

    To prove that the algorithm is correct, we show that an arbitrary realization
    can be transformed into the one produced by the algorithm. Let $\lambda^*$
    be a \minlab labeling that realizes~$D$ and assigns two consecutive
    labels to each special bridge edge. Such a labeling exists
    by Lemma~\ref{lem:tree-nogap}.
    
    Consider a star $S$ with
    center $u$ and leaves $v_1,\ldots,v_k$ indexed in topological order.
    For $i< j$, it is clear that in $\lambda^*$ no label of \tnew{$uv_j$} can be smaller than
    any label of \tnew{$uv_i$}, as $v_i$ must not be reachable from~$v_j$.
    Besides, no two single edges can have the same label.
    Furthermore, we can assume that there are no gaps in the
    labels assigned to edges in $S$, as such gaps could be removed (and the
    labels of the other stars shifted accordingly). If edges to consecutive
    leaves do not share a label, we therefore have that the lower label of the
    second edge equals the higher label of the first edge plus one.
    The only possibilities for pairs of edges to consecutive leaves to share a label $a$ are:
    \begin{itemize}
        \item A single edge $uv_i$ with label $a$ and a double edge $uv_{i+1}$ with
        labels $a,a+1$, for some $a$.
        \item A double edge $uv_i$ with labels $a-1,a$ and a single edge $uv_{i+1}$ with
        label $a$, for some $i$.
        \item A double edge $uv_i$ with labels $a-1,a$ and a double edge $uv_{i+1}$ with labels $a,a+1$.
    \end{itemize}
    Furthermore, a label $a$ can be shared by edges to at most three consecutive leaves,
    and by Observation~\ref{obs:nothreedouble} at least one of them must be a single
    edge. It is also impossible that two of them are single edges (as the auxiliary
    graph is a complete DAG). Furthermore, the single edge must be the
    middle edge (otherwise it wouldn't be possible to share the label with both
    of the double edges), and so the only possibility for three edges to share
    a label is:
    \begin{itemize}
        \item A double edge $uv_{i-1}$ with labels $a-1,a$.
        \item A single edge $uv_i$ with label $a$.
        \item A double edge $uv_{i+1}$ with labels $a,a+1$.
    \end{itemize}
    Two edges $uv_i$ and $uv_j$ with $i<j$ \emph{must} share a label
    if the edges are bundled, which is the case if there exists a vertex that can reach $u$ via $uv_i$ but cannot reach all vertices
    that $u$ can reach via $uv_j$.
    If there is no such vertex, assigning disjoint sets of labels to $uv_i$ and $uv_j$
    is valid with respect to the vertices involved as all vertices that can reach $u$ 
    via $uv_i$ can then definitely reach all vertices that $u$ can reach via $uv_j$,
    as desired. The algorithm ensures that all edges that are bundled (must share labels)
    do share a label, and assigns disjoint sets of labels to consecutive edges
    otherwise. If $\lambda^*$ differs from the labeling
    produced by the algorithm this can only happen if $\lambda^*$ lets
    consecutive edges $uv_i$ and $uv_{i+1}$ share a label even though every
    vertex that can reach $u$ via $uv_i$ can reach every vertex that $u$ can
    reach via $uv_{i+1}$ and there is also no indirect requirement (stemming
    from a requirement that $uv_i$ and $ uv_{i+2}$ must share a label,
    or that $uv_{i-1}$ and $uv_{i+1}$ must share a label) for
    $uv_i$ and $uv_{i+1}$ to share a label. In such a case, we can shift
    all the labels on $uv_{i+1}$, $uv_{i+2}$, \ldots, $uv_k$ by $+1$ (and
    propagating the shifted labels to the rest of the tree)
    without affecting feasibility of $\lambda^*$. Repeating this operation
    makes $\lambda^*$ identical to the labeling produced by the algorithm.
\end{proof}

The following lemma implies an alternative method for solving the problem of realizing~$D$ in polynomial time in the case where the solid edges
form a tree. 
This method can be generalized to the
case where some edges of the tree are pre-labeled, which will be useful
later.

\begin{lemma}
\label{lem:treeLP}
Let $D$ be an instance
of \any\str\URGD
on $n$ vertices in which the solid edges form a tree~$T$.
Then one can determine in polynomial time a
linear program that is feasible if and only if $D$ is realizable.
If $D$ is realizable, one can compute a labeling that realizes~$D$
from the solution of the LP in polynomial time.
\end{lemma}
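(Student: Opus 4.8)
The plan is to reuse the structural analysis from the proof of \Cref{th:treeCombinatorial}: once the instance is brought into the shape of a simplified tree instance, every minimal realization has a combinatorial shape that is essentially forced, and the only remaining freedom -- the concrete integer values of the (at most two) labels per edge -- is governed by a system of difference constraints, which is a linear program. So the lemma is really a translation of ``the forced shape is realizable'' into ``a difference constraint system is feasible''.

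First I would apply \Cref{lem:split-nonspecial} and \Cref{lem:remove-pendant} together with the reduction of \Cref{lem:reductiontournament} to replace $D$ in polynomial time by a family of simplified tree instances $D_1,\dots,D_r$ of total size $\Oh(n)$ such that $D$ is realizable iff all $D_i$ are; whenever a required sanity check fails (the solid graph is not a tree, the dashed arcs among the neighbours of some internal vertex do not form an acyclic tournament, a pair of adjacent bridges is both bundled and separated, or the consistency guaranteed by \Cref{lem:bundledmiddle} is violated), we immediately report a no-instance. It then suffices to build, for each $D_i$, a linear program $\mathcal{L}_i$ that is feasible iff $D_i$ is realizable, and to output the disjoint union of the $\mathcal{L}_i$ over disjoint variable sets, which is feasible iff every $\mathcal{L}_i$ is. For a simplified tree instance $D_i$ I would introduce, for each solid edge $e$, a variable $\ell_e$, and for each internal (hence, by \Cref{def:simplifiedinstance}, special) edge $e$ a second variable $h_e$ with $h_e=\ell_e+1$ (by \Cref{lem:tree-nogap} we may assume the two labels of a special bridge are consecutive); for a pendant edge we write $h_e:=\ell_e$. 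For every star with centre $c$ and neighbours $v_1,\dots,v_k$ listed in the topological order of its acyclic tournament, and every $i\in[k-1]$, I would add the constraint $h_{cv_i}=\ell_{cv_{i+1}}$ if $\{c,v_i\}$ and $\{c,v_{i+1}\}$ are bundled (\Cref{def:bundled}), and $h_{cv_i}+1\le\ell_{cv_{i+1}}$ otherwise, the latter reflecting that non-bundled consecutive bridges may be separated by the shifting argument in the proof of \Cref{th:treeCombinatorial}. All constraints are difference constraints, there are $\Oh(n)$ variables and $\Oh(n)$ constraints, and the system is written down in polynomial time.

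For correctness, the forward direction is routine: given a realization of $D_i$, make it \minlab (so by \Cref{lem:special} each special bridge gets two labels and each non-special bridge one), make each special bridge's labels consecutive by \Cref{lem:tree-nogap}, and normalise each star by the shifting argument of \Cref{th:treeCombinatorial} so that exactly the bundled consecutive pairs share a label; the ordering constraints hold because $v_j$ is unreachable from $v_i$ whenever $j<i$, and the equality constraints hold by \Cref{meaning of bundled}, so $\mathcal{L}_i$ is feasible. For the converse, since a difference constraint system with integral data is feasible over the rationals iff it is feasible over the integers, from a feasible point I would compute an integral one in polynomial time (e.g.\ via a shortest-path computation on the constraint graph) and read off a labeling $\lambda$; it then remains to show $\mathcal{R}(G,\lambda)=D_i$, and, if $\mathcal{L}_i$ is infeasible, that $D_i$ is genuinely a no-instance.

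I expect the latter to be the main obstacle: the constraints of $\mathcal{L}_i$ are purely local (per star), so one has to argue that for simplified tree instances local correctness already forces global correctness -- that any $\lambda$ respecting, at every star, the forced topological order and bundling pattern realizes exactly $D_i$, with no reachability beyond those in the arc set $A$ appearing along long tree paths. I would establish this by re-running the reachability analysis of \Cref{th:treeCombinatorial} for an arbitrary feasible $\lambda$ rather than for the specific labeling that algorithm produces: a temporal path traversing a tree path $w_0,\dots,w_m$ exists iff at every internal $w_i$ the neighbour $w_{i-1}$ precedes $w_{i+1}$ in $w_i$'s topological order and the pair of bridges at $w_i$ does not ``block'' the path, and -- crucially using that every internal edge of a simplified tree instance is special, so that at every interior vertex of a path of length at least two the incoming or outgoing edge carries two distinct consecutive labels -- this condition matches membership in $A$ exactly (the short-path and length-one cases being handled separately via the tournament property and \Cref{meaning of bundled,meaning of separated}). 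Since the canonical labeling produced by \Cref{th:treeCombinatorial} is itself a feasible point of $\mathcal{L}_i$, feasibility of $\mathcal{L}_i$ is then equivalent to realizability of $D_i$, and all steps run in polynomial time, which proves the lemma.
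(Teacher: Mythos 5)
Your LP is, by construction, always feasible once the sanity checks pass, so it cannot satisfy the lemma's claim that the LP is ``feasible if and only if $D$ is realizable.'' Every constraint you impose relates the variables of two edges incident on a common vertex (plus the rigid per-edge constraints $h_e-\ell_e\in\{0,1\}$); each star contributes a chain of constraints along its topological order, and distinct stars overlap in at most the single edge they share, so the constraint graph is a tree, has no cycles, and therefore no negative cycle -- a difference system on a tree is always satisfiable. What your system is missing are precisely the \emph{global} constraints along long tree paths that the paper introduces: for every minimal non-arc $(u,v)$ whose $u$-$v$-path $e_1,\dots,e_p$ has $p\ge 3$, the paper first checks that $e_1$ and $e_p$ are non-special and all interior $e_i$ are special (a sanity check you omit) and then adds $h_{e_i}=\ell_{e_{i+1}}$ for all $i\in[p-1]$; and for every maximal arc whose path starts and ends with a non-special edge and is special in between, the paper adds $\sum_i(\ell_{e_{i+1}}-h_{e_i})\ge\delta$. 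Only the interplay of these two families of constraints can make the LP infeasible, which is the whole point of the lemma.

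A concrete instance where your plan fails: take the path $u_0,\dots,u_4$ with solid edges $e_i=\{u_{i-1},u_i\}$ and dashed arcs exactly $(u_0,u_2),(u_1,u_3),(u_2,u_4)$. This is a simplified tree instance -- $e_2,e_3$ are special, $e_1,e_4$ are non-special pendant, each star tournament is a complete DAG on two vertices, and at every star the two bridges are bundled (one of them is special) and not separated, so all of your sanity checks pass, and Lemma~\ref{lem:bundledmiddle} is vacuous. Your LP forces (up to shift) $\lambda(e_1)=\{1\}$, $\lambda(e_2)=\{1,2\}$, $\lambda(e_3)=\{2,3\}$, $\lambda(e_4)=\{3\}$, which realizes the non-arc $(u_0,u_3)$ via times $1<2<3$. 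In fact $D$ is a genuine no-instance: the requirements $(u_0,u_2)\in A$, $(u_3,u_1)\notin A$, $(u_1,u_3)\in A$ force $\lambda(e_1)<\max\lambda(e_2)\le\min\lambda(e_3)<\max\lambda(e_3)$ in every realization, so a strict temporal $(u_0,u_3)$-path always exists. The paper rejects this instance in the sanity phase because the minimal non-arc $(u_0,u_3)$ has the special last edge $e_3$. More broadly, you are encoding the \emph{canonical labelling} of Theorem~\ref{th:treeCombinatorial} rather than the realizability conditions; the theorem is only correct because its algorithm ends with an explicit ``does this labelling realize $D$?'' test, and your LP reproduces the labelling but not the test. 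Your closing claim that a local per-vertex ``blocking'' criterion decides reachability along tree paths is exactly what fails: in the example no single internal vertex blocks the $(u_0,u_3)$-path, yet the cumulative increase of labels produces it. Adding the missing global constraints and the special/non-special sanity check on minimal non-arcs would essentially reconstruct the paper's LP; without them you also lose the extension to pre-labelled edges needed for Corollary~\ref{cor:prelabeled_tree}.
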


\begin{proof}
First, we check conditions that must obviously hold if $D$ is realizable:
If $(u,v)$ is an arc, then $(u,w)$ and $(w,v)$ must also be arcs for all internal
vertices of the path from $u$ to $v$ in~$T$. If $uc$ and $cv$ are
adjacent edges and neither \tnew{$(u,v)$ nor $(v,u)$} is in $A$,
then $uc$ and $cv$ must be non-special edges (as they must
both be assigned the same single label in any realization).
Call a non-arc $(u,v)\notin A$ \emph{minimal} if it is the only arc missing
in the direction from $u$ to $v$ on the path from $u$ to $v$.
Consider any minimal non-arc $(u,v)$ such that the path from $u$
to $v$ in $T$ contains at least three edges.
Let $e_1,e_2,\ldots,e_p$ for some $p\ge 3$ be the edges on the
path from $u$ to~$v$.
For $i\in [p]$, let $e_i=\{u_{i-1},u_i\}$, with $u_{i-1}$ the endpoint closer to~$u$.
Then $e_1$ and $e_p$
must be non-special edges and $e_{i}$ for $i\in[2,p-1]$ must be special edges.
The latter part of the claim follows because $D_{u,u_i}=D_{u_{i-1}v}=1$ but
$D_{uv}=0$. If $e_1$ was special, there would necessarily be a temporal
path from $u$ to $v$ using the lower labels on $e_i$ for $i\in[p-1]$
and any label on $e_p$. Similarly, if $e_1$ was non-special and $e_p$ was special,
there would necessarily
be a temporal path from $u$ to $v$ using the unique label on $e_1$ and
the higher labels on $e_i$ for $i\in[2,p]$. Therefore, $e_1$ and $e_p$
are non-special.
If any of these conditions does not hold, the given instance cannot
be realized, and the algorithm can terminate \tnew{and} output an arbitrary infeasible LP.

Assume now that the conditions above are met.
We show how to formulate the problem of finding the required labels as a linear program (LP) of
polynomial size. The LP solution may yield rational numbers as labels, but we
can transform the labels to integers in the end. Let $\delta=1$.
To express strict inequality between two variables, we will use an inequality constraint
that requires one variable to be at least as large as the other variable plus~$\delta$.
(The choice of $\delta=1$ is arbitrary, any positive $\delta$ would work.)
The LP has two non-negative variables $\ell_e$ and $h_e$ for each edge~$e$ of $T$.
The values of these variables represent the labels assigned to~$e$, using
the convention that $\ell_e=h_e$ represents the case that $e$ has a single label
and $\ell_e<h_e$ \tnew{the case that $e$ has two labels}.
For every edge $e$, we therefore add the constraint
\begin{align}
    \ell_e = h_e
\end{align}
if $e$ is non-special and the constraint
\begin{align}
    \ell_e + \delta \le h_e
\end{align}
if $e$ is special.

For every pair of adjacent  edges $e=\{u,c\}$ and $e'=\{c,v\}$,
we add constraints as follows:
If neither $(u,v)$ nor $(v,u)$ is in $A$, both edges
must be non-special and we add the constraint
\begin{align}
    \ell_e = \ell_{e'}
\end{align}
to ensure that their single labels are equal.
If $(u,v)\in A$ (and necessarily $(v,u)\notin A$), we add the constraints
\begin{align}
\ell_e + \delta \le h_{e'} \\
h_e \le \ell_{e'} \label{eq:consechl}
\end{align}
to ensure there is a temporal path from $u$ to $v$ but none
from $v$ to~$u$.

Call a non-arc $(u,v)\notin A$ \emph{minimal} if it is the only arc missing
in the direction from $u$ to $v$ on the path from $u$ to $v$.
Consider any minimal non-arc $(u,v)$.
Let $e_1,e_2,\ldots,e_p$ for some $p\ge 2$ be the edges on the
path from $u$ to~$v$. If $p=2$, the constraints above already ensure that
$u$ cannot reach $v$, so assume $p\ge 3$. As discussed above,
$e_1$ and $e_p$
must be non-special edges and $e_{i}$ for $i\in[2,p-1]$ must be special edges
in this case.
We add the following constraints to ensure that
$u$ cannot reach $v$:
\begin{align}
    h_{e_i}=\ell_{e_{i+1}}, \quad\mbox {for $i\in [p-1]$}
\end{align}
These constraints ensure that a temporal path from $u$ towards
$v$ uses the single label on $e_1$, then the higher label
on $e_i$ for $i\in [2,p-1]$, and then cannot
traverse $e_p$ because the higher label on $e_{p-1}$ is
equal to the single label of~$e_p$.

Call an arc $(u,v)\in A$ \emph{maximal} if there is no arc $(x,y)\in A$
such that the path from $x$ to $y$ in $T$ contains the path from $u$ to $v$
as a proper subpath. If the path from $u$ to $v$ contains at most
one non-special edge $e$, then $u$ can necessarily reach $v$
as there is a temporal path from $u$ to $v$ that uses the lower label
on the special edges before $e$ (or all the way to $v$ if there
is no non-special edge $e$ on the path) and the higher label on the
edges after~$e$. If the path contains at least two special edges,
it suffices to add constraints to ensure, for
any two non-special edges $e$ and $e'$
such that the edges between $e$ and $e'$ are all special, \tnew{that} the path
with first edge $e$ and final edge $e'$ contains labels that make it a temporal path.
Let $e_1,e_2,\ldots,e_p$ with $p\ge 2$ be the edges on the path with first
edge $e=e_1$ and final edge $e'=e_p$,
ordered from $e$ to $e'$. We must exclude the label assignment that
satisfies $h_{e_i}=\ell_{e_{i+1}}$ for all $i\in [p-1]$, because
with that label assignment $u$ can reach the vertex before $v$
only in step $h_{e_{p-1}}$ and thus cannot take the non-special
edge $e_p$ with label $\ell_{e_p}=h_{e_{p-1}}$. Therefore,
we add the following constraint:
\begin{align}
    \sum_{i=1}^{p-1} (\ell_{e_{i+1}}-h_{e_i}) \ge \delta
\end{align}
As we have $\ell_{e_{i+1}}-h_{e_i}\ge 0$ by (\ref{eq:consechl}),
this constraint ensures that for at least one $i$ we have
$\ell_{e_{i+1}}-h_{e_i}>0$. Let $i^*$ denote such a value of~$i$.
Then the path with first edge $e$ and final edge $e'$ gives
a temporal path because we can use the unique label on $e$,
the larger label on edges $e_2,\ldots,e_i$, the smaller
label on edges $e_{i+1},\ldots, e_{p-1}$, and the unique
label on $e'$.

To show that the LP is feasible if and only if $D$ is realizable,
we can argue as follows: If $\lambda^*$ is a labeling that
realizes $D$, it is clear that setting the variables of the
LP based on $\lambda^*$ is a feasible solution of the LP.
For the other direction, assume that the LP has a feasible
solution, and let $\lambda$ be the labeling with rational
numbers determined by that
feasible solution. The LP constraints were chosen so that
$u$ cannot reach $v$ for every minimal non-arc $(u,v)$
and $u$ can reach $v$ for every maximal arc $(u,v)$,
where the labels are allowed to be rational numbers.
Thus, $\lambda$ realizes~$D$. Furthermore, the labeling
$\lambda$ can be made integral by sorting all labels
and replacing them by integers that maintain their order.

It is clear that the LP has polynomial size and can be constructed
in polynomial time.
\end{proof}
\else
\begin{proof}[Proof Sketch]
    Let $T=(V,E)$ denote the tree of solid edges.
    We first check that $D$ satisfies several conditions
    that must hold if $D$ is realizable. For example,
    if $(u,v)$ is an arc, then $(u,w)$ and $(w,v)$ must also
    be arcs for every internal node $w$ of the $u$-$v$-path
    in~$T$. If one of these conditions is violated, $D$~is
    not realizable. Otherwise, we construct a linear program
    (LP) that has two non-negative variables $\ell_e$ and $h_e$
    for each $e\in E$ that represent the lower and higher
    label of~$e$, respectively, in a realization. Inequalities $\ell_e=h_e$
    for non-special edges and $\ell_e+1\le h_e$ for special
    edges ensure that each edge receives the correct number
    of labels. For edges $uv$ and $vw$ such that $(u,w)\in A$,
    the constraints $\ell_{uv}+\nnew{1}\le h_{vw}$ and
    $h_{uv}\le \ell_{vw}$ ensure that there is a temporal
    $u$-$w$ path but no temporal $w$-$u$-path.
    Call a non-arc $(u,v)$ \emph{minimal} if there is no
    shorter non-arc from a vertex closer to $u$ to a
    vertex closer to $v$ between vertices on the $u$-$v$-path.
    For every minimal non-arc $(u,v)$, we
    add constraints $h_{e_i}=\ell_{e_{i+1}}$ for every pair of consecutive
    edges $e_i$ and $e_{i+1}$ on the $u$-$v$ path (which
    we show to consist of special edges except for the
    first and last edge, which are both non-special).
    For arcs $(u,v)$ such that the first and last edge
    of the $u$-$v$-path are non-special and all other edges
    are special, we add the constraint $\sum_i (\ell_{e_{i+1}}-h_{e_i})\ge 1$, where the sum is over all pairs of consecutive edges $e_i,e_{i+1}$
    of the $u$-$v$-path. We can show that the LP is feasible
    if and only if $D$ is realizable. Furthermore, any feasible solution
    of the LP corresponds to a \nnew{\minlab} realization with fractional
    labels, which can be made integral in a straightforward post-processing step.
\end{proof}
\fi

\iflong
We now show that the LP-based approach of
Lemma~\ref{lem:treeLP} can be extended to a setting
where some edges have been pre-labeled.
The same
approach works for an arbitrary number of edges with pre-labeled labels,
but we state it for the special case of two pre-labeled edges as
this is what we will require to solve a subproblem
that arises in the FPT algorithm in Section~\ref{sec:fes}.

\begin{corollary}
\label{cor:prelabeled_tree}
Let $D$ be an instance of \any\str\URGD on $n$ vertices in which the solid edges form a tree~$T$.
Assume that the labels of two solid edges $f_1$ and $f_2$ have been
pre-determined (such that $f_1$ receives two labels if it is special and one label
otherwise, and the same holds for~$f_2$). Assume further
that the labels assigned to $f_1$ and $f_2$ are multiples of an integer $\ge 2n$.
Then one can determine in polynomial time whether there exists
a \minlab\ realization for $D$ that agrees with the pre-labeling
(and output such a labeling if the answer is yes).
\end{corollary}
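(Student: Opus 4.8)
The plan is to reuse the linear program constructed in the proof of Lemma~\ref{lem:treeLP} essentially verbatim, adding equality constraints that pin the variables $\ell_{f_1},h_{f_1},\ell_{f_2},h_{f_2}$ to the prescribed labels (set $\ell_{f_i}=h_{f_i}=a_i$ if $f_i$ is non-special with prescribed label $a_i$, and $\ell_{f_i}=a_i<b_i=h_{f_i}$ if $f_i$ is special with prescribed labels $\{a_i,b_i\}$). First I would run the same preliminary consistency checks as in Lemma~\ref{lem:treeLP} (outputting an infeasible LP if one of them fails) and then solve this augmented LP in polynomial time. If it is infeasible, report ``no'': by exactly the argument of Lemma~\ref{lem:treeLP}, any \minlab realization of $D$ that agrees with the pre-labeling induces a feasible solution of the augmented LP, since the added equality constraints just record ``agreeing with the pre-labeling''.

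If the augmented LP is feasible, let $(\ell_e^*,h_e^*)_{e\in E(T)}$ be a rational feasible solution; the task is to turn it into an integral one that still assigns the prescribed (already integral) labels to $f_1$ and $f_2$. This is the only place the hypothesis ``multiples of an integer $\ge 2n$'' is used, and it is the crux. Let $q_1<q_2<\dots<q_k$ with $k\le 4$ be the distinct prescribed label values; being distinct multiples of some $M\ge 2n$, consecutive ones satisfy $q_{i+1}-q_i\ge M\ge 2n$, and $q_1\ge M\ge 2n$. The LP has only $2|E(T)|\le 2(n-1)$ variables, so at most $2(n-1)$ distinct values occur among the $\ell_e^*,h_e^*$. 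I would remap these values to integers ``gap by gap'' via a map $\phi$: any value equal to some $q_i$ is sent to $q_i$ itself, and the distinct values lying strictly inside an interval $(q_i,q_{i+1})$ (with $q_0:=0$ and $q_{k+1}:=+\infty$) are sent, in increasing order, to distinct integers strictly inside that interval. This is possible because each such open interval contains at least $2n-1\ge 2(n-1)$ integers (the last, unbounded, interval trivially has room). By construction $\phi$ is strictly monotone on the set of LP values, is the identity on $\{q_1,\dots,q_k\}$, and keeps every value inside the gap it started in.

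Finally I would verify that $(\phi(\ell_e^*),\phi(h_e^*))_e$ satisfies every constraint of the augmented LP. Equality constraints survive because $\phi$ is a function of values; constraints $x\le y$ survive by monotonicity; a constraint $x+\delta\le y$ (with $\delta=1$) forces $x^*<y^*$, hence $\phi(x^*)<\phi(y^*)$, hence $\phi(x^*)+1\le\phi(y^*)$ since both are integers; and in a constraint $\sum_i(\ell_{e_{i+1}}-h_{e_i})\ge\delta$ every summand stays $\ge 0$ (monotonicity together with the constraints $h_{e_i}\le\ell_{e_{i+1}}$), while the summand that was strictly positive becomes $\ge 1$. The equality constraints pinning $f_1,f_2$ survive because $\phi$ fixes each $q_i$. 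Thus $(\phi(\ell_e^*),\phi(h_e^*))_e$ is an integral feasible solution, and — exactly as in Lemma~\ref{lem:treeLP} — the labeling it describes (edge $e$ receives label $\ell_e$ if non-special, and labels $\ell_e,h_e$ if special) is a \minlab realization of $D$; it assigns the prescribed labels to $f_1$ and $f_2$ by construction. Everything runs in polynomial time. The main obstacle is precisely this rounding step: all non-prescribed labels must be integerized without disturbing the already-integral prescribed values, and the ``multiple of an integer $\ge 2n$'' hypothesis is exactly what guarantees enough integer slots between consecutive prescribed values to host the at most $2(n-1)$ remaining label values while preserving the strict orderings forced by the LP.
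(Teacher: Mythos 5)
Your proof is correct and takes essentially the same route as the paper: both reuse the LP of Lemma~\ref{lem:treeLP} with added equality constraints pinning $f_1,f_2$, argue feasibility in both directions, and then integerize the rational solution by a monotone remapping that fixes the prescribed values and relies on the gaps of size $\ge 2n$ having enough room for the at most $2(n-1)$ distinct LP values. Your verification that every LP constraint survives the remapping is a bit more explicit than the paper's, but the underlying argument is identical.
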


\begin{proof}
Construct the LP that expresses realizability of $D$ as in the proof
of Lemma~\ref{lem:treeLP}. For the pre-labeled edges $f_1$ and $f_2$ we add equality
constraints ensuring that the values of the variables representing their labels
are consistent with the pre-labeling. If there exists a labeling $\lambda^*$
that agrees with the pre-labeling and realizes $D$, then setting the variables
of the LP in accordance with those labels constitutes a feasible solution to
the LP. Moreover, if the LP has a feasible solution, the values of the variables
represent a fractional labeling $\lambda$ that realizes $D$, as shown in the proof
of Lemma~\ref{lem:treeLP}. To make the labeling integral, we can sort all labels
and replace them by integers that maintain their order and keep the labelings
of $f_1$ and $f_2$ unchanged. As the labels on $f_1$ and $f_2$ are multiples of a number $\ge 2n$
\tnew{and} we have fewer than $2n$ distinct labels in $\lambda$ (the tree $T$ has $n-1$ edges, and each edge has
at most two labels), the gaps between the labels assigned to $f_1$ and $f_2$ contain at
least $2n-1$ available integers, and this is sufficient even if all labels fall into such a
gap.
\end{proof}
\fi

\iflong
We remark that the variants of \URGD different from
\any\str\URGD can also be solved in polynomial time
as follows if the graph of solid edges is a tree.
By Corollary~\ref{cor:nospecial}, the given instance
is a no-instance for these variants of \URGD if it
contains at least one special bridge, so assume
that all solid edges are non-special.
Thus, we only
need to consider labelings with a single label per
edge.
For \simp\str\URGD, we construct and solve the LP of
Lemma~\ref{lem:treeLP}. If it is infeasible, the given
instance is a no-instance; otherwise, the labeling
obtained from the LP solution is a realization using
a single label per edge.
It remains to consider all non-strict variants
of \URGD, including \pro\URGD and \happy\URGD.
If there exist two adjacent edges $uc$ and $cv$
such that neither $(u,v)$ nor $(v,u)$ is in $A$,
the given instances is a no-instance for all these
variants.
Otherwise, we solve the LP from Lemma~\ref{lem:treeLP}.
If it is infeasible, the given instance has no solution.
Otherwise, the labeling obtained from the LP solution
is a realization with a single label per edge in which no two adjacent
edges receive the same label, and hence that realization
constitutes a solution for all variants of \nstr\URGD including
\pro\URGD and \happy\URGD.
As an alternative to using the LP-based approach from
Lemma~\ref{lem:treeLP} to check the existence of the desired
realizations in the discussion above,
one can also use the combinatorial
algorithm from Theorem~\ref{th:treeCombinatorial}.
\else
By testing additional conditions (such as the absence
of special edges) before constructing the linear program,
we can also use the approach to solve all other variants of~\URGD under consideration. 
The details are deferred to the full version.
\fi

\iflong
\fi

\section{Hardness for Undirected Reachability Graph Realizability}\label{sec:hardundir}
In this section, we show that all considered versions of~\URGDlong are \NP-hard, even on \iflong a family of instances where each instance has a constant maximum degree and, if it is a yes-instance, can be realized with a constant number of different time labels.\else instances with a constant maximum degree.\fi

\ifold
\begin{figure}

\centering
\begin{tikzpicture}[scale=1.5]
\tikzstyle{k}=[circle,fill=white,draw=black,minimum size=10pt,inner sep=2pt]

\node[k] (cl) at (0,2) {$s_i$};
\node[k] (cr) at (6,2) {$s_j$};

\node[k] (c) at (2,0) {$b_x$};
\node[k] (d) at (4,0) {$b_{\overline{x}}$};

\node[k] (a) at (2,4) {$a_x$};
\node[k] (b) at (4,4) {$a_{\overline{x}}$};

\node[k] (ct) at (3,5) {$t_k$};
\node[k] (e) at (3,-1) {$e_x$};

\node[k] (z) at (3,3.5) {$c_x$};
\node[k] (w) at (3,.5) {$d_x$};
\node[k] (x) at (2,2) {$x$};
\node[k] (y) at (4,2) {$\overline{x}$};

\draw[draw=black, dashed] ($(a) + (-.5,.5)$) rectangle ++(3,-6);

\draw[thick] (cl) -- (x) node [midway, fill=white] {1};
\draw[thick] (cr) -- (y) node [midway, fill=white] {1};
\draw[thick] (a) -- (x) node [midway, fill=white] {1};
\draw[thick] (b) -- (y) node [midway, fill=white] {1};
\draw[thick] (c) -- (x) node [midway, fill=white] {2};
\draw[thick] (d) -- (y) node [midway, fill=white] {2};

\draw[thick] (e) -- (c) node [midway, fill=white] {2};
\draw[thick] (e) -- (d) node [midway, fill=white] {2};
\draw[thick] (e) -- (w) node [midway, fill=white] {3};

\draw[thick,red] (x) -- (z) node [midway, fill=white] {2,5};
\draw[thick] (y) -- (x) node [midway, fill=white] {2,4};
\draw[thick,blue] (y) -- (z) node [midway, fill=white] {$\emptyset$};
\draw[thick,blue] (x) -- (w) node [midway, fill=white] {$\emptyset$};
\draw[thick,red] (y) -- (w) node [midway, fill=white] {3};

\draw[thick] (a) -- (ct) node [midway, fill=white] {5};
\draw[thick] (b) -- (ct) node [midway, fill=white] {5};
\draw[thick] (z) -- (ct) node [midway, fill=white] {5};

\end{tikzpicture}

~
~

~
~

~
~

~
~
\begin{tabular}{c||cccccccccccc}
& $s_i$  & $s_j$  & $t$  & $x$  & ${\overline{x}}$  & $a_x$  & $a_{\overline{x}}$  & $b_x$  & $b_{\overline{x}}$  & $c_x$  & $e_x$  & $d_x$  \\\hline\hline
 $s_i$  &   & 0 & \cellcolor{red!25} 1 & 1 & \cellcolor{black!25} 1 & 0 & 0 & \cellcolor{black!25} 1 & 0 & \cellcolor{black!25} 1 & 0 & \cellcolor{black!25} 1 \\\hline
 $s_j$  & 0 &   & \cellcolor{blue!25} 1 & \cellcolor{black!25} 1 & 1 & 0 & 0 & 0 & \cellcolor{black!25} 1 & \cellcolor{black!25} 1 & 0 & \cellcolor{black!25} 1 \\\hline
 $t$  & 0 & 0 &   & 0 & 0 & 1 & 1 & 0 & 0 & 1 & 0 & 0 \\\hline
 $x$  & 1 & 0 & \cellcolor{black!25} 1 &   & 1 & 1 & 0 & 1 & 0 & 1 & 0 & 1 \\\hline
 ${\overline{x}}$  & 0 & 1 & \cellcolor{black!25} 1 & 1 &   & 0 & 1 & 0 & 1 & 1 & 0 & 1 \\\hline
 $a_x$  & 0 & 0 & 1 & 1 & \cellcolor{black!25} 1 &   & 0 & \cellcolor{black!25} 1 & 0 & \cellcolor{black!25} 1 & 0 & \cellcolor{black!25} 1 \\\hline
 $a_{\overline{x}}$  & 0 & 0 & 1 & \cellcolor{black!25} 1 & 1 & 0 &   & 0 & \cellcolor{black!25} 1 & \cellcolor{black!25} 1 & 0 & \cellcolor{black!25} 1 \\\hline
 $b_x$  & 0 & 0 & 0 & 1 & \cellcolor{black!25} 1 & 0 & 0 &   & 0 & \cellcolor{black!25} 1 & 1 & \cellcolor{black!25} 1 \\\hline
 $b_{\overline{x}}$  & 0 & 0 & 0 & \cellcolor{black!25} 1 & 1 & 0 & 0 & 0 &   & \cellcolor{black!25} 1 & 1 & \cellcolor{black!25} 1 \\\hline
 $c_x$  & 0 & 0 & 1 & 1 & 1 & 0 & 0 & 0 & 0 &   & 0 & 0 \\\hline
 $e_x$  & 0 & 0 & 0 & 0 & 0 & 0 & 0 & 1 & 1 & 0 &   & 1 \\\hline
 $d_x$  & 0 & 0 & 0 & 1 & 1 & 0 & 0 & 0 & 0 & \cellcolor{black!25} 1 & 1 &   \\\hline
\end{tabular}
\caption{Figure and matrix for the variable gadget in the reduction. $l = s_i$, $r = s_j$, $t = t_k$}
\label{fig:hardness in general}
\end{figure}
\fi

\newcommand{\SAT}{\textsc{2P2N-3SAT}\xspace}

\iflong

\begin{observation}\label{forced label}
Let~$D$ be an instance of some version of~\URGD and let~$G$ denote the solid graph of~$D$.
Moreover, let~$v$ be a vertex of~$D$.
For each vertex~$w\in N_G(v)$ that has (i)~no incoming arc from any other vertex of~$N_G(v)$ or (ii)~no outgoing arc to any other vertex of~$N_G(v)$, the edge~$\{v,w\}$ receives at least one label in every realization of~$D$.
\end{observation}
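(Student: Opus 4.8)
The plan is to argue by contradiction. Suppose $\lambda$ is a realization of $D$ in which the solid edge $\{v,w\}$ receives no label; I will show that then $w$ has both an incoming arc from some vertex of $N_G(v)\setminus\{w\}$ and an outgoing arc to some vertex of $N_G(v)\setminus\{w\}$, which contradicts whichever of (i) or (ii) is assumed. Two facts set up the argument. First, since $w\in N_G(v)$, the edge $\{v,w\}$ is solid, so $(v,w)\in A$ and $(w,v)\in A$; hence $\lambda$ contains a temporal path $Q$ from $v$ to $w$ and a temporal path $P$ from $w$ to $v$. Second, as already observed when the solid graph is introduced, any edge carrying a label in a realization must be solid, since its two endpoints then reach each other in a single step.

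Now I handle the two cases. For case (i) I use $Q$. Since $\lambda(\{v,w\})=\emptyset$, the path $Q$ does not traverse $\{v,w\}$; as $\{v,w\}$ is the unique $v$--$w$ edge of the simple graph $G$, this forces $Q$ to have length at least two. Let $\{v,y\}$ be the first edge of $Q$. This edge carries a label, hence is solid, so $y\in N_G(v)$; and $y\ne w$, because a path visits $w$ only as its final vertex. The suffix of $Q$ from $y$ to $w$ is itself a temporal path, so $(y,w)\in A$, i.e., $w$ has an incoming arc from a vertex of $N_G(v)\setminus\{w\}$, contradicting (i). Case (ii) is symmetric, using $P$: it avoids $\{v,w\}$, hence has length at least two, so its last edge is $\{x,v\}$ for some $x\in N_G(v)$ with $x\ne w$; the prefix of $P$ from $w$ to $x$ is a temporal path, giving $(w,x)\in A$, which contradicts (ii). In both cases we reach a contradiction, so $\{v,w\}$ must receive a label in $\lambda$.

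The only property of temporal paths invoked is that a subpath of a temporal path is again a temporal path, which holds verbatim in the strict and non-strict settings and is unaffected by the proper/simple/happy restrictions, so no case distinction among the variants of \URGD is needed. I do not anticipate a genuine obstacle here; the single point that deserves careful statement is the ``length at least two'' claim, which holds precisely because an unlabeled edge cannot be the sole edge of any temporal path, so the first (resp.\ last) edge of $Q$ (resp.\ $P$) is indeed incident with $v$ and distinct from $\{v,w\}$.
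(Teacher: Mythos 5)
Your proof is correct and takes essentially the same route as the paper's. The paper phrases the argument via its notion of dense paths — observing that under (i) the single edge from $v$ to $w$ is the only dense $v$-$w$-path — whereas you unpack that observation directly from the definition of a temporal path; in both cases the key step is that the second vertex of any longer temporal $v$-$w$-path is a solid neighbour of $v$ from which $w$ is reachable, contradicting (i), and symmetrically for (ii).
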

To see this, consider a solid neighbor~$w$ of~$v$ which has no ingoing arcs form any vertex of~$N_G(v)$.
Then the path~$(v,w)$ is the only dense  path from~$v$ to~$w$, which implies that the edge~$\{v,w\}$ receives at least one label.
For a solid neighbor~$w$ with no outgoing arcs in~$N_G(v)$, the path~$(w,v)$ is the only dense path from~$w$ to~$v$.
This then also implies that the edge~$\{v,w\}$ receives at least one label.
\fi

\iflong
\else
\begin{theorem}\label{hardness trianglefree}\label{hardness proper}
Each version of~\URGD under consideration is \NP-hard on directed graphs of constant maximum degree.
Moreover, no version of~\URGD under consideration can be solved in $2^{o(|V| + |A|)} \cdot n^{\Oh(1)}$~time, unless the ETH fails.
\end{theorem}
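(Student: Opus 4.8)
The plan is to give a polynomial-time, linear-size reduction from \SAT, the restriction of \textsc{3-SAT} in which every variable occurs in exactly two positive and two negative clauses; this problem is \NP-hard, and since both its number of variables and its number of clauses are linear in the encoding length~$|\varphi|$, it admits no $2^{o(|\varphi|)}$-time algorithm unless the ETH fails. Given a formula~$\varphi$ with variables~$x_1,\dots,x_n$ and clauses~$C_1,\dots,C_m$, I would construct a simple digraph~$D=D(\varphi)$ with $|V(D)|+|A(D)|=\Oh(n+m)$ and maximum degree~$\Oh(1)$ such that~$D$ is a yes-instance of the considered variant of \URGD if and only if~$\varphi$ is satisfiable. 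Both claims then follow: \NP-hardness is immediate, and a $2^{o(|V|+|A|)}\cdot n^{\Oh(1)}$-time algorithm for any variant would decide \SAT in $2^{o(|\varphi|)}$ time (this also shows the single-exponential algorithm of \Cref{exact algos} to be essentially optimal under the ETH).

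Throughout, I would keep the solid graph of~$D$ triangle-free, so that by \Cref{lem:tree-atmost2} every solid edge receives at most two labels in a minimal realization --- and at most one label in every variant other than \any\str\URGD --- while \Cref{incident same label} and \Cref{lem:special} confine the label pattern around any vertex to a short, explicit list of possibilities. For each variable~$x$ I would attach a constant-size \emph{variable gadget} containing two distinguished solid edges whose activations are mutually exclusive: in any realization exactly one of them is labeled (or carries the earlier of its two labels), and this binary choice is the truth value of~$x$. The gadget exposes four contact vertices, one per occurrence of~$x$ as a literal, wired so that all four agree with the chosen value. For each clause~$C_k=(\ell_1\vee\ell_2\vee\ell_3)$ I would attach a constant-size \emph{clause gadget} joining a fresh pair of vertices~$s_k,t_k$ to the three relevant contact vertices, with dashed arcs chosen so that, using only the dense paths available inside the gadget,~$t_k$ is reachable from~$s_k$ exactly when at least one of~$\ell_1,\ell_2,\ell_3$ is set to true. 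The absence of dashed arcs between contact vertices of different gadgets, together with the consistency wiring inside the variable gadgets, prevents a clause from being ``satisfied for free'' through a path that leaves and re-enters its own gadget.

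For the forward direction I would take a satisfying assignment, activate the variable gadgets accordingly, and exhibit an explicit integer labeling using only a bounded number of distinct time steps (so every yes-instance in this family is realizable with $\Oh(1)$ labels); I would then verify with the dense-path description that precisely the arcs of~$D$ are realized, using \Cref{incident same label} to rule out spurious arcs. For the converse, given any realization I would first use the structural restrictions --- \Cref{incident same label}, \Cref{lem:special}, and the fact that a solid edge to a neighbour having no incoming (resp.\ no outgoing) arc from (resp.\ to) other common neighbours must be labeled --- to show that every variable gadget is in one of finitely many admissible states; reading off the induced assignment and invoking the~$s_k$-$t_k$ reachability requirement of each clause gadget then shows that every clause is satisfied. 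The same construction serves \any\str\URGD and \simp\str\URGD (giving \Cref{hardness trianglefree}); for \nstr\URGD, \pro\URGD, and \happy\URGD one must additionally respect that no two adjacent solid edges are both labeled and that no two adjacent edges share a label, which I would accommodate by subdividing the few places where the natural gadget routes a labeled path of length two and adding the dashed arcs that keep the intended temporal paths intact (giving \Cref{hardness proper}).

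The main obstacle is designing a \emph{single} gadget family that is correct simultaneously for strict and non-strict reachability and under the proper/happy restrictions. Under strict semantics it is natural to encode both the variable selector and the clause ``OR'' by short labeled paths with forced label orders, but under non-strict semantics two adjacent labeled edges are already impossible unless a dashed arc closes the corresponding triangle (\Cref{incident same label}); hence all the combinatorial logic must instead be carried by non-special bridge edges, each bearing a single label, while every unavoidable adjacency of labeled edges is rendered harmless by inserting exactly the right dashed arcs. Checking that these modifications neither destroy any required reachability nor open an unintended~$s_k$-$t_k$ route --- all while keeping the maximum degree constant and the instance size linear in~$|\varphi|$ --- is where the bulk of the case analysis will lie.
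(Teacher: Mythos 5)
Your high-level plan matches the paper for the strict variants: reduce from 2P2N-3SAT, build a triangle-free solid graph, encode each variable by a pair of mutually exclusive selector edges, and route the reachability of each clause from a source $s_i$ to a sink $t_i$ through whichever selector is labeled. The linear size and constant-degree bound, and hence the ETH lower bound, follow exactly as you say.

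The gap is in the proper/happy/non-strict adaptation. You correctly diagnose that these settings forbid all edges incident to a clause source $s_i$ from carrying the same label, but the fix you sketch --- subdividing edges so the logic can live on single-labeled bridges and adding dashed arcs to legalize adjacencies --- does not resolve the problem. After subdividing every edge at $s_i$, the $s_i$-halves are still pairwise adjacent and must still be synchronized so that $s_i$ cannot serve as a pass-through vertex; in a proper or happy labeling they still cannot share a time step. The paper's device is structurally different: it replaces each $s_i$ (and $t_i$) by a solid clique $S_i$ of odd size at least $\max(9, r_i)$, where $r_i$ is the degree of $s_i$, routes each external labeled edge to a distinct clique vertex via a matching (so these edges stop being mutually adjacent), and decomposes the clique into four edge-disjoint Hamiltonian cycles with labels below and above $\alpha_{s_i}$ so that every vertex of $S_i$ inherits identical in- and out-reachabilities. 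Without some such expansion you do not get a proper or happy realization, regardless of subdivision. Two further details you omit: the $f_x$-vertices used in the strict variable gadget to enforce mutual exclusion are unnecessary (and in fact harmful) in the non-strict and proper variants and must be deleted; and the labels near sources and sinks are offset by a proper vertex coloring of a constant-degree auxiliary conflict graph on the clauses, which is what lets clauses sharing a variable receive distinct time stamps while keeping the label range constant.
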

\fi
\iflong
\begin{theorem}\label{hardness trianglefree}
\any\str\URGD and~\simp\str\URGD are \NP-hard on directed graphs of constant maximum degree  that have a triangle-free solid graph.
\end{theorem}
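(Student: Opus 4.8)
The plan is to reduce from \SAT, the restriction of $3$-SAT in which every variable occurs exactly twice positively and twice negatively; this problem is NP-hard, and its bounded incidence multiplicity is exactly what will give the constant-degree property in the end. Given a formula $\varphi$, I would build a directed graph $D_\varphi$ out of one \emph{variable gadget} per variable and one \emph{clause gadget} per clause, glued along \emph{connector} vertices that simultaneously serve to break every triangle. The guiding idea is that any strict realization of $D_\varphi$ is forced, at the gadget of each variable $x$, into exactly one of two ``modes'', interpreted as setting $x$ to true or to false. Since triangles are forbidden, this binary choice cannot be engineered by the usual ``two adjacent edges must share a label'' trick via a triangle; instead I would force it purely through the pattern of present and absent arcs around a few key vertices, using \Cref{forced label} to pin down which edges must carry a label, \Cref{incident same label} to force adjacent labeled edges to share a \emph{single} label whenever the corresponding ``skip'' arc is absent, and the fact that every temporal path is a dense path in the solid graph to forbid temporal paths along unwanted routes simply by omitting one arc along them. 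The residual freedom, namely the relative order of the time at which $x$ is reached from its ``positive side'' versus the time at which $\bar x$ is reached from its ``negative side'', is constrained to be a strict inequality by a single non-arc, yielding precisely the two modes. The clause gadget for $C=(\ell_1\vee\ell_2\vee\ell_3)$ then contains a distinguished arc whose realization behaves as an AND of reachabilities through the three incident literal edges, so that it can be realized if and only if at least one of the three literal gadgets is in its satisfying mode.

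For completeness I would show that for every satisfying assignment of $\varphi$ there is a \emph{simple} (single-label) strict realization of $D_\varphi$: label each variable gadget according to the mode of its truth value, shift the label ranges of each clause gadget (multiplying all labels by a large integer, in the spirit of \Cref{cor:prelabeled_tree}) so that a satisfied literal transmits its signal in time while the shared edges between gadgets keep consistent labels, and insert empty snapshots where needed. Checking $\mathcal{R}(\mathcal{G})=D_\varphi$ for this labeling is a careful but routine case analysis that no unintended arc is realized; triangle-freeness helps here, since no edge can be ``shortcut''. For soundness, given \emph{any} strict realization of $D_\varphi$ — which, because the solid graph is triangle-free, may be taken minimal and then assigns at most two labels to every edge by \Cref{lem:tree-atmost2}, and at most one in the simple case — I would argue from the forced edges and forbidden non-arcs that each variable gadget sits in exactly one of its two modes, read off the induced assignment, and deduce from the realized clause arc that every clause contains a satisfied literal. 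Since a simple strict realization is in particular an arbitrary strict realization, exhibiting single-label realizations in the completeness direction while arguing soundness for arbitrary realizations settles both \any\str\URGD and \simp\str\URGD at once; triangle-freeness and constant maximum degree hold by construction, the latter because each variable occurs in exactly four clauses and each clause has three literals, so every vertex lies in $\Oh(1)$ gadgets.

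The main obstacle I anticipate is reconciling \emph{triangle-freeness} with \emph{robustness to the extra labeling freedom} of the \any\str case. The natural device for forcing two adjacent edges to carry the same label is a triangle, which is now unavailable, so the mode choice and the clause-check must be coerced solely through reachability and non-reachability patterns in bounded neighbourhoods; and the gadgets must be designed so that an adversary allowed two labels per edge cannot use this slack to realize a clause arc without a satisfied literal, nor to decouple a variable gadget from its intended binary behaviour. Getting the variable gadget genuinely two-valued under both label regimes, and verifying that the clause gadget's distinguished arc is realizable precisely through the intended literal edges, is where the bulk of the technical work lies; a secondary subtlety is synchronising label ranges across the whole construction so that the per-gadget labelings compose into one global realization without spurious long temporal paths between distant gadgets.
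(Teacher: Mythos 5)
Your high-level plan matches the paper's actual proof closely: a reduction from \SAT, one variable gadget per variable whose two ``modes'' encode the truth value, clause source/terminal vertices, connector vertices to break triangles, reliance on \Cref{forced label} and \Cref{incident same label} together with the dense-path requirement to forbid unwanted temporal paths, and a single non-arc to force the mode choice. You have also correctly identified the central difficulty — forcing mutual exclusion of the two modes without a triangle. But what you have written is a plan, not a proof: no gadget is specified, and for a hardness reduction the proof \emph{is} the construction and its verification. In particular, the device the paper uses to overcome exactly the obstacle you flag is not something you could skip: it attaches to each variable $x$ a vertex $c_x$ with solid edges to both $x$ and $\overline{x}$, plus three auxiliary vertices $f_x,f_{\overline{x}},f_{x,\overline{x}}$ forming a path with forced labels, and arranges that the arc $(f_x,f_{\overline{x}})$ is present but unrealizable whenever both $\{x,c_x\}$ and $\{\overline{x},c_x\}$ are labeled (because $(x,\overline{x})\notin A$ collapses all the relevant edges onto one shared label $\alpha$ via \Cref{incident same label}). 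Without an argument of this concreteness, ``the residual freedom \ldots is constrained to be a strict inequality by a single non-arc'' is a hope, not a lemma.

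Two further points. First, a small slip: you describe the clause arc as ``an AND of reachabilities through the three incident literal edges''; the intended semantics is an OR — the arc $(s_i,t_i)$ must be realizable as soon as \emph{one} literal edge is labeled, and unrealizable only when none is. Second, your proposed synchronisation via multiplying labels by a large integer in the spirit of \Cref{cor:prelabeled_tree} would have to be carried out without accidentally creating long temporal paths between distant clause gadgets. The paper sidesteps this entirely by first computing a proper coloring $\chi$ of the clauses so that clauses sharing a variable get distinct colors (possible with $\Oh(1)$ colors because of the 2P2N bound), then giving all edges incident with $s_i$ label $\chi(y_i)+1$ and all edges incident with $t_i$ label $\chi(y_i)+14$; this both bounds the lifetime by a constant and prevents source/terminal vertices from being internal to any temporal path, which is what makes the soundness case analysis finite. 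Your plan does not anticipate this second use of the 2P2N hypothesis (you invoke it only for degree bounds), and without it the clause–clause arcs and non-arcs are not controlled.
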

\newcommand{\figtriangle}{
\begin{figure}

\centering
\begin{tikzpicture}[scale=1.2]
\tikzstyle{k}=[circle,fill=white,draw=black,minimum size=10pt,inner sep=2pt]

\node[k] (cl) at (-1.5,2) {$s_i$};
\node[k] (cr) at (9.5,2) {$s_j$};

\node[k] (bx) at (2,0) {$b_x$};
\node[k] (bxnx) at (4,0) {$b_{x,\overline{x}}$};
\node[k] (bnx) at (6,0) {$b_{\overline{x}}$};

\node[k] (ax) at (2,4) {$a_x$};
\node[k] (axnx) at (4,4) {$a_{x,\overline{x}}$};
\node[k] (anx) at (6,4) {$a_{\overline{x}}$};

\node[k] (fx) at (2,-1) {$f_x$};
\node[k] (fxnx) at ($(fx) + (2,0)$) {$f_{x,\overline{x}}$};
\node[k] (fnx) at ($(fxnx) + (2,0)$) {$f_{\overline{x}}$};

\node[k] (qx) at (1,5.5) {$q_x$};
\node[k] (qnx) at ($(qx) + (6,0)$) {$q_{\overline{x}}$};
\node[k] (qax) at ($(qx) + (1,0)$) {$q_{a_x}$};
\node[k] (qanx) at ($(qx) + (5,0)$) {$q_{a_{\overline{x}}}$};

\node[k] (cx) at (4,2) {$c_x$};
\node[k] (x) at (2,2) {$x$};
\node[k] (nx) at (6,2) {$\overline{x}$};

\node[k] (t1) at (2.8,8) {$t_k$};
\node[k] (t2) at (5.2,8) {$t_\ell$};

\draw[draw=black, dashed] ($(qx) + (-.5,.5)$) rectangle ++(7,-7.5);

\draw[thick] (x) -- (qx) node [midway, fill=white] {$1$};
\draw[thick] (nx) -- (qnx) node [midway, fill=white] {$1$};
\draw[thick] (ax) -- (qax) node [midway, fill=white] {1};
\draw[thick] (anx) -- (qanx) node [midway, fill=white] {1};

\draw[thick] (cx) -- (bxnx) node [midway, fill=white] {1};

\draw[thick] (bx) -- (bxnx) node [midway, fill=white] {2};
\draw[thick] (bnx) -- (bxnx) node [midway, fill=white] {3};

\draw[thick] (x) -- (ax) node [midway, fill=white] {12};
\draw[thick] (nx) -- (anx) node [midway, fill=white] {13};
\draw[thick] (axnx) -- (ax) node [midway, fill=white] {13};
\draw[thick] (axnx) -- (anx) node [midway, fill=white] {14};
\draw[thick] (axnx) -- (cx) node [midway, fill=white] {25};

\draw[thick,red] (x) -- (cx) node [midway, fill=white] {13};
\draw[thick,blue] (nx) -- (cx) node [midway, fill=white] {14};

\draw[thick] (x) edge[bend right=35] node[fill=white]{13} (fx);
\draw[thick] (nx) edge[bend left=35] node[fill=white]{14} (fnx);

\draw[thick] (cx) edge[bend left=13] node[fill=white]{$\alpha$} (t1);
\draw[thick] (cx) edge[bend right=13] node[fill=white]{$\beta$} (t2);

\draw[thick] (fxnx) -- (fx) node [midway, fill=white] {13};
\draw[thick] (fxnx) -- (fnx) node [midway, fill=white] {14};

\draw[thick] (x) -- (bx) node [midway, fill=white] {14};
\draw[thick] (nx) -- (bnx) node [midway, fill=white] {15};

\draw[thick] (cl) -- (x) node [midway, fill=white] {$\chi(y_i) + 1\in [2,11]$};
\draw[thick] (cr) -- (nx) node [midway, fill=white] {$\chi(y_j) + 1\in [2,11]$};

\foreach \x in {qx,qnx,qax,qanx} {
    \draw[thick] (\x) -- (t1) node [midway, fill=white] {$\alpha$};               
    \draw[thick] (\x) -- (t2) node [midway, fill=white] {$\beta$};    
}

\end{tikzpicture}

~
~

~
~

~
~

~
~\scalebox{.75}{
\begin{tabular}{c|||c|c|c|c||c|c|c|c|c|c|c|c|c|c|c|c|c|c|c|c} & $s_i$  & $s_j$  & $t_k$  & $t_\ell$  & $x$  & ${\overline{x}}$  & $a_x$  & $a_{\overline{x}}$  & $a_{x,\overline{x}}$  & $b_x$  & $b_{\overline{x}}$  & $b_{x,\overline{x}}$  & $c_x$  & $f_x$  & $f_{\overline{x}}$  & $f_{x,\overline{x}}$  & $q_x$  & $q_{\overline{x}}$  & $q_{a_x}$  & $q_{a_{\overline{x}}}$  \\\hline\hline\hline
 $s_i$  &   & 0 & \cellcolor{red!25} 1 & \cellcolor{red!25} 1 & 1 & 0 & \cellcolor{black!25} 1 & \cellcolor{black!25} 1 & \cellcolor{black!25} 1 & \cellcolor{black!25} 1 & 0 & 0 & \cellcolor{black!25} 1 & \cellcolor{black!25} 1 & 0 & 0 & 0 & 0 & 0 & 0 \\\hline
 $s_j$  & 0 &   & \cellcolor{blue!25} 1 & \cellcolor{blue!25} 1 & 0 & 1 & 0 & \cellcolor{black!25} 1 & \cellcolor{black!25} 1 & 0 & \cellcolor{black!25} 1 & 0 & \cellcolor{black!25} 1 & 0 & \cellcolor{black!25} 1 & 0 & 0 & 0 & 0 & 0 \\\hline
 $t_k$  & 0 & 0 &   & \cellcolor{black!25} 1 & 0 & 0 & 0 & 0 & \cellcolor{black!25} 1 & 0 & 0 & 0 & 1 & 0 & 0 & 0 & 1 & 1 & 1 & 1 \\\hline
 $t_\ell$  & 0 & 0 & 0 &   & 0 & 0 & 0 & 0 & \cellcolor{black!25} 1 & 0 & 0 & 0 & 1 & 0 & 0 & 0 & 1 & 1 & 1 & 1 \\\hline\hline
 $x$  & 1 & 0 & \cellcolor{black!25} 1 & \cellcolor{black!25} 1 &   & 0 & 1 & \cellcolor{black!25} 1 & \cellcolor{black!25} 1 & 1 & 0 & 0 & 1 & 1 & 0 & 0 & 1 & 0 & 0 & 0 \\\hline
 ${\overline{x}}$  & 0 & 1 & \cellcolor{black!25} 1 & \cellcolor{black!25} 1 & 0 &   & 0 & 1 & \cellcolor{black!25} 1 & 0 & 1 & 0 & 1 & 0 & 1 & 0 & 0 & 1 & 0 & 0 \\\hline
 $a_x$  & 0 & 0 & \cellcolor{black!25} 1 & \cellcolor{black!25} 1 & 1 & 0 &   & \cellcolor{black!25} 1 & 1 & \cellcolor{black!25} 1 & 0 & 0 & \cellcolor{black!25} 1 & \cellcolor{black!25} 1 & 0 & 0 & 0 & 0 & 1 & 0 \\\hline
 $a_{\overline{x}}$  & 0 & 0 & \cellcolor{black!25} 1 & \cellcolor{black!25} 1 & 0 & 1 & 0 &   & 1 & 0 & \cellcolor{black!25} 1 & 0 & \cellcolor{black!25} 1 & 0 & \cellcolor{black!25} 1 & 0 & 0 & 0 & 0 & 1 \\\hline
 $a_{x,\overline{x}}$  & 0 & 0 & 0 & 0 & 0 & 0 & 1 & 1 &   & 0 & 0 & 0 & 1 & 0 & 0 & 0 & 0 & 0 & 0 & 0 \\\hline
 $b_x$  & 0 & 0 & 0 & 0 & 1 & \cellcolor{black!25} 1 & 0 & 0 & 0 &   & \cellcolor{black!25} 1 & 1 & 0 & 0 & 0 & 0 & 0 & 0 & 0 & 0 \\\hline
 $b_{\overline{x}}$  & 0 & 0 & 0 & 0 & 0 & 1 & 0 & 0 & 0 & 0 &   & 1 & 0 & 0 & 0 & 0 & 0 & 0 & 0 & 0 \\\hline
 $b_{x,\overline{x}}$  & 0 & 0 & \cellcolor{black!25} 1 & \cellcolor{black!25} 1 & \cellcolor{black!25} 1 & \cellcolor{black!25} 1 & 0 & 0 & \cellcolor{black!25} 1 & 1 & 1 &   & 1 & 0 & 0 & 0 & 0 & 0 & 0 & 0 \\\hline
 $c_x$  & 0 & 0 & 1 & 1 & 1 & 1 & 0 & 0 & 1 & \cellcolor{black!25} 1 & \cellcolor{black!25} 1 & 1 &   & 0 & 0 & 0 & 0 & 0 & 0 & 0 \\\hline
 $f_x$  & 0 & 0 & 0 & 0 & 1 & 0 & 0 & 0 & 0 & \cellcolor{black!25} 1 & 0 & 0 & 0 &   & \cellcolor{black!25} 1 & 1 & 0 & 0 & 0 & 0 \\\hline
 $f_{\overline{x}}$  & 0 & 0 & 0 & 0 & 0 & 1 & 0 & 0 & 0 & 0 & \cellcolor{black!25} 1 & 0 & 0 & 0 &   & 1 & 0 & 0 & 0 & 0 \\\hline
 $f_{x,\overline{x}}$  & 0 & 0 & 0 & 0 & 0 & 0 & 0 & 0 & 0 & 0 & 0 & 0 & 0 & 1 & 1 &   & 0 & 0 & 0 & 0 \\\hline
 $q_x$  & \cellcolor{black!25} 1 & 0 & 1 & 1 & 1 & 0 & \cellcolor{black!25} 1 & \cellcolor{black!25} 1 & \cellcolor{black!25} 1 & \cellcolor{black!25} 1 & 0 & 0 & \cellcolor{black!25} 1 & \cellcolor{black!25} 1 & 0 & 0 &   & 0 & 0 & 0 \\\hline
 $q_{\overline{x}}$  & 0 & \cellcolor{black!25} 1 & 1 & 1 & 0 & 1 & 0 & \cellcolor{black!25} 1 & \cellcolor{black!25} 1 & 0 & \cellcolor{black!25} 1 & 0 & \cellcolor{black!25} 1 & 0 & \cellcolor{black!25} 1 & 0 & 0 &   & 0 & 0 \\\hline
 $q_{a_x}$  & 0 & 0 & 1 & 1 & \cellcolor{black!25} 1 & 0 & 1 & \cellcolor{black!25} 1 & \cellcolor{black!25} 1 & \cellcolor{black!25} 1 & 0 & 0 & \cellcolor{black!25} 1 & \cellcolor{black!25} 1 & 0 & 0 & 0 & 0 &   & 0 \\\hline
 $q_{a_{\overline{x}}}$  & 0 & 0 & 1 & 1 & 0 & \cellcolor{black!25} 1 & 0 & 1 & \cellcolor{black!25} 1 & 0 & \cellcolor{black!25} 1 & 0 & \cellcolor{black!25} 1 & 0 & \cellcolor{black!25} 1 & 0 & 0 & 0 & 0 &    
  \end{tabular}}
\caption{The solid graph and adjacency matrix for the variable gadget in the hardness reduction for~\any\str\URGD and~\simp\str\URGD version on triangle-free graph (\Cref{hardness trianglefree}). 
The highlighted cells indicate the dashed arcs.
As we show, no realization assigns labels to both the red and the blue edge.
If the red (blue) edge receives a label, the two red (blue) cells are realized via paths through the variable gadget~$V_x$.
Here, $\alpha = \chi(y_k) + 14$ and~$\beta = \chi(y_\ell) + 14$, which are both values in~$[15,24]$.
}
\label{fig:hardness triangle label}
\end{figure}
}
\begin{proof}
We reduce from \SAT which is known to be \NP-hard~\cite{T84}.

\prob{\SAT}{A CNF formula~$\phi$, where each clause has size at most~$3$ and each variable occurs exactly twice positively and twice negatively.}{Is~$\phi$ satisfiable?}

Let~$\phi$ be an instance of~\SAT with variables~$X$ and clauses~$Y := \{y_1, \dots, y_{|Y|}\}$.
Moreover, let~$\chi \colon Y \to [1,10]$ be a coloring of the clauses, such that for each two distinct clauses~$y_i$ and~$y_j$ of~$Y$ that share a variable, $\chi(y_i) \neq \chi(y_j)$.
Such a coloring~$\chi$ exists and can be greedily computed in polynomial time.\footnote{To see this, consider the auxiliary graph~$H_Y$ with vertex set~$Y$ and edges between to clauses~$y_i$ and~$y_j$ of~$Y$ if and only if~$y_i$ and~$y_j$ share a variable.
The graph~$H_Y$ has a maximum degree of~$9$, since each clause contains three literals, and each variable occurs in exactly four clauses.
Hence,~$\chi$ is a greedy proper~$(\Delta(H_Y) + 1)$-vertex coloring of~$H_Y$ which exists and can be computed in polynomial time.}
We define a directed graph~$D=(V,A)$ in polynomial time, such that there is a simple undirected temporal graph with strict reachability graph equal to~$D$ if and only if~$\phi$ is satisfiable.

\figtriangle

We initialize~$V$ and~$A$ as empty sets.
For each variable~$x\in X$, we add the following vertices to~$V$: $V_x := \{x,\overline{x},a_x,a_{\overline{x}},a_{x,\overline{x}},b_x,b_{\overline{x}},b_{x,\overline{x}},c_x,f_x,f_{\overline{x}},f_{x,\overline{x}},q_x,q_{a_x},q_{\ol},q_{a_{\ol}}\}$.
See~\Cref{fig:hardness triangle label} for the vertices and arcs of~$D[V_x]$.
For each clause~$y_i\in Y$, we add two vertices~$s_i$ and~$t_i$ to~$V$, that is, a source~$s_i$ and a terminal~$t_i$ for each clause.
Finally, for each two distinct clauses~$y_i$ and~$y_j$ from~$Y$ that share a variable, we add vertices~$q_{(i,j)}$ and~$q_{(j,i)}$ to~$V$.
Note that these vertices and their incident edges are not shown in~\Cref{fig:hardness triangle label}.

Next, we describe the arcs~$A$ of~$D$.
To this end, we first describe the solid edges~$E$ of~$G$. 
We initialize~$E$ as the empty set.
For each variable~$x\in X$, we add to~$E$ the solid edges between the vertices of~$V_x$ that appear in the graph of~\Cref{fig:hardness triangle label}. For each clause~$y_i$, we add a solid edge to~$E$ between the vertex~$s_i$ and each vertex $\ell$ for which~$\ell$ is a literal of~$y_i$. 
Moreover, for each variable~$x$ occurring in clause~$y_i$, we add the solid edges~$\{t_i, q_{x}\},\{t_i, q_{a_x}\},\{t_i, q_{\ol}\}, \{t_i, q_{a_{\ol}}\}$, and~$\{t_i, c_x\}$ to~$E$.
Finally, for each two distinct clauses~$y_i$ and~$y_j$ that share a variable, we add the solid edges~$\{s_i,q_{(i,j)}\},\{q_{(i,j)}, t_j\}, \{s_j,q_{(j,i)}\}$, and~$\{q_{(j,i)}, t_i\}$ to~$E$. 
Next, we describe the dashed arcs of~$D$.
For each variable~$x\in X$, we add to~$A$ the dashed arcs between the vertices of~$V_x$ as described in the matrix of~\Cref{fig:hardness triangle label}. Let~$y_i$ be a clause of~$Y$ and let~$x$ be a variable occurring in~$y_i$.
We add the dashed arcs~$(s_i,c_x), (s_i,a_{x,\ol}),$ $(s_i,a_{\ol}), (x,t_i),(\overline{x},t_i),(a_x,t_i),(a_{\ol},t_i),$ and~$(b_{x,\ol},t_i)$ to~$A$.
If~$x$ occurs positively in clause~$y_i$, we add the dashed arcs~$(s_i,a_x),(s_i,b_x),(s_i,f_x),$ and~$(q_x,s_i)$ to~$A$.
Otherwise, we add the dashed arcs~$(s_i,b_{\ol}),(s_i,f_{\ol})$ and~$(q_{\ol},s_i)$ to~$A$.
Let~$y_i$ and~$y_j$ be distinct clauses from~$Y$ that share a variable and assume without loss of generality that~$\chi(y_i) < \chi(y_j)$.
We add the dashed arcs~$(s_i,t_j),(s_j,t_i)$, and~$(t_i,t_j)$ to~$A$.
If~$y_i$ and~$y_j$ share a literal, we also add the arc~$(s_i,s_j)$ to~$A$.
Finally, for each clause~$y_i\in Y$, we add the arc~$(s_i,t_i)$ to~$A$.

This completes the construction of~$D$.

\begin{claim}
$D$ has a maximum degree of~$\Oh(1)$. Moreover, $G=(V,E)$ is triangle-free.
\end{claim}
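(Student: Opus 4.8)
The plan is to verify both parts by routine inspection of the construction, keeping separate the fixed variable gadgets and the clause-related vertices and arcs. For the degree bound, I would first note that the gadget $D[V_x]$ has constant size, so every vertex of $V_x$ has $\Oh(1)$ incident edges and arcs \emph{inside} its own gadget. The remaining incidences are clause-related. Since each variable $x$ occurs in exactly four clauses, each vertex of $V_x$ (such as $x,\overline{x},a_x,a_{\overline{x}},b_x,b_{\overline{x}},c_x,q_x,q_{\overline{x}},f_x,f_{\overline{x}}$) picks up only $\Oh(1)$ further solid edges and dashed arcs toward the vertices $s_i,t_i$ of those four clauses. For a clause $y_i$, the source $s_i$ has solid edges only to the at most three literal vertices of $y_i$ and to $q_{(i,j)}$-vertices, and dashed arcs only to a constant number of gadget vertices per variable of $y_i$ and to the vertices $s_j,t_j$ of clauses sharing a variable with $y_i$; symmetrically for $t_i$, whose solid edges go to the five vertices $q_x,q_{a_x},q_{\overline{x}},q_{a_{\overline{x}}},c_x$ per variable of $y_i$ and to $q_{(j,i)}$-vertices. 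Because each clause has at most three variables and each variable lies in four clauses, $y_i$ shares a variable with at most $3\cdot 3 = 9$ other clauses (this is exactly the bound $\Delta(H_Y)\le 9$ used above to colour $Y$), so $s_i$ and $t_i$ have $\Oh(1)$ incidences in total. Finally, every $q_{(i,j)}$ has only the two solid edges $\{s_i,q_{(i,j)}\}$ and $\{q_{(i,j)},t_j\}$ and no dashed arcs. Summing, $\Delta(D) = \Oh(1)$.

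For triangle-freeness, I would first check directly on the solid graph in \Cref{fig:hardness triangle label} that it contains no triangle within a single gadget $V_x$ --- a finite check on a fixed graph. It then remains to exclude triangles using a vertex outside every gadget, i.e.\ containing some $s_i$, some $t_i$, or some $q_{(i,j)}$. The crucial observation is that the solid neighbourhood of each such vertex is an independent set in $G$: the solid neighbours of $s_i$ are literal vertices of $y_i$, which lie in pairwise distinct variable gadgets and hence are non-adjacent in $G$, together with $q_{(i,j)}$-vertices, which have no solid edge to a literal vertex or to another $q$-vertex; the solid neighbours of $t_i$ are vertices of the form $q_x,q_{a_x},q_{\overline{x}},q_{a_{\overline{x}}},c_x$ ranging over the variables of $y_i$, and no solid edge runs between two of these (neither inside a gadget nor between gadgets, and in particular none between two distinct $c$-vertices or between a $q$-vertex and a $c$-vertex), together with the $q_{(j,i)}$-vertices; and each $q_{(i,j)}$ has exactly the two solid neighbours $s_i$ and $t_j$, which are non-adjacent in $G$ since the construction never creates a solid edge between a clause source and a clause terminal. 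Since there is also no solid edge between two clause sources or between two clause terminals, every triangle would have to lie inside a single gadget, which we have already excluded; hence $G$ is triangle-free.

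The main obstacle is nothing more than bookkeeping: one must list \emph{all} types of solid edges produced by the construction --- gadget-internal edges, source-to-literal edges, terminal-to-$q$/$c$ edges, and the $q_{(i,j)}$-edges --- and confirm that no single one of them, and no two of them sharing an endpoint, closes a triangle. There is no conceptual difficulty; the degree bound rests entirely on the gadget being of fixed size, on each variable occurring in exactly four clauses, and on each clause containing at most three variables.
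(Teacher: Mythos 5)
Your proposal is correct and takes essentially the same approach as the paper: for the degree bound, constant gadget size plus each variable occurring in exactly four clauses plus each clause touching at most three variables (hence at most $9$ other clauses) gives $\Oh(1)$ incidences everywhere; for triangle-freeness, an in-gadget inspection combined with showing that the solid neighbourhood of every clause source, clause terminal, and $q_{(i,j)}$-vertex is independent, and that no solid edge crosses gadgets. The only cosmetic difference is that the paper phrases the triangle argument via ``a triangle uses at most one source or terminal vertex'' while you go directly through independent neighbourhoods, which is an equivalent bookkeeping choice. (One slight imprecision: two literals of a single clause need not lie in distinct gadgets if a clause happened to contain both $x$ and $\overline x$; but $x$ and $\overline x$ are non-adjacent in $G$ anyway, so the conclusion stands.)
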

\begin{claimproof}
The solid graph induced by the vertices of each variable gadget has constant degree by inspection. Every vertex $s_i$ has at most $3+9=12$ edges of type $\{s_i,\ell\}$, $\{s_i,q_{(i,j)}\}$, respectively, at most $13$ arcs due to variables occurring in $y_i$, at most $9$ arcs due to $y_i$ sharing variables with other clauses, and at most $6$ arcs due to $y_i$ sharing literals with other clauses. Every $t_i$ has at most $15$ edges and $15$ arcs due to variables occurring in $y_i$, and at most $9$ edges and $18$ arcs due to $y_i$ sharing variables with other clauses. For the same (constant number of) types of edges and arcs, each of the other vertices of $D$ is related to a variable or literal, which can occur in at most 4 clauses; by inspection, it follows that the degree contributed to each such vertex by each edge or arc type is constant. Thus, $D$ has a maximum degree of~$\Oh(1)$.   
The solid graph induced by the vertices of each variable gadget is triangle-free by inspection. The edge set of $G[\bigcup_i(\{s_i\}\cup\{t_i\})]$ is empty, so any triangle must be using at most one source or terminal vertex. Each $q_{(i,j)}$ vertex is adjacent to no other vertex than $s_i$ and $t_j$, thus cannot be part of a triangle. As no two vertices belonging to different gadgets are adjacent, it remains to show that a source or terminal cannot form a triangle with two vertices belonging to the same gadget. A source is only adjacent to the vertices corresponding to literals and the terminals are only adjacent to vertices $q_x$, $q_{a_x}$, $q_{\ol}$, $q_{a_{\ol}}$, and $c_x$, none of which are adjacent to each other. Thus, $G=(V,E)$ is triangle-free.
\end{claimproof}

Next, we show that~$\phi$ is satisfiable if and only if there is an undirected temporal graph with strict reachability graph~$D$.
More precisely, we show that if~$\phi$ is satisfiable, then there is a simple undirected temporal graph with strict reachability graph~$D$.

$(\Leftarrow)$
Let~$\mg = (G=(V,E),\lambda\colon E \to 2^{\mathbb{N}})$ be an undirected temporal graph with strict reachability graph equal to~$D$.
We show that~$\phi$ is satisfiable.

To this end, we first show that for each variable~$x\in X$, at most one of the edges of~$\{x,c_x\}$ and~$\{\ol,c_x\}$ receives a non-empty set of labels under~$\lambda$.
Assume towards a contradiction that there is a variable~$x\in X$, such that both~$\{x,c_x\}$ and~$\{\ol,c_x\}$ receives a non-empty set of labels under~$\lambda$.
We will show that the arc~$(f_x,f_{\ol})$ is not realized.
Since~$(x,\ol)\notin A$ and~$(\ol,x)\notin A$, \Cref{incident same label} implies that there is a label~$\alpha$, such that~$\lambda(\{x,c_x\}) = \lambda(\{\ol,c_x\}) = \{\alpha\}$.
Now, consider the neighborhood of~$f_x$ ($f_{\ol}$).
This vertex has only two solid neighbors: $x$ and~$f_{x,\ol}$ ($\ol$ and~$f_{x,\ol}$).
Moreover, both these neighbors are not joined by a solid edge.
Due to~\Cref{forced label}, this implies that the edges~$\{x,f_x\},\{f_x,f_{x,\ol}\},\{f_{x,\ol},f_{\ol}\}$, and~$\{f_{\ol},\ol\}$ all receive at least one label under~$\lambda$.
Since~$A$ contains none of the arcs~$(c_x,f_x),(f_x,c_x),(x,f_{x,\ol})$, and~$(f_{x,\ol},\ol)$, \Cref{incident same label} implies that~$\lambda(\{f_{x,\ol},f_x\}) = \lambda(\{f_x,x\}) = \lambda(\{x,c_x\}) = \{\alpha\}$.
Similarly, $\lambda(\{f_{x,\ol},f_{\ol}\}) = \lambda(\{f_{\ol},\ol\}) = \lambda(\{\ol,c_x\}) = \{\alpha\}$.
Thus, each edge incident with both~$f_x$ and~$f_{\ol}$ receives the label set~$\{\alpha\}$ under~$\lambda$.
This contradicts the assumption that the arc~$(f_x,f_{\ol})\in A$ is realized by the temporal graph~$\mg$.
Consequently, for each variable~$x\in X$, at most one of the edges of~$\{x,c_x\}$ and~$\{\ol,c_x\}$ receives a non-empty set of labels under~$\lambda$.

Based on this observation, we define a truth assignment~$\pi$ of the variables of~$X$ that satisfies~$\phi$.
For each variable~$x\in X$, we set $\pi(x) := \texttt{true}$ if and only if the edge~$\{x,c_x\}$ receives at least one label under~$\lambda$.
Next, we show that~$\pi$ satisfies~$\phi$.

Let~$y_i$ be a clause of~$\phi$.
We show that~$y_i$ is satisfies by~$\pi$. 
Since (i)~the strict reachability graph of~$\mg$ is equal to~$D=(V,A)$ and (ii)~$A$ contains the arc~$(s_i,t_i)$, there is a strict temporal path~$P$ from~$s_i$ to~$t_i$ in~$\mg$.
Hence, $P$ is a dense path in~$D$.
Consider the first internal vertex~$v$ of~$P$.
Recall that~$v$ is a solid neighbor of~$s_i$.
This implies that~$v$ is either the vertex corresponding to a literal of the clause~$y_i$, or a vertex~$q_{(i,j)}$ for some clause~$y_j$ distinct from~$y_i$ that shares a variable with~$y_i$.
Recall that~$q_{(i,j)}$ has only the two out-neighbors~$s_i$ and~$t_j\neq t_i$ in~$D$.
Hence,  the dense path~$P$ cannot contain~$q_{(i,j)}$, since~$(q_{(i,j)},t_i)\notin A$.
Thus, the first internal vertex of~$P$ is a literal~$\ell$ of clause~$y_i$.
Let~$x$ denote the variable for which~$\ell$ is the literal.

We show that the edge~$\{\ell,c_x\}$ receives at least one label under~$\lambda$.
This then implies that~$\pi$ satisfies~$y_i$, since by the initial observation, at most one of the edges~$\{x,c_x\}$ or~$\{\ol,c_x\}$ received a label under~$\lambda$.
First, note that~$P$ cannot contain another source vertex~$s_j\neq s_i$, since~$\ell$ has no arc towards any out-neighbor of~$s_j$ in~$D$.
Thus, $P$ continues by traversing to some solid neighbor of~$\ell$ in~$V_x$.
By construction, $A$ contains none of the arcs~$(a_{x,\ol},t_i)$, $(b_{\ell},t_i)$, $(s_i,q_{\ell})$, $(s_i,q_{a_\ell})$.
This implies that~$P$ contains none of the vertices of~$\{a_{x,\ol},b_\ell,q_{\ell},q_{a_\ell}\}$.
Consequently, $P$ contains no solid neighbor of~$a_\ell$ besides~$\ell$, which also implies that~$P$ does not contain~$a_\ell$.
Hence, by definition of~$D[V_x]$, the only solid neighbor of~$\ell$ in~$V_x$ that can possibly be contained in~$P$ is~$c_x$.
Consequently, $P$ traverses the edge~$\{\ell,c_x\}$, which implies that~$\{\ell,c_x\}$ receives at least one label under~$\lambda$.
This then implies that~$\pi$ satisfies clause~$y_i$.

$(\Rightarrow)$
Let~$\pi$ be a satisfying assignment of~$\phi$.
We define an undirected temporal graph~$(G=(V,E'),\lambda\colon E' \to \mathbb{N})$ for some~$E'\subseteq E$ as follows:
For each clause~$y_i$ of~$Y$, we set the labels of all edges incident with vertex~$s_i$ to~$\chi(y_i) + 1$ and the labels of all edges incident with vertex~$t_i$ to~$\chi(y_i) + 14$.
It remains to define the labels of the edges between vertices of~$V_x$ for each variable~$x\in X$.
These labels are depicted in~\Cref{fig:hardness triangle label}.
Formally, for each variable~$x\in X$, we set
\begin{itemize}
\item $\lambda(\{c_x,b_{x,\ol}\}) = 1$,
\item $\lambda(\{v,q_v\}) = 1$ for each~$v\in \{x,\ol,a_x, a_{\ol}\}$,
\item $\lambda(\{b_x,b_{x,\ol}\}) = 2$,
\item $\lambda(\{b_{\ol},b_{x,\ol}\}) = 3$,
\item $\lambda(\{x,a_{x}\}) = 12$,
\item $\lambda(\{a_{x},a_{x,\ol}\}) = \lambda(\{{\ol},a_{\ol}\}) = \lambda(\{{x},f_{x}\}) = \lambda(\{f_{x},f_{x,\ol}\}) = 13$,
\item $\lambda(\{a_{\ol},a_{x,\ol}\}) = \lambda(\{{x},b_{x}\}) = \lambda(\{{\ol},f_{\ol}\}) = \lambda(\{f_{\ol},f_{x,\ol}\}) = 14$,
\item $\lambda(\{\ol,b_{\ol}\}) = 15$, and 
\item $\lambda(\{c_x,a_{x,\ol}\}) = 25$.
\end{itemize}
If~$x$ is assigned to~\texttt{true} by~$\pi$, we set~$\lambda(\{x, c_x\})  = 13$ and $\lambda(\{\overline{x},c_x\}) = \emptyset$.
If~$x$ is assigned to~\texttt{false} by~$\pi$, we set~$\lambda(\{x, c_x\}) = \emptyset$ and $\lambda(\{\overline{x},c_x\}) = 14$.
This completes the definition of~$\lambda$.
Next, we show that the strict reachability graph of~$(G,\lambda)$ is exactly~$D$.

All solid edges that receive a label are realized through the label they receive. Any $\{x,c_x\}$ or $\{c_x,\overline{x}\}$ edge that does not receive a label is realized by the temporal paths on the smallest squares including the edge (one drawn above and one below it in~\Cref{fig:hardness triangle label}), and these are the only solid edges that might not receive a label.

As all edges incident to an $s_i$ receive the same label $\chi(y_i)+1$ and the same holds for all edges incident to a $t_i$, receiving $\chi(y_i)+14$ in this case, the temporal paths of~$(G,\lambda)$ only use $s_i$ or $t_j$ vertices as endpoints. 
It follows that none of the arcs that can only be realized by a path using $s_i$ or $t_i$ as an internal vertex, all of which are missing from $A$, is incorrectly realized. This includes all missing arcs incident to a $q_{(i,j)}$ vertex. It also follows that the realization of any (present or missing) arc incident to at least one $V_x$ vertex can be verified by inspection of the solid graph and matrix of~\Cref{fig:hardness triangle label}. It remains to argue about the arcs whose endpoints are both $s_i$ or $t_j$ vertices.

Every $(s_i,t_j)$ arc for $i\neq j$ is realized by the temporal path $s_i,q_{(i,j)},t_j$ using labels $\chi(y_i)+1$, $\chi(y_j)+14$, where $\chi(y_i)+1\leq 11<\chi(y_j)+14$. Every $(s_i,s_j)$ arc for $\chi(y_i) < \chi(y_j)$, present when $y_i$ and $y_j$ share a literal~$\ell$, is realized by the temporal path $s_i,\ell,t_j$ using labels $\chi(y_i)+1$, $\chi(y_j)+1$. Every $(t_i,t_j)$ arc for $\chi(y_i) < \chi(y_j)$, present when $y_i$ and $y_j$ share a variable~$x$, is realized by the temporal path $t_i,c_x,t_j$ using labels $\chi(y_i)+14$, $\chi(y_j)+14$. 

Every $(s_i,t_i)$ arc is realized as follows. Let $\ell$ be a literal of $y_i$ which evaluates to~\texttt{true} under $\pi$, corresponding to a variable $x$. As $x$ is a variable of $y_i$, $s_i$ and $t_i$ are both connected to $V_x$. If $x$ appears positive in $\ell$, then $s_i,x,c_x,t_i$ using labels $\chi(y_i)+1,13,\chi(y_i)+14$ is a realizing temporal path. If $x$ appears negated in $\ell$, then $s_i,\overline{x},c_x,t_i$ using labels $\chi(y_i)+1,14,\chi(y_i)+14$ is a realizing temporal path.

For clauses~$y_i$ and~$y_j$ that share no variable, $(s_i,s_j)$, $(t_i,t_j)$, and $(t_i,s_j)$ is not in $A$ and is not realized by~$(G,\lambda)$, because each path between these vertices in~$G$ goes trough at least one other source or terminal vertex, which by the initial argument cannot be a temporal path.
The same holds for all clauses~$y_i$ and~$y_j$ that share a variable but not a literal with respect to the non-arcs~$(s_i,s_j)$ and~$(s_j,s_i)$. 
\end{proof}

\ifold
By reducing from~\textsc{SAT} and slightly adapting the above reduction, we can also show the following.

\todomi{this should also follow from the W2-hardness. so maybe}
\begin{corollary}\label{no kernel}
Neither of~\any\str\URGD and~\simp\str\URGD admits a polynomial kernel when parameterized by the vertex cover number of the solid graph, unless~\bth.
\end{corollary}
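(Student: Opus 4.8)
The plan is to prove \Cref{no kernel} via an OR-cross-composition from \SAT (which is NP-hard~\cite{T84}) into \any\str\URGD, respectively \simp\str\URGD, parameterized by the vertex cover number of the solid graph; by the cross-composition framework (see, e.g.,~\cite{C+15}) the existence of such a composition rules out a polynomial kernel unless~\bth. As the polynomial equivalence relation I would declare two \SAT-instances equivalent if they have the same number of variables (which for \SAT also fixes the number of clauses), together with one extra class collecting malformed inputs; for inputs of total size~$N$ this yields~$\Oh(N)$ classes, as required. It then suffices, given~$t$ instances~$\phi_1,\dots,\phi_t$ of \SAT on a common variable set~$X=\{x_1,\dots,x_n\}$, to construct in time polynomial in~$\sum_j|\phi_j|$ a single digraph~$D$ that is realizable if and only if at least one~$\phi_j$ is satisfiable and whose solid graph has vertex cover number~$\Oh(n+\log t)$. (Equivalently, one could aim for a polynomial-parameter transformation from \textsc{CNF-SAT} parameterized by the number of variables, which is known to admit no polynomial kernel unless~\bth; both routes hinge on the same structural point below.)

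For the construction I would recycle the gadgets from the proof of \Cref{hardness trianglefree}. The key observation is that the variable gadget~$V_x$ only encodes the truth value of~$x$ and is independent of the clause structure, so~$D$ contains a \emph{single shared} copy of~$V_{x_1},\dots,V_{x_n}$; its~$\Oh(n)$ vertices will constitute the bulk of a vertex cover of the solid graph, and any realization induces a single global truth assignment read off these gadgets. For every instance~$\phi_j$ and every clause of~$\phi_j$ we add a dedicated source and terminal, wired into the shared variable gadgets exactly as in \Cref{hardness trianglefree}; crucially these~$\Oh(tn)$ vertices and their auxiliary connector vertices are arranged so that \emph{all} their incident solid edges run into the shared variable gadgets or into a small \emph{instance-selector} gadget. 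The selector is an ``address decoder'' on~$\Oh(\log t)$ vertices that stores the binary index~$j^\star$ of the selected instance: the sources and terminals of~$\phi_j$ are attached to the selector vertices according to the bits of~$j$ so that, whenever~$j\ne j^\star$, the arc from the $i$-th source of~$\phi_j$ to its $i$-th terminal is realizable \emph{through the selector} (so that clause block imposes no constraint), while for~$j=j^\star$ this shortcut is blocked and the arc must instead be realized through a variable gadget, which, as in \Cref{hardness trianglefree}, is possible precisely when the global assignment satisfies the corresponding clause. A vertex cover of the solid graph is then given by the~$\Oh(n)$ shared-gadget vertices plus the~$\Oh(\log t)$ selector vertices (the connector/``helper'' vertices being shared or routed so that at most~$\Oh(\log t)$ of them, if any, lie outside the shared gadgets), i.e.\ of size~$\Oh(n+\log t)$. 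Since the forward direction of \Cref{hardness trianglefree} produces a single-label realization, the same construction certifies the statement for \simp\str\URGD as well.

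The hard part is the design and analysis of the address-decoder gadget: using only~$\Oh(\log t)$ additional vertices one must implement, \emph{inside a reachability graph}, a selector that leaves exactly one of the~$t$ clause blocks ``live'' and renders the other~$t-1$ blocks vacuously realizable for \emph{every} truth assignment simultaneously, without creating or destroying reachabilities inside the shared variable gadgets or among the~$\Oh(tn)$ source/terminal vertices, and in particular without the ``helper'' vertices used in \Cref{hardness trianglefree} to realize inter-clause arcs blowing up the vertex cover by a factor depending on~$t$. Establishing that the arc set of the constructed~$D$ is \emph{exactly} the set of reachabilities realizable over all choices of a selected instance and a satisfying assignment — in both the forward direction (a satisfying assignment for some~$\phi_j$ yields a realization) and the backward direction (a realization pins down an index and an assignment satisfying that instance) — is where essentially all of the bookkeeping lives; once this is in place, the cross-composition framework~\cite{C+15} immediately yields the claimed kernelization lower bound.
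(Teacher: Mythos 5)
Your proposal takes a genuinely different route from the paper's, and it contains a real gap. The paper does not use an OR-cross-composition at all; it gives a \emph{polynomial parameter transformation} from unrestricted \textsc{CNF-SAT} parameterized by the number of variables (which has no polynomial kernel unless \bth, by~\cite{FS11}). That route needs no instance-selector of any kind: starting from the reduction behind \Cref{hardness trianglefree}, one keeps the variable gadgets~$V_x$, keeps one source~$s_i$ per clause, \emph{merges all terminals~$t_i$ into a single vertex~$\top$}, and \emph{deletes the inter-clause connector vertices~$q_{(i,j)}$}. After this, the sources form an independent set in the solid graph whose solid neighbors all lie in~$\{\top\}\cup\bigcup_x V_x$, so~$\{\top\}\cup\bigcup_x V_x$ is already a vertex cover of size~$\Oh(|X|)$. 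The clause structure may blow up arbitrarily without affecting the parameter. You do mention this alternative but dismiss it with ``both routes hinge on the same structural point,'' which is not accurate: the PPT route needs only the observation that the connector and terminal vertices were the sole obstruction to a small vertex cover, and that they can be eliminated.

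The concrete gap in your OR-cross-composition route is the address-decoder gadget. You state that ``the hard part is the design and analysis of the address-decoder gadget'' and then do not design it. In particular, you need to simultaneously (i) make, inside a reachability graph, every clause block of the~$t-1$ non-selected instances vacuously realizable under \emph{every} assignment read off the shared~$V_x$, (ii) block the shortcut for the selected index, and (iii) keep the~$\Oh(tn)$ sources/terminals and the helper vertices for inter-clause arcs from enlarging the vertex cover; you acknowledge these constraints but do not discharge them. As written, this is a plan to have a proof rather than a proof. A secondary, minor point: in \SAT with~$n$ variables the number of clauses is not fixed (only bounded between roughly~$4n/3$ and~$4n$), so ``same number of variables fixes the number of clauses'' is not quite right as an equivalence relation, though this is easy to repair. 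The cleanest fix overall is to adopt the paper's PPT route, which sidesteps the selector entirely.
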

\begin{proof}[Proof Sketch]
We present a polynomial parameter transformation from~\textsc{SAT} when parameterized by the number of variables, where each clause has arbitrary size and each variable may occur in arbitrary many clauses.
The reduction is similar to the previous one.
Intuitively, the only thing that changes is that all terminal vertices are merged into a single vertex~$\top$ and all connector vertices between a source and a terminal vertex are removes.

Formally, let~$\phi$ be an instance of~\textsc{SAT} with variables~$X$ and clauses~$Y$.
We start by a graph~$D$ having only a single vertex~$\top$.
For each variable~$x\in X$, add the variable gadget~$V_x$ as in the last reduction.
For each clause~$y_i\in Y$, we also again add a source vertex~$s_i$ to~$D$.
The arcs of~$D$ are as follows: 
\begin{itemize}
\item there is no arc between any two source vertices,
\item there is an arc~$(s_i,\top)$ in~$D$ for each clause~$y_i\in Y$,
\item there arcs between source vertices and vertices of any variable gadget are identical as in the last reduction (see \Cref{fig:hardness triangle label}), and
\item there arcs between $\top$ and any variable gadget~$V_x$ is identical to the arcs between the variable gadget~$V_x$ and any terminal vertex~$t_i$ for which clause~$y_i$ contains a literal of~$x$ in the last reduction (see \Cref{fig:hardness triangle label}), that is, $q_x,q_{\ol},q_{a_x},q_{a_{\ol}}$, and~$c_x$ are the solid neighbors of~$\top$ in~$V_x$, and~$\{(\top, a_{x,\ol}),(x, \top),(\ol, \top),(a_x, \top),(a_{\ol}, \top)\}$ are the dashed arc between~$\top$ and the vertices of~$V_x$.
\end{itemize}
Finally, for each source vertex~$s_i$, we add the dashed arc~$(s_i,\top)$ to~$D$.
Note that the vertex cover number of~$G$ (and also of~$D$) is~$\Oh(|X|)$, since there is no arc between any two source vertices and thus $\{\top\} \cup \bigcup_{x\in X} V_x$ is a vertex cover of size~$1 +  16 \cdot |X| \in \Oh(|X|)$ for~$D$.
Since \textsc{SAT} does not admit a polynomial kernel when parameterized by~$|X|$, unless~\bth~\cite{FS11}, to show that statement, it remains to show that~$D$ is realizable if and only if~$\phi$ is satisfiable.
That is, it remains to show that the reduction is correct.

Due to the similarity with the previous reduction, we only sketch the correctness proof of this reduction.

Since the induced subgraph on~$V_x$ is identical to the one in the previous reduction, one can again show that for each realization of~$D$ at most one of~$\{x,c_x\}$ and~$\{\ol,c_x\}$ receives a label.
One can then (again, similar to the previous proof) show that these edges then encodes a satisfying truth assignment for~$\phi$.

If~$\phi$ is satisfiable, then the realizing~$D$ can also be obtained similar to the one in the previous reduction.
Assign the labels according to~\Cref{fig:hardness triangle label} to the edges in~$G[V_x]$.
For each source vertex, label all incident edges with label~$2$ and label all edges incident with~$\top$ with label~$15$.
Similar to the previous proof, this labeling thus realizes~$D$.
\end{proof}
\fi

By slightly modifying the previous reduction, we can show that also all other version of~\URGD under consideration are \NP-hard.

\begin{theorem}\label{hardness proper}
Each version of~\URGD under consideration is \NP-hard on directed graphs of constant maximum degree.
\end{theorem}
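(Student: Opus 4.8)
The plan is to adapt the reduction from \SAT used for \Cref{hardness trianglefree}, exploiting the fact that the four remaining variants are considerably more rigid. By \Cref{lem:tree-atmost2} and \Cref{cor:nospecial}, in a yes-instance no solid edge carries two labels and there are no special bridges; and by the remark following \Cref{incident same label}, no two adjacent solid edges $\{u,v\},\{v,w\}$ with $(u,w),(w,u)\notin A$ can both receive a label. Hence the property ``at most one of $\{x,c_x\}$ and $\{\overline{x},c_x\}$ is labeled'', which the $f$-subgadget enforced in the strict proof, now holds for free, and the variable gadget can be stripped down: keep a center $c_x$ adjacent to literal vertices $x,\overline{x}$, plus a small attachment (e.g.\ a vertex $r_x$ joined to $x$ and $\overline{x}$, with dashed arcs set so that $r_x$ must reach $c_x$ and $x,\overline{x}$ are the only admissible relays) that forces \emph{at least one} of $\{x,c_x\},\{\overline{x},c_x\}$ to be labeled. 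The truth assignment is read off exactly as before: $x$ is \texttt{true} iff $\{x,c_x\}$ is labeled.

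At the clause level I would reuse the sources $s_i$, terminals $t_i$, and the connector vertices $q_{(i,j)}$ for clause pairs sharing a variable: $s_i$ adjacent to its (at most three) literal vertices and to the $q_{(i,\cdot)}$, $t_i$ adjacent to $c_x$ for each variable $x$ of $y_i$ and to the $q_{(\cdot,i)}$, and the dashed arc $(s_i,t_i)$ present for every clause. A path $s_i\to x\to c_x\to t_i$ (or via $\overline{x}$) exists exactly when $y_i$ has a true literal, while the missing arcs force every temporal $s_i$–$t_i$ path to pass through some literal vertex and its $c_x$; both directions of correctness then follow essentially verbatim from the strict argument, now with a single label per edge. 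Since on the resulting instance no two adjacent labeled solid edges have a ``missing'' far pair, non-strict reachability coincides with strict reachability, so the one reduction simultaneously settles \any\nstr\URGD, \simp\nstr\URGD, \pro\URGD, and \happy\URGD; the construction is of linear size with constant-size gadgets, which also yields the claimed ETH-based lower bound.

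The main obstacle I expect is the time-labeling of the source/terminal hubs for the \pro and \happy variants: in the strict proof all edges incident with $s_i$ carry the common label $\chi(y_i)+1$, which is illegal for a proper labeling since those edges are pairwise adjacent. I would resolve this using the bounded degree of $s_i$ and $t_i$: reserve for $y_i$ a window of $\Oh(1)$ consecutive time steps around $\chi(y_i)$ and assign the edges at $s_i$ \emph{distinct} labels from that window (and the edges at $t_i$ distinct labels from a strictly later window), chosen so that (i) no temporal path uses a hub as an internal vertex — which already prevents realization of all the missing arcs routed through hubs and of every missing arc among $\{s_i,t_j\}$-type vertices — and (ii) the intended clause-satisfaction paths and the $q_{(i,j)}$-paths stay temporal. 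Because $\chi$ properly colours clauses sharing a variable, the windows that can interact are disjoint, so these local adjustments do not clash. Making this hub labeling proper while retaining the ``no through-path'' effect that equal labels gave automatically in the strict construction, uniformly across all four models, is the delicate part and will require carefully pinning down the arc set; everything else is a routine adaptation of the proof of \Cref{hardness trianglefree}.
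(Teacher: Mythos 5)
Your proposal correctly identifies that the $f$-subgadget is no longer needed (since \Cref{incident same label} already forbids labeling both $\{x,c_x\}$ and $\{\overline{x},c_x\}$) and that the main obstacle is making the hub labeling proper. But the resolution you sketch for that obstacle does not work, and the paper's actual fix is a genuinely different idea that your proposal is missing.

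You propose to keep $s_i$ (and $t_i$) as single vertices and assign \emph{distinct} labels from a small window to the edges incident with the hub, ``chosen so that no temporal path uses a hub as an internal vertex''. That requirement is unsatisfiable for a single hub of degree $\ge 2$ with distinct integer labels: if $\{s_i,u\}$ has label $a$ and $\{s_i,v\}$ has label $b>a$, then $u\to s_i\to v$ is a (strict, hence also non-strict/proper/happy) temporal path, realizing the arc $(u,v)$. In the construction, $s_i$ is simultaneously adjacent to the up-to-three literal vertices of $y_i$ and to up to nine connector vertices $q_{(i,j)}$, and arcs between literals and $q_{(i,j)}$'s are not in $A$; a single spread of distinct labels necessarily creates a total order among the neighbors and hence $\Theta(d^2)$ through-arcs, most of which are forbidden. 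No choice of window or permutation avoids this once the hub is a single vertex; the only way to kill all through-paths is to give all hub edges the same label, which is exactly what the proper/happy restriction disallows. Your remark that ``non-strict reachability coincides with strict'' on the resulting instance is not established either, for the same reason: through-paths at the hubs exist in both regimes.

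The paper resolves this by replacing each hub $s_i$ by a solid \emph{clique} $S_i$ of odd size at least $\max(9,r_i)$, distributing the outgoing labeled edges one-per-vertex of $S_i$ via a matching (so every hub--neighbor edge gets the same label $\alpha_{s_i}$ while being pairwise non-adjacent, which is proper), and labeling the internal clique edges with four edge-disjoint Hamiltonian cycles, two entirely before $\alpha_{s_i}$ and two entirely after. This guarantees all vertices of $S_i$ have identical in/out reachability while still making every through-path impossible: a path entering some $p\in S_i$ at time $\alpha_{s_i}$ can only move inside $S_i$ at strictly larger times and then faces external edges labeled $\alpha_{s_i}$ again, which are too small. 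This clique/Hamiltonian-decomposition trick is the key missing ingredient in your proposal; without something equivalent, the gap at the hubs is fatal. (Your variable-gadget simplification is plausible but under-specified; the paper instead keeps the full gadget minus the $f$-vertices, so there is less to re-verify.)
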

\newcommand{\fighappy}{
\begin{figure}

\centering
\begin{tikzpicture}[scale=1.2]
\tikzstyle{k}=[circle,fill=white,draw=black,minimum size=10pt,inner sep=2pt]

\node[k] (cl) at (-1,2) {$s_i$};
\node[k] (cr) at (9,2) {$s_j$};

\node[k] (bx) at (2,0) {$b_x$};
\node[k] (bxnx) at (4,0) {$b_{x,\overline{x}}$};
\node[k] (bnx) at (6,0) {$b_{\overline{x}}$};

\node[k] (ax) at (2,4) {$a_x$};
\node[k] (axnx) at (4,4) {$a_{x,\overline{x}}$};
\node[k] (anx) at (6,4) {$a_{\overline{x}}$};

\node[k] (qx) at (1,5.5) {$q_x$};
\node[k] (qnx) at ($(qx) + (6,0)$) {$q_{\overline{x}}$};
\node[k] (qax) at ($(qx) + (1,0)$) {$q_{a_x}$};
\node[k] (qanx) at ($(qx) + (5,0)$) {$q_{a_{\overline{x}}}$};

\node[k] (cx) at (4,2) {$c_x$};
\node[k] (x) at (2,2) {$x$};
\node[k] (nx) at (6,2) {$\overline{x}$};

\node (t1) at (2.1,8) {};
\node (t2) at (5.9,8) {};

\node[] (t11) at ($(t1)+(-1.1,0)$) {\textbf{\dots}};
\node[k] (t11) at ($(t1)+(-.6,0)$) {};
\node[k] (t12) at ($(t11)+(.45,0)$) {};
\node[k] (t13) at ($(t12)+(.45,0)$) {};
\node[k] (t14) at ($(t13)+(.45,0)$) {};
\node[k] (t15) at ($(t14)+(.45,0)$) {};

\node[] (t21) at ($(t2)-(-1.1,0)$) {\textbf{\dots}};
\node[k] (t21) at ($(t2)-(-.6,0)$) {};
\node[k] (t22) at ($(t21)-(.45,0)$) {};
\node[k] (t23) at ($(t22)-(.45,0)$) {};
\node[k] (t24) at ($(t23)-(.45,0)$) {};
\node[k] (t25) at ($(t24)-(.45,0)$) {};

\draw[thick] ($(t1)$) ellipse (1.7cm and .4cm);
\node at ($(t1) + (0,.7)$) {$T_k$};
\draw[thick] ($(t2)$) ellipse (1.7cm and .4cm);
\node at ($(t2) + (0,.7)$) {$T_\ell$};

\draw[thick] ($(cl)$) ellipse (.4cm and 1.3cm);
\node at ($(cl) + (0,.75)$) {\rotatebox{90}{\textbf{\dots}}}; 
\node at ($(cl) + (0,-.75)$) {\rotatebox{90}{\textbf{\dots}}}; 
\node at ($(cl) + (0,1.7)$) {$S_i$};

\draw[thick] ($(cr)$) ellipse (.4cm and 1.3cm);
\node at ($(cr) + (0,.75)$) {\rotatebox{90}{\textbf{\dots}}}; 
\node at ($(cr) + (0,-.75)$) {\rotatebox{90}{\textbf{\dots}}}; 
\node at ($(cr) + (0,1.7)$) {$S_j$};

\draw[draw=black, dashed] ($(qx) + (-.5,.5)$) rectangle ++(7,-6.5);

\draw[thick] (x) -- (qx) node [midway, fill=white] {$1$};
\draw[thick] (nx) -- (qnx) node [midway, fill=white] {$1$};
\draw[thick] (ax) -- (qax) node [midway, fill=white] {1};
\draw[thick] (anx) -- (qanx) node [midway, fill=white] {1};

\draw[thick] (cx) -- (bxnx) node [midway, fill=white] {1};

\draw[thick] (bx) -- (bxnx) node [midway, fill=white] {2};
\draw[thick] (bnx) -- (bxnx) node [midway, fill=white] {3};

\draw[thick] (x) -- (ax) node [midway, fill=white] {12};
\draw[thick] (nx) -- (anx) node [midway, fill=white] {13};
\draw[thick] (axnx) -- (ax) node [midway, fill=white] {13};
\draw[thick] (axnx) -- (anx) node [midway, fill=white] {14};
\draw[thick] (axnx) -- (cx) node [midway, fill=white] {25};

\draw[thick,red] (x) -- (cx) node [midway, fill=white] {13};
\draw[thick,blue] (nx) -- (cx) node [midway, fill=white] {14};

\draw[thick] (cx) edge[bend left=10] node[very near end, fill=white]{$\alpha$} (t13);
\draw[thick] (cx) edge[bend right=10] node[very near end, fill=white]{$\beta$} (t23);

\draw[thick] (x) -- (bx) node [midway, fill=white] {14};
\draw[thick] (nx) -- (bnx) node [midway, fill=white] {15};

\draw[thick] (cl) -- (x) node [midway, fill=white] {$\in [2,11]$};
\draw[thick] (cr) -- (nx) node [midway, fill=white] {$\in [2,11]$};

\foreach \x/\y in {qx/1,qnx/5,qax/2,qanx/4} {
    \draw[thick] (\x) -- (t1\y) node [midway, fill=white] {$\alpha$};               
}

\foreach \x/\y in {qx/5,qnx/1,qax/4,qanx/2} {
    \draw[thick] (\x) -- (t2\y) node [midway, fill=white] {$\beta$};    
}

\end{tikzpicture}

~
~

~
~

~
~

~
~\scalebox{.8}{
\begin{tabular}{c|||c|c|c|c||c|c|c|c|c|c|c|c|c|c|c|c|c} & $\in S_i$  & $\in S_j$  & $\in T_k$  & $\in T_\ell$  & $x$  & ${\overline{x}}$  & $a_x$  & $a_{\overline{x}}$  & $a_{x,\overline{x}}$  & $b_x$  & $b_{\overline{x}}$  & $b_{x,\overline{x}}$  & $c_x$    & $q_x$  & $q_{\overline{x}}$  & $q_{a_x}$  & $q_{a_{\overline{x}}}$  \\\hline\hline\hline
 $\in S_i$  &   & 0 & \cellcolor{red!25} 1 & \cellcolor{red!25} 1 & 1 & 0 & \cellcolor{black!25} 1 & \cellcolor{black!25} 1 & \cellcolor{black!25} 1 & \cellcolor{black!25} 1 & 0 & 0 & \cellcolor{black!25} 1 & 0 & 0 & 0 & 0 \\\hline
 $\in S_j$  & 0 &   & \cellcolor{blue!25} 1 & \cellcolor{blue!25} 1 & 0 & 1 & 0 & \cellcolor{black!25} 1 & \cellcolor{black!25} 1 & 0 & \cellcolor{black!25} 1 & 0 & \cellcolor{black!25} 1 & 0 & 0 & 0 & 0 \\\hline
 $\in T_k$  & 0 & 0 &   & \cellcolor{black!25} 1 & 0 & 0 & 0 & 0 & \cellcolor{black!25} 1 & 0 & 0 & 0 & 1 & 1 & 1 & 1 & 1 \\\hline
 $\in T_\ell$  & 0 & 0 & 0 &   & 0 & 0 & 0 & 0 & \cellcolor{black!25} 1 & 0 & 0 & 0 & 1 & 1 & 1 & 1 & 1 \\\hline\hline
 $x$  & 1 & 0 & \cellcolor{black!25} 1 & \cellcolor{black!25} 1 &   & 0 & 1 & \cellcolor{black!25} 1 & \cellcolor{black!25} 1 & 1 & 0 & 0 & 1 & 1 & 0 & 0 & 0 \\\hline
 ${\overline{x}}$  & 0 & 1 & \cellcolor{black!25} 1 & \cellcolor{black!25} 1 & 0 &   & 0 & 1 & \cellcolor{black!25} 1 & 0 & 1 & 0 & 1 & 0 & 1 & 0 & 0 \\\hline
 $a_x$  & 0 & 0 & \cellcolor{black!25} 1 & \cellcolor{black!25} 1 & 1 & 0 &   & \cellcolor{black!25} 1 & 1 & \cellcolor{black!25} 1 & 0 & 0 & \cellcolor{black!25} 1 & 0 & 0 & 1 & 0 \\\hline
 $a_{\overline{x}}$  & 0 & 0 & \cellcolor{black!25} 1 & \cellcolor{black!25} 1 & 0 & 1 & 0 &   & 1 & 0 & \cellcolor{black!25} 1 & 0 & \cellcolor{black!25} 1 & 0 & 0 & 0 & 1 \\\hline
 $a_{x,\overline{x}}$  & 0 & 0 & 0 & 0 & 0 & 0 & 1 & 1 &   & 0 & 0 & 0 & 1 & 0 & 0 & 0 & 0 \\\hline
 $b_x$  & 0 & 0 & 0 & 0 & 1 & \cellcolor{black!25} 1 & 0 & 0 & 0 &   & \cellcolor{black!25} 1 & 1 & 0 & 0 & 0 & 0 & 0 \\\hline
 $b_{\overline{x}}$  & 0 & 0 & 0 & 0 & 0 & 1 & 0 & 0 & 0 & 0 &   & 1 & 0 & 0 & 0 & 0 & 0 \\\hline
 $b_{x,\overline{x}}$  & 0 & 0 & \cellcolor{black!25} 1 & \cellcolor{black!25} 1 & \cellcolor{black!25} 1 & \cellcolor{black!25} 1 & 0 & 0 & \cellcolor{black!25} 1 & 1 & 1 &   & 1 & 0 & 0 & 0 & 0 \\\hline
 $c_x$  & 0 & 0 & 1 & 1 & 1 & 1 & 0 & 0 & 1 & \cellcolor{black!25} 1 & \cellcolor{black!25} 1 & 1 &   & 0 & 0 & 0 & 0 \\\hline
 $q_x$  & \cellcolor{black!25} 1 & 0 & 1 & 1 & 1 & 0 & \cellcolor{black!25} 1 & \cellcolor{black!25} 1 & \cellcolor{black!25} 1 & \cellcolor{black!25} 1 & 0 & 0 & \cellcolor{black!25} 1 &   & 0 & 0 & 0 \\\hline
 $q_{\overline{x}}$  & 0 & \cellcolor{black!25} 1 & 1 & 1 & 0 & 1 & 0 & \cellcolor{black!25} 1 & \cellcolor{black!25} 1 & 0 & \cellcolor{black!25} 1 & 0 & \cellcolor{black!25} 1 & 0 &   & 0 & 0 \\\hline
 $q_{a_x}$  & 0 & 0 & 1 & 1 & \cellcolor{black!25} 1 & 0 & 1 & \cellcolor{black!25} 1 & \cellcolor{black!25} 1 & \cellcolor{black!25} 1 & 0 & 0 & \cellcolor{black!25} 1 & 0 & 0 &   & 0 \\\hline
 $q_{a_{\overline{x}}}$  & 0 & 0 & 1 & 1 & 0 & \cellcolor{black!25} 1 & 0 & 1 & \cellcolor{black!25} 1 & 0 & \cellcolor{black!25} 1 & 0 & \cellcolor{black!25} 1 & 0 & 0 & 0 &   \\\hline
 \end{tabular}}
\caption{Figure and matrix for the variable gadget in the hardness reduction for the happy version.
Only solid edges that receive a label are depicted.}
\label{fig:hardness happy}
\end{figure}
}
\begin{proof}
Due to~\Cref{hardness trianglefree}, it remains to show the \NP-hardness only for (i)~\pro\str\URGD, (ii)~\happy\str\URGD, and (iii)~each version of~\nstr\DRGD. 
We capture all of these versions in one reduction.

We again reduce from \SAT by slightly adapting the reduction from the proof of~\Cref{hardness trianglefree}.
Let~$\phi$ be an instance of~\SAT with variables~$X$ and clauses~$Y := \{y_1, \dots, y_{|Y|}\}$.
Moreover, let~$D'$ be the instance of our problem that was constructed in the reduction of~\Cref{hardness trianglefree} for instance~$\phi$, and let~$G'$ be the solid graph of~$D'$.
We obtain a directed graph~$D$ as follows:
We initialize~$D$ as~$D'$.
First, we remove the vertices~$f_x, f_{\ol}$, and~$f_{x,\ol}$ for each variable~$x\in X$.
Next, for each clause~$y_i\in Y$, we replace the vertex~$s_i$ ($t_i$) by a solid clique~$S_i$ ($T_i$).
We initialize~$S_i := \{s_i\}$ ($T_i := \{t_i\}$).
Afterwards, we add a twin of~$s_i$ ($t_i$) to~$S_i$ ($T_i$), that is, a vertex with the same in- and out-neighborhood as~$s_i$ ($t_i$).
Note that each vertex of~$S_i$ ($T_i$) has the same solid neighbors as each other vertex of~$S_i$ ($T_i$). 
We continue this process, until~$S_i$ ($T_i$) has an odd size of at least~$\max(9, r_{i})$, where~$r_i$ is the degree of~$s_i$ ($t_i$) in~$G'$.
Finally, we make~$S_i$ ($T_i$) a clique in~$G$.
This completes the construction of~$D$.
\fighappy

Since the maximum degree of~$D'$ was a constant and we only added a constant number of twins for each source and terminal vertex, the maximum degree of~$D$ is only increased by a constant factor.

We now show the correctness of the reduction.
To this end, we show that
\begin{itemize}
\item if~$\phi$ is satisfiable, then there is a happy undirected temporal graph with reachability graph~$D$, and
\item if~$\phi$ is not satisfiable, then there is neither a proper undirected temporal graph with reachability graph~$D$ nor an undirected temporal graph with non-strict reachability graph~$D$.
\end{itemize}
This then implies that the reduction is correct for all stated versions of our problem.

$(\Leftarrow)$
Suppose that~$\phi$ is satisfiable and let~$\lambda'$ be the realization for~$D'$ described in the proof of~\Cref{hardness trianglefree}.
Note the only adjacent edge that shared a label under~$\lambda'$ (i)~contained an edge incident with a vertex~$f_\ell$ for some~$\ell \in \{x,\ol\mid x\in X\}$ or (ii)~are both incident with the same source or incident the same terminal vertex.
Since~$D$ does not contain any of the~$f_\ell$-vertices, the only adjacent edge of~$G$ under~$\lambda'$ are both incident with the same source or incident the same terminal vertex.
Moreover, note that no arc of both~$D$ and~$D'$ was realized via temporal paths that go over any~$f_\ell$-vertex.  
To obtain a happy realization for~$D$, we essentially spread all the labeled edges incident with a source~$s_i$ (a terminal~$t_i$) to an individual vertex of~$S_i$ ($T_i$) each. 
By definition of~$S_i$ ($T_i$), this vertex set contain at least as many vertices each as the number of solid edges incident with~$s_i$ ($t_i$) in~$G'$.
Formally, we initialize a labeling~$\lambda$ of some edges of~$G$ as the restriction of~$\lambda'$ to the common edges of~$G$ and~$G'$.
Afterwards we iterate over all sources and terminals.
For each source~$s_i$, we take an arbitrary maximal matching~$M$ between the solid neighbors of~$s_i$ in~$G'$ and~$S_i$. 
We remove the labels of all edges incident with~$s_i$ from~$\lambda$ (which all had the same label, say~$\alpha_{s_i}$) and add the label~$\alpha_{s_i}$ to all edges of~$M$.
We do the same analogously for all terminal vertices.
This ensures that~$\lambda$ is proper.

Finally, for each~$s_i$, we take four arbitrary and pairwise edge-disjoint Hamiltonian cycles~$C_j$ with~$j\in [1,4]$. 
Such Hamiltonian cycles exist~\cite{L1882}, since~$S_i$ has an odd size of at least~$9$.
We label the edges of these Hamiltonian cycles as follows:
Fix an arbitrary vertex~$s_i^*\in S_i$ of~$S_i$ and label the edges of the Hamiltonian cycles with consecutive labels starting from~$s_i^*$, such that (i)~each label of~$C_1$ is smaller than each label of~$C_2$, (ii)~each label of~$C_2$ is smaller than~$\alpha_{s_i}$, (iii)~each label of~$C_3$ is larger than~$\alpha_{s_i}$, and (iv)~each label of~$C_4$ is larger than each label of~$C_3$.
Note that this can be done in a straightforward way but might result in possibly negative labels.
The latter is no issue, since we can arbitrarily shift all labels at the same time until they are positive integers, while preserving the same reachabilities.
Note that the labeling of the four Hamiltonian ensure that (a)~prior to time~$\alpha_{s_i}$, each vertex of~$S_i$ can reach each other vertex of~$S_i$ and (b)~after time~$\alpha_{s_i}$, each vertex of~$S_i$ can reach each other vertex of~$S_i$.
This in particular implies that all arcs of the clique~$G[S_i]$ are realized.
Moreover, it implies that all vertices of~$S_i$ have the same in- and out-reachabilities, since only edges with label~$\alpha_{s_i}$ are incident with both a vertex of~$S_i$ and any other vertex outside of~$S_i$.

We do the same for the terminal vertices and their respective cliques.

The resulting labeling is then proper and assigns at most one label to each edge of~$G$.
Moreover, $\lambda$ realizes~$D$, since, intuitively we can merge each source clique~$S_i$ into a single vertex~$s_i$ and merge each terminal clique~$T_i$ into a single vertex~$t_i$.
The resulting graph~$D''$ is then the graph obtained for~$D'$ by removing the vertices of~$\{f_x,f_{\ol},f_{x,\ol}\mid x\in X\}$, for which~$\lambda'$ is a realization, when restricted to the solid edges of~$D''$.   
Hence, there is a happy realization for~$D$.

$(\Rightarrow)$
We show this statement via contraposition. 
Let~$\mg =(G = (V,E), \lambda \colon E \to 2^\mathbb{N})$ be an undirected temporal graph such that (i)~$\lambda$ is a proper labeling and~$D$ is the reachability graph of~$\mg$, or (ii)~$D$ is the non-strict reachability graph of~$\mg$. 
We show that~$\phi$ is satisfiable.
Similar to the proof of~\Cref{hardness trianglefree}, we show that for each variable~$x\in X$, at most one of the edges~$\{x,c_x\}$ and~$\{\ol,c_x\}$ receives a label under~$\lambda$.

Let~$x$ be a variable of~$X$.
Recall that~$D$ contains neither of the arcs~$(x,\ol)$ nor~$(\ol,x)$. 
Assume towards a contradiction that $\lambda(\{x,c_x\}) \neq \emptyset$ and $\lambda(\{\ol,c_x\}) \neq \emptyset$.
If~$\min \lambda(\{x,c_x\}) < \max \lambda(\{\ol,c_x\})$ or $\max \lambda(\{x,c_x\}) > \min \lambda(\{\ol,c_x\})$, the strict and the non-strict reachability graph of~$\mg$ contain at least one arc between~$x$ and~$\ol$; a contradiction to the fact that~$\lambda$ realizes~$D$.
Hence, assume that this is not the case.
That is~$\lambda(\{x,c_x\}) = \lambda(\{\ol,c_x\}) \neq  \emptyset$.
Note that this is not possible in the case where~$\lambda$ is a proper labeling.
Hence, we only have to consider the case where the non-strict reachability graph of~$\mg$ is~$D$.
Since~$\lambda(\{x,c_x\}) = \lambda(\{\ol,c_x\}) \neq  \emptyset$, the non-strict reachability graph of~$D$ contains both arcs~$(x,\ol)$ and~$(\ol,x)$; a contradiction.
Thus, not both edges~$\{x,c_x\}$ and~$\{\ol,c_x\}$ receives a label under~$\lambda$.

The remainder of the proof that~$\phi$ is satisfiable is identical to the one in~\Cref{hardness trianglefree} and thus omitted.
\end{proof}

In all the above reductions, the number of vertices and arcs in the constructed graph~$D$ were~$\Oh(|\phi|)$, where~$\phi$ is the size of the respective instance of~\SAT.
Since~\SAT cannot be solved in $2^{o(|\phi|)} \cdot |\phi|^{\Oh(1)}$~time, unless the ETH fails~\cite{T84}, this implies the following.

\begin{corollary}
No version of~\URGD under consideration can be solved in $2^{o(|V| + |A|)} \cdot n^{\Oh(1)}$~time, unless the ETH fails.
\end{corollary}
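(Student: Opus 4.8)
The plan is to chain the polynomial-time reductions from \SAT constructed in the proofs of \Cref{hardness trianglefree,hardness proper} with the known ETH lower bound for \SAT. First I would recall that, via the Sparsification Lemma and Tovey's linear-size reduction from $3$-SAT~\cite{T84}, \SAT admits no algorithm running in time $2^{o(|\phi|)} \cdot |\phi|^{\Oh(1)}$ unless the ETH fails; here it is convenient to note that in any instance $\phi$ of \SAT every variable of the variable set $X$ occurs exactly four times, so the number of clauses and the total number of literal occurrences are both $\Theta(|X|)$, and hence $|\phi| = \Theta(|X|)$.

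The one point that needs checking is that both reductions are \emph{linear}, i.e.\ they map $\phi$ to an instance $D = (V,A)$ of the relevant \URGD-variant with $|V| + |A| \in \Oh(|\phi|)$. For the construction of \Cref{hardness trianglefree} this is immediate: each variable contributes the fixed-size gadget $V_x$ together with a constant number of incident arcs, each clause contributes the two vertices $s_i,t_i$, and each pair of clauses sharing a variable contributes two connector vertices; since every variable lies in exactly four clauses there are only $\Oh(|X|)$ such clause pairs and $\Oh(|X|)$ vertices in total, and the Claim proved there shows that $D$ has constant maximum degree, so that $|A| \in \Oh(|V|) \subseteq \Oh(|\phi|)$. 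The construction of \Cref{hardness proper} is obtained from the previous one by deleting some vertices and replacing every source and terminal by a solid clique whose size is bounded by the (constant) degree of the replaced vertex, so it increases $|V|$ and $|A|$ by at most a constant factor, and the same bound $|V| + |A| \in \Oh(|\phi|)$ holds.

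The conclusion then follows by contraposition: if some version of \URGD under consideration were solvable in time $2^{o(|V|+|A|)} \cdot n^{\Oh(1)}$, then composing the corresponding reduction (which is one of those of \Cref{hardness trianglefree,hardness proper}, these two theorems jointly covering all six undirected variants) with this algorithm would decide \SAT on an instance $\phi$ in time $2^{o(|\phi|)} \cdot |\phi|^{\Oh(1)}$, contradicting the ETH. I do not expect a genuine obstacle here; the only care needed is to make the constant-degree and linear-size bookkeeping of the two reductions explicit, which is essentially already carried out inside the respective proofs.
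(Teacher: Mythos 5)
Your proposal is correct and follows essentially the same route as the paper: observe that the reductions from \SAT in Theorems~\ref{hardness trianglefree} and~\ref{hardness proper} produce instances with $|V|+|A|\in\Oh(|\phi|)$, and combine this with the ETH lower bound for \SAT~\cite{T84}. The paper states this in two sentences without the explicit degree/size bookkeeping you spell out, but the argument is identical.
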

\fi

Hence, the running time of~\Cref{exact algos} can presumably not be improved significantly.

\subsection{Parameterized hardness}
We now strengthen our hardness result for~\any\str\URGD and~\simp\str\URGD, which will highly motivate the analysis of parameterized algorithms for the parameter `feedback edge set number' of the solid graph, which we consider in~\Cref{sec:fes}.

\begin{theorem}\label{param hardness}
\any\str\URGD and~\simp\str\URGD are W[2]-hard when parameterized by the feedback vertex set number and the treedepth of the solid graph.
\end{theorem}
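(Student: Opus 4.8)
The plan is to reduce from \textsc{Set Cover}, or more precisely from \textsc{Dominating Set} or \textsc{Hitting Set}, which is W[2]-hard parameterized by solution size, and to encode an instance so that the solid graph has a small feedback vertex set and small treedepth while the ``interaction'' happens through a few high-degree hub vertices. Recall from Table~\ref{table for complexity} and \Cref{hardness trianglefree} that \any\str\URGD and \simp\str\URGD are the ``interesting'' strict cases, and that \Cref{lem:tree-atmost2} tells us that in these models a non-triangle solid edge receives at most two labels in a minimal realization. Concretely, I would build a solid graph consisting of a central ``selection'' component of treedepth $\Oh(\log k)$ or even a star-like structure whose feedback vertex set has size $\Oh(k)$, where $k$ is the parameter, together with long \emph{pendant paths} (which are trees, hence add nothing to the feedback vertex set and only $\Oh(\log)$ or constant to the treedepth) that carry the ``element gadgets''. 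The key trick is that pendant paths of solid edges force, via \Cref{lem:tree-nogap} and the structure of dense paths, specific timing windows; the dashed arcs between endpoints of different pendant paths then express the constraint ``set $S$ must be chosen to cover element $e$'', realized via a temporal path passing through a hub vertex exactly when an appropriate label is placed on a hub-incident edge.

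\textbf{Key steps in order.} First I would fix the source problem: \textsc{Set Cover} with universe $U$, family $\mathcal{F}$, and budget $k$, which is W[2]-hard with respect to $k$. Second, I would design a \emph{selection gadget}: $k$ ``slot'' vertices $p_1,\dots,p_k$, each attached by a pendant path to a central hub $h$; the label chosen on the edge $\{h,p_i\}$ (within a window of size $|\mathcal{F}|$, enforced by the pendant path length and by forbidding certain reachabilities) encodes which set of $\mathcal{F}$ is placed in slot $i$. Third, for each element $e\in U$ I would attach an \emph{element gadget}: a vertex $z_e$ joined to $h$ (or to a second hub $h'$) by a pendant path, with dashed arcs from $z_e$ to a ``sink'' vertex $w$ such that the arc $(z_e,w)$ is realizable if and only if there is a temporal path from $z_e$ through $h$ and out via some $\{h,p_i\}$ whose label lies in the sub-window corresponding to a set $S\ni e$. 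Fourth, I would verify that all solid edges other than the $\Oh(1)$ hub-incident ones lie on trees (pendant paths), so a feedback vertex set of size $\Oh(k)$ — namely the hubs plus at most one vertex per ``choice'' cycle, if any cycles are used to force the windows — suffices, and the treedepth is $\Oh(\log(|U|+|\mathcal{F}|))$, dominated by balancing the attachment tree of pendant paths to the hubs. Fifth, I would show equivalence: a set cover of size $k$ yields a realization by assigning to each $\{h,p_i\}$ the label of the chosen $i$-th set and propagating labels along pendant paths consistently (using \Cref{cor:prelabeled_tree}-style LP feasibility on each tree); conversely, a realization forces, by \Cref{lem:special}, \Cref{obs:nothreedouble}, and \Cref{lem:bundledmiddle}, that each $\{h,p_i\}$ carries essentially one ``chosen set'' label, and realizability of every arc $(z_e,w)$ forces element $e$ to be covered, so the $k$ chosen labels name a cover of size $\le k$.

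\textbf{Main obstacle.} The delicate part is making the \emph{window-selection} constraint rigid using only strict temporal paths and without blowing up the feedback vertex set: I must ensure that the label on $\{h,p_i\}$ is genuinely confined to a window of $|\mathcal{F}|$ consecutive values \emph{and} that distinct slot edges $\{h,p_i\}$ can share labels freely enough to realize all the ``not-covered-by-this-slot'' non-arcs, while a single hub $h$ is traversed by many temporal paths simultaneously — this is exactly the subtlety that \Cref{meaning of bundled} and \Cref{meaning of separated} address for bridges, so I expect to need a careful argument (or a small cycle gadget around each $p_i$) to pin down the timing. A secondary obstacle is keeping the treedepth genuinely small: naive pendant attachment to a single hub already has treedepth $2$ on the ``core,'' but the pendant paths themselves are long, so I would route element gadgets through a \emph{balanced binary tree} of auxiliary vertices rooted at the hub, paying $\Oh(\log(|U|+|\mathcal{F}|))$ in treedepth and $0$ in feedback vertex set, and I must double-check that no long induced path through this tree sneaks in a larger treedepth. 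Once the gadget timing is locked in, correctness in both directions should follow routinely from the structural lemmas on bridges (\Cref{lem:split-nonspecial}, \Cref{lem:special}) and the tree-LP characterization (\Cref{lem:treeLP}).
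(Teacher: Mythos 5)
Your choice of \textsc{Set Cover} as the source problem and the overall shape — $k$ ``slot'' vertices hanging off a small hub structure, with element gadgets checking coverage via reachability of a distinguished sink — is exactly what the paper does, so you are pointed in the right direction. However, there are two genuine problems, one fatal and one serious.

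The fatal problem is your treedepth bound. You write that the pendant paths and the balanced attachment tree add only $\Oh(\log(|U|+|\mathcal{F}|))$ to the treedepth and you present this as acceptable because it costs nothing in feedback vertex set. But a parameterized reduction must bound the \emph{target} parameter by a function of the \emph{source} parameter $k$ alone; $\log(|U|+|\mathcal{F}|)$ is unbounded in $k$, so this does not establish W[2]-hardness for treedepth. Long pendant paths are precisely what you cannot afford here: a path of length $\ell$ alone has treedepth $\Theta(\log \ell)$. The paper sidesteps this entirely by ensuring that the deletion set $X$ (of size $6k+1$) leaves behind \emph{trees of depth at most $2$}, namely stars $w_i^{F,u}$ around each element $u$ with leaves $v_i^{F,u}$ and $q_i^{F,u}$; this gives treedepth at most $|X|+2 = \Oh(k)$ directly. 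Any construction that routes information through paths whose length grows with the instance will break the treedepth side of the statement.

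The second issue, which you flag yourself as the ``main obstacle,'' is the window mechanism. Pinning a label to a window of size $|\mathcal{F}|$ using pendant-path timing is awkward precisely because, as you note, many temporal paths traverse the hub $h$ concurrently and the bundling/separation machinery (\Cref{meaning of bundled}, \Cref{meaning of separated}) concerns adjacent bridges, not far-apart paths. The paper uses a different, window-free enforcement: for each slot $i$ it introduces a vertex $\bot_i$ adjacent only to the $v_i^{F,u}$ vertices, plus dashed arcs $(v_i^{F_x,u_x}, v_i^{F_y,u_y})$ for $x<y$. By \Cref{incident same label}, whenever the edge $\{w_i^{F,u},c_i\}$ is labeled, its label is forced to coincide with that of $\{v_i^{F,u},\bot_i\}$, and the dashed arcs force those labels to be strictly increasing in the hyperedge index. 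Consequently, if edges to $c_i$ from \emph{two} different hyperedges both receive labels, a strict temporal path between the corresponding $w$-vertices is created, contradicting a non-arc. This ``at most one hyperedge per slot'' exclusivity replaces your window-selection idea entirely and avoids any timing-window argument. If you want to repair your approach, drop the window encoding in favor of such an exclusivity gadget, and replace long pendant paths by constant-depth pendant trees so that treedepth stays $\Oh(k)$.
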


\begin{proof}
We reduce from \SC which is W[2]-hard when parameterized by~$k$~\cite{DF13}.

\prob{\SC}{A universe~$U$, a collection~$\mf$ of subsets of~$U$ (called hyperedges), and an integer~$k$.}{Is there a subset~$S\subseteq \mf$ of size at most~$k$, such that each element~$u\in U$ is contained in at least one hyperedge of~$S$?}

Let~$I:=(U,\mf,k)$ be an instance of~\SC with~$\mf = \{F_1, \dots, F_r\}$.
Assume without loss of generality that each element of~$U$ is contained in at least one hyperedge and that~$\mf$ has size at least~$k$, as otherwise~$I$ could be solved trivially.
We obtain a directed graph~$D=(V,A)$ with solid graph~$G=(V,E)$ as follows:
The graphs contain the vertex~$\top$, each element~$u\in U$ as a vertex, and for each~$i\in [1,k]$ the vertices of~$\{a_i,a_i',b_i,b_i',c_i,\bot_i\}$.
Additionally, for each~$i\in [1,k]$, each~$F\in \mf$, and each~$u\in F$, the graphs contain the vertices~$w_i^{F,u}$, $v_i^{F,u}$, and~$q_i^{F,u}$.

Next, we describe the solid edges of~$G$, that is, the bidirectional arcs of~$D$.
\nnew{See~\Cref{tab:whardness} for the solid edges and dashed arcs of the main connection gadget of the instance.}

\begin{table}
\caption{A part of the adjacency matrix of the W[2]-hardness reduction.}
\label{tab:whardness}
\begin{center}
\scalebox{.75}{
\begin{tabular}{c|||c|c|c|c|c||c|c|c|c|c|c}
   & $u$  & $\top$  & $v_i^{F,u}$  & $w_i^{F,u}$  & $q_i^{F,u}$  & $a_i$  & $a'_i$  & $b_i$  & $b'_i$  & $c_i$  & $\bot_i$  \\\hline\hline\hline
 $u$  &   & \cellcolor{black!25} 1 & \cellcolor{black!25} 1 & 1 & 0 & \cellcolor{black!25} 1 & 0 & \cellcolor{black!25} 1 & \cellcolor{black!25} 1 & \cellcolor{black!25} 1 & 0 \\\hline
 $\top$  & 0 &   & 0 & 0 & 1 & 0 & 0 & 0 & \cellcolor{black!25} 1 & 1 & 0 \\\hline
 $v_i^{F,u}$  & 0 & 0 &   & 1 & 0 & \cellcolor{black!25} 1 & 0 & 0 & 0 & 0 & 1 \\\hline
 $w_i^{F,u}$  & 1 & \cellcolor{black!25} 1 & 1 &   & 1 & 1 & 0 & 1 & \cellcolor{black!25} 1 & 1 & 0 \\\hline
 $q_i^{F,u}$  & 0 & 1 & \cellcolor{black!25} 1 & 1 &   & \cellcolor{black!25} 1 & 0 & \cellcolor{black!25} 1 & \cellcolor{black!25} 1 & \cellcolor{black!25} 1 & 0 \\\hline\hline
 $a_i$  & 0 & 0 & 0 & 1 & 0 &   & 1 & 0 & 0 & 0 & 0 \\\hline
 $a'_i$  & 0 & \cellcolor{black!25} 1 & 0 & \cellcolor{black!25} 1 & 0 & 1 &   & 0 & \cellcolor{black!25} 1 & 1 & 0 \\\hline
 $b_i$  & 0 & \cellcolor{black!25} 1 & \cellcolor{black!25} 1 & 1 & 0 & \cellcolor{black!25} 1 & 0 &   & 1 & \cellcolor{black!25} 1 & 0 \\\hline
 $b'_i$  & 0 & 0 & 0 & 0 & 0 & 0 & 0 & 1 &   & 1 & 0 \\\hline
 $c_i$  & 0 & 1 & 0 & 1 & 0 & \cellcolor{black!25} 1 & 1 & 0 & 1 &   & 0 \\\hline
 $\bot_i$  & 0 & 0 & 1 & 0 & 0 & 0 & 0 & 0 & 0 & 0 &
\end{tabular}
}
\end{center}
\end{table}

Let~$i\in [1,k]$.
The graph $G$ contains the edges~$\{a_i,a_i'\}$, $\{a_i',c_i\}$, $\{c_i,b_i'\}$, $\{b_i',b_i\}$, and $\{c_i,\top\}$.
Moreover, for each~$F\in \mf$ and each~$u\in F$, $G$ contains the edges~$\{u, w_i^{F,u}\}$, $\{w_i^{F,u}, a_i\}$, $\{w_i^{F,u}, b_i\}$, $\{w_i^{F,u}, c_i\}$, $\{w_i^{F,u}, v_i^{F,u}\}$, $\{w_i^{F,u}, q_i^{F,u}\}$, $\{v_i^{F,u},\bot_i\}$, and~$\{q_i^{F,u}, \top\}$.

Finally, we describe the dashed arcs of~$D$.
For each~$i\in [1,k]$, $D$ contains the arcs~$(a'_i,b_i)$, $(b_i,a_i)$, $(b_i,c_i)$, $(c_i,a_i)$, $(a'_i,\top)$, $(b_i,\top)$, and~$(\top,b_i')$.
Let~$u\in U$.
Then, $D$ contains the arc~$(u,\top)$.
Moreover, for each~$i\in [1,k]$, $D$ contains the arcs~$(u,a_i)$, $(u,b_i)$, $(u,b_i')$, and~$(u,c_i)$.
Additionally, for each hyperedge~$F\in \mf$ with~$u\in F$, $D$ contains the arcs~$(u, v_i^{F,u})$, $(v_i^{F,u},a_i)$, $(b_i,v_i^{F,u})$, $(w_i^{F,u},\top)$, $(w_i^{F,u},b_i')$, $(a_i',w_i^{F,u})$, $(q_i^{F,u},v_i^{F,u})$, $(q_i^{F,u},a_i)$, $(q_i^{F,u},b_i)$, $(q_i^{F,u},b_i')$, and $(q_i^{F,u},c_i)$.
The only other dashed arcs in~$D$ are between~$v_i^{F,u}$-vertices.
Let~$F_x,F_y\in \mf$ with $x<y$ and let~$u_x\in F_x$ and~$u_y\in F_y$.
For each~$i\in [1,k]$, $D$ contains the arc~$(v_i^{F_x,u_x},v_i^{F_y,u_y})$.

This completes the construction of~$D$.
First, we show the parameter bounds.

\iflong
\begin{claim}
$G$ has a feedback vertex set of size~$\Oh(k)$ and~$G$ has treedepth~$\Oh(k)$.
\end{claim}
\else
\begin{claim}
$\{\top\} \cup \{a_i,a_i',b_i,b_i',c_i,\bot_i \mid i\in [1,k]\}$ has a feedback vertex set of size~$\Oh(k)$ and~$G$ has treedepth~$\Oh(k)$.
\end{claim}
\fi
\iflong
\begin{claimproof}
Let~$X := \{\top\} \cup \{a_i,a_i',b_i,b_i',c_i,\bot_i \mid i\in [1,k]\}$.
We show that~$X$ is a feedback vertex set of~$G$ and more further, that each tree in~$G-X$ has depth at most~2.
Since~$|X| = 6k + 1$, this then proves the statement.
Let~$u\in U$.
The solid neighborhood of~$u$ in~$G$ is~$R = \{w_i^{F,u}\mid i\in [1,k], F\in \mf, u\in F\}$, which is an independent set in~$G$.
Now consider the other solid neighbors of the vertices of~$R$ in~$V\setminus X$.
Let~$w_i^{F,u}\in R$.
Then, the only two solid neighbors of~$w_i^{F,u}$ in~$V\setminus X$ besides~$u$ are~$v_i^{F,u}$ and~$q_i^{F,u}$, since~$a_i,b_i,$ and~$c_i$ are in~$X$.
Furthermore, $w_i^{F,u}$ is the only solid neighbor of both~$v_i^{F,u}$ and~$q_i^{F,u}$ in~$V\setminus X$. 
This implies that~$G[V_u]$ is a tree of depth two (rooted in~$u$), where~$V_u := \{w_i^{F,u},v_i^{F,u},q_i^{F,u} \mid i\in [1,k], F\in\mf, u\in F\} \cup \{u\}$.
Moreover, each vertex of~$V\setminus X$ is contained in exactly one set~$V_u$, since we assumed that each hyperedge is non-empty.

This implies that each connected component in~$G-X$ is a tree of depth at  most~$2$, which implies the statement.
\end{claimproof}
\fi

We show that~$I$ is a yes-instance of~\SC if and only if~$D$ is realizable via strict temporal paths.
More precisely, we show that if~$I$ is a yes-instance of~\SC, then there is a simple undirected temporal graph with strict reachability graph~$D$.
This then implies the correctness of the reduction for both~\any\str\URGD and~\simp\str\URGD.

\iflong
\else
We defer this proof to the full version and only provide an informal idea for the correctness.
Intuitively, in each realization for~$D$, for each~$i\in [1,k]$, there can be at most one hyperedge~$F\in \mf$ for which edges between~$c_i$ and vertices of~$W_F := \{w_i^{F,u}\mid u \in F\}$ can receive labels.
This can be seen as follows:
For each~$w_i^{F',u'}\in V$, (i)~there is no dashed arc between~$w_i^{F',u'}$ and~$\bot_i$ and (ii)~there is no dashed arc between~$v_i^{F',u'}$ and~$c_i$.
Hence, \Cref{incident same label} implies that, if~$\{w_i^{F',u'},c_i\}$ receives at least one label, then there is some~$\alpha_{F'}\in \mathbb{N}$, such that the edges~$\{w_i^{F',u'},c_i\}$ and~$\{v_i^{F',u'},\bot_i\}$ receive the label set~$\{\alpha_{F'}\}$ under~$\lambda$.
Based on the dashed arcs between the~$v_i^{F',u'}$-vertices, the label~$\alpha_{F'}$ and the label~$\alpha_{F''}$ are distinct for distinct hyperedges~$F'$ and~$F''$.
This then implies that there can be at most one hyperedge~$F\in \mf$ for which edges between~$c_i$ and vertices of~$W_F$ can receive labels, as otherwise, a strict temporal path between distinct~$w_i^{F',u'}$-vertices would be realized. 

The edges with non-empty label set between~$c_i$ and~$w_i^{F,u}$-vertices thus resembles a selection of at most one hyperedge of~$\mf$ for each~$i\in [1,k]$. 
Since the only dense paths from a vertex~$u\in U$ to~$\top$ are of the form~$(u,w_i^{F,u},c_i,\top)$ for~$i\in[1,k]$ and~$F\in \mf$ with~$u\in F$, this selection of the at most~$k$ hyperedges thus encodes a set cover, since~$\lambda$ realizes the arc~$(u,\top)$ over one such dense path.
\fi
\iflong

$(\Leftarrow)$
Let~$\mg = (G=(V,E),\lambda\colon E \to 2^{\mathbb{N}})$ be a undirected temporal graph with strict reachability graph equals to~$D$.
We show that there is a hitting set of size at most~$k$ for~$U$.

To this end, we first analyze for each~$i\in [1,k]$ the structure of the labeling~$\lambda$ with respect to the vertices~$\{a_i,a_i',b_i,b_i',c_i,\bot_i\} \cup \{v_i^{F,u},w_i^{F,u},q_i^{F,u}\mid F\in \mf, u\in F\}$.
Let~$i\in [1,k]$.
For a hyperedge~$F\in \mf$, we denote with~$F(i)$ the set of vertices~$\{w_i^{F,u}\mid u\in F\}$.
We show that there is at most one hyperedge~$F\in \mf$ for which edges between~$c_i$ and~$F(i)$ receive a non-empty set of labels under~$\lambda$.  

Assume towards a contradiction that there are two hyperedges~$F_x\in \mf$ and~$F_y\in \mf$ with~$x<y$, such that there is at least one edge~$e_x$ between~$F_x(i)$ and~$c_i$ receives a non-empty set of labels under~$\lambda$, and there is at least one edge~$e_y$ between~$F_y(i)$ and~$c_i$ receives a non-empty set of labels under~$\lambda$.
Let~$w_i^{F_x,u_x}$ ($w_i^{F_y,u_y}$) be the other endpoint of~$e_x$ ($e_y$) besides~$c_i$.
Moreover, let~$z\in \{x,y\}$.
By construction~$v_i^{F_z,u_z}$ has only two solid neighbors in~$G$, namely~$w_i^{F_x,u_x}$ and~$\bot_i$.
This implies that both solid edges incident with~$v_i^{F_z,u_z}$ each receive at least one label under~$\lambda$ (see~\Cref{forced label}).
Between these two neighbors, there is no dashed arc in~$D$ and no solid edge in~$G$.
Due to~\Cref{incident same label}, this implies that there is some~$\alpha_z\in \mathbb{N}$, such that~$\lambda(\{\bot_i,v_i^{F_z,u_z}\}) = \lambda(\{v_i^{F_z,u_z},w_i^{F_z,u_z}\}) = \{\alpha_z\}$.
Similarly, since~$e_z$ receives at least one label under~$\lambda$ and there is no arc between~$c_i$ and~$v_i^{F_z,u_z}$, $\lambda(e_z) = \lambda(\{v_i^{F_z,u_z},w_i^{F_z,u_z}\}) = \{\alpha_z\}$.
We now show that~$\alpha_x < \alpha_y$.
This is due to the fact that~$(v_i^{F_x,u_x},\bot_i,v_i^{F_y,u_y})$ is the only dense~$(v_i^{F_x,u_x},v_i^{F_y,u_y})$-path in~$D$ and~$(v_i^{F_x,u_x},v_i^{F_y,u_y})$ is an arc of~$D$.
Concluding, $\lambda(e_x) = \{\alpha_x\}$ and~$\lambda(e_y) = \{\alpha_y\}$, which implies that there is a temporal~$(w_i^{F_x,u_x},w_i^{F_y,u_y})$-path under~$\lambda$.
This contradicts the fact that~$\lambda$ realizes~$D$, since~$(w_i^{F_x,u_x},w_i^{F_y,u_y})$ is not an arc of~$D$.
Consequently, there is at most one hyperedge~$F\in \mf$ for which edges between~$c_i$ and~$F(i)$ receive a non-empty set of labels under~$\lambda$. 
 
For each~$i\in [1,k]$, let~$F_i'$ denote the unique hyperedge~$F\in \mf$ for which edges between~$c_i$ and~$F(i)$ receive a non-empty set of labels under~$\lambda$, if such a hyperedge~$F$ exists.
We set~$S:= \{F_i'\mid i\in [1,k], F_i'~\text{exists}\}$ and show that~$S$ is a set cover for~$U$.
Let~$u\in U$.
We show that there is a hyperedge in~$S$ that contains~$u$.
Since~$\lambda$ realizes~$D$ and~$(u,\top)$ is an arc of~$D$, there is a temporal~$(u,\top)$-path~$P$ under~$\lambda$.
Recall that~$P$ is thus a dense path in~$D$.
Since~$\{u,\top\}$ is not a solid edge of~$G$, $P$ has length at least~$3$.
Let~$x$ be the first internal vertex of~$P$, which is a solid neighbor of~$u$ in~$G$.
By definition, this implies that~$x= w_i^{F,u}$ for some~$i\in [1,k]$ and some~$F\in \mf$ with~$u\in F$.
Consider the solid neighbors of~$x$ in~$G$ besides~$u$.
These are the vertices~$a_i,b_i,c_i,v_i^{F,u},$ and~$q_i^{F,u}$.
Since~$P$ is a dense~$(u,\top)$-path and~$D$ contains none of the arcs~$(a_i,\top)$, $(v_i^{F,u},\top)$, or~$(u,v_i^{F,u})$, $P$ visits none of the vertices~$a_i,v_i^{F,u},$ or~$q_i^{F,u}$.
Hence, $P$ continues from~$w_i^{F,u}$ to either~$b_i$ or~$c_i$.
Consider the solid neighbors of~$b_i$ besides~$w_i^{F,u}$.
These are the vertices of~$R:=\{b_i'\} \cup (\{w_i^{F',u'}\mid F'\in \mf, u'\in F'\} \setminus \{w_i^{F,u}\})$.
Since~$D$ does not contain the arc~$(b_i',\top)$ and contains none of the arcs~$(w_i^{F,u},r)$ with~$r\in R\setminus \{b_i'\}$, the dense path~$P$ visits no solid neighbor of~$b_i$ besides~$w_i^{F,u}$.
This also implies that~$P$ does not visit~$b_i$.
Thus, the only solid neighbor of~$w_i^{F,u}$ besides~$u$ in~$P$ is~$c_i$.
Consequently, $P$ traverses the edge~$e=\{w_i^{F,u},c_i\}$.
Thus, $e$ receives at least one label under~$\lambda$, which implies that~$F = F_i' \in S$, that is, at least one hyperedge of~$S$ contains~$u$.
This implies that~$S$ is a set cover of~$U$.  

$(\Rightarrow)$
Let~$S$ be a set cover of size at most~$k$ for~$U$.
Without loss of generality assume that~$S$ has size exactly~$k$ and let the hyperedges in~$S$ be denoted as~$F_i'$ for~$i\in [1,k]$.
We show that there is a labeling~$\lambda \colon E\to \mathbb{N} \cup \emptyset$ that realizes~$D$.
This then implies that there is a simple temporal graph with strict reachability graph equals to~$D$.

For each hyperedge~$F_x\in\mf$, let~$\alpha_{F_x} := x+2$.
Moreover, recall that~$r :=|\mf|$.
Hence, $\alpha_x\in [3,r+2]$.
We define~$\lambda$ as follows:
\begin{itemize}
\item For each~$u\in U$, we assign label~$1$ to all edges incident with~$u$.
\item We assign label~$r+3$ to all edges incident with~$\top$.
\item For each~$i\in [1,k]$, we set~$\lambda(\{a_i,a_i'\}) := \lambda(\{b_i,b_i'\}) := r+4$, $\lambda(\{a_i',c_i\}) := 2$, and~$\lambda(\{c_i,b_i'\}) := r+5$.
\item For each~$F\in \mf$, each~$u\in U$, and each~$i\in [1,k]$, we set~$\lambda(\{w_i^{F,u},v_i^{F,u}\}) := \lambda(\{v_i^{F,u},\bot_i\}) := \alpha_F$, $\lambda(\{w_i^{F,u},q_i^{F,u}\}) := 1$, $\lambda(\{w_i^{F,u},a_i\}) := r+5$, and~$\lambda(\{w_i^{F,u},b_i\}) := 2$. 
\end{itemize}
Finally, for each~$i\in [1,k]$ and each~$u\in F_i'$ (that is, the~$i$th hyperedge of~$S$), we set~$\lambda(\{w_i^{F,u},c_i\}) := \alpha_{F_i'}$.
All other edges of~$E$ receive the empty set under~$\lambda$.

We now show that~$\lambda$ realizes~$D$.
To this end, we first show that all solid edges are realized.
Note that only some solid edges between~$c_i$ and vertices of~$\{w_i^{F,u}\mid i\in [1,k], F\in \mf, u\in F\}$ did not receive a label under~$\lambda$.
Let~$x=w_i^{F,u}$.
Then, there is the temporal paths~$(x,b_i,b_i',c_i)$ with label sequence~$(2,r+4,r+5)$ under~$\lambda$ that realizes the arc~$(x,c_i)$.
Similarly, there is the temporal paths~$(c_i,a_i',a_i,x)$ with label sequence~$(2,r+4,r+5)$ under~$\lambda$ that realizes the arc~$(c_i,x)$.
Consequently, each solid edge is realized.

Next, we show that exactly the specified dashed arcs are realized.
To this end, we first show that all arcs of~$D$ are realized.

For each~$i\in [1,k]$, the arcs~$(a'_i,b_i)$, $(b_i,c_i)$, $(c_i,a_i)$, $(a'_i,\top)$, and~$(\top,b_i')$ are realized via the unique path between their endpoints in~$G[\{\top,a_i,a_i',b_i,b_i',c_i\}]$.
The arc~$(b_i,a_i)$ is realized via the path~$(b_i,w_i^{F,u},a_i)$ for an arbitrary vertex~$w_i^{F,u}\in V$, and the arc~$(b_i,\top)$ is realized via the path~$(b_i,w_i^{F_i',u},c_i,\top)$ for some arbitrary~$u\in F_i'$.
Note that such a vertex exists, since each hyperedge is non-empty.

Let~$u\in U$.
Since~$S$ is a set cover, there is some~$j\in [1,k]$, such that~$u\in F_j'$.
Hence, the arc~$(u,\top)$ is realized via the path~$(u,w_j^{F_j',u},c_j,\top)$.
Let~$F\in \mf$ be a hyperedge that contains~$u$.
For each~$i\in [1,k]$, the arc~$(u,a_i)$ is realized via the path~$(u,w_i^{F,u},a_i)$, and the arcs $(u,b_i)$, $(u,b_i')$, and~$(u,c_i)$ are realized via subpaths of the path~$(u,w_i^{F,u},b_i,b_i',c_i)$.
Additionally, the arcs~$(u, v_i^{F,u})$, $(v_i^{F,u},a_i)$, $(b_i,v_i^{F,u})$, $(w_i^{F,u},\top)$, $(w_i^{F,u},b_i')$, $(a_i',w_i^{F,u})$, $(q_i^{F,u},v_i^{F,u})$, $(q_i^{F,u},a_i)$, $(q_i^{F,u},b_i)$, and~$(q_i^{F,u},b_i')$ are realized via the unique path between their endpoints in~$G[\{u,v_i^{F,u},w_i^{F,u},q_i^{F,u},\top,a_i,a_i',b_i,b_i'\}]$, and the arc~$(q_i^{F,u},c_i)$ is realized via the path~$(q_i^{F,u},w_i^{F,u},b_i,b'_i,c_i)$.
Now consider the arcs in~$D$ between~$v_i$-vertices.
Let~$F_x,F_y\in \mf$ with $x<y$ and let~$u_x\in F_x$ and~$u_y\in F_y$.
For each~$i\in [1,k]$, the arc~$(v_i^{F_x,u_x},v_i^{F_y,u_y})$ is realized via the path~$(v_i^{F_x,u_x},\bot_i,v_i^{F_y,u_y})$ with label sequence~$(\alpha_x,\alpha_y)=(x+2,y+2)$.

Finally, we show that no non-arc of~$D$ is realized.
To this end, we analyze the structure of temporal paths under~$\lambda$.
Observe that for each vertex~$x$ in~$R:=\{\top\} \cup U \cup \{v_i^{F,u}\mid i\in [1,k], F\in \mf, u\in F\}$, $\lambda$ assigns the same label to all edges incident with~$x$.
This implies that no strict temporal path under~$\lambda$ has a vertex of~$R$ as an internal vertex.
Moreover, $(w_i^{F,u},q_i^{F,u},\top)$ is the only temporal path of length at least two has~$q_i^{F,u}$ as an internal vertex, since~$\{w_i^{F,u},q_i^{F,u}\}$ receives label 1 and~$\top\in R$.
Since~$(w_i^{F,u},\top)\in A$, these paths do not realize non-arcs of~$D$.
In particular, no vertex besides~$w_i^{F,u}$ and~$\top$ can reach~$q_i^{F,u}$.
Similarly, the only temporal paths that contain~$\bot_i$ as an internal vertex are the paths of the form~$(v_i^{F_x,u_x},\bot_i,v_i^{F_y,u_y})$ with~$x < y$, since each solid neighbor of~$\bot_i$ is from~$R$.
Since~$(v_i^{F_x,u_x},v_i^{F_y,u_y})\in A$, these paths do not realize non-arcs of~$D$.
Let~$R' := R \cup \{q_i^{F,u}\mid i\in [1,k],F\in \mf,u\in F\} \cup \{\bot_i\mid i\in [1,k]\}$.

Next, we show that for each~$s = w_i^{F,u}\in V$, there is no temporal path from~$s$ to any vertex~$t\in V \setminus \{w_i^{F,u},v_i^{F,u},q_i^{F,u}, a_i,b_i,a_i',b_i',c_i,\top,u\}$.
Assume towards a contradiction that there is such a temporal path~$P$. 
Note that~$P$ has length at least~$2$, since there is no solid edge (and thus no labeled edge) between~$s$ and~$t$.
By the above argumentation, $P$ cannot contain any of the vertices of~$R'$ as internal vertices. 
This implies that the~$P$ has as first internal vertex~$s'$ one of the solid neighbors of~$s$ that is not in~$R'$.
That is, $s' \in \{a_i,b_i,c_i\}$.
For~$s' = a_i$, the edge~$\{s,a_i\}$ has label~$r+5$, which is the largest globally assigned label.
Hence, $P$ cannot go over~$a_i$.
For~$s' = b_i$, the edge~$\{s,b_i\}$ has label~$2$, which is the largest label incident with~$b_i$ besides the one of edge~$\{b_i,b_i'\}$.
Since~$b_i'$ has only one other solid neighbor (namely~$c_i$), $P$ would also have to traverse the edge~$\{b_i',c_i\}$ at time~$r+5$, which is the largest globally assigned label.
Hence, $P$ cannot go over~$b_i$.
For~$s' = c_i$, the edge~$\{s,c_i\}$ has either no label or label~$\alpha_F$ if~$F_i' = F$.
The only two edges incident with~$c_i$ with a larger label are the edges towards~$\top$ (which is in~$R'$) and towards~$b_i'$ (for which the edge has label~$r+5$).
In all cases, $P$ cannot reach~$t$; a contradiction.

Similarly, only vertices of~$Z:= \{w_i^{F,u},v_i^{F,u},q_i^{F,u}, a_i,b_i,a_i',b_i',c_i,\top,u\}$ can reach~$w_i^{F,u}$ under~$\lambda$.
This in particular implies that each temporal path that contains vertex~$w_i^{F_u}$ contains only vertices of~$Z$.
Since~$G[Z]$ is a graph on ten vertices, it can easily be verified that no non-arc between the vertices of~$Z$ is realized in~$G[Z]$.

Hence, if there would be a temporal path under~$\lambda$ that realizes a non-arc, it has to be in~$G':= G- \{w_i^{F,u}\mid i\in [1,k], F\in \mf, u\in U\}$.
Note that the only connected components of~$G'$ are (i)~components of size~$1$ containing a single vertex of~$U$, (ii)~stars with center~$\bot_i$ for some~$i\in [1,k]$ with all its solid neighbors (that is, the vertices of~$\{v_i^{F,u}\mid i\in [1,k], F\in \mf, u\in U\}$) and (iii)~the component~$G'':=[\{\top\} \cup \{a_i,a'_i,b_i,b'_i,c_i\mid i\in [1,k]\} \cup \{q_i^{F,u}\mid i\in [1,k], F\in \mf, u\in U\}]$.
In the first type of components, there are trivially no temporal paths realizing non-arcs of~$D$.
In the second type of components, there are also no temporal paths realizing non-arcs of~$D$, as we already discussed the temporal paths that go over~$\bot_i$.
Thus, consider the last component~$G''$.
Recall that~$\top$ is in~$R$ which implies that no temporal path can pass through~$\top$.
Hence, if there would be a temporal path under~$\lambda$ that realizes a non-arc, it has to be in~$G[C \cup \{\top\}]$ for some connected component~$C$ of~$G''-\{\top\}$.
Each such resulting graph has size at most~$6$ and it can be verified easily that no included temporal path under~$\lambda$ realizes a non-arc.

Concluding, $\lambda$ realizes~$D$.
\fi
\end{proof}

\section{Parameterizing by the Feedback Edge Set Number}\label{sec:fes}

In this section we generalize our algorithm on tree instances of~\URGD to tree-like graphs.
A feedback edge set of a graph~$G$ is a set~$F$ of edges of~$G$, such that~$G-F$ is acyclic.
We denote by~$\fes$ the feedback edge set number of the solid graph~$G$ of~$D$, that is, the size of the smallest feedback edge set of~$G$.

\begin{theorem}\label{fes algo}
Each version of~\URGD can be solved in~$\fes^{\Oh(\fes^2)} \cdot n^{\Oh(1)}$~time.
\end{theorem}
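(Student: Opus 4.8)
The plan is to reduce a general instance $D$ with solid graph $G$ to a bounded-size "core" using the structural machinery of Sections~\ref{sec:bridge} and~\ref{sec:tree}, then enumerate labelings on the core and extend them. First I would observe that in a graph of feedback edge set number $\fes$, after exhaustively removing pendant vertices and suppressing degree-$2$ vertices, what remains is a multigraph on $\Oh(\fes)$ vertices and $\Oh(\fes)$ edges; hence $G$ itself decomposes as a set $X^*$ of $\Oh(\fes)$ "branching" vertices together with a collection of edge-disjoint trees, each of which attaches to $X^*$ through at most two leaves. These are the \emph{connector trees} advertised in the introduction. Before this decomposition I would apply the splitting lemmas: Lemma~\ref{lem:split-nonspecial} to cut at every non-special bridge that is not pendant, Lemma~\ref{lem:split-special} to cut at every special bridge with plausible reachability (rejecting if a special bridge lacks plausible reachability, by Lemma~\ref{lem:special}), and Lemma~\ref{lem:remove-pendant} to delete redundant pendant twins. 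Since each split operation leaves each solid edge in at most two subinstances and strictly decreases a suitable potential, the whole preprocessing runs in polynomial time and produces subinstances whose total size is polynomial; moreover each subinstance still has feedback edge set number at most $\fes$, so it suffices to solve each subinstance, and for a connected subinstance that is a tree we are already done by Theorem~\ref{th:treeCombinatorial}. So assume the solid graph is connected and not a tree.

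Next I would fix the core set $W^*$. Start with $X^*$, the $\Oh(\fes)$ high-degree vertices, and for each connector tree $T$ attaching to $X^*$ at leaves $\ell_1,\ell_2$, add to $W^*$ a bounded number of extra vertices of $T$ so that the "interface behavior" of $T$ — which vertices of $T$ can reach $\ell_1$ (resp.\ $\ell_2$) and from which time, and which can be reached from $\ell_1,\ell_2$ — is determined by the labels on the $\Oh(1)$ edges of $T$ incident to $W^*$, with the rest of $T$ solvable internally by the LP of Lemma~\ref{lem:treeLP}/Corollary~\ref{cor:prelabeled_tree}. The point of Corollary~\ref{cor:prelabeled_tree} is exactly this: once the two edges of a connector tree that touch $W^*$ are pre-labeled with labels that are multiples of a large modulus, realizability of the rest of the tree consistently with this pre-labeling can be decided in polynomial time, and the internal labels can be slotted into the gaps. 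Thus $|W^*| = \Oh(\fes)$, and the subgraph of $G$ induced on edges incident to $W^*$ has $\Oh(\fes)$ edges; call it the \emph{core}.

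Then I would enumerate. By Lemma~\ref{lem:tree-atmost2} and Lemma~\ref{max number label} — and more sharply, because every core edge is either in a short cycle or a bridge, so carries $\Oh(1)$ labels in a \minlab realization — the "combinatorial type" of a \minlab realization restricted to the core is captured by a linear order (with ties/ multiplicities) of $\Oh(\fes)$ label-slots among $\Oh(\fes)$ edges, of which there are $\fes^{\Oh(\fes)}$ many. For each such ordered pattern on the core, I would: (i) check local consistency of the induced reachabilities among core vertices and among vertices whose only solid path to the rest of $D$ passes through the core; (ii) for each connector tree, determine from the $\Oh(1)$ pre-labeled incident edges the forced interface behavior, and invoke Corollary~\ref{cor:prelabeled_tree} to decide whether the tree can be completed; (iii) verify that the global reachability relation assembled from the core pattern and the completed connector trees equals $A$ exactly — both that all arcs of $A$ arise and that no extra arc does. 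Each connector tree has only $\Oh(1)$ choices for its pre-labeled incident edges once the core order is fixed, so the extension step contributes only $\fes^{\Oh(\fes)}$ further branching, giving the claimed $\fes^{\Oh(\fes^2)}\cdot n^{\Oh(1)}$ bound. For the non-strict, proper, and happy variants, Corollary~\ref{cor:nospecial} lets us first reject if any special bridge exists, and then the same scheme runs with single-label slots and the appropriate properness checks, which only shrinks the search space.

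\textbf{Main obstacle.} The hard part will be step (ii)–(iii): arguing that a globally \minlab realization can, up to shifting labels and inserting empty time steps, be normalized so that (a) the restriction to the core uses only $\Oh(\fes)$ distinct label values that are consecutive multiples of a large modulus, and (b) the interaction between any two connector trees, or between a connector tree and the core, is \emph{entirely mediated} by the labels on the $\Oh(1)$ core-incident edges of each tree — i.e.\ no temporal path "sneaks through" the interior of a connector tree in a way not captured by the interface summary. This is the analogue of the bundled/separated analysis (Lemmas~\ref{meaning of bundled}, \ref{meaning of separated}, \ref{lem:bundledmiddle}) but carried out around the cycle structure rather than inside a single star, and it is where most of the technical work lies; the enumeration and the LP-based completion are then routine given this normalization.
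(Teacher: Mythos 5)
Your high-level decomposition — pendant-tree shrinkage, a set of $\Oh(\fes)$ branching vertices, connector trees meeting the core at two leaves, and LP-based completion via Corollary~\ref{cor:prelabeled_tree} — is the same skeleton the paper uses. But there are two genuine gaps.

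First, your claim that ``every core edge is either in a short cycle or a bridge, so carries $\Oh(1)$ labels in a \minlab realization'' is not justified and is in fact false as stated. Lemma~\ref{lem:tree-atmost2} bounds labels only for edges \emph{not in a triangle}, and an edge between two vertices of $W^*$ can lie in many triangles. What the paper actually proves (as a standalone structural result) is that in a \emph{minimal} realization, every edge receives $\Oh(\fes)$ labels — and even that takes a nontrivial argument covering pendant trees, connector trees, and the branching vertices separately. The correct enumeration budget is therefore $\Oh(\fes)$ edges in the core, each carrying $\Oh(\fes)$ labels drawn from a pool of $\Oh(\fes^2)$ normalized label values, which is how one arrives at $\fes^{\Oh(\fes^2)}$. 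Your accounting ($\fes^{\Oh(\fes)}$ core patterns times a polynomial or $2^{\Oh(\fes)}$ extension branching) does not multiply out to $\fes^{\Oh(\fes^2)}$ at all; as written the arithmetic is internally inconsistent with the claimed bound, and the $\Oh(1)$-labels premise that would make it consistent is wrong.

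Second, the ``main obstacle'' you flag — that a connector tree must interact with the rest of the instance only through the labels on its two core-incident edges, with no temporal path ``sneaking through'' its interior in a way the interface summary misses — is exactly where the paper puts in the real technical work. It does so by introducing a hierarchy of properties (robust and then \emph{nice} connectors) guaranteeing that (i) reachability inside a connector tree can only be realized by temporal paths that stay inside it, and (ii) for every arc between an internal vertex and an external one, there is a uniquely determined core-incident edge that every realizing dense path must traverse. The paper then shows how to grow $W^*$ by a constant number of vertices per connector so that all resulting connectors are nice, checked and computed in polynomial time. You correctly anticipate that such a normalization is the crux and that it generalizes the bundled/separated analysis, but you have not supplied an argument, so the proof is incomplete at precisely the step it cannot afford to be. You should also note that pendant trees must be shrunk to size $\Oh(d_x)$ (degree of the root in $G[V^*]$) before $W^*$ is formed, otherwise the number of edges incident to $W^*$ is not $\Oh(\fes)$; naive pendant removal is not enough.
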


Recall that the parameter can presumably not be replaced by a smaller parameter like feedback vertex set number or the treedepth (see~\Cref{param hardness}).
In the remainder, we present the proof of~\Cref{fes algo}.
We will only describe the algorithm for the most general version~\any\str\URGD, but for all more restrictive versions, all arguments work analogously.

\subparagraph{Definitions and Notations.}

\iflong
An abstract overview of the basic definitions is given in the following table.\\

\begin{tabular}{l|l}
Object & meaning\\\hline
$D=(V,A)$ & input graph\\
$G=(V,E)$ & solid graph of~$D$\\
$F$ & a fixed feedback edge set of~$G$ of minimum size\\
$X$ & endpoints of the edges of~$F$\\
$V^*$ & the 2-core of~$G$ \\
$V' := V \setminus V^*$ & non-core vertices (no vertex of~$V'$ is in a cycle in~$G$)\\
$T_v$ for some~$v\in V^*$ & pendant tree with root~$v$\\
$Y_3$ & vertices of degree at least~$3$ in~$G[V^*]-F$\\
$Y_2$ & vertices of~$V^*$ that are adjacent to some vertex of~$X \cup Y_3$\\
$X^*:=X \cup Y_2 \cup Y_3$ & \tnew{At least} all vertices of degree at least 3 in~$G[V^*]$ and their neighbors 
\end{tabular}
\fi 

Let~$G =(V,E)$ be the solid graph and let~$F$ be a minimum size feedback edge set of~$G$.
\nnew{We assume without loss of generality that~$G$ is connected, as otherwise, we can solve each connected component independently, or detect in polynomial time that~$D$ is not realizable if there are dashed arcs between different connected components of the solid graph.}
Moreover, let~$X$ denote the endpoints of the edges of~$F$.
Note that~$|X| \leq 2 \cdot |F|$.
We define the set~$V^*$ as the (unique) largest subset~$S$ of vertices of~$V$ that each have at least two neighbors in~$G[S]$, that is, $V^*$ is the~2-core~\cite{S83} of~$G$.  
This set can be obtained from~$V$ by repeatedly removing degree-1 vertices.
Note that~$V^*$ contains all vertices of~$V$ that are part of at least one cycle in~$G$ \tnew{(but may also contain vertices that are not part of any cycle)}.
Moreover, we define~$V' := V \setminus V^*$.
Note that~$X \subseteq V^*$, since~$F$ is a minimum-size feedback edge set.
Since~$G$ is connected, for each vertex~$v' \in V'$, there is a unique vertex~$v\in V^*$ which has closest distance to~$v'$ among all vertices of~$V^*$.
For a vertex~$v\in V^*$, denote by~$V_v$ the set of all vertices of~$V'$ for which~$v$ is the closest neighbor in~$V^*$.
Note that~$V_v$ might be empty.
Since~$v'$ is in no cycle, removing~$v$ from~$G$ would result in~$v'$ ending up in a component that has no vertex of~$V^*$.
Hence, $T_v := G[\{v\} \cup V_v]$  is a tree for which we call~$v$ the~\emph{root}.
Moreover, we call~$T_v$ the~\emph{pendant tree} of~$v$.

Recall that each vertex of~$V^*$ has degree at least~$2$ in~$G[V^*]$. 
\tnew{Furthermore,}~$G - F$ is acyclic and thus~$G[V^*] - F$ is a tree, where each leaf is a vertex of~$X$.
Let~$Y_3$ denote the set of all vertices of degree at least~$3$ in~$G[V^*] - F$, and let~$Y_2$ be the neighbors of~$X\cup Y_3$ in~$V^* \setminus (X \cup Y_3)$.
Recall that each leaf of~$G[V^*] - F$ is a vertex of~$X$.
Moreover, in each tree with~$\ell$ leaves, there are~$\Oh(\ell)$ vertices having (i)~degree at least 3 or (ii)~a neighbor of degree at least 3.
This implies that~$|Y_2\cup Y_3| \in \Oh(|X|)$.
Let~$X^* := X \cup Y_2\cup Y_3$. See Figure \ref{fig:sets-FPT} for an illustration of these sets of vertices and some of the main definitions used in this section.
By the above, $|X^*| \in \Oh(|X|) \subseteq \Oh(\fes)$.
Recall that each vertex of~$Y_2$ has degree exactly~$2$ in~$G[V^*]$.
Moreover, note that each vertex of~$V^* \setminus X^*$ is part of a unique path~$P$ where each internal vertex of~$P$ is from~$V^*\setminus X^*$ and the endpoints of~$P$ are from~$Y_2$.
For each such path, each internal vertex has degree exactly~$2$ in~$G[V^*]$ and there are at most~$|X^*| - 1$ such paths, since~$G[V^*]-F$ is acyclic.

\begin{figure}
\centering
\iflong
\includegraphics[width=0.8\linewidth]{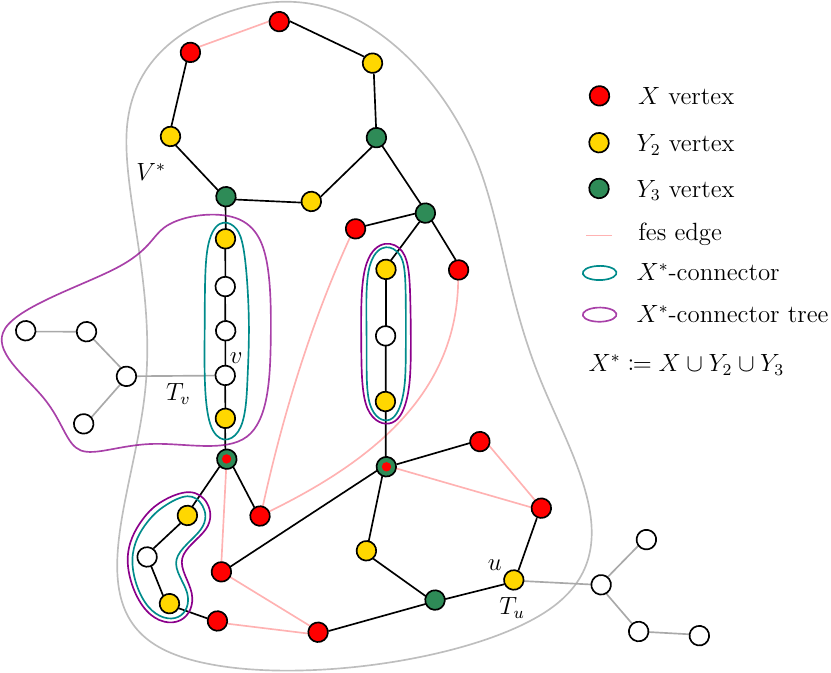}
\else
\includegraphics[width=0.7\linewidth]{sets-for-FPT.pdf}
\fi
\caption{An illustration of the main definitions used by the FPT algorithm for parameter~$\fes$.
The grey area contains the vertices of~$V^*$, that is, the~$2$-core of~$G$.
}
\label{fig:sets-FPT}
\end{figure}

Note that~$X^*$ contains all vertices that are part of triangles in~$G[V^*]$ and thus in~$G$.
This implies that in each minimal realization of~$D$, each edge incident with at least one vertex of~$V \setminus X^*$ receives at most two labels (see~\Cref{lem:tree-atmost2}).
We now formally define the above mentioned paths between the vertices of~$Y_2$ is a more general way.

\begin{definition}
Let~$W$ with~$X^* \subseteq W \subseteq V^*$ and let~$P$ be a path of length at least 2 in~$G[V^*]$ with endpoints~$a$ and~$b$ in~$W$ and all internal vertices from~$V^*\setminus W$.
We call~$P$ a~\emph{$W$-connector}.
Moreover, let~$C := G[V(P) \cup \bigcup_{q\in V(P) \setminus \{a,b\}} V_q]$.
We call~$C$ a~\emph{$W$-connector tree} and the~\emph{extension of~$P$}.
\end{definition}

Note that the endpoints of each~$X^*$-connector are from~$Y_2$ and thus have degree~$2$ in~$G[V^*]$.
Moreover, each internal vertex~$v$ of a~$W$-connector~$P$ has degree exactly~$2$ in~$G[V^*]$, that is, the only two neighbors of~$v$ in~$G[V^*]$ are the predecessor and the successor of~$v$ in~$P$.
This also implies that each endpoint of a~$W$-connector has degree exactly 2 in~$G[V^*]$.
Since each edge of~$P$ is incident with at least one vertex of~$V \setminus W \subseteq V \setminus X^*$, in each minimal realization of~$D$, each edge of each~$W$-connector receives at most two labels (see~\Cref{lem:tree-atmost2}).
The latter also holds for each~$W$-connector tree.

\iflong
\begin{observation}
Let~$W$ with~$X^* \subseteq W \subseteq V^*$ and let~$q\in V \setminus (W \cup \bigcup_{w\in W} V_w)$.
There is exactly one~$W$-connector tree that contains~$q$.\end{observation}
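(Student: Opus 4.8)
The plan is to argue separately that at least one $W$-connector tree contains $q$ and that no two distinct $W$-connector trees can both contain $q$. First I would recall that, by definition, every vertex $v' \in V'$ lies in the pendant tree $T_v$ of a unique closest vertex $v \in V^*$; and that every vertex of $V^* \setminus X^*$ (hence of $V^* \setminus W$, since $X^* \subseteq W$) is an internal vertex of some $W$-connector, because each such vertex has degree exactly $2$ in $G[V^*]$ and $G[V^*]-F$ is a tree, so walking along degree-$2$ vertices in both directions until hitting a vertex of $Y_2 \subseteq X^* \subseteq W$ produces a $W$-connector $P$ through it.

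For existence: given $q \in V \setminus (W \cup \bigcup_{w\in W} V_w)$, either $q \in V^*$ or $q \in V'$. If $q \in V^*$, then $q \in V^* \setminus W$ (as $q \notin W$), so by the preceding paragraph $q$ is an internal vertex of a $W$-connector $P$, and $q$ lies on the extension $C$ of $P$. If $q \in V'$, then $q$ lies in the pendant tree $T_v$ of its unique closest core vertex $v \in V^*$; we must have $v \notin W$, since otherwise $q \in V_v \subseteq \bigcup_{w\in W} V_w$, contradicting the choice of $q$. Hence $v \in V^* \setminus W$, so $v$ is an internal vertex of some $W$-connector $P$, and then $q \in V_v$ is part of the extension $C = G[V(P) \cup \bigcup_{r \in V(P)\setminus\{a,b\}} V_r]$ of $P$. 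In both cases $q$ lies in some $W$-connector tree.

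For uniqueness: the key observation is that a $W$-connector tree $C$, the extension of a $W$-connector $P$ with endpoints $a,b$, consists exactly of the internal vertices of $P$, the two endpoints $a,b$, and the pendant-tree vertices hanging off the internal vertices of $P$. Since $q \notin W$ and $a,b \in W$, $q$ is not an endpoint of any $W$-connector; so if $q \in C$ then either $q$ is an internal vertex of $P$, or $q \in V_v$ for some internal vertex $v$ of $P$. In the first case, $q$ determines $P$ uniquely: as $q$ has degree $2$ in $G[V^*]$, the connector through $q$ is obtained by extending from $q$ along degree-$2$ core vertices until reaching $Y_2$-vertices, and this maximal such path is forced. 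In the second case, $q \in V'$ determines its unique closest core vertex $v$, and $v \in V^*\setminus W$ is then an internal vertex of a unique $W$-connector by the same forced-extension argument; so again $P$, and hence $C$, is determined. The main (mild) obstacle is simply to be careful that "internal vertices of a $W$-connector have degree exactly $2$ in $G[V^*]$" — which follows because any degree-$\geq 3$ vertex or neighbor of one lies in $Y_2 \cup Y_3 \subseteq X^* \subseteq W$ and hence cannot be internal — so that the forced-extension argument producing a unique connector through a given core vertex is valid; everything else is bookkeeping with the definitions of $V^*$, $V'$, pendant trees, and $W$-connector trees.
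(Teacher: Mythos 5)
The paper states this Observation without a proof, so there is nothing to compare against; your argument supplies the expected elementary justification and is correct. Both halves work: for existence you correctly split into $q\in V^*$ (where $q$ has degree $2$ in $G[V^*]$, so walking outward in both directions along the tree $G[V^*]-F$ yields a path whose first $W$-endpoints exist and whose interior stays in $V^*\setminus W$, hence a $W$-connector through $q$) and $q\in V'$ (where the unique root $v$ of $q$'s pendant tree lies in $V^*\setminus W$ by the hypothesis on $q$, so $q$ sits in the extension of $v$'s connector); for uniqueness you correctly use that pendant-tree membership determines a unique root $v\in V^*$, and that a degree-$2$ core vertex forces the connector through it on both sides. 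One small imprecision: you say the outward walk terminates ``until hitting a vertex of $Y_2$,'' but for a general $W\supsetneq X^*$ the walk stops at the first vertex of $W$ encountered, which need not lie in $Y_2$; the argument is unaffected since that stopping vertex is still a valid endpoint of a $W$-connector. With that phrase replaced by ``until hitting a vertex of $W$,'' the proof is complete and matches the (unstated) argument the paper relies on.
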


\begin{observation}\label{properties of small sets}
Let~$W$ with~$X^* \subseteq W \subseteq V^*$.
Then, there are $\Oh(|W| + \fes)$~edges between vertices of~$W$ in~$G$ and there are $\Oh(|W| + \fes)$~$W$-connectors.
\end{observation}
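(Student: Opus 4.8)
The plan is to bound both quantities against the number of edges of a suitable tree on vertex set~$W$. Recall from the discussion above that, since~$G$ is connected and~$F$ is a minimum feedback edge set, the graph~$G[V^*]-F$ is a spanning tree of~$G[V^*]$ every leaf of which lies in~$X$, and that~$Y_3$ is exactly the set of vertices of degree at least~$3$ in~$G[V^*]-F$. First I would establish the following degree fact: every vertex~$v\in V^*\setminus W$ has degree exactly~$2$ in~$G[V^*]-F$. Indeed, the edges of~$F$ have both endpoints in~$X\subseteq X^*\subseteq W$, so~$v$ is incident with no edge of~$F$, hence its degree in~$G[V^*]-F$ equals its degree in~$G[V^*]$; this is at least~$2$ because~$V^*$ is the~$2$-core of~$G$, and at most~$2$ because~$v\notin Y_3$.

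Next I would suppress (smooth) all vertices of~$V^*\setminus W$ in~$G[V^*]-F$; since each of them has degree~$2$ there, this yields a tree~$T'$ with vertex set exactly~$W$, and therefore with exactly~$|W|-1$ edges (it is a simple tree, since any loop or parallel pair in~$T'$ would correspond to a cycle in the tree~$G[V^*]-F$). Each edge of~$T'$ arises either from an edge of~$G[V^*]-F$ with both endpoints in~$W$, or by contracting a maximal path of~$G[V^*]-F$ all of whose internal vertices lie in~$V^*\setminus W$; such a path has length at least~$2$, its endpoints lie in~$W$, its internal vertices lie in~$V^*\setminus W$, and every one of its edges is incident with a vertex of~$V^*\setminus W\subseteq V\setminus X$, so it uses no edge of~$F$ and is thus a path in~$G[V^*]$, i.e., exactly a~$W$-connector. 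Conversely, every~$W$-connector~$P$ is a path in~$G[V^*]-F$ (its first and last edges are incident with an internal vertex of~$P$, which lies in~$V^*\setminus W\subseteq V\setminus X$, and every middle edge has both endpoints among the internal vertices, so no edge of~$P$ lies in~$F$), so~$P$ maps to a well-defined edge of~$T'$; this map is injective because~$G[V^*]-F$ is a tree, and its image is disjoint from the set of edges of~$G[V^*]-F$ having both endpoints in~$W$ (otherwise~$G[V^*]-F$ would contain a cycle). Hence the number of~$W$-connectors is at most~$|E(T')|=|W|-1\in\Oh(|W|+\fes)$.

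For the first claim, every edge of~$G$ with both endpoints in~$W$ lies in~$G[V^*]$ (since~$W\subseteq V^*$), so it is either an edge of~$G[V^*]-F$ with both endpoints in~$W$ or an edge of~$F$; by the previous paragraph there are at most~$|W|-1$ of the former, and there are exactly~$|F|=\fes$ of the latter, giving the bound~$|W|-1+\fes\in\Oh(|W|+\fes)$. There is no real obstacle here; the only slightly delicate point, which I would spell out explicitly rather than leave implicit, is the coupled claim that a~$W$-connector contains no feedback edge and that its internal vertices are degree-$2$ vertices of~$G[V^*]-F$ (and not merely of~$G[V^*]$) — both of which follow from the degree fact of the first paragraph together with the observation that the internal vertices of a~$W$-connector avoid~$X$.
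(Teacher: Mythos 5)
The paper records this as an observation without giving a proof, so there is no in-text argument to compare against; your proof is correct and is exactly the bookkeeping that the paper's surrounding discussion (the paragraph on~$X^*$-connectors immediately before the observation) implicitly relies on. Concretely, you use that~$G[V^*]-F$ is a spanning tree of~$G[V^*]$ and that every vertex of~$V^*\setminus W$ has degree exactly~$2$ in it, suppress those degree-$2$ vertices to obtain a tree~$T'$ on~$W$ with~$|W|-1$ edges, and then charge each edge of~$G[W]$ and each~$W$-connector injectively to an edge of~$T'$ or to one of the~$\fes$ feedback edges. The one point that genuinely needed an explicit argument --- that a~$W$-connector uses no feedback edge and is therefore a path in the tree~$G[V^*]-F$, hence uniquely determined --- you identify and settle correctly by noting that every edge of a connector is incident with an internal vertex lying in~$V^*\setminus W\subseteq V\setminus X$.
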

\else
\begin{observation}\label{properties of small sets}
Let~$W$ with~$X^* \subseteq W \subseteq V^*$.
Then, there are $\Oh(|W| + \fes)$~edges between vertices of~$W$ in~$G$ and there are $\Oh(|W| + \fes)$~$W$-connectors.
Moreover, for each vertex~$q\in V \setminus (W \cup \bigcup_{w\in W} V_w)$, there is exactly one~$W$-connector tree that contains~$q$.
\end{observation}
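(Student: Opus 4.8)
The plan is to prove \Cref{properties of small sets} by reasoning about the structure of $G[V^*]-F$, which is a tree, and then lifting the bounds to $G$ and to the pendant trees. First I would observe that since $W \supseteq X^*$, in particular $W \supseteq X$, so every edge of $F$ has both endpoints in $W$; hence the number of edges of $F$ incident with $W$ is exactly $|F| = \fes$. It therefore suffices to bound the number of non-$F$ edges between vertices of $W$ and the number of $W$-connectors in the tree $G[V^*]-F$, since each $W$-connector contains at least one edge and the endpoints of a $W$-connector lie in $W$.

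Next I would use a standard counting argument on the forest (in fact tree) $T := G[V^*]-F$. Contract each maximal path whose internal vertices all lie outside $W$; equivalently, consider the minor $T_W$ obtained from $T$ by suppressing degree-$2$ vertices not in $W$ and deleting components of $V^*\setminus W$ that attach to $W$ at a single vertex (pendant paths). Since every leaf of $T$ is in $X \subseteq W$, the number of leaves of $T_W$ is $\Oh(|W|)$, and a tree on $|W|$ designated vertices with $\Oh(|W|)$ leaves has $\Oh(|W|)$ edges, hence $\Oh(|W|)$ branch vertices and $\Oh(|W|)$ maximal "bare" paths between consecutive $W$-vertices. Each such path corresponds either to an edge of $T$ with both endpoints in $W$, or to a $W$-connector (a path with $\ge 1$ internal vertex, all outside $W$, endpoints in $W$); so both quantities are $\Oh(|W|)$ when restricted to $T$. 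Combining with the $F$-edges, the number of edges of $G[V^*]$ with both endpoints in $W$ is $\Oh(|W| + \fes)$, and the number of $W$-connectors is $\Oh(|W| + \fes)$. I should be slightly careful that an edge of $F$ could itself play the role of a "connector of length one" but since a $W$-connector has length at least $2$ by definition, $F$-edges only contribute to the edge count, not the connector count; this keeps the two bounds clean.

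For the final claim of the lemma, I would argue that the $W$-connector trees, together with the sets $\{w\} \cup V_w$ for $w \in W$ and the components of $G[V^*\setminus W]$ attaching to $W$ at a single vertex (pendant paths with their pendant trees), partition $V$. Indeed, take any $q \in V \setminus (W \cup \bigcup_{w\in W} V_w)$. If $q \in V^*$, then $q$ lies on a unique maximal path of $G[V^*]$ between two $W$-vertices with all internal vertices outside $W$ (uniqueness because $G[V^*]-F$ is a tree and the internal vertices have degree $2$ in $G[V^*]$, so there is exactly one way to extend from $q$ to $W$ on each side), and if this path has length $\ge 2$ it is a $W$-connector whose extension contains $q$; the degenerate single-vertex-attachment (pendant) case is excluded since that would put $q$ in some $V_w$ or leave $q$ on a one-endpoint subtree, contradicting $q \notin \bigcup_w V_w$ once we also fold pendant subtrees of $V^*$ appropriately — here I would be careful to note that the statement as used in the algorithm implicitly assumes $q$ is not in such a degenerate piece, or alternatively that the lemma's intended reading is "at most one", with existence guaranteed for the relevant $q$. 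If $q \in V'$, then $q$ belongs to the pendant tree $T_v$ of its closest $V^*$-vertex $v$; if $v \notin W$ then $v$ is an internal vertex of a unique $W$-connector $P$ and $q$ lies in the extension $C$ of $P$, again uniquely. Uniqueness throughout follows from the tree structure of $G[V^*]-F$ together with the fact that distinct $W$-connectors share at most their endpoints, which lie in $W$, hence are not equal to $q$.

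The main obstacle I anticipate is getting the partition statement exactly right: $G[V^*\setminus W]$ can have components that attach to $W$ at only one vertex (pendant paths off the $2$-core), and such a $q$ is neither in any $V_w$ nor in any $W$-connector tree, so the claim "there is exactly one $W$-connector tree containing $q$" would be literally false for it. I expect the resolution is that the lemma is applied only to vertices $q$ that do lie in some connector tree (e.g., after a preprocessing reduction rule from \Cref{sec:bridge} has removed or absorbed pendant structures not anchored at two $W$-vertices), or that "$W$-connector tree" is meant to also cover these one-endpoint pieces; I would state the counting bounds cleanly and phrase the last sentence as uniqueness ("at most one $W$-connector tree contains $q$") plus a remark that together with the pendant pieces this yields a full partition of $V$, which is all the algorithm needs. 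The edge-counting and connector-counting parts are routine once the leaf bound $\Oh(|X^*|) = \Oh(\fes)$ on $G[V^*]-F$ is in hand.
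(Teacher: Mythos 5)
Your counting argument is sound and proceeds in the only natural way: after suppressing the degree-$2$ vertices of $T := G[V^*]-F$ that lie outside $W$, one obtains a tree on vertex set $W$, hence exactly $|W|-1$ suppressed edges, each of which is either a direct $T$-edge between two $W$-vertices or a $W$-connector; the $|F|=\fes$ feedback edges have both endpoints in $X\subseteq W$ and (for the same reason) cannot appear on any $W$-connector, so adding them back gives both $\Oh(|W|+\fes)$ bounds. You could skip the detour through leaf counts: any tree on $|W|$ vertices has $|W|-1$ edges.

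The ``main obstacle'' you identify, however, is not an obstacle, and the remedy you sketch (weakening ``exactly one'' to ``at most one'' and appealing to preprocessing) would be a mistake. Pendant paths of $G[V^*\setminus W]$ attaching to $W$ at a single vertex cannot occur. Take any $q\in V^*\setminus W\subseteq V^*\setminus X^*$. Then $q\notin X$, so $q$ has no incident $F$-edge, and $q\notin Y_3$, so $q$ has degree at most $2$ in $T$; but $V^*$ is the $2$-core of $G$, so $q$ has degree at least $2$ in $G[V^*]$. Hence every vertex of $V^*\setminus W$ has degree exactly $2$ in both $G[V^*]$ and $T$. Consequently every component of $T-W$ is a path both of whose outer $T$-neighbors lie in $W$ and are distinct (since $T$ is a tree), so each component extends to a $W$-connector, with no one-endpoint attachment to dispose of. The ``exactly one'' statement is literally true for every $q\in V\setminus(W\cup\bigcup_{w\in W}V_w)$: if $q\in V^*$ it is an internal vertex of the unique $W$-connector through it; if $q\in V'$ its root $v\in V^*$ cannot lie in $W$ (else $q\in V_v$), so $v$ is again an internal vertex of a unique $W$-connector whose extension absorbs $V_v\ni q$. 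Your uniqueness argument via disjointness of internal vertex sets is fine; existence needs no hedging, only the observation about the $2$-core that you seem to have overlooked.
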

\fi

\subparagraph{Abstract Description of the Algorithm}
        Our algorithm uses two preprocessing steps.

        In the first step, we will present a polynomial-time reduction rule based on the splitting operations presented in~\Cref{sec:bridge}.
        With this reduction rule, we will be able to remove vertices \tnew{from pendant trees that have many vertices}.
        After this reduction rule is applied exhaustively, for each vertex~$x\in V^*$, the pendant tree~$T_x$ will have $\Oh(d_x)$~vertices, where~$d_x$ denotes the degree of~$x$ in~$G[V^*]$.

      In the second step, we extend the set~$X^*$ to a set~$W^* \subseteq V^*$ such that each~$W^*$-connector has some useful properties (we will define connectors with these properties as \emph{nice connectors}).
      Intuitively, a connector~$P$ with extension~$C$ is nice if (i)~in a realization for~$D$, the arcs in~$D[V(C)]$ can only be realized by temporal paths that are contained in the connector tree~$C$, and (ii)~for each arc between a vertex outside of~$C$ and a vertex inside of~$C$, we can in polynomial time detect over which (unique) edge of the connector incident with one of its endpoints this arc is realized.
      As we will show, we can compute such a set~$W^*$ of size~$\Oh(\fes)$       in polynomial time, or correctly detect that the input graph is not realizable.
      By the first part, we additionally get that the total number of vertices in pendant trees that have their root in~$W^*$ is~$\Oh(|W^*| + \fes) = \Oh(\fes)$.

The algorithm then works as follows:
We iterate over all possible partial labelings~$\lambda$ on the $\Oh(\fes)$ edges incident with vertices of~$W^*$ or vertices of pendant trees that have their root in~$W^*$.
As we will show, we can assume that each of these edges will only receive $\Oh(\fes)$~labels in each minimal realization.
For each such labeling~$\lambda$, \tnew{we need} to check whether we can extend the labeling to the so far unlabeled edges, that is, the edges that are part of any~$W^*$-connector.
As we will show, we can compute for each such connector~$P$ with extension~$C$ in polynomial time a set~$L_P$ of $\Oh(1)$~labelings for the edges of~$C$, such that if~$\lambda$ can be extended to a realization for~$D$, then we can extend~$\lambda$ (independently from the other connectors) by one of the labelings in~$L_P$.
Since~$W^*$ has size~$\Oh(\fes)$, there are only $\Oh(\fes)$~many~$W^*$-connectors (see~\Cref{properties of small sets}) and for each such connector~$P$, the set~$L_P$ of labelings has constant size.
Our algorithm thus iterates over all possible~$\Oh(1)^{\Oh(\fes)} = 2^{\Oh(\fes)}$ possible labeling combinations for the connectors and checks whether one of these combinations extends~$\lambda$ to a realization for~$D$.

Summarizing, we will show the following.

\begin{proposition}\label{fpt simplified}
In polynomial time, we can detect that~$D$ is not realizable, or compute a set~$W^*$ with~$X^* \subseteq W^* \subseteq V^*$, such that the set of edges~$E^*$ that are (i)~incident with at least one vertex of~$W^*$ or (ii)~part of some pendant tree with root in~$W^*$, has size~$\Oh(\fes)$ and where for each labeling~$\lambda$ of the edges of~$E^*$, we can in $2^{\Oh(\fes)} \cdot n^{\Oh(1)}$~time
\iflong
\begin{itemize}
\item detect that there \tnew{is} no \minlab realization for~$D$ that agrees with~$\lambda$ on the labels of all edges of~$E^*$, or
\item compute a labeling that realizes~$D$.
\end{itemize}
\else
(a)~compute a labeling that realizes~$D$, or (b)~detect that there no \minlab realization for~$D$ that agrees with~$\lambda$ on the labels of all edges of~$E^*$.
\fi
\end{proposition}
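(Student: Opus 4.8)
The plan is to prove \Cref{fpt simplified} in three stages mirroring the abstract description: (i) exhaustively apply a polynomial-time reduction rule to shrink pendant trees, (ii) construct the set $W^*$ of "nice" connectors, and (iii) describe how the $2^{\Oh(\fes)}\cdot n^{\Oh(1)}$ extension procedure works given a fixed labeling $\lambda$ on $E^*$.

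\medskip
\noindent\textbf{Stage 1 (shrinking pendant trees).} First I would design a reduction rule that repeatedly applies the splitting operations of \Cref{sec:bridge}. Any edge of a pendant tree is a bridge of the solid graph, so \Cref{lem:split-nonspecial} (for non-special bridges), \Cref{lem:split-special} (for special bridges, in the \any\str case), and \Cref{lem:remove-pendant} (for redundant pendant vertices) all apply. The key observation is: after splitting at every internal bridge of a pendant tree $T_x$ and removing redundant pendant leaves, what remains attached to each vertex $v$ of $T_x$ is controlled by the reachability to/from the rest of $D$. Concretely, because $D$ is simple and the reachability of two pendant leaves of the same vertex is either equal (and then one is deletable by \Cref{lem:remove-pendant}) or a no-instance, and because every internal non-special bridge of the tree can be split off as an independent subinstance, the surviving pendant tree at $x$ has $\Oh(d_x)$ vertices, where $d_x$ is the degree of $x$ in $G[V^*]$. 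I would argue this by a charging argument: the number of "distinct reachability behaviors" realizable by pendant leaves attached to a vertex on the $x$-side of a bridge is bounded by the number of ways temporal paths can leave/enter through $x$, which in turn is bounded (using \Cref{lem:tree-atmost2}, at most two labels per non-triangle edge, hence $\Oh(1)$ behaviors per bridge direction) by $\Oh(d_x)$ in total. Summing, $\sum_{x\in V^*}|V(T_x)| = \Oh(|E(G[V^*])|) = \Oh(|V^*|+\fes)$; restricted to pendant trees rooted in $W^*$ and using $|W^*|=\Oh(\fes)$, this is $\Oh(\fes)$. I must also check the splitting rules preserve realizability in both directions (they do, by the cited lemmas) and run in polynomial time overall (each application removes a vertex, and there are $\Oh(n)$ of them).

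\medskip
\noindent\textbf{Stage 2 (constructing $W^*$ with nice connectors).} Starting from $X^*$, I would iteratively enlarge the current set $W$ (maintaining $X^*\subseteq W\subseteq V^*$, $|W|=\Oh(\fes)$) until every $W$-connector is \emph{nice}, where nice means: (a) in any realization, arcs inside $D[V(C)]$ (for $C$ the connector tree extending $P$) are realized by temporal paths staying inside $C$, and (b) for every arc between a vertex of $V(C)$ and a vertex outside $C$, one can in polynomial time determine the unique edge of $P$ incident to an endpoint of $P$ through which that arc's temporal path must pass. The mechanism: a connector $P$ with endpoints $a,b$ has all internal vertices of degree exactly $2$ in $G[V^*]$, so $P$ is essentially a path with pendant trees hanging off; temporal paths entering or leaving the connector tree must do so through $a$ or $b$. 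If some arc can plausibly be routed through the connector in more than $\Oh(1)$ "entry configurations," I add a constant number of vertices of $P$ to $W$, splitting the connector into smaller pieces — each split reduces a potential function (e.g. the number of "ambiguous" dashed arcs spanning the connector), and by the transitivity discussion preceding \Cref{def:special} and bundled/separated analysis (\Cref{meaning of bundled,meaning of separated}, \Cref{bundledseparated}), once a connector is short enough the routing becomes deterministic or the instance is a no-instance. I would bound the total number of added vertices by $\Oh(\fes)$ using a global argument: each "ambiguity" is charged to a feedback edge or to a vertex of $Y_2\cup Y_3$, and there are $\Oh(\fes)$ of those. Where the argument detects an unavoidable contradiction (e.g. two dashed arcs spanning a connector in the same direction that violate transitivity in an unrepairable way), I report that $D$ is not realizable.

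\medskip
\noindent\textbf{Stage 3 (the extension procedure).} Fix a labeling $\lambda$ of $E^*$. By \Cref{lem:tree-atmost2} (and \Cref{lem:special} for special bridges) we may restrict attention to \minlab realizations and hence to $\lambda$ assigning $\Oh(1)$ labels per edge; the magnitudes can be assumed to be multiples of a large integer $\ge 2n$ as in \Cref{cor:prelabeled_tree}. For each $W^*$-connector $P$ with extension $C$, the edges of $C$ not in $E^*$ form a tree (a path with pendant subtrees), and $P$'s two endpoints $a,b$ sit in $W^*$ so their incident $E^*$-edges are already labeled — these serve as the "pre-labeled edges" $f_1,f_2$ of \Cref{cor:prelabeled_tree}. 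Because $C$ is a nice connector, the reachability constraints that $C$'s labeling must satisfy decompose: internal arcs must be realized within $C$, and boundary arcs impose that certain specific edges of $P$ receive labels in specific orderings relative to the pre-labels. This yields, via the LP of \Cref{lem:treeLP}/\Cref{cor:prelabeled_tree} extended with the pre-labeling constraints, a polynomial-time test for whether $\lambda$ can be extended on $C$ consistently, and it produces at most $\Oh(1)$ essentially-different extension types $L_P$ (the "type" being which boundary edges are single- vs. double-labeled and the order pattern). Since there are $\Oh(|W^*|+\fes)=\Oh(\fes)$ connectors (\Cref{properties of small sets}) and $|L_P|=\Oh(1)$ each, the algorithm enumerates all $\Oh(1)^{\Oh(\fes)} = 2^{\Oh(\fes)}$ combinations, for each combination glues the per-connector labelings onto $\lambda$ (shifting labels into the gaps reserved by the $\ge 2n$-multiple trick so no spurious temporal paths are created across connector boundaries), and checks in polynomial time whether the union realizes $D$. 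If one combination works, output it; otherwise report no \minlab realization agrees with $\lambda$ on $E^*$.

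\medskip
\noindent\textbf{Main obstacle.} I expect the hardest part to be Stage 2: proving that a constant-per-ambiguity enlargement of $W$ terminates with $|W^*|=\Oh(\fes)$ while guaranteeing the niceness property (b) — i.e. that after the enlargement every boundary arc's temporal path is forced through a determinable single edge. This requires carefully tracking how temporal paths can "weave" through a connector using both the lower and upper labels of special edges, and showing that ambiguity is a global resource bounded by $\fes$ rather than something that can compound along a long connector. The transitivity/bundled/separated machinery of \Cref{sec:bridge} is exactly what controls this, but combining those local statements into the required global bound, and handling the interaction with pendant trees hanging off connector interiors, is the delicate step.
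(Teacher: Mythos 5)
Your Stage~3 outline matches the paper's approach closely: solve each nice connector as a tree subproblem via the LP of \Cref{cor:prelabeled_tree} with pre-labeled boundary edges, show there are only $\Oh(1)$ extension types per connector, and enumerate the $2^{\Oh(\fes)}$ combinations. The overall three-stage architecture (shrink pendant trees, enlarge to nice connectors, per-connector LP extension) is exactly the paper's decomposition, so the structural plan is sound.

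However, your Stage~1 contains a genuine gap. You argue that after splitting and removing redundant leaves, the surviving pendant tree $T_x$ has $\Oh(d_x)$ vertices because ``the number of distinct reachability behaviors'' is $\Oh(d_x)$. This charging argument does not go through. After applying \Cref{lem:split-special}, \Cref{lem:split-nonspecial}, and \Cref{lem:remove-pendant}, the bridge vertices $B_x$ of $T_x$ form an acyclic tournament, and consecutive vertices of this tournament can all have \emph{pairwise distinct} reachability requirements among themselves even if many of them have identical reachability to the rest of $D$. \Cref{lem:remove-pendant} only deletes a leaf $v$ when another leaf $w$ has \emph{exactly the same} row and column in $D$ and $(v,w),(w,v)\notin A$; adjacent vertices in the tournament do not satisfy this. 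So nothing in your argument prevents $|B_x|$ from remaining $\Theta(n)$. The paper instead introduces a notion of edges of $T_x$ that \emph{surround} or \emph{block} each external edge incident with $x$, shows there are $\Oh(d_x)$ such edges, and proves a nontrivial reduction rule: if a bridge vertex $b_v^i$ is far (in the tournament order) from every surrounding/blocking edge, then $D$ is realizable if and only if both the standalone tree instance $D[V(T_v)]$ and the graph $D'$ obtained by removing $b_v^i$ are realizable. This rule \emph{splits} the realizability question into two independent pieces rather than deleting a redundant vertex; it requires showing that a realization for $D'$ and a realization for $D[V(T_v)]$ can be spliced together along a suitable gap, which in turn requires the guarantee that no external edge carries a label in that gap. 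Your outline misses both the surrounding/blocking machinery and the fact that the correct rule is a split, not a deletion.

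Your Stage~2 termination argument is also different from the paper's and not established. You propose to charge each ``ambiguity'' globally to a feedback edge or to a vertex of $Y_2 \cup Y_3$, with a potential function driving termination. The paper instead proves directly (\Cref{make robust connectors}, \Cref{make nice connectors}) that for each connector a fixed constant number of added vertices (at most $6$ for robustness, at most $10$ for niceness) suffices, and then observes there are $\Oh(\fes)$ connectors. Your global-charging scheme would need an explicit argument that the total number of ambiguities is $\Oh(\fes)$ and that each enlargement strictly decreases the potential; neither is given. In particular, you would need to rule out the scenario where splitting one connector creates new ambiguities in its halves, which the paper's per-connector constant bound handles cleanly but a potential argument must address explicitly. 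As a minor point, your remark that \Cref{lem:tree-atmost2} lets you restrict $\lambda$ to $\Oh(1)$ labels per $E^*$-edge is wrong: edges with both endpoints in $W^*$ can be part of triangles, so that lemma does not apply; the correct bound is $\Oh(\fes)$ labels per edge (which the paper proves separately and which is why the overall running time is $\fes^{\Oh(\fes^2)}$ rather than $2^{\Oh(\fes)}$).
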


The running time of this algorithm is then $\fes^{\Oh(\fes^2)} \cdot n^{\Oh(1)}$, since (i)~all preprocessing steps run in polynomial time, (ii)~we only have to consider $\fes^{\Oh(\fes^2)}$ labelings~$\lambda$ (since we prelabel $\Oh(\fes)$~edges with $\Oh(\fes)$~labels each), and (iii)~for each labeling~$\lambda$, we only have to check for~$2^{\Oh(\fes)}$ possible ways to extend~$\lambda$ to a realization for~$D$.
To see the second part, we show that it is sufficient to assign only labels of~$\{i\cdot 2\cdot n \mid i\in \Oh(\fes^2)\}$ in~$\lambda$.

\iflong
To describe our algorithm, we will use \tnew{the notion of ``merging''} several partial labelings.
This is formally defined as follows.

\begin{definition}
Let~$E_1$ and~$E_2$ be (not necessarily \tnew{disjoint}) subsets of~$E$. 
For~$\lambda_1\colon E_1 \to 2^{\mathbb{N}}$ and~$\lambda_2\colon E_2 \to 2^{\mathbb{N}}$, we define the labeling~$\lambda_1 \ltimes \lambda_2\colon E_1 \cup E_2 \to 2^{\mathbb{N}}$ by
$(\lambda_1 \ltimes \lambda_2) (e) := \lambda_2(e)$ if~$e\in  E_2$ and~$(\lambda_1 \ltimes \lambda_2) (e) := \lambda_1(e)$ otherwise.  
\end{definition}
\fi

\iflong\else
\subsection{Technical Highlights and Difficulties of the Algorithm.}
We now describe in more detail the intuition about the two preprocessing steps, as well as the algorithm to compute a constant number of labelings for each nice connector.

\subparagraph{Dealing with pendant trees.}
Recall that~$d_x$ denotes the degree of a vertex~$x\in V^*$ in~$G[V^*]$.
To reduce each pendant tree~$T_x$ with~$x\in V^*$ to a size of~$\Oh(d_x)$, we first apply the three splitting operations of~\Cref{sec:bridge} to each edge of~$T_x$ incident with~$x$.
Since each of these edges is a bridge and part of a tree, one of the two resulting instances produced by the splitting operations is then a tree and can be solved in polynomial time.
We thus stick with the one resulting instance which is not a tree.
As we show, by applying the splitting operations, we get that (i)~for each special bridge~$\{b,x\}$ of~$T_x$ incident with~$x$, $b$ has at most two neighbors besides~$x$ (namely, the possible in-leaf and the possible out-leaf) and (ii)~for each non-special bridge~$\{b,x\}$ of~$T_x$ incident with~$x$, $b$ is a leaf.
This directly implies that~$T_x$ has depth 2 and the number of leaves of~$T_x$ is~$\Oh(|B_x|)$, where~$B_x$ denotes the neighbors of~$x$ in~$T_x$.

Thus, to reduce the size of~$T_x$ it suffices to reduce the number of vertices in~$B_x$ to~$\Oh(d_x)$.
To do so, we show that for each external edge~$e$ (that is, an edge that is incident with~$x$ but not contained in~$T_x$), we can detect in polynomial time a set of at most $5$ internal edges (that is, edges between~$x$ and vertices of~$B_x$) that `surround' or `block' the edge~$e$.
Intuitively, in each realization for~$D$, no other internal edge can share a label with~$e$.
This defines a total of~$\Oh(d_x)$ internal edges that are surrounding or blocking.
Let~$S_x$ denote these edges.
Based on this set of edges, we then show that we can efficiently (i)~detect that~$D$ is not realizable, or (ii)~find a vertex~$b$ of~$B_x$ that (together with its possible leaf-neighbors) can safely be removed from~$D$, if~$B_x$ has size~$\omega(|S_x|)$.
Hence, after exhaustive application of this operation, only $\Oh(|S_x|) \subseteq \Oh(d_x)$ vertices of~$B_x$ remain.
Since each of these vertices only has at most two neighbors besides~$x$, and these neighbors are leaves, the total size of~$T_x$ is then~$\Oh(d_x)$.

\subparagraph{Ensuring connectors with nice properties.}
We now give a more formal idea behind the definition of nice connectors.
Let~$P$ be a connector with endpoints~$a$ and~$b$, and let~$C$ be the extension of~$P$.
Then, $P$ is nice, if (i)~there is no dense path between~$a$ and~$b$ outside of~$D[V(C)]$ and (ii)~for each arc~$(u,v)\in D$ between a vertex of~$C$ and a vertex outside of~$C$, there is no dense~$(u,v)$-path that goes over~$a$ or there is no dense~$(u,v)$-path that goes over~$b$.

Since the definition of nice connectors relies on the non-existence of dense paths between specific vertex pairs, detecting whether a connector is nice can presumably not be done in polynomial time.\footnote{This is due to the fact that finding a dense path between two specific vertices can be shown to be NP-hard by reducing from~\textsc{Multi Colored Clique}.}
To still achieve the goal of finding a desired set~$W^*$ where each connector is nice, we show the following:
Whenever we have a set~$W \supseteq X^*$ and a~$W$-connector~$P$, then we can in polynomial time (i)~detect that~$D$ is not realizable or (ii)~compute a constant number of vertices~$U_P$ of~$P$, such that each subpath of~$P$ that is a~$(W \cup U_P)$-connector is nice.
That is, even though we cannot check efficiently that the nice property is fulfilled for a given connector, we can ensure it by adding few vertices to our set.
To compute the set~$W^*$, we start with~$W^* = X^*$ and iteratively add for each~$X^*$-connector~$P$ the set~$U_P$ to~$W^*$.
Since~$|X^*|\in \Oh(\fes)$ and there are~$\Oh(\fes)$ $X^*$-connectors, the resulting set~$W^*$ also has size~$\Oh(\fes)$.
As we show, the update of~$W^*$ preserves the nice property of the connector before and after the update.
In this way, we ensure that each~$W^*$-connector is nice.

\subparagraph{Dealing with nice connectors.}
Recall the definition of nice connectors.
The first property essentially ensures that arcs in~$D[V(C)]$ can only be realized via dense paths in~$D[V(C)]$, which makes the local realization of~$D[V(C)]$ in some sense independent from the remainder of~$D$.
The second property ensures that we only have one choice (with respect to dense paths inside of~$D[V(C)]$) to realize an arc~$(u,v)$ between an internal and an external vertex.

We make use of these properties, since a nice connector~$P$ and its extension~$C$ only interact with the remainder of~$D$ via exactly two edges~$e_a$ and~$e_b$.
Moreover, both these edges receive at most two labels in every~\minlab realization. 
Assuming we are given the labeling for~$e_a$ and~$e_b$, we show that we can compute a constant number of labelings~$L_P$ for the edges of~$C$, such that if there is a realization~$\lambda$ for~$D$ agreeing with the prelabeling of~$e_a$ and~$e_b$, then we can replace the labels of~$\lambda$ on the edges of~$C$ by the labels of some labeling in~$L_P$.
The essential idea behind this algorithm is to build a constant number of tree instances in which we simulate how the connector interacts with the remainder of~$D$ in any possible labeling.
As we show, we only have to (i)~consider the cases for each of the edges~$e_a$ and~$e_b$, whether there are temporal paths in~$\lambda$ that enter or exit~$C$ via~$e_a$ ($e_b$) via the smallest or largest label of~$e_a$ ($e_b$), and (ii)~consider the cases whether some temporal paths under~$\lambda$ enter~$C$ via~$e_a$ and leave via~$e_b$ and whether they traverse the respective edges at their smallest or largest label.
These are in total only a constant number of options.
For each such option, we construct a tree instance of the problem obtained from~$D[V(C)]$ by adding a constant number of vertices each to simulate the specific option.\nnew{
In some sense, these instances are generalizations of the instances built in the splitting rule for special bridges.}
Afterwards, we check for which of these instances we can find a labeling that agrees with the prelabeling on~$e_a$ and~$e_b$ and add such a labeling to~$L_P$ if it exists.
The latter task can be performed in polynomial time, by using our linear program for tree instances.
\fi

\iflong
\subsection{Independent structural implication}
In our algorithm, we will enumerate all reasonable prelabelings of a set of $\Oh(\fes)$~edges.
To ensure that this enumeration can be done in FPT-time for~$\fes$, we need to ensure that 'few' labels per edges are sufficient.
The following result ensures this property, but is also of independent interest for the understanding of our problem.

\begin{theorem}
Let~$D$ be an instance of~\URGD with solid graph~$G$.
In each minimal realization of~$D$, each edge of~$G$ receives~$\Oh(\fes)$ labels.
\end{theorem}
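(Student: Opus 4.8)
The plan is to decompose an arbitrary edge $e$ into its location relative to the $2$-core $V^*$ and bound the number of labels it can carry using the structural results already established for trees, bridges, and triangles. First I would note that if $e$ is incident with a vertex outside $V^*$, then $e$ is not part of any triangle (triangles live inside the $2$-core), so Lemma~\ref{lem:tree-atmost2} immediately gives at most two labels on $e$ in any minimal realization. Hence the interesting case is $e \subseteq V^*$. Here I would split further: if $e$ lies on an $X^*$-connector $P$ (i.e.\ $e$ is incident with at least one vertex of $V^* \setminus X^*$), then again $e$ is not part of a triangle in $G$, since every triangle of $G[V^*]$ is contained in $G[X^*]$ (the vertices of $X^*$ include all vertices of degree $\ge 3$ in $G[V^*]$ and their neighbors, hence all triangle vertices); so Lemma~\ref{lem:tree-atmost2} again yields at most two labels. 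The only edges that can carry many labels are therefore the $\Oh(\fes)$ edges between two vertices of $X^*$, and it remains to bound the number of labels on such an edge by $\Oh(\fes)$.

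For an edge $e = \{u,v\}$ with both endpoints in $X^*$, I would argue along the lines of the proof of Lemma~\ref{max number label} but localized. Consider a minimal realization $\lambda$ and suppose $e$ carries labels $\alpha_1 < \alpha_2 < \cdots < \alpha_t$. For each pair $(x,y)$ of vertices with $(x,y) \in A$ realized by a temporal path whose \emph{first} use of $e$ is the edge $e$ itself traversed in the direction from $u$ to $v$ (or $v$ to $u$), there is a smallest label of $e$ at which $x$ can reach $v$ (resp.\ $u$); call such a label \emph{needed for $e$}. A label $\alpha_i$ that is not needed for $e$ in either direction and is not the unique smallest or largest label can be deleted from $e$: every temporal path through $e$ that used $\alpha_i$ can be rerouted through an adjacent surviving label of $e$ without creating or destroying reachabilities — this is exactly the local-surgery argument used in Lemma~\ref{lem:tree-atmost2} and Lemma~\ref{lem:special}. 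By minimality every label of $e$ except possibly two (the extremes) must be needed for $e$. So it suffices to bound the number of distinct ``needed'' labels. Each needed label corresponds, by transitivity across $e$, to one of the distinct ``reachability types'' of the components $G_u$ and $G_v$ obtained by deleting $e$ — but when $e$ joins two $X^*$-vertices, $e$ need not be a bridge, so deleting $e$ may not disconnect $G$, and this is where the argument needs care.

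The clean way around this is to bound the number of labels of $e$ by the number of distinct temporal paths that \emph{must} pass through $e$ and use a \emph{distinct} label of $e$, and to charge each such path to a feedback edge. Concretely: fix a spanning tree $T = G[V^*] - F$ (extended by the pendant trees to a spanning tree of all of $G$). If $e \in T$, then two temporal paths using $e$ at different labels and realizing arcs $(x,y)$, $(x',y')$ that are both ``$T$-paths'' (using only tree edges) would have to be ordered consistently on $e$, and by the transitivity argument this forces the gap between their labels to be witnessed by a distinct reachability in $G_u$ or $G_v$; since $G - e$ has at most $\fes + 1$ bridge-components relevant here, one can show only $\Oh(\fes)$ distinct labels survive. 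If $e \in F$, then $e$ is itself a feedback edge and there are only $\fes$ of those. I would formalize this by: (i) reducing via Lemmas~\ref{lem:split-nonspecial}, \ref{lem:split-special}, \ref{lem:remove-pendant} and the connector machinery to the case where all ``long'' edges are among the $\Oh(\fes)$ edges spanned by $X^*$; (ii) observing that each such edge $e$, when removed, leaves a graph with at most $\fes$ independent cycles, so the family of ``reachability-incomparable'' cuts across $e$ has size $\Oh(\fes)$; (iii) invoking the surgery argument to delete any label of $e$ not witnessing one of these $\Oh(\fes)$ incomparabilities.

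The main obstacle I anticipate is step (ii): making precise why an edge between two $X^*$-vertices — which can sit on many cycles simultaneously — admits only $\Oh(\fes)$ ``label-forcing'' reachability distinctions. For a bridge this is immediate (Lemma~\ref{lem:special} gives at most two), but for a non-bridge $e$ the ``two sides'' of $e$ are not well-defined. I would handle this by working not with $G - e$ but with the quotient where every connector tree attached away from $e$ and every pendant tree is contracted; the resulting multigraph has $\Oh(\fes)$ edges and $\Oh(\fes)$ vertices, and every temporal path through $e$ projects to a walk in this bounded-size multigraph. Distinct labels of $e$ must induce distinct ``entry/exit profiles'' on the $\Oh(\fes)$ edges incident to the contracted sides, and a counting argument over these profiles — combined with the fact that along any single connector the labels are monotone (as in the acyclic-tournament observation in Section~\ref{sec:tree}) — caps the number of labels at $\Oh(\fes)$. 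If the constant turns out awkward to pin down cleanly, a safe fallback is the weaker but sufficient bound: the number of labels on $e$ is at most $2|X^*| + 2 \in \Oh(\fes)$, proved by showing each non-extreme label of $e$ is needed to realize at least one arc whose witnessing path, restricted to the $\Oh(\fes)$-vertex contracted graph, is distinct from that of every other label.
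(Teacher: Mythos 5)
Your first reduction---to edges with both endpoints in $X^*$---is sound and matches the paper: such edges are the only ones that can lie in a triangle, so Lemma~\ref{lem:tree-atmost2} caps all others at two labels. Your identification of ``needed'' labels is also the right object; it is exactly the set $L_e = \{\alpha_v : v \in V\}$ the paper works with, where $\alpha_v$ is the smallest label of $e$ at which a temporal path starting at $v$ can traverse $e$, and minimality forces every label of $e$ to lie in $L_e$ (otherwise the label can be deleted while preserving all realized arcs). Up to here you are on the paper's track.

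The gap is the counting step: you never actually establish $|L_e| \in \Oh(\fes)$. Your ``entry/exit profile'' idea does not close the argument, because a profile over $\Oh(\fes)$ boundary edges takes exponentially many distinct values, so distinctness of profiles bounds nothing; and the fallback $2|X^*|+2$ is asserted but not derived by any mechanism in your sketch. The step that is missing is a charging argument over \emph{source vertices}, which is how the paper bounds $|L_e|$ without ever needing to speak of the ``two sides'' of $e$. Concretely, partition the vertices $v$ contributing to $L_e$ into (a) vertices of $X^*$, contributing at most $|X^*| = \Oh(\fes)$ values; (b) internal vertices of $X^*$-connector trees $C$ with endpoints $a,b \in X^*$---since both endpoints of $e$ lie in $X^*$ and $a,b$ are the only $X^*$-vertices of $C$, any temporal path from inside $C$ to $e$ must exit via the edge $\{a,a'\}$ or $\{b,b'\}$, each of which carries at most two labels, so $C$ contributes at most four values and there are $\Oh(\fes)$ connectors; and (c) vertices of pendant trees $T_x$ with root $x \in X^*$---a path out of $T_x$ crosses a bridge into $x$ and then leaves $x$ along one of its $d_x$ edges in $G[V^*]$, and only the two smallest labels of that exit edge matter (the bridge labels are dominated), so $T_x$ contributes $\Oh(d_x)$ values and $\sum_{x\in X^*} d_x = \Oh(\fes)$. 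This per-source charging is what your contraction intuition is circling but does not pin down; without it the proof does not go through.
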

\begin{proof}
Let~$\lambda$ be a realization of~$D$.
Let~$F$ be a feedback edge set of~$G$ of size~$\fes$.
Moreover, let~$V^*$ and~$X^*$ be defined as above based on~$F$ and recall that~$|X^*| \in \Oh(|F|) = \Oh(\fes)$.
If each edge receives $\Oh(\fes)$~labels, we are done.
So assume that there is an edge~$e$ that receives~$\omega(\fes)$ labels.
We will show that~$\lambda$ is not a minimal realization for~$D$.
Note that if not both endpoints of~$e$ are in~$X^*$, then~$e$ is not part of a triangle in~$G$, which implies that~$\lambda$ is not a minimal realization, since in each such realization, $e$ would receive at most two labels (see~\Cref{lem:tree-atmost2}).
Thus, assume in the following that both endpoints of~$e$ are from~$X^*$.

Similar to the proof of~\Cref{max number label}, we denote for each vertex $v\in V$ with~$\alpha_v$ the smallest label assigned to~$e$ for which there is a temporal path under~$\lambda$ that starts in~$v$ and traverses~$e$ at time~$\alpha_v$, if such a label exists.
Let~$L_e := \{\alpha_v \mid v\in V, \alpha_v \text{ exists}\}$.
Note that if there is at least one label~$\ell$ assigned to~$e$ that is not contained in~$L_e$, then removing label~$\ell$ from~$e$ still preserves at least one temporal path between any two vertices for which a temporal path exists under~$\lambda$.
Consequently, if such a label~$\ell$ exists, we can immediately conclude that~$\lambda$ is not a minimal realization for~$D$.
Recall that~$e$ receives $\omega(\fes)$ labels.
Hence, to show that~$\lambda$ is not a minimal realization, it suffices to show that~$L_e$ has size~$\Oh(\fes)$.

To this end, note that each vertex of~$V$ is (a)~a vertex of~$X^*$, (b)~contained in some~$X^*$-connector tree~$C$ (excluding the two vertices \tnew{of~$C$} in~$X^*$), or (c)~a vertex of some pendant tree rooted in~$X^*$ (excluding the root).

Since~$X^*$ has size~$\Oh(\fes)$, the vertices of~$X^*$ contribute~$\Oh(\fes)$ different labels to~$L_e$.

Let~$C$ be a~$X^*$-connector tree where~$a$ and~$b$ are the \tnew{two unique} vertices of~$X^*$ in~$C$.
By definition of connector trees, $a$ and~$b$ are both leaves in~$C$.
Moreover, since these vertices are the only vertices of~$X^*$ in~$C$ and both endpoints of~$e$ are from~$X^*$, each temporal path starting in~$C$ and traversing edge~$e$ has to leave the connector tree via the unique edge~$\{a,a'\}$ incident with~$a$ in~$C$ or via the unique edge~$\{b,b'\}$ incident with~$b$ in~$C$. 
Since both edges are not part of a triangle, they both receive at most two labels each under~$\lambda$, or~$\lambda$ is not a minimal realization for~$D$ (see~\Cref{lem:tree-atmost2}).
Thus, assume that both edges receive at most two labels each.
Hence, all vertices of~$V(C) \setminus \{a,b\}$ contribute in total at most four labels to~$L_e$.
Since there are $\Oh(\fes)$~$X^*$-connectors (see~\Cref{properties of small sets}), all vertices that are \nnew{contained in~$X^*$-connectors} thus contribute $\Oh(\fes)$~labels in total to~$L_e$.

To show that~$L_e$ has size~$\Oh(\fes)$, it suffices to show that all vertices that are contained in pendant trees with roots in~$X^*$ contribute $\Oh(\fes)$~labels in total to~$L_e$.
Let~$x\in X^*$.
We show that all vertices of~$T_{x}$ contribute $\Oh(d_x)$~labels in total to~$L_e$, where~$d_x$ denotes the degree of~$x$ in~$G[V^*]$.
Let~$v\in V_x$ and let~$P$ be a temporal path from~$v$ that traverses the edge~$e$ at time~$\alpha_v$.
Note that~$P$ traverses \tnew{one} edge~$e_1$ incident with~$x$ in~$T_x$ and then traverses an edge~$e_2$ between~$x$ and some other vertex of~$V^*$.
Since~$e_1$ is an edge of~$T_x$, $e_1$ is a bridge in~$G$, which implies that~$\max \lambda(e_1) \leq \min \lambda(e_2)$.
Hence, we can assume that~$P$ either traverses the edge~$e_2$ at time~$\min \lambda(e_2)$ or at time~$\min (\lambda(e_2) \setminus \min \lambda(e_2))$ (if~$e_2$ receives at least two labels).
Thus, for each edge between~$x$ and some other vertex of~$V^*$, at most two labels are used that define the~$\alpha_v$ values for the vertices~$v\in V_x$.
This implies that the vertices of~$T_{x}$ contribute $\Oh(d_x)$~labels in total to~$L_e$.
Note that~\Cref{properties of small sets} implies that the sum of degrees of all vertices of~$X^*$ in~$G[V^*]$ is~$\Oh(\fes)$.
Hence, all vertices that are contained in pendant trees with roots in~$X^*$ contribute $\Oh(\fes)$~labels in total to~$L_e$.

Consequently, $L_e$ has size~$\Oh(\fes)$.
This implies that~$\lambda$ is not a minimal realization, since~$e$ receives $\omega(\fes)$ labels under~$\lambda$.
\end{proof}

\subsection{Dealing with the pendant trees}

Our goal in this subsection is to prove the following.

\begin{proposition}\label{shrink pendant}
In polynomial time, we can (i)~detect that~$D$ is not realizable or (ii)~obtain an equivalent instance by (a)~adding polynomially many degree-1 vertices to~$D$ and (b)~removing vertices from~$D$, such that in the resulting instance, in each pendant tree~$T_v$ for~$v\in V^*$, $\Oh(d_v)$ vertices remain, where~$d_v$ denotes the degree of~$v$ in~$G[V^*]$.
\end{proposition}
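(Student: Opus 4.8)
The plan is to reduce each pendant tree $T_x$ (for $x \in V^*$) to size $\Oh(d_x)$ in two phases: first, make $T_x$ shallow by applying the splitting operations of Section~\ref{sec:bridge} to all edges of $T_x$ incident with $x$; second, bound the number of children of $x$ in $T_x$ by a blocking argument. For the first phase, observe that every edge of $T_x$ is a bridge of $G$ (since $T_x$ attaches to the $2$-core only through $x$). For each edge $\{x,b\} \in E(T_x)$ incident with $x$: if it is non-special, apply Lemma~\ref{lem:split-nonspecial}; since one of the two resulting subinstances has $T_x$'s component as a tree, we solve that one in polynomial time via Theorem~\ref{th:treeCombinatorial} (or by the LP of Lemma~\ref{lem:treeLP}), and keep only the other subinstance, in which $b$ becomes a leaf adjacent to $x$. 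If $\{x,b\}$ is special, we are in \any\str\URGD and apply Lemma~\ref{lem:split-special} (first checking plausible reachability, reporting a no-instance otherwise); again one subinstance is a tree and is solved directly, while in the retained subinstance $b$ has at most two further neighbours, the in-leaf and the out-leaf. After doing this for all edges incident with $x$, $T_x$ has depth at most $2$, and the number of its leaves is $\Oh(|B_x|)$ where $B_x = N_{T_x}(x)$. So it remains to bound $|B_x|$ by $\Oh(d_x)$. Throughout, we also exhaustively apply Lemma~\ref{lem:remove-pendant} to remove redundant pendant siblings (or detect a no-instance).

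For the second phase, the key definition is that of a \emph{blocked} internal edge. Call an edge of $T_x$ incident with $x$ an \emph{internal} edge, and any solid edge incident with $x$ not in $T_x$ an \emph{external} edge (there are exactly $d_x$ of these). I would show that for every external edge $e = \{x,y\}$ there is a set $S_e$ of at most $5$ internal edges, computable in polynomial time, such that in every \minlab realization, an internal edge sharing a label with $e$ must lie in $S_e$. The argument mirrors Observation~\ref{incident same label} and the bundled/separated analysis of Section~\ref{sec:bridge}: $e$ and an internal edge $\{x,b\}$ can share a label $\alpha$ only in the narrow circumstances described there, and the dashed arcs among $y$, $b$, the children of $x$, and the two-hop structure of $T_x$ force such $b$'s to lie in a bounded, efficiently-identifiable set (for example, the one or two internal edges forming a temporal path with $e$ in each direction, together with a constant number of further candidates arising from special bridges and the in/out-leaf structure; more than five such candidates would yield two internal edges with labels on opposite sides of $\alpha$, creating a forbidden arc, hence a no-instance). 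Let $S_x = \bigcup_{e \text{ external}} S_e$, so $|S_x| \le 5 d_x = \Oh(d_x)$.

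Now I would argue: any internal edge $\{x,b\}$ with $\{x,b\} \notin S_x$ must, in every \minlab realization, receive labels disjoint from all external edges. Consider the subtree $T_b$ hanging below $b$ (of depth $\le 1$ after phase one), together with $x$. Because no temporal path can traverse $\{x,b\}$ and then leave $x$ via an external edge at a compatible time (label-disjointness), and because $\{x,b\}$ is a bridge, the reachability requirements inside $V(T_b) \cup \{x\}$ are decoupled from the rest of $D$ exactly as in Lemma~\ref{lem:split-nonspecial}/Lemma~\ref{lem:split-special}: I would verify that $\{x,b\}$ behaves as a splittable (non-special or special-with-plausible-reachability) bridge, solve the tree side in polynomial time, and in the retained instance $b$ together with its $\le 2$ leaf-children is reducible. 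When two such edges $\{x,b\}$ and $\{x,b'\}$ outside $S_x$ have the same "reachability type" (same in/out neighbourhood pattern towards $V \setminus (V(T_b) \cup \{x\})$ — of which there are only $\Oh(1)$ possibilities since all labels involved are among a constant set of values relative to the labels of $S_x \cup$ external edges), Lemma~\ref{lem:remove-pendant} lets us delete one of them. Hence after exhaustive reduction at most $\Oh(|S_x|) + \Oh(1) = \Oh(d_x)$ children of $x$ survive, giving $|B_x| = \Oh(d_x)$ and thus $|V(T_x)| = \Oh(d_x)$. Since $\sum_{x \in V^*} d_x = \Oh(|E(G[V^*])|) = \Oh(|V^*| + \fes)$ and in fact only $\fes$-many of these degrees exceed $2$ (vertices of $Y_3$) while the rest contribute boundedly, the total size of all pendant trees is polynomial, and all operations run in polynomial time; adding at most polynomially many degree-$1$ vertices (the in/out-leaves from Lemma~\ref{lem:split-special}) is clearly harmless.

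The main obstacle I anticipate is the case analysis in phase two: precisely pinning down the constant-size blocking set $S_e$ for each external edge, handling simultaneously the strict/non-strict distinction, the special vs.\ non-special character of internal edges, and the interaction with the in-leaf/out-leaf gadgets introduced by Lemma~\ref{lem:split-special}. One must be careful that "same reachability type" really does capture everything relevant for Lemma~\ref{lem:remove-pendant} to apply (i.e.\ that two untyped-equivalent children are genuinely interchangeable in every realization, not just in some), which requires showing that once an internal edge is label-disjoint from all external edges, its realization is entirely determined by the local structure of $T_b \cup \{x\}$ plus the single scalar "at which of $x$'s available times can a path enter/leave $T_b$", and that this scalar takes only $\Oh(1)$ relevant values. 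The rest — shallowness after splitting, polynomial running time, and the global size bound via $\sum d_x$ — should be routine given Lemmas~\ref{lem:split-nonspecial}, \ref{lem:split-special}, and \ref{lem:remove-pendant}.
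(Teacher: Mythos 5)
Your Phase~1 matches the paper: apply the splitting operations of Lemmas~\ref{lem:split-nonspecial}, \ref{lem:split-special}, and \ref{lem:remove-pendant} to every bridge $\{x,b\}$ incident with $x$, solve and discard the tree side of each split, and conclude that $T_x$ has depth $\le 2$ with each child $b$ carrying at most an in-leaf and an out-leaf. Your construction of the blocking set $S_e$ of at most $5$ internal edges per external edge also parallels the paper's notion of \emph{surrounding} edges (which is defined via the three intervals of~\Cref{prefix suffix}), though the paper further distinguishes the case where the middle interval has size $>1$, in which case there are no surrounding edges but rather two \emph{blocking} edges; your sketch does not recover that dichotomy.

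Phase~2 has a genuine gap. You propose to remove a surplus child $b$ by applying Lemma~\ref{lem:remove-pendant} to a pair $b,b'$ of children of $x$ with the same ``reachability type.'' But Lemma~\ref{lem:remove-pendant} explicitly requires that both $b$ and $b'$ be \emph{degree-$1$} vertices and that $(b,b')\notin A$ and $(b',b)\notin A$. After Phase~1, a special child $b$ has degree up to $3$ (its in-leaf and out-leaf), and more fundamentally the preprocessing forces $D[B_x]$ to be a complete DAG, so for \emph{every} pair of distinct children $b,b'$ exactly one of $(b,b')$, $(b',b)$ is an arc. The hypotheses of Lemma~\ref{lem:remove-pendant} therefore never hold for the pairs you want to collapse. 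Your fallback claim --- that there are only $\Oh(1)$ ``reachability types'' once an internal edge is label-disjoint from all external edges --- is also unjustified: distinct children outside $S_x$ can have genuinely different reachability to $V\setminus V(T_x)$, constrained only by the interval structure, so the number of types is not bounded by a constant. What the paper actually proves (Lemma~\ref{intermediate identical}) is a \emph{consecutive-consistency} condition --- adjacent non-surrounding, non-blocking children must have the same external reachability --- which is weaker than ``$\Oh(1)$ types'' and is used only as a sanity check.

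The paper's final step is a different, bespoke reduction rule: if some $b_v^i$ with $6\le i\le |B_v|-5$ sits in a window of $11$ consecutive internal edges none of which surrounds or blocks any external edge, then $D$ is realizable iff both $D[V(T_v)]$ and $D' := D - (\{b_v^i\}\cup\text{its leaves})$ are realizable. The correctness proof is a gluing argument: a realization of $D$ is assembled by taking a realization $\lambda'$ of $D'$ and a realization $\lambda_v$ of $D[V(T_v)]$, carefully choosing cut indices $s\in\{i-3,i-4\}$, $t\in\{i+3,i+4\}$ so that the cut edges are separated in both instances, shifting time steps to align $\lambda'(e_s)=\lambda_v(e_s)$ and $\lambda'(e_t)=\lambda_v(e_t)$, and then taking $\lambda_v$ on edges incident with $b_v^{s+1},\ldots,b_v^{t-1}$ and $\lambda'$ elsewhere. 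This keeps the local substructure consistent (via $D[V(T_v)]$ being realizable) without ever claiming that two children are interchangeable. To repair your proof you would need to replace the Lemma~\ref{lem:remove-pendant} invocation by an argument of this decompose-and-glue type; the ``same reachability type'' shortcut does not suffice.
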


For each vertex~$v\in V^*$, let~$B_v$ denote the vertices of~$T_v$ that are adjacent to~$v$ in~$G$.
By definition of~$T_v$, for each vertex~$b \in B_v$, the edge~$\{v,b\}$ is a bridge in~$G$.
We call~$B_v$ the \emph{bridge vertices} of~$T_v$.

For each vertex~$v\in V^*$, apply the operations behind~\Cref{lem:split-nonspecial}, \Cref{lem:split-special}, and~\Cref{lem:remove-pendant} to each bridge between~$v$ and~$B_v$.
These operations might possibly detect that~$D$ is not realizable.
In this case, report that~$D$ is \tnew{not} realizable.
Otherwise, each such operation individually splits our instance into at most two instances.
Moreover, at most one of these instances contains cycles, since each bridge in~$T_v$ has a tree on one side.
For all resulting tree instances, we check in polynomial time whether they are realizable.
If at least one is not realizable, we correctly output that~$D$ is not realizable.
Otherwise, we only (again) remain with one instance, and this instance contains cycles.
In particular, this instance is (besides some possible degree-1 vertices) a subgraph of~$D[V^* \cup N(V^*)]$.
For convenience, assume that~$D$ already is the unique resulting instance.
Note that~$G[\{v\} \cup B_v]$ is an induced star with center~$v$, where each edge between~$v$ and a vertex of~$B_v$ is a bridge in~$G$.
Hence, with the same arguments behind~\Cref{lem:reductiontournament}, $D$ is not realizable if~$D[B_v]$ is not a complete DAG, since the operations behind~\Cref{lem:split-nonspecial} and~\Cref{lem:remove-pendant} were applied on all \tnew{bridges} between~$v$ and~$B_v$.
Hence, if~$D[B_v]$ is not a complete DAG, we correctly output that~$D$ is not realizable.
Otherwise, we continue with the knowledge that~$D[B_v]$ is a complete DAG.
This complete DAG~$D[B_v]$ thus implies a unique total order on the vertices of~$B_v$ starting from the unique source in~$D[B_v]$.
In the following, we denote the~$i$th vertex of this total order of~$B_v$ by~$b_v^i$.
Consequently, for each~$1\leq i < j \leq |B_v|$, $(b_v^i, b_v^j) \in A$.

\nnew{
Additionally, we check for any two adjacent bridges~$e$ and~$e'$ whether they are both bundled and separated.
If this is the case for any pair of adjacent bridges, we can correctly detect that~$D$ is not realizable due to~\Cref{bundledseparated}.
Since this property can be checked in polynomial time, we assume in the following that no pair of adjacent bridges is both bundled and separated.
}

We now observe a necessary property for realizable instances.

\begin{lemma}\label{prefix suffix}
Let~$v\in V^*$ and let~$u\in V \setminus (\{v\} \cup V_v)$.
If~$D$ is realizable, then the set~$[1,|B_v|]$ can be partitioned in three consecutive and pairwise disjoint (possibly empty) intervals~$I_1$, $I_2$, and~$I_3$, such that 
\begin{itemize}
\item $(b_v^i,u)\in A$  and~$(u,b_v^i)\notin A$ for each~$i\in I_1$,
\item $(b_v^i,u)\notin A$ and~$(u,b_v^i)\notin A$ for each~$i\in I_2$, and
\item $(u,b_v^i)\in A$ and~$(b_v^i,u)\notin A$ for each~$i\in I_3$.
\end{itemize}
\end{lemma}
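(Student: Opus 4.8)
\textbf{Proof plan for Lemma~\ref{prefix suffix}.}

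The plan is to fix a \minlab realization $\lambda$ of $D$ and track, for each bridge vertex $b_v^i \in B_v$, the interval of time steps at which a temporal path passing through $b_v^i$ can be ``at'' $v$. First I would recall that the edges $\{v,b_v^i\}$ are all bridges of $G$, hence each receives one label (if non-special) or two labels (if special) in $\lambda$ by Lemma~\ref{lem:special}. Let $\ell_i := \min\lambda(\{v,b_v^i\})$ and $h_i := \max\lambda(\{v,b_v^i\})$. Since $D[B_v]$ is a complete DAG with $(b_v^i,b_v^j)\in A$ and $(b_v^j,b_v^i)\notin A$ for $i<j$, and the only way for $b_v^i$ to reach $b_v^j$ and for $b_v^j$ not to reach $b_v^i$ is via the two-edge path through $v$, we get $\ell_i < h_j$ and $h_i \le \ell_j$ for all $i<j$; in particular the intervals $[\ell_i,h_i]$ are weakly increasing in $i$, and consecutive ones can overlap in at most a single common value (cf.\ the reasoning in the proof of Lemma~\ref{meaning of separated} and Observation~\ref{obs:nothreedouble}).

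Next I would analyze reachability between $u$ and $b_v^i$. Any temporal path from $b_v^i$ to $u$ must first traverse $\{v,b_v^i\}$ (the unique edge of $T_v$ on the way out of the pendant tree, since $u\notin\{v\}\cup V_v$), arriving at $v$ at some time in $\lambda(\{v,b_v^i\})$, and then continue from $v$ to $u$ using only edges outside $T_v$. Crucially, the portion of the path from $v$ onwards does \emph{not} involve $B_v$ at all (it leaves via some other neighbor of $v$ in $V^*$). Hence whether $b_v^i$ can reach $u$ depends only on (a) the set $\lambda(\{v,b_v^i\})$ and (b) the fixed ``departure structure'' from $v$ into $V\setminus(\{v\}\cup V_v)$. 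Concretely: $b_v^i$ can reach $u$ iff $v$ can reach $u$ in $\lambda$ starting from some time $t \le h_i$ (for strict paths, $t$ strictly less than the next label used, but the monotone-intervals argument still applies) --- equivalently iff the earliest time $t^*_{u}$ at which one can depart $v$ towards $u$ satisfies $t^*_{u} \le h_i$. Since the $h_i$ are non-decreasing in $i$, this yields that $\{\,i : (b_v^i,u)\in A\,\}$ is a prefix $I_1$ of $[1,|B_v|]$. Symmetrically, $u$ reaches $b_v^i$ iff $u$ can reach $v$ by some time $\le \ell_i$ (the path enters $T_v$ via $\{v,b_v^i\}$ as its last edge and must arrive at $v$ no later than the smallest label of that edge to then traverse it), and since the $\ell_i$ are non-decreasing, $\{\,i : (u,b_v^i)\in A\,\}$ is a suffix $I_3$. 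Finally, the claim that $I_1$ and $I_3$ are disjoint --- i.e.\ no $b_v^i$ can both reach and be reached by $u$ --- follows because such an $i$ would force $(u,u)$-reachability trivial but more to the point would contradict minimality/acyclicity: if $(b_v^i,u)\in A$ and $(u,b_v^i)\in A$ then, combined with $(b_v^i,b_v^j)\in A$ for $j>i$, we would get $(u,b_v^j)\in A$ already, and combined with $(b_v^k,b_v^i)\in A$ for $k<i$ we would get $(b_v^k,u)\in A$; one then has to check the endpoints carefully, but the core point is that $I_1$ being a prefix and $I_3$ being a suffix forces them to be disjoint unless they cover overlapping indices, which the strict inequalities $\ell_i < h_j$ combined with the ``arrive-by'' thresholds rule out. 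The middle interval $I_2 := [1,|B_v|]\setminus(I_1\cup I_3)$ then collects exactly the indices with neither arc.

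I expect the main obstacle to be the careful bookkeeping at the boundary between $I_1$ and $I_3$ and the handling of special (double-labeled) edges $\{v,b_v^i\}$: one must argue that even with two labels the ``reach $u$'' condition depends monotonically on $h_i$ while the ``reached by $u$'' condition depends monotonically on $\ell_i$, and that these two thresholds cannot be simultaneously satisfied for the same $i$ (which is where we use that $(b_v^i,u)$ and $(u,b_v^i)$ cannot both be arcs of a DAG-like configuration, or more precisely where we invoke that if both held, transitivity through the DAG $D[B_v]$ would force a contradiction with the precise arc set specified, or that $\{v,b_v^i\}$ would have to host a temporal path in both directions, impossible for strict paths). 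I would also double-check the edge case where some $\{v,b_v^i\}$ receives no label --- but since every bridge must be labeled (its endpoints cannot reach each other otherwise), this does not arise. Modulo these case distinctions, the lemma reduces to the elementary observation that a non-decreasing sequence of thresholds partitions an index set into a prefix, a middle, and a suffix.
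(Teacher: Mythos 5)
Your overall plan — exploit the total order on $B_v$ and the monotonicity of $\ell_i=\min\lambda(\{v,b_v^i\})$ and $h_i=\max\lambda(\{v,b_v^i\})$ to deduce a prefix/middle/suffix structure — is the same idea the paper uses. However, two key steps do not go through as written.

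First, your threshold condition for $I_1$ is internally inconsistent. A temporal path from $b_v^i$ to $u$ first traverses $\{v,b_v^i\}$, arriving at $v$ no earlier than $\ell_i$, and must then depart $v$ strictly later. So $(b_v^i,u)\in A$ holds iff some $v$-to-$u$ temporal path departs $v$ at a time exceeding $\ell_i$: the decisive comparison is between $\ell_i$ and the \emph{latest} feasible departure from $v$, and since $\ell_1\le\cdots\le\ell_{|B_v|}$, this set of indices is a prefix. You instead compare the \emph{earliest} departure $t^*_u$ against $h_i$ (``$t^*_u\le h_i$''); but $h_i$ is non-decreasing, so $\{i:t^*_u\le h_i\}$ would be a \emph{suffix}, contradicting the prefix you then claim. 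For $I_3$ you make the analogous swap ($\ell_i$ in place of $h_i$, earliest in place of latest), which there happens not to flip the direction but still uses the wrong threshold when the bridge is special.

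Second, none of your sketched arguments for $I_1\cap I_3=\emptyset$ yields a contradiction: ``$(u,u)$-reachability'' is vacuously true; $D$ is not a DAG, so acyclicity is unavailable; and a single bridge edge can be traversed in both directions by two \emph{different} temporal paths, so nothing is ``impossible for strict paths.'' The inequalities $\ell_i<h_j$ and $h_i\le\ell_j$ relate labels of bridges in $T_v$ to one another and say nothing about how the $v$-to-$u$ departure threshold compares to the $u$-to-$v$ arrival threshold. What you are missing — and what the paper's proof opens with — is purely structural: since $u\notin\{v\}\cup V_v$, the vertex $u$ is not adjacent to $b_v^i$ in the solid graph (every neighbor of $b_v^i$ lies inside $T_v$), so $\{u,b_v^i\}$ is not a solid edge, and hence $A$ contains at most one of $(u,b_v^i)$ and $(b_v^i,u)$. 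With that observation plus the corrected monotone argument, the three consecutive intervals follow immediately; this is essentially the paper's proof, which phrases the monotonicity directly by modifying a given temporal path rather than by naming thresholds, but is the same argument.
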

\begin{proof}
Since~$u\in V \setminus (\{v\} \cup V_v)$, for each~$b_v^i \in B_v$, $A$ contains at most one of the arcs~$(u,b_v^i)$ and~$(b_v^i,u)$.
Let~$\lambda$ be a realization of~$D$.

If~$(b_v^i,u)\in A$ for some~$i\in[2,|B_v|]$, then there is a temporal $(b_v^i,u)$-path under~$\lambda$.
This temporal path starts by traversing the edge~$\{b_v^i,v\}$ at time at least~$\min \lambda(\{b_v^i,v\})$.
Since for each~$j\in [1,i-1]$, $A$ contains the arc~$(b_v^j,b_v^i)$, $\min \lambda(\{b_v^j,v\}) \leq \min \lambda(\{b_v^i,v\})$.
Hence, there is also a temporal~$(b_v^j,u)$-path under~$\lambda$.
This implies that~$(b_v^j,u)\in A$.
Moreover, this implies that the interval~$I_1$ exists as stated and extends from the start of~$B_v$ (ending with the largest~$i$ for which~$(b_v^i,u)\in A$). 

Similarly, if~$(u,b_v^i)\in A$ for some~$i\in[1,|B_v|-1]$, then there is a temporal $(u,b_v^i)$-path under~$\lambda$.
This temporal path ends by traversing the edge~$\{v,b_v^i\}$ at time at most~$\max \lambda(\{v,b_v^i\})$.
Since for each~$j\in [i+1,|B_v|]$, $A$ contains the arc~$(b_v^i,b_v^j)$, $\min \lambda(\{v,b_v^j\}) \geq \max \lambda(\{v,b_v^i\})$.
Hence, there is also a temporal~$(u, b_v^j)$-path under~$\lambda$.
This implies that~$(u,b_v^j)\in A$.
Moreover, this implies that the interval~$I_3$ exists as stated and extends to the end of~$B_v$ (starting from the smallest~$i$ for which~$(b_v^i,u)\in A$).

Consequently, the interval~$I_2$ contains the remaining vertices of~$B_v$.
None of the vertices of~$I_2$ can have an arc from or towards~$u$ as otherwise, one of the other two intervals could be extended.
\end{proof}

Note that this property can be checked for each pair of vertices~$v$ and~$u$ in polynomial time.
Hence, if the property does not hold for some vertex pair, we correctly output that~$D$ is not realizable.
In the remainder, assume that these intervals exist for each pair of vertices~$v$ and~$u$.
Moreover, note that they can be computed in polynomial time.
We may call the interval~$I_2$ the~\emph{middle interval of~$u$ with respect to~$T_v$}.

Based on these intervals, we now define  for each vertex~$v\in V^*$ and each neighbor~$u$ of~$v$ in~$V^*$ a set of $\Oh(1)$~edges in~$T_v$ that 'surround' the edge~$\{u,v\}$.
Intuitively, these edges will be the only edges that can possibly share a label with~$\{u,v\}$ in any realization of~$G$.

\begin{definition}
Let~$e := \{u,v\} \subseteq V^*$ with~$e \in E$, and let~$B_v$ denote the bridge vertices of~$T_v$.
Moreover, let~$I_2$ denote \tnew{the} middle interval of~$u$ with respect to~$T_v$ according to~\Cref{prefix suffix}.
If~$B_v\neq \emptyset$, we define the~\emph{edges of~$T_v$ that surround~$e$} as:
\begin{itemize}
\item $\{\{v, b_v^j\}\mid j \in [i-2, i +2] \cap [1,|B_v|]\}$, if the interval~$I_2$ has size at most one,  where~$i\in [1, |B_v|]$ is the smallest index, such that~$(b_v^j, u) \in A$ for each~$1 \leq j < i$ and~$(u, b_v^k) \in A$ for each~$i < k  \leq |B_v|$, and as
\item $\emptyset$, otherwise.
\end{itemize}
In the latter case, we say that the edges~$\{b_v^j,v\}$ and~$\{b_v^k,v\}$ \emph{block the edge~$e$}, where~$j = \min I_2$ and~$k = \max I_2$.
\end{definition}

We now show that this definition matches the previous intuition.

\begin{lemma}\label{surrounding property}
Let~$e := \{u,v\} \subseteq V^*$ with~$e \in E$, and let~$S$ denote the edges of~$T_v$ that surround~$e$.
Then, in each realization of~$D$, $e$ shares no label with any edge of~$T_v$ that is incident with~$v$ and not in~$S$.
Moreover, for each~$i\in [1,|B_v|-1]$ where~$\{v,b_v^i\}$ and~$\{v,b_v^{i+1}\}$ are not in~$S$, in each \minlab realization~$\lambda$ of~$D$, $\lambda(e) \cap [\min \lambda(\{v,b_v^i\}), \max \lambda(\{v,b_v^{i+1}\})] = \emptyset$.
\end{lemma}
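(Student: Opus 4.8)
The plan is to prove both claims about a realization~$\lambda$ by tracking, for any solid neighbor~$u$ of~$v$ lying in~$V^*$, how temporal paths that use the bridge~$e=\{u,v\}$ must behave relative to the labels on the edges~$\{v,b_v^i\}$. First I would recall that every edge~$\{v,b_v^i\}$ of~$T_v$ is a bridge in~$G$ (hence not in a triangle), so by \Cref{lem:tree-atmost2} it receives at most two labels in a minimal realization, and similarly~$e$ receives at most two labels (one in the non-\any\str cases). Since~$D[B_v]$ is a complete DAG, the arcs~$(b_v^i,b_v^j)$ for~$i<j$ force~$\max\lambda(\{v,b_v^i\})\le\min\lambda(\{v,b_v^{i+1}\})$ for all~$i$, i.e.\ the label sets on edges from~$v$ to~$B_v$ appear in a weakly increasing block order along the total order of~$B_v$ (this is the same argument used for complete DAGs in \Cref{sec:tree}).

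For the first claim, suppose for contradiction that~$e$ and some~$\{v,b_v^\ell\}\notin S$ share a label~$\alpha$. A shared label between the two solid edges~$\{u,v\}$ and~$\{v,b_v^\ell\}$ would, by \Cref{incident same label} applied when~$(u,b_v^\ell)\notin A$ and~$(b_v^\ell,u)\notin A$, be impossible unless these arcs are present; so either~$b_v^\ell\in I_1$ (giving arc~$(b_v^\ell,u)$) or~$b_v^\ell\in I_3$ (giving arc~$(u,b_v^\ell)$). In the case~$I_2$ has size at most one and~$i$ is the threshold index as in the definition, $b_v^\ell\notin S$ means~$\ell\le i-3$ or~$\ell\ge i+3$. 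If~$\ell\le i-3$ then~$b_v^{\ell+1}\in I_1$ as well, so~$(b_v^{\ell+1},u)\in A$, i.e.\ there is a temporal $(b_v^{\ell+1},v)$-path traversing~$\{v,b_v^{\ell+1}\}$ at some time~$t\le$ its max label; combined with~$\max\lambda(\{v,b_v^\ell\})\le\min\lambda(\{v,b_v^{\ell+1}\})$ and the shared label~$\alpha\in\lambda(\{v,b_v^\ell\})\cap\lambda(e)$, one concatenates to reach~$u$ at time~$\alpha$ and then from~$u$ onwards — but then~$b_v^{\ell+1}$ reaches everything~$b_v^\ell$ (via~$\alpha$) can reach after~$v$, and since~$\ell+1<i$ some vertex with missing arc is reached, or more directly the existence of the shared label forces a temporal path witnessing an arc~$D$ forbids between~$b_v^\ell$-side vertices and vertices after~$u$. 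I would make this precise by exhibiting, using~$\alpha$ on both~$\{v,b_v^\ell\}$ and~$e$ plus the monotonicity, a temporal path from~$b_v^{\ell+2}$ (still in~$I_1$, so~$(b_v^{\ell+2},u)\in A$ with all its structure) to a vertex that~$u$ can reach only strictly after~$\alpha$; actually the cleaner route is: the shared label~$\alpha$ on~$\{v,b_v^\ell\}$ and~$\{u,v\}$ lets~$b_v^\ell$ reach~$u$ at time~$\alpha$, and since~$\ell\le i-3$ the vertices~$b_v^{\ell-1},b_v^{\ell+1}$ (or enough neighbors) create the transitivity obstruction identified in \Cref{sec:bridge} for special/non-special bridges, contradicting the assumed arc pattern of~$I_1,I_2,I_3$. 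The symmetric argument handles~$\ell\ge i+3$ using~$I_3$, and the case where~$I_2$ has size~$\ge 2$ means~$S=\emptyset$, so the claim is that~$e$ shares a label with no edge~$\{v,b_v^i\}$ at all: here if~$\{v,b_v^j\}$ and~$\{v,b_v^k\}$ block~$e$ with~$j=\min I_2$, $k=\max I_2$, then for any~$\ell$, either~$b_v^\ell\in I_1$ or~$I_2$ or~$I_3$; if~$b_v^\ell\in I_2$ then~$(u,b_v^\ell)\notin A$ and~$(b_v^\ell,u)\notin A$ so~$e$ and~$\{v,b_v^\ell\}$ cannot share a label by \Cref{incident same label}; and if~$b_v^\ell\in I_1$ with~$|I_2|\ge 2$, sharing~$\alpha$ would let~$b_v^\ell$ reach~$u$ at time~$\alpha$ hence (via monotonicity to~$\{v,b_v^j\}$, $\{v,b_v^k\}$) create arcs from~$b_v^j$ or~$b_v^k$ towards~$u$-reachable vertices, contradicting~$b_v^j,b_v^k\in I_2$.

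For the second claim, fix~$i$ with~$\{v,b_v^i\},\{v,b_v^{i+1}\}\notin S$ and let~$\lambda$ be \minlab; suppose~$\lambda(e)$ contains a label~$\gamma\in[\min\lambda(\{v,b_v^i\}),\max\lambda(\{v,b_v^{i+1}\})]$. By the monotonicity block order, $[\min\lambda(\{v,b_v^i\}),\max\lambda(\{v,b_v^{i+1}\})]$ is precisely the window in which a temporal path can cross~$v$ from~$b_v^i$ to~$b_v^{i+1}$ (using~$\{v,b_v^i\}$ at its min and~$\{v,b_v^{i+1}\}$ at its max, or a label actually shared by the two if they overlap). I would argue that~$\gamma$ lying in this window means a temporal path arriving at~$v$ via~$b_v^i$-side at time~$\gamma$ could traverse~$e$ into~$u$, or a path from~$u$ via~$e$ at time~$\gamma$ could continue to~$b_v^{i+1}$; combined with the arc-pattern of~$I_1,I_2,I_3$ of~$u$ relative to~$T_v$ (which, since~$b_v^i,b_v^{i+1}\notin S$, forces them into the "interior" of~$I_1$, of~$I_2$, or of~$I_3$, away from the threshold) this produces an arc that~$D$ forbids — e.g.\ if both are deep in~$I_1$, a temporal path~$b_v^i\to v\to u$ at time~$\gamma$ exists, hence~$b_v^{i+1}$ (whose label on~$\{v,b_v^{i+1}\}$ has its min~$\le\gamma$ too, or~$\gamma$ itself is reachable by the block structure) would also reach~$u$ — fine, that's allowed since~$b_v^{i+1}\in I_1$ — so the real contradiction comes from reaching a vertex beyond~$u$. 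The cleanest phrasing: in a \minlab realization the window between consecutive non-surrounding edges carries no label of~$e$ because if it did, $e$ could be relabeled to a label just outside that window while preserving all temporal paths through~$e$ (exactly the argument style of \Cref{lem:tree-atmost2} and \Cref{lem:special}: the reachability through~$e$ only depends on which "gaps" of the~$B_v$-label sequence the labels of~$e$ fall into), contradicting minimality. The main obstacle I anticipate is making the case analysis on the position of~$b_v^\ell$ (resp.\ $b_v^i,b_v^{i+1}$) relative to the intervals~$I_1,I_2,I_3$ and the threshold index~$i$ fully airtight, since surrounding edges are defined by a~$\pm 2$ window and one must verify that exactly outside this window the concatenation of a path ending at~$v$ with one starting at~$v$ always hits a forbidden arc — this is where the two extra neighbors on each side in the definition of~$S$ are used, and getting the off-by-one bookkeeping right (so that~$b_v^{\ell\pm1}$ or~$b_v^{\ell\pm2}$ is guaranteed to be on the correct side) will require care.
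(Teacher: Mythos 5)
Your outline assembles the correct ingredients---the block ordering of label intervals forced by $D[B_v]$ being a complete DAG, Observation~\ref{incident same label} for neighbor pairs with no arc, and the $I_1/I_2/I_3$ decomposition from Lemma~\ref{prefix suffix}---but the two places you yourself flag as "needing care" are precisely where the proof does not close, and the second claim is argued by a mechanism that does not work.

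For the first claim in the case $S\neq\emptyset$ with $b_v^\ell$ on the $I_1$ side, $\ell\leq i-3$ ($i$ the threshold index), the argument you are reaching for is not a "transitivity obstruction" in the sense of Section~\ref{sec:bridge}; it is plain interval arithmetic. Since $b_v^{i-1}$ is still on the $I_1$ side, $(b_v^{i-1},u)\in A$ forces $(u,b_v^{i-1})\notin A$ (a solid edge $\{u,b_v^{i-1}\}$ would put $b_v^{i-1}$ into the $2$-core), hence $\max\lambda(\{v,b_v^{i-1}\})\leq\min\lambda(e)$. Combined with the block ordering across the two intermediate edges $\{v,b_v^{i-2}\}$ and $\{v,b_v^{i-1}\}$ this gives the chain
\[
\max\lambda(\{v,b_v^\ell\})\;\leq\;\min\lambda(\{v,b_v^{i-2}\})\;<\;\max\lambda(\{v,b_v^{i-1}\})\;\leq\;\min\lambda(e),
\]
so $\{v,b_v^\ell\}$ and $e$ are label-disjoint. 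The strict middle inequality, coming from $(b_v^{i-2},b_v^{i-1})\in A$, is exactly what the $\pm2$ window in the definition of $S$ exists to supply; your write-up circles around this chain but never produces it, and the alternative routes you sketch ("one concatenates to reach $u$ at time~$\alpha$ and then from $u$ onwards$\ldots$") do not reach a contradiction. Also, when $S=\emptyset$ you do not need to split by $I_1/I_2/I_3$: picking any $b_v^p,b_v^{p+1}$ with no arc to or from $u$, Observation~\ref{incident same label} forces $\lambda(e)=\lambda(\{v,b_v^p\})=\{\alpha\}$ whenever $\lambda(e)\neq\emptyset$, and then $(b_v^p,b_v^{p+1})\in A$ gives $\max\lambda(\{v,b_v^{p+1}\})>\alpha$ and hence a forbidden temporal $(u,b_v^{p+1})$-path, so in fact $\lambda(e)=\emptyset$ outright.

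For the second claim, the proposed "relabel $e$ to a label just outside the window, contradicting minimality" is the wrong lever: moving or deleting a label of $e$ can destroy temporal paths that $\lambda$ needs, so minimality does not give you this for free, and in fact no minimality or frugality is used in the correct argument. The actual reasoning is direct and structural: by the first claim $\lambda(e)$ avoids $\lambda(e_i)$ and $\lambda(e_{i+1})$; a label of $e$ strictly between $\min\lambda(e_i)$ and $\max\lambda(e_i)$ would realize a solid edge $\{u,b_v^i\}$, which cannot exist; so any label of $e$ in $[\min\lambda(e_i),\max\lambda(e_{i+1})]$ must lie strictly between $\max\lambda(e_i)$ and $\min\lambda(e_{i+1})$. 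Such a label realizes both arcs $(b_v^i,u)$ and $(u,b_v^{i+1})$, so by Lemma~\ref{prefix suffix} we get $b_v^i\in I_1$ and $b_v^{i+1}\in I_3$; this forces $|I_2|\leq 1$ with threshold index in $\{i,i+1\}$, whence $e_i,e_{i+1}\in S$---contradicting the hypothesis. That is the contradiction, not a frugality violation.
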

\begin{proof}
Recall that each bridge~$e$ of~$G$ receives at least one label in each realization of~$D$.
Let~$B_v$ be the bridge vertices of~$T_v$.
Recall that~$D[B_v]$ is a complete DAG.
This implies that for each realization~$\lambda$ for~$D$ and for each two~$b_v^k$ and~$b_v^\ell$ from~$B_v$ with~$k < \ell$, $\min \lambda(\{v, b_v^k\}) < \max \lambda(\{v, b_v^\ell\})$ and~$\max \lambda(\{v, b_v^k\}) \leq \min \lambda(\{v, b_v^\ell\})$.

We distinguish two case.

\textbf{Case 1:} $S = \emptyset$\textbf{.} 
That is, there is some~$i\in [1, |B_v|-1]$, such that there is no arc between~$u$ and any vertex of~$\{b_v^i,b_v^{i+1}\}$.
Let~$\lambda$ be a realization for~$D$.
We show that~$\lambda(e) = \emptyset$.
Assume towards a contradiction that~$\lambda(e) \neq \emptyset$.
Since there is no arc between~$u$ and~$b_v^i$, \Cref{incident same label} implies that there is some~$\alpha\in \mathbb{N}$, such that~$\lambda(e) = \lambda(\{v,b_v^i\}) = \{\alpha\}$.
Since~$(b_v^i,b_v^{i+1})\in A$, $\max \lambda(\{v,b_v^{i+1}\}) > \min \lambda(\{v,b_v^i\}) = \alpha$.
Thus, $(u,v,b_v^{i+1})$ is a temporal path under~$\lambda$.
This contradicts the fact that~$\lambda$ realizes~$D$, since~$(u,b_v^{i+1})$ is not an arc of~$D$.
Hence, the statement holds if~$S = \emptyset$.

\textbf{Case 2:} $S \neq \emptyset$\textbf{.} 
Assume towards a contradiction that~$e$ shares a label with at least one edge~$\{v,b_v^j\}$ of~$T_v$ that is incident with~$v$ and does not surround~$e$.
Let~$i\in [1,|B_v|]$ be the index over which~$S$ was defined.
We distinguish two subcases.

\textbf{Case 2.1:}~$j < i$\textbf{.}
By definition~$\{\{v,b_v^\ell\}\mid \ell \in \{i-2,i-1,i\}\} \subseteq S$.
Hence, $j \leq i-3$.
By choice of~$i$, $(b_v^{i-1},u)\in A$.
Hence, $\max \lambda(\{v, b_v^{i-1}\}) \leq \min \lambda(e) \leq \max \lambda(e)$.
By the initial argumentation, this implies that~$\max \lambda(\{v, b_v^j\}) \leq  \min \lambda(\{v, b_v^{i-2}\}) < \max \lambda(\{v, b_v^{i-1}\}) \leq \min \lambda(e)$.
This contradicts the assumption that~$e$ shares a label with~$\{v,b_v^j\}$.

\textbf{Case 2.2:}~$j > i$\textbf{.}
This case can be shown analogously.

Consequently, no edge of~$T_v$ incident with~$v$ outside of~$S$ shares a label with~$e$.
We now show the second part of the statement.
That is, let~$i\in[1,|B_v|-1]$ where both~$e_i:= \{v,b_v^i\}$ and~$e_{i+1}:= \{v,b_v^{i+1}\}$ are not in~$S$, we show that~$e$ has no label from~$[\min\lambda(e_i), \max\lambda(e_{i+1})]$.
Assume towards a contradiction that this is not the case.
By the above, we know that~$\lambda(e) \cap (\lambda(e_i) \cup \lambda(e_{i+1})) = \emptyset$.
Moreover, $e$ cannot have a label between~$\min \lambda(e_i)$ and~$\max \lambda(e_{i})$, as otherwise, a solid edge between~$u$ and~$b_v^i$ would be realized.
Similarly, $e$ cannot have a label between~$\min \lambda(e_{i+1})$ and~$\max \lambda(e_{i+1})$.
Hence, the only remaining option is that~$e$ receives a label strictly between~$\max\lambda(e_i)$ and~$\min\lambda(e_{i+1})$.
This then implies that the arcs~$(b_v^i,u)$ and~$(u,b_v^{i+1})$ are realized.
Hence, $b_v^i$ ($b_v^{i+1}$) is \tnew{from} the interval~$I_1$ ($I_3$) of~$u$ with respect to the pendant tree~$T_v$ according to~\Cref{prefix suffix}.
By definition of~$S$, both~$e_i$ and~$e_{i+1}$ are in~$S$; a contradiction.
\end{proof}

Based on this insight about surrounding edges, we now define another sanity check.
This will then ensure that we can assume some consistent reachabilities between vertices outside of~$T_v$ and bridge vertices that are in~$T_v$ but that do not surround any external edge.
The sanity check reads as follows.

\newcommand{\iin}{\mathrm{in}}
\newcommand{\out}{\mathrm{out}}

\begin{lemma}\label{intermediate identical}
Let~$v\in V^*$ with~$|B_v| \geq 3$ and let~$i \in [1,|B_v|-1]$ such that neither edge~$\{b_v^i,v\}$ nor edge~$\{b_v^{i+1},v\}$ is surrounding or blocking any edge incident with~$v$ that is outside of~$T_v$.
If~$D$ is realizable, then for each vertex~$x\in V \setminus (\{v\} \cup V_v)$
\begin{itemize}
\item $D_{xb_v^i} = D_{xb_v^{i+1}}$ (and $D_{xb_v^{i+1}} = D_{x\,\out_v^{i+1}}$, where~$\out_v^{i+1}$ is the out-leaf of~$b_v^{i+1}$, if it exists), and
\item $D_{b_v^ix} = D_{b_v^{i+1}x}$ (and $D_{b_v^ix} = D_{\iin_v^ix}$, where~$\iin_v^i$ is the in-leaf of~$b_v^i$, if it exists).
\end{itemize}
\end{lemma}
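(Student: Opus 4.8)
The strategy is to contradict minimality of a realization, mirroring the argument already used in Lemma~\ref{lem:special} and in the splitting lemmas. Fix a \minlab realization $\lambda$ of $D$. Recall that every edge of $T_v$ incident with $v$ is a bridge and not part of a triangle, so by Lemma~\ref{lem:tree-atmost2} it receives at most two labels; after the preprocessing described above, $b_v^i$ has at most the two leaf-neighbors $\iin_v^i$ and $\out_v^i$ besides $v$, the edge to $\iin_v^i$ (if present) carrying the smaller label of $\{b_v^i,v\}$ and the edge to $\out_v^i$ (if present) carrying the larger label, as argued in the proof of Lemma~\ref{lem:split-special}. Write $\ell_j=\min\lambda(\{v,b_v^j\})$ and $h_j=\max\lambda(\{v,b_v^j\})$. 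Since $D[B_v]$ is a complete DAG in index order, $h_j\le \ell_{j+1}$ and $\ell_j<h_{j+1}$ for all $j$. I will show that the reachabilities of $b_v^i$ and $b_v^{i+1}$ to/from any $x\in V\setminus(\{v\}\cup V_v)$ must coincide.

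**The key step.** Consider first out-reachability, i.e.\ $D_{b_v^i x}$ versus $D_{b_v^{i+1}x}$. Any temporal path from $b_v^j$ ($j\in\{i,i+1\}$) to a vertex $x$ outside $T_v$ must traverse $\{b_v^j,v\}$ and then an edge incident with $v$ leaving $T_v$. Because neither $\{b_v^i,v\}$ nor $\{b_v^{i+1},v\}$ surrounds or blocks any external edge $e'=\{u,v\}$, Lemma~\ref{surrounding property} gives $\lambda(e')\cap[\ell_i,h_{i+1}]=\emptyset$ for every external edge $e'$ incident with $v$: if $e'$ shared a label with $\{b_v^i,v\}$ or $\{b_v^{i+1},v\}$ it would be surrounding, and by the blocking clause $e'$ cannot have a label strictly between $h_i$ and $\ell_{i+1}$ either, nor inside $[\ell_j,h_j]$ (which would create a solid edge to $b_v^j$). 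Hence every external edge incident with $v$ has all its labels either $<\ell_i$ or $>h_{i+1}$. Consequently: a path from $b_v^i$ reaching $v$ at time $\ge\ell_i$ and leaving via an external edge with a label $>h_{i+1}$ can equally be taken by $b_v^{i+1}$ (which reaches $v$ by time $h_{i+1}\le\ell_{i+1}$, and in fact $\ell_{i+1}\le$ that label); and a path from $b_v^{i+1}$ leaving via an external edge with label $<\ell_i\le\ell_{i+1}$… but such a label is $<\ell_{i+1}$, so $b_v^{i+1}$ could not have used it — so all external paths from $b_v^{i+1}$ use labels $>h_{i+1}\ge h_i$, hence also usable from $b_v^i$. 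A symmetric argument handles external edges incident with other internal vertices of $T_v$: those are bridges too, hence carry labels comparable in the tree order, and the gap property localizes them the same way. This yields $D_{b_v^i x}=D_{b_v^{i+1}x}$, and $D_{b_v^i x}=D_{\iin_v^i x}$ because the in-leaf edge carries exactly $\ell_i=\min\lambda(\{b_v^i,v\})$, which is the label at which $b_v^i$ enters $v$ along any minimal outgoing path (and any path starting at $b_v^i$ and reaching $x$ can be rerouted to start at $\iin_v^i$, since the in-leaf edge has label $\ell_i\le$ the first label used after $v$). The in-reachability claim $D_{x b_v^i}=D_{x b_v^{i+1}}=D_{x\,\out_v^{i+1}}$ is the dual statement, proved by reversing all temporal paths and using that the out-leaf edge of $b_v^{i+1}$ carries $h_{i+1}=\max\lambda(\{b_v^{i+1},v\})$, together with the same gap property: a path entering $v$ from outside at a time $\le\ell_i$ can descend to $b_v^i$ and equally to $b_v^{i+1}$ (time $\le\ell_i\le\ell_{i+1}$), while a path entering at time $\ge h_{i+1}$ reaches neither unless it is $\ge\ell_{i+1}$, and then it reaches $b_v^{i+1}$ via $\out_v^{i+1}$ iff it reaches $b_v^{i+1}$ at all.

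**Main obstacle.** The delicate point is handling temporal paths to/from $x$ that do \emph{not} leave $T_v$ directly through an edge at $v$ but instead wander deeper into $T_v$ first — however, since $T_v$ is a tree and $b_v^i,b_v^{i+1}$ are leaves adjacent only to $v$ (plus possible private leaves), every $b_v^j$-to-$x$ path with $x\notin\{v\}\cup V_v$ must pass through $v$ immediately, so this case does not arise; the only real care is needed for external edges incident to \emph{other} vertices of $T_v$, which I dispose of by noting those edges are bridges and hence, via Lemma~\ref{surrounding property} applied at the appropriate pendant subtree (or simply because they lie on one side of the bridge $\{b_v^j,v\}$, $j\notin\{i,i+1\}$), they cannot interfere with the interval $[\ell_i,h_{i+1}]$ without making $\{b_v^i,v\}$ or $\{b_v^{i+1},v\}$ surrounding. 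Assembling these observations shows that swapping the role of $b_v^i$ and $b_v^{i+1}$ (indeed identifying their entire reachability profiles) is forced whenever their reachabilities to/from the outside differ, contradicting realizability of $D$; hence the stated equalities hold in every realizable instance.
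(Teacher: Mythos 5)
Your argument is essentially the paper's own proof, just organized around the label gap rather than around the earliest arrival time $\alpha$ of $x$ at $v$: both invoke \Cref{surrounding property} to establish $\lambda(e)\cap[\min\lambda(\{v,b_v^i\}),\max\lambda(\{v,b_v^{i+1}\})]=\emptyset$ for every edge $e$ between $v$ and a vertex of $V^*$, and then observe that every temporal path crossing between $T_v$ and $x$ must pass through $v$ via such an edge, so that $b_v^i$, $b_v^{i+1}$, and the relevant leaves access identical label windows and hence have identical reachability to and from $x$. Two small cleanups: the inequality ``$h_{i+1}\le\ell_{i+1}$'' is written backwards; the in-/out-leaf edge labels need only lie in $[\ell_i,h_i)$ and $(\ell_{i+1},h_{i+1}]$ respectively rather than being pinned to $\ell_i$ and $h_{i+1}$ (which does not affect the argument); and the closing worry about external edges incident with internal vertices of $T_v$ is moot, since after the splitting preprocessing $T_v$ has depth two with only leaf pendants and no such edges exist.
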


\begin{proof}
Let~$\lambda$ be a realization for~$D$ and let~$x\in V \setminus (\{v\} \cup V_v)$.
Moreover, let~$\alpha$ be the smallest label such that there is a temporal~$(x,v)$-path~$P$ under~$\lambda$ that reaches~$v$ at time~$\alpha$.
Since~$x\notin V \setminus (\{v\} \cup V_v)$, the path~$P$ traverses an edge~$e$ between~$v$ and some other vertex of~$V^*$ at time~$\alpha$.
Since neither edge~$\{b_v^i,v\}$ nor edge~$\{b_v^{i+1},v\}$ is surrounding~$e$, \Cref{surrounding property} implies that~$\alpha < \min \lambda(\{b_v^{i},v\})$ or~$\alpha > \max \lambda(\{b_v^{i+1},v\})$.
If~$\alpha < \min \lambda(\{b_v^{i},v\})$, $P$ can be extended to reach the vertices~$b_v^{i}$, $b_v^{i+1}$, and $\out^{i+1}$ (if the latter exists).
In this case, $D_{xb_v^i} = D_{xb_v^{i+1}} = D_{x\,\out_v^{i+1}} = 1$.
Otherwise, that is, if~$\alpha > \max \lambda(\{b_v^{i+1},v\})$ , $x$ can reach none of~$b_v^{i}$, $b_v^{i+1}$, and $\out^{i+1}$.
In this case, $D_{xb_v^i} = D_{xb_v^{i+1}} = D_{x\,\out_v^{i+1}} = 0$.

The reachability towards~$x$ can be shown analogously.
\end{proof}
This property can be checked for each vertex~$v\in V^*$ and each~$b_v^i\in V_v$ in polynomial time.
If the property does not hold for some vertex pair, we correctly output that~$D$ is not realizable.
In the remainder, thus assume that the property holds for all vertex pairs.

Based on this sanity check, we now define our final reduction rule to reduce the size of pendant trees.

\begin{lemma}
Let~$v\in V^*$ and let~$b_v^i \in B_v$ with~$6 \leq i \leq |B_v| - 5$, such that none of the edges~$\{\{v,b_v^j\} \mid i-5 \leq j \leq i+5\}$ is surrounding or blocking any edge~$e$ incident with~$v$ and some other vertex of~$V^*$.
Then, $D$ is realizable if and only if (i)~$D[V(T_v)] = D[\{v\} \cup V_v]$ is realizable and (ii)~$D'$ is realizable, where~$D'$ is the instance obtained from~$D$ by removing~$b_v^i$ and its potential in- and out-leaf. 
\end{lemma}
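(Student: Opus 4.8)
The plan is to prove the two directions separately; the forward direction is routine and the reverse direction carries all the work. For the forward direction, suppose $\lambda$ realizes $D$. Its restriction to the edges of $T_v$ realizes $D[V(T_v)]$: since $T_v$ is attached to the rest of $G$ only at $v$, no temporal path between two vertices of $V(T_v)$ can leave $T_v$ (it would have to revisit $v$), so $\mathcal{R}(\mathcal{G}[V(T_v)])=\mathcal{R}(\mathcal{G})[V(T_v)]=D[V(T_v)]$. Its restriction to the edges of $D'$ realizes $D'$: the in- and out-leaves of $b_v^i$ have degree~$1$ and $b_v^i$ has no non-leaf neighbour other than $v$, so none of $b_v^i,\iin_v^i,\out_v^i$ can be an internal vertex of a temporal path between two vertices of $D'$, and hence deleting them from $\mathcal{G}$ changes the reachability graph exactly by deleting these three vertices. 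In particular, if $D[V(T_v)]$ is not realizable then $D$ is not realizable either, so the claimed equivalence holds trivially in that case.

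For the reverse direction, assume $D'$ has a realization $\lambda'$ and $D[V(T_v)]$ has a \minlab realization $\nu$ (which exists and can be found by \Cref{th:treeCombinatorial}). First I would normalize $\lambda'$ by rescaling all its labels by a factor $\ge 3n$ (and, after the insertion below, shift labels inside subtrees hanging off $v$ in the spirit of \Cref{lem:tree-nogap} if needed). Because $b_v^{i-1}$ and $b_v^{i+1}$ are children of $v$ in the tree $T_v$, the only temporal path realizing $(b_v^{i-1},b_v^{i+1})\in A$ is $b_v^{i-1}\to v\to b_v^{i+1}$, so $\max\lambda'(\{v,b_v^{i-1}\})\le\min\lambda'(\{v,b_v^{i+1}\})$, and after rescaling there is ample room between the label set of $\{v,b_v^{i-1}\}$ and that of $\{v,b_v^{i+1}\}$ to insert one or two new labels. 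I then obtain $\lambda$ from $\lambda'$ by re-inserting $b_v^i$ as a child of $v$ positioned between $b_v^{i-1}$ and $b_v^{i+1}$, together with its leaves: $\{v,b_v^i\}$ receives one label if it is non-special and two labels (consecutive after rescaling) if it is special; $\{b_v^i,\iin_v^i\}$ receives the smaller label of $\{v,b_v^i\}$ and $\{b_v^i,\out_v^i\}$ the larger (mirroring $\nu$, cf.\ \Cref{lem:split-special}); and the exact placement of these labels — in particular whether any coincides with a label of $\{v,b_v^{i-1}\}$ or of $\{v,b_v^{i+1}\}$ — is chosen to match the bundling pattern that $\nu$ exhibits for the adjacent pairs of bridges $\{v,b_v^{i-1}\},\{v,b_v^i\}$ and $\{v,b_v^i\},\{v,b_v^{i+1}\}$ (these bundling relations are local properties, identical in $D$ and in $D[V(T_v)]$).

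It then remains to check that $\lambda$ realizes $D$. The reachabilities among vertices of $D'$ are unchanged, by the same ``$b_v^i$ is never an internal vertex of a temporal path between two vertices of $D'$'' argument as in the forward direction. For a reachability between $b_v^i$ (or one of its leaves) and a vertex $x\notin V(T_v)$: $b_v^i$ reaches $x$ only by leaving via $v$ at time $\min\lambda(\{v,b_v^i\})$ and is reached from $x$ only by entering via $v$ at time $\max\lambda(\{v,b_v^i\})$; since $\lambda(\{v,b_v^i\})$ lies between the label sets of $\{v,b_v^{i-1}\}$ and $\{v,b_v^{i+1}\}$, these out- and in-times lie in the range spanned by $b_v^{i-1}$ and $b_v^{i+1}$, and \Cref{intermediate identical} (applicable because, by hypothesis, none of the edges $\{v,b_v^{i-5}\},\dots,\{v,b_v^{i+5}\}$ surrounds or blocks an external edge) guarantees that $b_v^{i-1},b_v^i,b_v^{i+1}$ and the corresponding leaves have identical external reachability; the $\pm5$ buffer in the hypothesis is exactly what makes this ``non-surrounding/blocking'' condition survive the re-indexing caused by deleting $b_v^i$, so that \Cref{intermediate identical} is simultaneously legal in $D$ (giving the target values) and in $D'$ (certifying that $\lambda'$ already realizes them for $b_v^{i-1}$). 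For a reachability involving $b_v^i$ and a vertex inside $T_v$: $b_v^i$ interacts with $T_v$ only through $v$, so its pattern is determined by $\lambda(\{v,b_v^i\})$ and the leaf labels relative to the labels of the other edges at $v$; since we mirrored the shape of $\nu$ and \Cref{intermediate identical} makes the edges at $v$ close to $b_v^i$ locally uniform, slotting $b_v^i$ in between $b_v^{i-1}$ and $b_v^{i+1}$ reproduces exactly the reachabilities prescribed by $D[V(T_v)]$.

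The hard part will be this last verification, specifically the bounded case analysis behind the choice of labels for $\{v,b_v^i\}$, $\{b_v^i,\iin_v^i\}$ and $\{b_v^i,\out_v^i\}$: one distinguishes whether $\{v,b_v^i\}$ is special and, for each of the two adjacent-bridge pairs at $v$ around $b_v^i$, whether they are bundled, separated, or unconstrained, and then checks — using \Cref{surrounding property} and \Cref{intermediate identical} to control which vertices $v$ reaches at the relevant time steps, and \Cref{prefix suffix} to control the interval structure — that the in-leaf and out-leaf reach and are reached by exactly their prescribed neighbourhoods. The assumption that $D[V(T_v)]$ is realizable is used precisely here, to guarantee that a mutually consistent local shape for $b_v^i$ and its leaves exists in the first place; the $\pm5$ margin is used to make every invocation of \Cref{intermediate identical} valid at once for $D$ and for the re-indexed $D'$.
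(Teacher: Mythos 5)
Your forward direction is fine, and it is essentially what the paper does. The reverse direction, however, takes a genuinely different route from the paper's and has a real gap.

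\textbf{The paper's approach vs.\ yours.} The paper does not try to re-insert only $b_v^i$. It chooses boundary indices $s\in\{i-3,i-4\}$ and $t\in\{i+3,i+4\}$ so that $e_s$ and $e_t$ are \emph{separated} from their neighbours, argues that $e_j$ is special in $D$ iff special in $D'$ for $j\in[1,s]\cup[t,|B_v|]$ (and special in $D$ iff special in $D_v$ for $j\in[s,t]$), aligns $\lambda'$ and $\lambda_v$ so that $\lambda'(e_s)=\lambda_v(e_s)$ and $\lambda'(e_t)=\lambda_v(e_t)$, and then transplants the \emph{entire block} of $\lambda_v$-labels on edges incident with $\{b_v^j\mid j\in[s+1,t-1]\}$ into $\lambda'$. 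You instead keep $\lambda'$ unchanged on $e_{i-1}$ and $e_{i+1}$ and only insert fresh labels on $\{v,b_v^i\}$ and its leaf edges.

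\textbf{Where this fails.} The specialness of $e_{i-1}$ (or $e_{i+1}$) is not preserved when passing from $D$ to $D'$. Concretely, a bridge $e_j=\{v,b_v^j\}$ is special iff there is a witness pair $(p,q)$ with $p$ in $\{b_v^j\}\cup\text{leaves}(b_v^j)$, $q$ \emph{anywhere} on the other side of the bridge, $D_{pv}=1$, $D_{b_v^j q}=1$, $D_{pq}=0$ (or symmetrically). If the only valid witnesses $q$ for $e_{i-1}$'s specialness lie in $\{b_v^i\}\cup\text{leaves}(b_v^i)$ — which is exactly the situation where $e_{i-1}$ and $e_i$ are \emph{bundled} by Definition~\ref{def:bundled} — then $e_{i-1}$ is special in $D$ (and in $D[V(T_v)]$, so $\nu$ gives it two labels) but non-special in $D'$ (so any frugal $\lambda'$ gives it one label). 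Your combined labeling $\lambda$ keeps $\lambda'(e_{i-1})$ and so assigns $e_{i-1}$ a single label, which cannot realize $D$. Note this is precisely the case where $e_{i-1}$ and $e_i$ are bundled, which you treat as merely influencing where the new label is placed; in fact it changes the required label \emph{count} on the untouched edge $e_{i-1}$. This cannot be repaired by only choosing labels for $\{v,b_v^i\}$ — you would also have to re-label $e_{i-1}$ (and, symmetrically, $e_{i+1}$), and then by the same argument possibly $e_{i-2}$, and so on. The paper's choice of $s,t$ stops this regress: $e_s$ and $e_i$ are separated (by the buffer), hence not bundled (else a no-instance was detected earlier), hence $e_s$'s specialness is preserved, which is what licenses splicing there. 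A secondary, related weakness is the claim that "after rescaling there is ample room between the label set of $\{v,b_v^{i-1}\}$ and that of $\{v,b_v^{i+1}\}$"; this is simply false when $\max\lambda'(e_{i-1})=\min\lambda'(e_{i+1})$, and you fall back to "match the bundling pattern", but the above shows the issue is not only one of placement within a range but of how many labels the neighbouring edges must carry.
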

\begin{proof}
Let~$D'$ be the resulting instance with solid graph~$G'$ after removing~$b_v^i$ and its possible leaf-neighbors form~$D$. 
Let~$D_v:= D[V(T_v)] = D[\{v\} \cup V_v]$.
Moreover, for each~$j\in [1,|B_v|]$, let~$e_i$ denote the bridge~$\{b_v^i,v\}$.

The first direction of the statement follows immediately:  
Let~$\lambda$ be a realization for~$D$.
Then, the restriction of~$\lambda$ to the edges of~$G'$ realizes~$D'$, since no path in~$G$ between any two vertices of~$G'$ uses vertices that are removed from~$G$.
Similarly, the restriction of~$\lambda$ to the edges of~$T_v$ realizes~$D_v$, since no path in~$G$ between any two vertices of~$T_v$ uses vertices outside of~$T_v$.
Hence, if~$D$ is realizable, then both~$D'$ and~$D_v$ are realizable.

So consider the second direction.
Let~$\lambda'$ be a \minlab realization for~$D'$
 and let~$\lambda_v$ be a \minlab realization for~$D_v$.
We show that there is a realization~$\lambda$ for~$D$.
To this end, we will combine both realizations for~$D'$ and~$D_v$.

Recall that~$e_i$ is not an edge that surrounds or blocks any external edge incident with~$v$.
Note that for each external edge~$e$ incident with~$v$, in both~$D$ and~$D'$ the same edges in~$T_v$ are surrounding~$e$.
If~$e$ is surrounded by at least one edge in~$T_v$ under~$D$, then this trivially follows by definition of surrounding edges.
Otherwise, if no edge surrounds~$e$ under~$D$, then, since~$e_i$ is not an edge blocking~$e$, $D'$ still contains both edges that block~$e$ in~$D$.
Based on these two edges, no edge surrounds~$e$ under~$D'$.
Due to~\Cref{surrounding property}, this implies that for each external edge~$e$ incident with~$v$, $e$ receives no label in~$[\min \lambda'(e_{i-5}),\max \lambda'(e_{i+5})]$.

Let~$s=i-3$ if~$e_{i-3}$ is a special bridge in~$D$, and let~$s=i-4$, otherwise.
Similarly, let~$t=i+3$ if~$e_{i+3}$ is a special bridge in~$D$, and let~$t=i+4$, otherwise.
The choice of~$s$ and~$t$ \tnew{is} important to ensure that the edges~$e_{s-1}$ ($e_{t-1}$) and~$e_{s+1}$ ($e_{t+1}$) are separated under~$D$, $D'$, and~$D_v$.
We will show that this property holds when providing a realization for~$D$.
Before we can define a realization for~$D$, we observe several properties about the edges~$e_s$ and~$e_t$.

Let~$j\in [1,s] \cup [t,|B_v|]$.
Observe that~$e_j$ and~$e_i$ are separated under~$D$.
This implies that~$e_j$ and~$e_i$ are not bundled under~$D$, since we would have already detected that~$D$ is a no-instance if~$e_j$ and~$e_i$ are both bundled and separated under~$D$ \nnew{(see~\Cref{bundledseparated})}.
Hence, the removal of~$b_v^i$ and its possible leaves did not make~$e_j$ non-special, if~$e_j$ was special in~$D$. 
In other words, $e_j$ is special in~$D$ if and only if~$e_j$ is special in~$D'$.

Let~$j\in [s,t]$.
Since~$e_j$ does not surround any external edge incident with~$v$, $e_j$ is special in~$D$ if and only \tnew{if}~$e_j$ is special in~$D_v$. 
This is due to the fact that~$e_j$ cannot share a label with any external edge incident with~$v$ (see~\Cref{surrounding property}).

This in particular implies that~$e_s$ ($e_t$) is special in~$D'$ if an only if~$e_s$ ($e_t$) is special in~$D_v$.
Moreover, $\max \lambda_v(e_s) < \min \lambda_v(e_t)$, since~$e_s$ and~$e_t$ are separated under~$D_v$.
Similarly, $\max \lambda'(e_s) < \min \lambda'(e_t)$, since~$e_s$ and~$e_t$ are separated under~$D'$.
We can thus assume without loss of generality that~$\lambda'(e_s) = \lambda_v(e_s)$ and~$\lambda'(e_t) = \lambda_v(e_t)$ (by introducing arbitrarily many empty snapshots in both labelings). 

Based on this property, we now define a realization~$\lambda$ for~$D$.
The labeling~$\lambda$ agrees with~$\lambda_v$ on all edges incident with at least one vertex of~$\{b_v^j\mid j\in [s+1,t-1]\}$, and agrees with~$\lambda'$ on all other edges.
We show that~$\lambda$ realizes~$D$.
Let~$W$ contain the vertices~$b_v^j$ with~$j\in [s+1,t-1]$ and their possible leaves.
Since paths between any two vertices of~$V\setminus W$ only use vertices outside of~$W$, and~$\lambda'$ realizes~$D'$, exactly the arcs of~$D$ between vertices of~$V\setminus W$ are realized by~$\lambda$.
Similarly, paths between vertices of~$W$ only use vertices of~$W \cup \{v\}$, since all vertices of~$W$ are in~$T_v$.
Since~$\lambda$ agrees with~$\lambda_v$ on all edges incident with at least one vertex of~$W$ and~$\lambda_v$ realized~$D_v$, exactly the arcs of~$D$ between vertices of~$W$ are realized by~$\lambda$.

It thus remains to show that exactly the arcs of~$D$ between vertices of~$W$ and vertices of~$V\setminus W$ are realized by~$\lambda$.

Let~$x\in V \setminus (\{v\}\cup V_v)$ and let~$j\in [s,t]$.
Then, since we assumed that the property of~\Cref{intermediate identical} holds (as otherwise we already detected that~$D$ is not realizable), $D_{xb_v^j} = D_{xb_v^{s-1}}$ (and~$D_{x\, \out^j} = D_{xb_v^{s-1}}$ if the out-leaf~$\out^j$ of~$b_v^j$ exists).
Since~$\lambda'$ realizes~$D'$, if~$D_{xb_v^{s-1}} = 1$, then there is a temporal~$(x,b_v^{s-1})$-path~$P$ under~$\lambda'$ (and thus under~$\lambda$).
The path~$P$ traverses the edge~$e_{s-1}$ at time at most~\tnew{$\max\lambda'(e_{s-1}) = \max\lambda(e_{s-1}) < \min\lambda(e_{j})$}.
This implies that there is also a  temporal~$(x,b_v^{j})$-path (and a  temporal~$(x,\out^j)$-path, if~$\out^j$ exists) under~$\lambda$.
Similarly, if~$D_{xb_v^{s-1}} = 0$, then there is no temporal~$(x,b_v^{s-1})$-path under~$\lambda'$ (and thus also not under~$\lambda$).
In particular, each temporal~$(x,v)$-path under~$\lambda$ reaches~$v$ at time at least~$\min \lambda(e_{s-1})$.
As initially \tnew{discussed}, no external edge incident with~$v$ has a label in~$[\min \lambda'(e_{s-1}), \max \lambda'(e_{t+1})]$.
Hence, each temporal~$(x,v)$-path under~$\lambda$ reaches~$v$ at time larger than~$\max \lambda(e_{t+1}) > \max \lambda(e_j)$.
This implies that there is also no temporal~$(x,b_v^{j})$-path (and no  temporal~$(x,\out^j)$-path, if~$\out^j$ exists) under~$\lambda$.

Hence, the possible arcs from~$x$ to~$b_v^j$ and~$\out^j$ are realized by~$\lambda$ if and only if they exist.
Similarly, the possible arcs from~$b_v^j$ and~$\iin^j$ to~$x$ are realized if \tnew{and only if} they exist. 

It thus remains to consider reachability between vertices of~$W$ and vertices of~$V_v\setminus W$.
To analyze this case, we finally show the reason why we chose the specific values for~$s$ and~$t$.
We show that in both~$D'$ and~$D_v$, $e_{s-1}$ and~$e_{s+1}$ are separated.
This is due to the fact that (i)~if~$s = i-3$, then~$e_{i-3}$ is special which implies that~$e_{s-1}$ and~$e_{s+1}$ are separated, and (ii)~if~$s = i-4$, then~$e_{i-3}$ is non-special which implies that~$e_{s-1}$ and~$e_{s+1}$ are separated.
By the earlier argumentation, $e_{s}$ is special under either all or none of the graphs~$D$, $D'$, and~$D_v$.
Thus, each temporal path that traverses first an edge in~$T_v$ not incident with a vertex \tnew{in}~$W$ and afterwards an edge incident with a vertex of~$W$ exists under~$\lambda_v$ if and only if it exists under~$\lambda$.
This implies that~$\lambda$ realizes exactly the arcs between the vertices of~$T_v$, since~\tnew{$\lambda_v$ realizes}~$D_v$.

\tnew{In conclusion},~$\lambda$ realizes~$D$.
\end{proof}

Note that this implies that we can in polynomial time (correctly detect that~$D$ is not realizable or) reduce the size of each fixed pendant tree~$T_x$ with~$x\in V^*$ to a size of~$\Oh(d_x)$, where~$d_x$ denotes the degree of~$x$ in~$G[V^*]$.
This is due to the fact that each of the $\Oh(d_x)$ edges defines at most~$5$ surrounding or blocking edges in~$T_x$, and we can remove neighbors of~$x$ in~$T_x$ if there are more than~11 consecutive (with respect to the topological order on~$B_x$) bridges incident with~$x$ that are all not surrounding or blocking any external edge.
This thus proves~\Cref{shrink pendant}.

\subsection{Ensuring connectors with nice properties}
In this section we show how to extend the vertex set~$X^*$ to a set~$W^*$ of size~$\Oh(\fes)$, such that each~$W^*$-connector has some useful properties that we will use in our algorithm.

To define these useful properties, we distinguish between several types of connectors.
\begin{definition}
Let~$W$ with~$X^*\subseteq W \subseteq V^*$ and let~$P$ be a~$W$-connector with endpoints~$a$ and~$b$ from~$V^*$.
We say that~$P$ is \emph{trivial} if~$P$ has length at most~$3$.
Otherwise, let~$a'$ be the successor of~$a$ in~$P$ and let~$a''$ be the successor of~$a'$ in~$P$.
\tnew{Similarly}, let~$b'$ be the predecessor of~$b$ in~$P$ and let~$b''$ be the predecessor of~$b'$ in~$P$.
We say that~$P$ is an~\emph{(in,out)-connector} if~$(a,a'')\in A$ and~$(b'',b)\in A$.
Moreover, $P$ a~\emph{dense (in,out)-connector}, if the path~$P$ is a dense path in~$D$ and~$P$ is an~(in,out)-connector.
\end{definition}

\subparagraph{Robust Connectors}
The first property we consider are robust connectors, which are defined as follows.

\begin{definition}
Let~$W$ with~$X^*\subseteq W \subseteq V^*$ and let~$P$ be a~$W$-connector with endpoints~$a$ and~$b$ and extension~$C$.
We say that~$P$ is~\emph{robust} if there is no dense path between~$a$ and~$b$ outside of~$C$, that is, if there is no dense path from~$a$ to~$b$ in~$D[V \setminus (V(C) \setminus \{a,b\})]$ and there is no dense path from~$b$ to~$a$ in~$D[V \setminus (V(C) \setminus \{a,b\})]$. 
\end{definition}

In other words, in a robust connector~$P$, the reachability between the vertices of the extension of~$P$ has to be achieved via temporal paths that only use subpaths of the extension~$C$ of~$P$.

\begin{observation}\label{robust has internal solution}
Let~$W$ with~$X^*\subseteq W \subseteq V^*$ and let~$P$ be a robust~$W$-connector with extension~$C$.
Then, for each realization~$\lambda^*$ of~$D$, $\lambda^*_C$ realizes~$D[V(C)]$, where~$\lambda^*_C$ is the restriction of~$\lambda^*$ to the edges of~$C$.
\end{observation}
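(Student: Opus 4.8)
\textbf{Plan for the proof of Observation~\ref{robust has internal solution}.}
The statement is essentially an immediate consequence of the definition of a robust connector, so the proof will be short. The plan is to fix a realization~$\lambda^*$ of~$D$, let~$\lambda^*_C$ be its restriction to the edges of~$C$, and verify that for every ordered pair~$(x,y)$ of distinct vertices of~$V(C)$, we have~$(x,y)\in A$ if and only if there is a (strict) temporal path from~$x$ to~$y$ in the temporal graph~$(C,\lambda^*_C)$. One direction is trivial: any temporal path in~$(C,\lambda^*_C)$ is also a temporal path in~$(G,\lambda^*)$, so if~$x$ reaches~$y$ in~$(C,\lambda^*_C)$ then~$(x,y)\in A$ because~$\lambda^*$ realizes~$D$.

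For the other direction, suppose~$(x,y)\in A$ with~$x,y\in V(C)$. Since~$\lambda^*$ realizes~$D$, there is a temporal path~$Q$ from~$x$ to~$y$ in~$(G,\lambda^*)$, and the underlying path of~$Q$ is a dense path in~$G$. The key step is to argue that~$Q$ stays inside~$C$. First I would recall the structure of~$C$: its only vertices that are incident with edges leaving~$C$ are the two endpoints~$a$ and~$b$ of~$P$ (every internal vertex of~$P$ has degree exactly~$2$ in~$G[V^*]$, with both neighbours on~$P$, and the pendant trees hanging off internal vertices of~$P$ only attach to~$C$). Hence any walk in~$G$ that starts and ends in~$V(C)$ but leaves~$C$ must exit and re-enter through~$\{a,b\}$. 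If~$Q$ were to leave~$C$, then, looking at the maximal sub-walk of~$Q$ between the last time it is at one of~$a,b$ before leaving and the first time it returns to one of~$a,b$, we would obtain a temporal path between~$a$ and~$b$ (in one of the two directions) that uses only vertices of~$V\setminus(V(C)\setminus\{a,b\})$; the underlying path of this sub-walk is a dense path in~$G$ lying in~$D[V\setminus(V(C)\setminus\{a,b\})]$. (Here one has to be slightly careful if~$Q$ passes through both~$a$ and~$b$, or revisits them; taking the appropriate maximal excursion outside~$V(C)\setminus\{a,b\}$ handles all cases, and since the portion of~$Q$ before the excursion already realises reachability from~$x$ to~$a$ or~$b$, we can assume~$x\in\{a,b\}$ and symmetrically~$y\in\{a,b\}$.) This contradicts the assumption that~$P$ is robust. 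Therefore~$Q$ is contained in~$C$, so~$x$ reaches~$y$ in~$(C,\lambda^*_C)$.

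The main (minor) obstacle is the bookkeeping of the excursion argument, i.e.\ making precise that a temporal path between two vertices of~$C$ that leaves~$C$ forces a dense path between~$a$ and~$b$ avoiding the interior of~$C$; everything else is definitional. The same argument applies verbatim to the non-strict variants, since it never uses strictness, and to directed realizations with ``dense path in~$D$'' in place of ``dense path in~$G$''.
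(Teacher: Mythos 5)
Your proof is correct and supplies exactly the reasoning the paper treats as immediate: the paper states this as an Observation with no written proof, so there is no competing argument to compare against, but the excursion argument you give is the natural way to make it precise. Two small remarks. First, you needn't worry about a temporal path \emph{revisiting} $a$ or $b$: by the paper's definition a temporal path is built on a $u$-$v$ path in $G$, hence is vertex-simple, so once the walk has left $C$ through (say)~$a$ it can only re-enter through~$b$; this is actually what makes the excursion a clean $a$-$b$ (or $b$-$a$) dense path landing squarely in the forbidden set $D[V\setminus(V(C)\setminus\{a,b\})]$. Second, the reduction ``we may assume $x,y\in\{a,b\}$'' is an unnecessary detour; it is cleaner to simply extract the maximal sub-path of the hypothetical escaping temporal path that starts and ends on the boundary $\{a,b\}$ with all internal vertices outside $V(C)$, and observe that this sub-path is itself a temporal (hence dense) path between $a$ and $b$ avoiding the interior of $C$, contradicting robustness directly.
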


This then directly implies the following for robust connectors.

\begin{observation}
Let~$W$ with~$X^*\subseteq W \subseteq V^*$ and let~$P$ be a robust~$W$-connector.
Then for each arc~$(q,q') \in A$ with~$q,q'\in V(P)$, the subpath of~$P$ from~$q$ to~$q'$ is dense or~$D$ is not realizable.
\end{observation}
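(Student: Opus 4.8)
The plan is to deduce density of the subpath $P[q,q']$ from \Cref{robust has internal solution} together with the fact (noted in \Cref{sec:pre}) that in any undirected realization every temporal path forms a dense path in the solid graph. Assume $D$ is realizable, as otherwise there is nothing to prove; fix a realization $\lambda^*$, and let $C$ be the extension of $P$ and $a,b$ its endpoints.

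I would first collect two elementary structural facts about $C$. (i) Because $P$ is robust, $\{a,b\}\notin E$: if $\{a,b\}$ were a solid edge, the length-one path $(a,b)$ (which is a dense path since $(a,b)\in A$) would witness a dense $a$-$b$-path avoiding all internal vertices of $C$, contradicting robustness. Since every internal vertex of $P$ has degree exactly $2$ in $G[V^*]$ with both neighbours on $P$, the only conceivable chord of $G[V(P)]$ is $\{a,b\}$; hence $G[V(P)]=P$. (ii) Each pendant tree $T_v$ hanging from an internal vertex $v$ of $P$ meets the rest of $C$ only in $v$, because every vertex of $V_v$ has all its $G$-neighbours inside $T_v$. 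Consequently a simple path in $C$ between two vertices of $V(P)$ cannot detour through any $T_v\setminus\{v\}$, so it stays inside $V(P)$ and, by (i), coincides with the corresponding subpath of $P$.

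Now fix an arc $(q,q')\in A$ with $q,q'\in V(P)\subseteq V(C)$. By \Cref{robust has internal solution}, the restriction $\lambda^*_C$ of $\lambda^*$ to $E(C)$ realizes $D[V(C)]$, and since $(q,q')$ is an arc of $D[V(C)]$ the temporal graph $(C,\lambda^*_C)$ contains a temporal path $Q$ from $q$ to $q'$. By fact (ii), $Q$ is exactly the subpath $P[q,q']$ traversed from $q$ to $q'$. As $\lambda^*_C$ is an undirected realization whose solid graph is $G[V(C)]=C$, the temporal path $Q$ is a dense path in $C$; in particular, for any two vertices $x_i$ preceding $x_j$ on $P[q,q']$ we have $(x_i,x_j)\in A$. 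Thus $P[q,q']$ is dense, completing the proof. The only delicate point is the ``no chord'' step of fact (i); this is precisely where robustness is used explicitly (beyond its use inside \Cref{robust has internal solution}), and it is needed to force the realizing temporal path inside $C$ to be $P[q,q']$ rather than running the other way around the cycle $P+\{a,b\}$. Everything else is routine bookkeeping about the tree-like shape of connector extensions.
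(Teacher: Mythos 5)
Your proof is correct and is essentially the paper's intended argument: the observation is stated as a direct consequence of \Cref{robust has internal solution}, namely that a realization restricted to the extension $C$ realizes $D[V(C)]$, so the temporal $q$-$q'$ path forced by the arc $(q,q')$ must run along the unique $q$-$q'$ path in the tree $C$, i.e.\ along $P[q,q']$, and temporal paths are dense paths in the solid graph. Your only extra ingredient, excluding the chord $\{a,b\}$ via robustness, is fine (a single solid edge $(a,b)$ is a dense path outside the interior of $C$), though it also holds unconditionally, since such a chord would create a cycle in $G[V^*]$ avoiding every feedback edge of $F$.
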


Additionally, we observe the following hereditary property for robust connectors.

\begin{observation}\label{robust property is hereditary}
Let~$W$ and~$W'$ with~$X^*\subseteq W  \subseteq W' \subseteq V^*$.
Moreover, let~$P$ be a~$W$-connector and let~$P'$ be a~$W'$-connector such that~$P'$ is a subpath of~$P$.
Then, $P'$ is robust if~$P$ is robust.
\end{observation}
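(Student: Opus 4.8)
The plan is to prove the contrapositive: assuming $P'$ is not robust, I will produce a dense path from $a$ to $b$ or from $b$ to $a$ that avoids $V(C)\setminus\{a,b\}$, contradicting the robustness of $P$. Here $a,b$ are the endpoints of $P$ and $a',b'$ those of $P'$; write $P=(a=p_1,p_2,\dots,p_k=b)$, so $P'$ is the subpath $(p_i,\dots,p_j)$ with $a'=p_i$, $b'=p_j$. If $P'=P$ then $C'=C$ and the claim is immediate, so assume $P'$ is a proper subpath of $P$. Since robustness of $P$ forbids dense paths in both directions, we may assume (reversing directions is symmetric) that there is a dense path $Q'$ from $a'$ to $b'$ in $D[V\setminus(V(C')\setminus\{a',b'\})]$, where $C'$ is the extension of $P'$.

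The structural fact to establish first is that $\{a,b\}$ separates $V(C)\setminus\{a,b\}$ from $V\setminus V(C)$ in $G$, that $G[V(C)]$ is a tree possibly together with the single chord $\{a,b\}$ (if that is a solid edge), that removing $a,b$ from it leaves a tree with ``spine'' $\{p_2,\dots,p_{k-1}\}$, and that the only neighbours of $a$ (resp.\ $b$) inside $V(C)\setminus\{a,b\}$ are $p_2$ (resp.\ $p_{k-1}$). This uses only facts already available: every interior vertex of the connector $P$ has degree exactly $2$ in $G[V^*]$, so $\{p_2,\dots,p_{k-1}\}$ induces a chordless path and no $p_\ell$ ($2\le\ell\le k-1$) has a $G$-neighbour outside $V(C)$; and every vertex of $V'=V\setminus V^*$ lies in exactly one pendant tree attached to $V^*$ by a single root, so the pendant trees $V_{p_\ell}$ (which are contained in $V(C)$) have no $G$-neighbours outside $V(C)$ either. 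Hence $G[V(C)]$ is exactly the path $P$ with the pendant trees hanging off $p_2,\dots,p_{k-1}$, plus at most the chord $\{a,b\}$.

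Now decompose $Q'$ at its (at most one) occurrence of $a$ and its (at most one) occurrence of $b$; by the separator property, each resulting segment lies entirely in $V(C)\setminus\{a,b\}$ or entirely in $V\setminus V(C)$. If $Q'$ omits one of the two portals, then $Q'$ has no way to exit $V(C)$ (to leave it would have to pass a portal, and re-entering to return to its endpoint in $V(C)$ would require the other one), so $Q'$ is a path in $G[V(C)]$; the only $a'$--$b'$ paths in this near-tree are $P'$ and, when the chord $\{a,b\}$ is present, the path running along $P$ from $a'$ to $a$, across $\{a,b\}$, and along $P$ from $b$ to $b'$; the latter uses both portals, so $Q'=P'$, but $P'$ has length at least $2$ and therefore passes through an interior vertex of $P'$, which lies in $V(C')\setminus\{a',b'\}$ --- contradicting the choice of $Q'$. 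Hence $Q'$ contains both $a$ and $b$. Let $R$ be the subpath of $Q'$ between $a$ and $b$; being a contiguous piece of the dense path $Q'$, it is a dense path from $a$ to $b$ or from $b$ to $a$. If some interior vertex of $R$ lay in $V(C)\setminus\{a,b\}$, then, since a segment cannot cross between $V(C)\setminus\{a,b\}$ and $V\setminus V(C)$ and the only entry/exit points of $V(C)\setminus\{a,b\}$ at $a$ and $b$ are $p_2$ and $p_{k-1}$, the interior of $R$ would be precisely the spine $p_2,\dots,p_{k-1}$; then $Q'$ would contain all of $\{p_2,\dots,p_{k-1}\}$, which includes the interior vertices of $P'$, again contradicting the choice of $Q'$. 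So the interior of $R$ avoids $V(C)\setminus\{a,b\}$, i.e.\ $R$ is a dense path between $a$ and $b$ in $D[V\setminus(V(C)\setminus\{a,b\})]$, contradicting the robustness of $P$. (If $a'=a$ or $b'=b$ then $Q'$ contains that portal automatically, and the same case analysis applies.)

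The main obstacle is the structural bookkeeping in the second paragraph: carefully deducing that $\{a,b\}$ is a separator and that $G[V(C)]$ is a tree plus at most the chord $\{a,b\}$, using that interior connector vertices have degree $2$ in $G[V^*]$ and that $V'$-vertices sit in pendant trees rooted at a single $V^*$-vertex. Once that structure is in hand, the case analysis on how $Q'$ meets the two portals $a,b$ --- and the identification of the only $a'$--$b'$ paths inside $G[V(C)]$ --- is routine.
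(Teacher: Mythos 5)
Your proof is correct. The paper states this result as an Observation without giving a proof; the intended justification is exactly the separator argument you spell out (interior vertices of a connector have degree $2$ in $G[V^*]$ and pendant trees attach only at their roots, so any dense $a'$--$b'$ path avoiding the interior of $C'$ must pass through both $a$ and $b$, and its $a$--$b$ segment is a dense path avoiding the interior of $C$, contradicting robustness of $P$), so your write-up simply makes explicit the routine structural details the authors leave implicit.
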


Finally, since the definition of robust connectors only depends on the non-existence of paths between the endpoints that are outside of the connector, a connector~$P$ remains robust, even if adding vertices outside of~$P$ to~$W$.
Formally, this reads as follows.

\begin{observation}\label{robust property is independent}
Let~$W$ and~$W'$ with~$X^*\subseteq W \subseteq W' \subseteq V^*$.
Moreover, let~$P$ be a path in~$G[V^*]$, such that~$P$ is both a~$W$-connector and a~$W'$-connector.
Then, $P$ is a robust~$W$-connector if and only if~$P$ is a robust~$W'$-connector.
\end{observation}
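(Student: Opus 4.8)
The plan is to exploit the fact that \emph{robustness} of a connector $P$ is defined entirely in terms of $P$, its extension $C$, and the fixed digraph $D$, with no further dependence on the ambient vertex set chosen as the ``boundary''. So the statement is essentially a bookkeeping observation about what the definitions actually reference.

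First I would record that the two endpoints $a$ and $b$ of $P$ are intrinsic to the path $P$ itself: a connector has length at least $2$, so $P$ has at least three vertices and exactly two vertices of degree $1$ within $P$, which are precisely $a$ and $b$. Hence if $P$ is simultaneously a $W$-connector and a $W'$-connector, the unordered pair $\{a,b\}$ --- and therefore the internal vertex set $V(P)\setminus\{a,b\}$ --- is the same under both readings.

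Next I would observe that the extension $C$ of $P$ is identical in the two cases. By definition $C = G[V(P)\cup\bigcup_{q\in V(P)\setminus\{a,b\}}V_q]$, and every ingredient of this expression --- the vertex set $V(P)$, the endpoint pair $\{a,b\}$, the (fixed) pendant-tree assignment $q\mapsto V_q$, and the solid graph $G$ --- is independent of whether we regard $P$ as a $W$-connector or a $W'$-connector. Thus $C$ is literally the same induced subgraph in both cases.

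Finally, $P$ being robust means there is no dense path from $a$ to $b$ and no dense path from $b$ to $a$ in $D[V\setminus(V(C)\setminus\{a,b\})]$; since $a$, $b$, and $C$ agree for the two readings, this is one and the same condition, which gives the claimed equivalence. There is no real obstacle here; the only subtlety worth stating explicitly is that the endpoints (and hence $C$) are determined by $P$ alone, which is immediate because connectors have length at least $2$.
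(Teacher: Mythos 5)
Your proof is correct and follows the same reasoning the paper uses (the paper only gestures at it in a one-sentence remark before the observation, noting that robustness depends only on the non-existence of dense paths between the endpoints outside the connector). You simply spell out the bookkeeping — that $a$, $b$, and hence the extension $C$ are determined by $P$ alone — which is exactly the content of the paper's informal justification.
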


Based on these properties, we now show that we can find a constant number of vertices of a given connector~$P$ to add to~$W$ to ensure that each resulting subconnector is robust or trivial. 

\begin{lemma}\label{make robust connectors}
Let~$W$ with~$X^*\subseteq W \subseteq V^*$ and let~$P$ be a non-trivial~$W$-connector.
Then, we can compute in polynomial time a set~$U_P \subseteq V(P)$ of size at most~$6$, such that (i) there are at most two~$(W \cup U_P)$-connectors which are subpaths of~$P$ and (ii)~each~$(W \cup U_P)$-connector which is a subpath of~$P$ is trivial or robust.
\end{lemma}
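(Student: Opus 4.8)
The goal is to take a non-trivial $W$-connector $P$ with endpoints $a,b$ and extension $C$, and add at most six of its internal vertices to $W$ so that all resulting sub-connectors are either trivial (length $\le 3$) or robust (no dense $a'$--$b'$-path outside the sub-extension). The key idea is that ``robustness'' of a sub-connector $P'$ with endpoints $p,q$ fails only if there is a dense path from $p$ to $q$ (or $q$ to $p$) in $D$ avoiding the interior of the sub-extension. I would first observe that, since $P$ is a $W$-connector, every internal vertex of $P$ has degree exactly $2$ in $G[V^*]$, so the only solid edges incident to an internal vertex $v$ are its predecessor- and successor-edges along $P$, together with the pendant tree $T_v$ hanging off $v$. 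Consequently, any dense path in $D$ that ``shortcuts'' part of $P$ — i.e.\ starts and ends on $P$ but leaves $P$ in between — must actually leave via a vertex of $W$ or re-enter via a vertex of $W$; it cannot wander through the pendant trees and come back, since each $T_v$ is a tree attached to $V^*$ only at $v$. So a shortcut path from $p\in V(P)$ to $q\in V(P)$ that avoids the interior of the corresponding sub-extension must pass through an endpoint of $P$ (namely $a$ or $b$, which lie in $W\subseteq X^*$) or through some other vertex of $W$ reachable only via $a$ or $b$.

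**Main steps.** I would argue as follows. (1) Fix the two ``boundary'' directions. A sub-connector $P'=(p=x_0,x_1,\dots,x_\ell=q)$ obtained after adding vertices to $W$ is non-robust iff there is a dense $p$--$q$-path (in one of the two orientations) in $D$ avoiding $\{x_1,\dots,x_{\ell-1}\}\cup(\text{their pendant trees})$. By the degree argument above, such a path, restricted to where it touches $V(P)$, must use $a$ or $b$ as an intermediate vertex. (2) So define $U_P$ to ``seal off'' the two ends of $P$: let $a',a''$ be the first two internal vertices after $a$, and $b',b''$ the last two before $b$, and put $a',a'',b',b''$ into $U_P$. Splitting $P$ at these vertices produces three sub-connectors: a short initial piece $(a,a',a'')$, a short final piece $(b'',b',b)$, and a middle piece $P^{\mathrm{mid}}$ from $a''$ to $b''$. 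The two short pieces have length $\le 2$, hence are trivial. (3) It remains to handle $P^{\mathrm{mid}}$, which may itself still be non-robust — a dense shortcut from $a''$ to $b''$ could run through $a$ or $b$. But here I would use that any such shortcut path, to leave $P^{\mathrm{mid}}$, must exit at $a''$ toward $a$ (or at $b''$ toward $b$), and then to re-enter $V(P^{\mathrm{mid}})$ it would have to come back through $a''$ or $b''$ — a contradiction with the path being simple and avoiding the interior. More carefully: the only solid edges joining $V(P^{\mathrm{mid}})\setminus\{a'',b''\}$ to the rest of $G$ are interior edges of $P^{\mathrm{mid}}$ or pendant-tree edges; so a dense $a''$--$b''$-path avoiding the interior of $P^{\mathrm{mid}}$ must be entirely outside $V(P^{\mathrm{mid}})\setminus\{a'',b''\}$, hence it is a dense path between $a''$ and $b''$ in the rest of $D$. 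If such a path exists, I claim either it witnesses non-realizability (and we can report that — but the lemma only asks for a set $U_P$, so instead) we add at most two more vertices: we take the first vertex on $P$ after $a''$ and the last vertex before $b''$ that this shortcut ``sees,'' but actually the clean move is: if $P^{\mathrm{mid}}$ is non-robust, add the successor $a'''$ of $a''$ and the predecessor $b'''$ of $b''$ to $U_P$ as well, giving $|U_P|\le 6$, and re-run the argument on the new (even shorter-sealed) middle piece; since each internal vertex has degree $2$, after sealing both ends with two consecutive vertices the middle piece can no longer be shortcut without passing through an already-sealed vertex.

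**The obstacle.** The delicate point — and the one I would spend the most care on — is ruling out the possibility that a dense shortcut path for the middle piece re-enters $V(P^{\mathrm{mid}})$ through some internal vertex via a pendant-tree edge or a feedback edge. Feedback edges are the real danger: a vertex of $V(P^{\mathrm{mid}})$ could in principle be an endpoint of an edge of $F$, which would give a non-degree-$2$ exit. But by construction $X^*$ already contains all endpoints $X$ of $F$, and $X^*\subseteq W$, so every edge of $F$ incident to $V(P)$ has its $V(P)$-endpoint in $W$, hence that endpoint is \emph{not} an internal vertex of any $W$-connector — it would be a connector endpoint. Thus internal vertices of $P$ (and of $P^{\mathrm{mid}}$) genuinely have degree exactly $2$ in $G[V^*]$, and the only other incident solid edges go into pendant trees, which are dead ends. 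This is what forces any shortcut to pass through $a$ or $b$ (members of $W$), and once we have sealed $a'',a''',b'',b'''$ into $W\cup U_P$, the sub-connectors of $P^{\mathrm{mid}}$ have their endpoints among the sealed vertices and are robust by \Cref{robust property is independent}. I would also remark that \Cref{robust property is hereditary,robust property is independent} let us phrase this purely in terms of $W\cup U_P$ without worrying about later enlargements of $W$. Finally, bounding the number of sub-connectors of $P$ that are subpaths: after adding $U_P$ we split $P$ at most at $6$ points, but the two short end pieces and the sealed middle collapse so that at most two of the resulting pieces are non-trivial (the single remaining middle piece, plus possibly one more if an odd split occurs), giving property (i); all computations — finding $a',a'',b',b''$, checking length, and (for the conditional extra sealing) testing for a dense path between two fixed vertices of bounded-size boundary — are clearly polynomial time.
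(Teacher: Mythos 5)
Your opening observations are sound and parallel the paper's: you seal $a',a'',b',b''$, and you use the degree-$2$ property of the internal vertices of a $W$-connector (guaranteed by $X\subseteq X^*\subseteq W$, together with pendant trees being dead ends) to conclude that any dense shortcut between two vertices of $P$ avoiding the relevant sub-extension must pass through $a$ or $b$. Up to that point you recover what the paper does in its Case~1 ($P$ not an (in,out)-connector) and Case~2.1 (an (in,out)-connector with no $p_j$, $j\ge 3$, having an arc to $a$), where these four vertices indeed suffice; in both cases robustness of the resulting middle piece follows from a concrete missing arc (namely $(a'',a)\notin A$ or $(b'',a)\notin A$).

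The gap is in your conditional step. When the middle piece from $a''$ to $b''$ is still non-robust, you propose to add its neighbours $a'''$ and $b'''$ and justify this by saying that ``after sealing both ends with two consecutive vertices the middle piece can no longer be shortcut without passing through an already-sealed vertex.'' This has no bearing on robustness: robustness of the sub-connector from $a'''$ to $b'''$ requires the absence of a dense path between $a'''$ and $b'''$ \emph{outside its extension}, and such a path is perfectly free to traverse $a',a'',b',b''\in U_P$ as well as $a,b\in W$. Placing a vertex into $W\cup U_P$ does not delete dense paths that run through it. The paper's Case~2.2 instead locates the \emph{smallest} index $j\in[3,\ell-1]$ with $(p_j,a)\in A$ (and shows $j\ge 4$) and splits at $p_{j-1},p_j$; robustness of the two resulting pieces then falls out of the two specific non-arcs $(p_{j-1},a)\notin A$ (minimality of $j$) and $(a,p_j)\notin A$ (simplicity of $D$). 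This breakpoint can lie deep in the middle of $P$, so a fixed split near the endpoints cannot replace it. Concretely, if $j$ is far from both endpoints, realizability of $(p_j,a)$ together with $(p_{j-1},a)\notin A$ forces the arcs $(p_k,a)\in A$ for all $k\ge j$ (the only remaining dense $(p_j,a)$-path goes forward to $b$ and around), and nothing in your argument excludes the additional arcs $(a,a'''),(a',a'''),(b''',a''')$ and the rest of the density conditions for a dense $b'''$-to-$a'''$ shortcut through $b,\dots,a$; so after spending your whole budget of six vertices you may still be left with a non-robust sub-connector. Your ``re-run the argument'' remark hints at iterating, but the lemma promises a \emph{constant} bound of $6$, and your argument gives no reason the iteration stops.
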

\begin{proof}
Let~$a,a',a''$ be the first three vertices of~$P$ and let~$b'',b',b$ be the last three vertices of~$P$.
We distinguish two cases.

\textbf{Case 1:}~$P$ is not an (in,out)-connector\textbf{.}
We set~$U_P := \{a',a'',b',b''\}$.
Clearly, there is at most one $(W \cup U_P)$-connector which is a subpath of~$P$, namely the subpath~$P'$ of~$P$ from~$a''$ to~$b''$.
Assume towards a contradiction that~$P'$ is not robust.
This implies that there is a dense path~$R$ in~$D$ between~$a''$ and~$b''$ that uses no edge of~$P'$.
Assume without loss of generality that~$R$ goes from~$a''$ to~$b''$.
Note that the dense path~$R$ has to visit both~$a$ and~$b$.
This implies that~$(a'',a)\in A$ and~$(b,b'')\in A$;
a contradiction to the fact that~$P$ is not an (in,out)-connector.
Consequently, $P'$ is robust.

\textbf{Case 2:}~$P$ is an (in,out)-connector\textbf{.}
Without loss of generality assume that~$(a,a'')\in A$ and~$(b'',b)\in A$.
Moreover, let~$p_i$ denote the~$i$th vertex of~$P$ when going from~$a$ to~$b$.
That is, $p_1 = a$ and~$p_\ell = b$, where~$\ell$ denotes the number of vertices of~$P$.
Recall that~$\ell \geq 5$, since~$P$ is non-trivial.
We distinguish two cases.

\textbf{Case 2.1:} For each~$j\in [3,\ell-1]$, $(p_{j},a)\notin A$\textbf{.}
Note that this implies that~$(b'',a)\notin A$.
Moreover, $(a'',a)\notin A$. 
We set~$U_P := \{a',a'',b',b''\}$.
Clearly, there is at most one $(W \cup U_P)$-connector which is a subpath of~$P$, namely the subpath~$P'$ of~$P$ from~$a''$ to~$b''$.
Note that there is no dense path~$R$ in~$D$ between~$a''$ and~$b''$ that uses no edge of~$P'$, since each such path would contain~$a$ and thus cannot be dense.
This is due to the fact that~$(a'',a)\notin A$ and~$(b'',a)\notin A$.
Hence, $P'$ is robust.

\textbf{Case 2.2:} There is some~$k\in [3,\ell-1]$ with~$(p_{k},a)\in A$\textbf{.}
Let~$j$ be the smallest integer in~$[3,\ell-1]$ for which~$(p_j,a)\in A$.
Hence, $(p_{j-1},a)\notin A$. 
Note that~$j \neq 3$, since~$p_3 = a''$ and~$(a,a'')\in A$ (which implies that~$(a'',a)\notin A$).
We set~$U_P := \{a',a'',b',b'', p_{j-1},p_j\}$.
Clearly, there are at most two $(W \cup U_P)$-connectors which are subpaths of~$P$, namely the subpath~$P_1$ of~$P$ from~$a''$ to~$p_{j-1}$ and the subpath~$P_2$ of~$P$ from~$b''$ to~$p_{j}$.
We now show that each such subpath (if it exists) is a trivial or robust connector.

First consider~$P_1$.
Assume that~\tnew{$P_1$} exists and is non-trivial.
There is no dense path~$R_1$ from~$a''$ to~$p_{j-1}$ outside of~$P_1$, since (i)~such a path contains~$a$ and~$p_{j}$ (in that order) and (ii)~$(a,p_{j})\notin A$ (since~$(p_{j},a)\in A$).
Similarly, there is no dense path~$R_1$ from~$p_{j-1}$ to~$a''$ outside of~$P_1$, since such a path contains~$a$ and~$(p_{j-1},a)\notin A$.
Hence, if~$P_1$ exists, $P_1$ is trivial or robust.

Next, consider~$P_2$. 
Assume that~$P_2$ exists and is non-trivial.
There is no dense path~$R_2$ from~$p_{j}$ to~$b''$ outside of~$P_2$, since (i)~such a path contains~$p_{j-1}$ and~$a$ (in that order) and (ii)~$(p_{j-1},a)\notin A$.
Similarly, there is no dense path~$R_2$ from~$b''$ to~$p_j$ outside of~$P_2$, since such a path contains~$a$ and~$(a, p_{j})\notin A$ (since~$(p_j,a)\in A$).
Hence, if~$P_2$ exists, $P_2$ is trivial or robust.
\end{proof}

\subparagraph{Nice Connectors}
We now define the more general property we are interested in.

\begin{definition}
Let~$W$ with~$X^*\subseteq W \subseteq V^*$ and let~$P$ be a non-trivial~$W$-connector with endpoints~$a$ and~$b$.
Moreover, let~$a'$ be the successor of~$a$ in~$P$, let~$b'$ be the predecessor of~$b$ in~$P$, and let~$C$ denote the extension of~$P$.
We say that~$P$ is \emph{nice} if~$P$ is robust and
\begin{itemize}
\item if for each vertex~$q\in V(P)$, (i)~$(q,a)\notin A$ or~$(q,b)\notin A$ and (ii)~$(a,q)\notin A$ or~$(b,q)\notin A$, or
\item if~$P$ is a dense~(in,out)-connector and (i)~for each vertex~$v \in V \setminus V(C)$ with~$(v,a) \in A$ and~$(v,b)\in A$, there is no dense path from~$v$ to~$b$ in~$D[(V \setminus V(C)) \cup \{a,b\}]$, and (ii)~for each vertex~$v \in V \setminus V(C)$ with~$(a,v) \in A$ and~$(b,v)\in A$, there is no dense path from~$a$ to~$v$ in~$D[(V \setminus V(C)) \cup \{a,b\}]$.
\end{itemize}
\end{definition}

The benefit of nice connectors is the following.

\newcommand{\vin}{v^{\mathrm{in}}}
\newcommand{\vout}{v^{\mathrm{out}}}

\begin{lemma}\label{compute unique entrance point}
Let~$W$ with~$X^*\subseteq W \subseteq V^*$ and let~$P$ be a nice~$W$-connector with endpoints~$a$ and~$b$ from~$V^*$.
Moreover, let~$a'$ be the successor of~$a$ in~$P$, let~$b'$ be the predecessor of~$b$ in~$P$, and let~$C$ denote the extension of~$P$.
Finally, let~$\vin \in V(C) \setminus \{a,b\}$ and let~$\vout \in (V \setminus V(C)) \cup \{a,b\}$ with~$(\vin,\vout) \in A$ or~$(\vout,\vin)\in A$.
In polynomial time we can
\begin{itemize}
\item detect that~$D$ is not realizable, or
\item compute an edge~$e^*\in \{\{a,a'\},\{b,b'\}\}$, such that if~$(\vin,\vout) \in A$ (if $(\vout,\vin)\in A$), each dense~$(\vin,\vout)$-path (dense~$(\vout,\vin)$-path) contains the edge~$e^*$.
\end{itemize}
\end{lemma}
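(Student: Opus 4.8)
The plan is to reduce this to an instance we can fully analyze combinatorially. The key observation is that the connector~$P$ is nice, so it satisfies one of the two defining cases. I would first handle the trivial case where~$\{a,a'\} = \{b,b'\}$ (i.e., $P$ has length exactly~$2$ and the only internal vertex is~$a' = b'$); here~$e^* := \{a,a'\}$ trivially works, since any dense~$(\vin,\vout)$-path (or~$(\vout,\vin)$-path) with~$\vin$ strictly inside~$C$ and~$\vout$ outside~$C$ must traverse one of the two edges connecting~$C$ to the rest of the graph, but here there is only one such edge. In general there are exactly two edges incident with~$V(C) \setminus \{a,b\}$ that leave~$C$, namely~$\{a,a'\}$ and~$\{b,b'\}$ (because~$a$ and~$b$ are leaves of~$C$ and, by robustness, any dense path realizing an arc with one endpoint inside~$V(C)\setminus\{a,b\}$ and one outside must pass through~$a$ or through~$b$; this is where we use that~$P$ is robust and that~$a,b \in W$ are the only vertices of~$C$ in~$W$, via~\Cref{robust has internal solution}).

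So every dense~$(\vin,\vout)$-path contains~$\{a,a'\}$ or~$\{b,b'\}$; the task is to decide which one, or to conclude that both are possible (which, as I will argue, forces~$D$ to be a no-instance). Consider the case~$(\vin,\vout) \in A$ (the case~$(\vout,\vin)\in A$ is symmetric, reversing all arcs). Suppose, for contradiction, that there exist both a dense~$(\vin,\vout)$-path~$Q_a$ through~$\{a,a'\}$ and a dense~$(\vin,\vout)$-path~$Q_b$ through~$\{b,b'\}$. The prefix of~$Q_a$ from~$\vin$ to~$a$ is a dense~$(\vin,a)$-path, and the prefix of~$Q_b$ from~$\vin$ to~$b$ is a dense~$(\vin,b)$-path; moreover the suffix of~$Q_a$ from~$a$ to~$\vout$ is a dense~$(a,\vout)$-path, and the suffix of~$Q_b$ from~$b$ to~$\vout$ is a dense~$(b,\vout)$-path. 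From density of~$Q_a$ we get~$(\vin,a)\in A$ and~$(a,\vout)\in A$; from~$Q_b$ we get~$(\vin,b)\in A$ and~$(b,\vout)\in A$. Now I split on which defining case of ``nice'' holds. In the first case (for every~$q\in V(P)$, not both~$(q,a),(q,b)$ and not both~$(a,q),(b,q)$), taking~$q = \vin$ when~$\vin \in V(P)$ gives an immediate contradiction; if~$\vin \in V(C)\setminus V(P)$ (i.e.\ $\vin$ lies in a pendant tree hanging off an internal vertex~$q$ of~$P$), then~$(\vin,a) \in A$ forces~$(q,a)\in A$ and~$(\vin,b)\in A$ forces~$(q,b)\in A$ by transitivity along the dense path (any dense~$(\vin,a)$-path must traverse~$q$, since~$q$ is the root of the pendant tree containing~$\vin$, giving~$(q,a)\in A$; similarly~$(q,b)\in A$), again contradicting niceness. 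In the second case, $P$ is a dense~(in,out)-connector, and~$\vout \notin V(C)$ — note when~$\vout \in \{a,b\}$ one of the two edges is determined directly, so we may assume~$\vout \in V\setminus V(C)$ — with~$(\vin,a),(\vin,b),(a,\vout),(b,\vout) \in A$. Since~$a' = p_2$ (successor of~$a$) and~$b' = p_{\ell-1}$, and~$P$ is dense, $(a',b)\in A$ etc.; the point is that because~$P$ is a dense~(in,out)-connector and robust, one checks that the arc~$(a,\vout)\in A$ together with~$(b,\vout)\in A$ and~$(\vin,a),(\vin,b)\in A$ yields a dense path from~$a$ to~$\vout$ inside~$D[(V\setminus V(C))\cup\{a,b\}]$: concatenate a witnessing dense~$(b,\vout)$-path (which lies outside~$V(C)\setminus\{a,b\}$ since~$\vout\notin V(C)$ and we can push it out using robustness / the structure of~$Q_b$) — this contradicts condition~(ii) of the second bullet of niceness with~$v := \vout$. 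So in all cases the simultaneous existence of~$Q_a$ and~$Q_b$ yields a contradiction with~$D$ being realizable (more precisely, with~$P$ being nice, which we have assumed holds for this~$W$). Hence at most one of~$\{a,a'\},\{b,b'\}$ can appear in a dense~$(\vin,\vout)$-path.

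The algorithm is then: in polynomial time, compute the set of vertices reachable from~$\vin$ via dense paths that avoid~$\{a,a'\}$ (equivalently, restrict to~$D[V \setminus \{a'\}]$ if~$a'\neq b'$, or handle the degenerate case separately) — actually, more carefully, test whether there is a dense~$(\vin,\vout)$-path in~$D$ avoiding the edge~$\{a,a'\}$, and whether there is one avoiding~$\{b,b'\}$; but since testing existence of a dense path is itself hard in general, I would instead use the structural facts above: a dense~$(\vin,\vout)$-path through~$\{a,a'\}$ exists iff~$(\vin,a)\in A$ and~$(a,\vout)\in A$ and the pendant-tree/path structure between~$\vin$ and~$a$ inside~$C$ is dense (checkable in polynomial time since that part of~$D$ is tree-like by the connector-tree structure), and similarly for~$\{b,b'\}$. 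So we simply check these two polynomial-time-verifiable conditions; if exactly one holds, output the corresponding edge as~$e^*$; if neither holds there is no dense~$(\vin,\vout)$-path at all and, since~$(\vin,\vout)\in A$ must be realized by a dense path, we report that~$D$ is not realizable; if both hold, by the argument of the previous paragraph~$D$ is not realizable and we report that. The main obstacle is making precise the claim that ``a dense~$(\vin,\vout)$-path through~$\{a,a'\}$ is detectable in polynomial time,'' i.e.\ that the only freedom in such a path is inside the connector tree~$C$ (which is tree-like) plus the dense~$(a,\vout)$-reachability question outside~$C$ — and that the latter, when combined with robustness, does not require solving a general dense-path problem. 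Here I expect to lean on~\Cref{robust has internal solution} and on the fact that, outside~$C$, realizing~$(a,\vout)$ only needs the arc to be present (a dense~$(a,\vout)$-path in the whole graph exists because~$(a,\vout)\in A$ and~$D$ is assumed realizable, so a realization witnesses it), so detecting it reduces to checking~$(a,\vout)\in A$ and checking the internal density condition.
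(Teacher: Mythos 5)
There is a genuine gap in your proposal, and it is in exactly the place you flag as "the main obstacle": deciding which of $\{a,a'\}$, $\{b,b'\}$ is the right edge to output. Your plan is to check, for each $c\in\{a,b\}$, the trio of conditions "$(\vin,c)\in A$, $(c,\vout)\in A$, and the tree path from $\vin$ to $c$ inside $C$ is dense", declare this an "iff" with the existence of a dense $(\vin,\vout)$-path through $\{c,c'\}$, and then report non-realizability whenever both trios hold (using the earlier argument that $Q_a$ and $Q_b$ cannot coexist). This breaks in two compounding ways. First, the "iff" is false in the forward direction: the trio is necessary but not sufficient, since you also need a dense $(c,\vout)$-path outside the connector tree that glues densely onto the internal segment, and that is exactly what you cannot check directly. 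Second, and more damagingly, when $P$ is a dense (in,out)-connector, both trios can hold simultaneously while $D$ is perfectly realizable; for instance $\vin = a'$ always satisfies $(\vin,a)\in A$ trivially, $(\vin,b)\in A$ by density of $P$, the internal pieces are dense subpaths of $P$, and having $(a,\vout)\in A$ and $(b,\vout)\in A$ both is allowed by the definition of niceness. The correct behaviour in that situation is not to declare non-realizability but to output a specific edge. This is precisely where condition (ii) of the second bullet of the niceness definition must be invoked: if $(a,\vout)\in A$ and $(b,\vout)\in A$ then niceness guarantees there is no dense $(a,\vout)$-path inside $D[(V\setminus V(C))\cup\{a,b\}]$, and since the suffix of any dense $(\vin,\vout)$-path that exits $C$ at $a$ would be such a path, all dense $(\vin,\vout)$-paths are forced through $\{b,b'\}$, so $e^*=\{b,b'\}$ is the answer. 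Your contradiction argument for the coexistence of $Q_a$ and $Q_b$ is essentially this same observation applied with a hypothesis you cannot verify, so it does not translate into an algorithm.

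The paper's actual proof avoids your criterion entirely and uses a cleaner case split that mirrors the two bullets of the niceness definition. If $P$ is not a dense (in,out)-connector (first bullet of niceness applies), it only looks at whether $(\vin,a)\in A$ and whether $(\vin,b)\in A$; at most one can hold, and the corresponding $\{c,c'\}$ is output (none holding means non-realizable). If $P$ is a dense (in,out)-connector (second bullet), it only looks at $(a,\vout)$ and $(b,\vout)$: if exactly one holds, output the corresponding edge; if neither holds, output non-realizable; if both hold, use condition (ii) as above to output $\{b,b'\}$. Crucially, the lemma never needs to decide whether a dense path through a given edge actually exists — it only needs an edge contained in every dense path, which is vacuously satisfied by any $e^*$ if no dense path exists. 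That is what lets the algorithm avoid the hard dense-path detection problem that stalls your approach.

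Two smaller points. Your "trivial case $\{a,a'\}=\{b,b'\}$" does not arise: a $W$-connector has length at least two, and even at length two with $a'=b'$ the edges $\{a,a'\}$ and $\{b,b'\}=\{b,a'\}$ are distinct since $a\neq b$. On the other hand, your pendant-tree observation (that for $\vin\in V(C)\setminus V(P)$ lying in a pendant tree rooted at $q\in V(P)$, the arcs $(\vin,a),(\vin,b)\in A$ propagate to $(q,a),(q,b)\in A$ under the assumption that $D$ is realizable, because every dense path from $\vin$ to $\{a,b\}$ must pass through $q$) is a correct and worthwhile refinement; the paper's Case~1 applies the first bullet of niceness to $\vin$ directly even though the definition is stated only for $q\in V(P)$, so spelling out the pendant-tree transfer, and reporting non-realizability when it fails, would make that step airtight.
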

\begin{proof}
We distinguish two cases.

\textbf{Case 1:} $P$ is not a dense (in,out)-connector\textbf{.}
Since~$P$ is nice, there is at most one~$c\in\{a,b\}$, such that~$A$ contains arc~$(\vin,c)$ (arc $(c,\vin)$).
If this holds for neither~$a$ nor~$b$, then~$D$ is not realizable, since each dense~$(\vin,\vout)$-path (dense $(\vout,\vin)$-path) contains at least one vertex of~$\{a,b\}$.
Otherwise, let~$c$ be the unique vertex of~$\{a,b\}$ for which~$A$ contains arc~$(\vin,c)$ (arc $(c,\vin)$).
Then, each dense~$(\vin,\vout)$-path (dense $(\vout,\vin)$-path)
contains the edge~$e^* := \{c,c'\}$.
Since all these checks can be performed in polynomial time, the statement follows.

\textbf{Case 2:} $P$ is a dense (in,out)-connector\textbf{.}
If~$A$ contains neither of the arcs~$(a,\vout)$ or~$(b,\vout)$ (neither of the arcs~$(\vout,a)$ or~$(\vout,b)$), then~$D$ is not realizable, since each dense~$(\vin,\vout)$-path (dense~$(\vout,\vin)$-path) contains at least one of~$a$ and~$b$.
Hence, assume in the following that~$A$ contains at least of the arcs~$(a,\vout)$ or~$(b,\vout)$ ($A$ contains at least one of the arcs~$(\vout,a)$ or~$(\vout,b)$).

If~$A$ contains both~$(a,\vout)$ and~$(b,\vout)$ (both~$(\vout,a)$ or~$(\vout,b)$), then there is no dense~$(a,\vout)$-path (dense~$(\vout,b)$-path) in~$D[(V \setminus V(C)) \cup \{a,b\}]$, since~$D$ is nice.
Hence, $e^* :=\{b,b'\}$ ($e^* :=\{a,a'\}$) is contained in each dense~$(\vin,\vout)$-path (dense~$(\vout,\vin)$-path) in~$D$.

Otherwise, there is a unique vertex~$c\in \{a,b\}$, such that~$(c,\vout) \in A$ ($(\vout,c) \in A$).
Clearly, the edge~$e^* :=\{c,c'\}$ is contained in each dense~$(\vin,\vout)$-path (dense~$(\vout,\vin)$-path) in~$D$.

Since all these checks can be performed in polynomial time, the statement follows.
\end{proof}

Similar to the robust property, the property of being a nice connector is preserved under addition of vertices that are not in the specific connector.

\begin{observation}\label{nice property is independent}
Let~$W$ and let~$W'$ with~$X^* \subseteq W \subseteq W' \subseteq V^*$.
Moreover, let~$P$ be a path in~$G[V^*]$ such that~$P$ is both a~$W$-connector and a~$W'$-connector.
Then, $P$ is nice~$W$-connector if and only if~$P$ is a nice~$W'$-connector.
\end{observation}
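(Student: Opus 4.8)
The plan is to observe that every ingredient appearing in the definition of a nice connector is a property of the fixed data $(D, G, V^*)$ together with the path $P$ itself, and that it makes no reference to the set $W$ beyond requiring the endpoints of $P$ to lie in $W$ (which holds for both $W$ and $W'$ by assumption, since $P$ is both a $W$-connector and a $W'$-connector). Hence nothing in the definition can change when we pass from $W$ to $W'$, and the equivalence follows.

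Concretely, I would first record that the endpoints $a,b$ of $P$, the vertices $a',a'',b',b''$, and the length of $P$ are determined by $P$ alone; in particular, whether $P$ is trivial, whether $P$ is an (in,out)-connector, and whether $P$ is a dense (in,out)-connector depend only on $P$ and $D$, not on $W$. Next I would check that the extension $C = G[V(P) \cup \bigcup_{q \in V(P) \setminus \{a,b\}} V_q]$ is the \emph{same} graph whether $P$ is viewed as a $W$-connector or a $W'$-connector: it is defined purely from $V(P)$, the endpoints $\{a,b\}$, and the pendant trees $V_q$, and the pendant-tree assignment $v \mapsto V_v$ is determined by $V^*$ alone and is independent of $W$.

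Given that $P$, $C$, $a$, and $b$ are all unchanged, I would then go clause by clause through the definition of \emph{nice}. Robustness of $P$ depends only on the absence of dense paths between $a$ and $b$ in $D[V \setminus (V(C) \setminus \{a,b\})]$, which is unchanged (this is precisely Observation~\ref{robust property is independent}). The first alternative condition quantifies over $q \in V(P)$ and refers only to arcs of $D$ between such $q$ and $\{a,b\}$, hence is unchanged. The second alternative condition refers to $P$ being a dense (in,out)-connector (unchanged), to vertices $v \in V \setminus V(C)$ (unchanged, as $C$ is), and to dense paths in $D[(V \setminus V(C)) \cup \{a,b\}]$ (unchanged). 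Thus $P$ is a nice $W$-connector iff it is a nice $W'$-connector. The only point that needs any care is the identification of the two extensions $C$, which reduces to the remark that the pendant trees do not depend on $W$; there is no genuine obstacle here, and the argument mirrors Observations~\ref{robust property is hereditary}, \ref{robust property is independent}.
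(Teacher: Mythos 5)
Your proposal is correct and matches the paper's (implicit) reasoning: the paper states this as an observation without proof precisely because every ingredient in the definition of a nice connector — the path $P$ with its endpoints and neighbours, triviality, the (dense) (in,out)-property, the extension $C$ (which is built from $V(P)$ and the pendant trees $V_q$, determined by $V^*$ alone), robustness, and the dense-path conditions in $D[(V\setminus V(C))\cup\{a,b\}]$ — depends only on $P$, $D$, $G$, and $V^*$, never on $W$. Your clause-by-clause verification, including the identification of the two extensions, is exactly this argument spelled out.
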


Finally, we show that similar to robust connectors, we can compute in polynomial time a set of vertices of a connector~$P$ to add to~$W$ to ensure that each resulting subconnector is nice or trivial.
\nnew{
To this end, recall that due to~\Cref{robust has internal solution}, we can immediately conclude that~$D$ is not realizable if~$D[V(C)]$ is not realizable for the extension~$C$ of a robust connector.
Since~$G[V(C)]$ is a tree, this property can be checked in polynomial time.
Hence, by performing this check whenever we consider the extension~$C$ of a robust connector, we ensure that in polynomial time, we can (i)~detect that~$D$ is not realizable or (ii)~ensure that~$D[V(C)]$ is realizable.
}

\begin{lemma}\label{make nice connectors}
Let~$W$ with~$X^*\subseteq W \subseteq V^*$ and let~$P$ be a non-trivial and robust~$W$-connector with extension~$C$.
If~$D[V(C)]$ is realizable, then we can compute in polynomial time a set~$U_P \subseteq V(P)$ of size at most~$10$, such that (i)~there are at most two~$(W \cup U_P)$-connectors which are subpaths of~$P$ and (ii)~each~$(W \cup U_P)$-connector which is a subpath of~$P$ is trivial or nice.
\end{lemma}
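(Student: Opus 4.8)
The plan is to mirror the strategy of \Cref{make robust connectors}: we locate a bounded number of vertices of $P$ at which to cut, so that every resulting subpath that is still a $(W\cup U_P)$-connector is either trivial or satisfies one of the two defining properties of niceness. Throughout, \Cref{robust property is hereditary} guarantees that each such subconnector is automatically robust, so the only thing left to arrange is the extra condition in the definition of \emph{nice}. Let $a,b$ be the endpoints of $P$, write $p_1=a,p_2,\dots,p_\ell=b$ for its vertex sequence (so $\ell\ge 5$), and let $a'=p_2$, $b'=p_{\ell-1}$.

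First I would test, in polynomial time, whether $P$ itself already satisfies the first bullet of niceness, i.e.\ whether no $q\in V(P)$ has $(q,a),(q,b)\in A$ and no $q\in V(P)$ has $(a,q),(b,q)\in A$; if so we output $U_P:=\emptyset$ and are done (robustness is given). Otherwise fix an offending vertex; by the symmetry that simultaneously reverses all arcs and exchanges $a$ with $b$, we may assume there is $q=p_k$ with $(q,a)\in A$ and $(q,b)\in A$. Here robustness does the structural work: because $C$ is the extension of a robust connector, every dense $(q,a)$-path and every dense $(q,b)$-path must run inside $C$ along $P$ (a detour into a pendant tree dead-ends, and $a,b$ lie on $P$), so $(q,a)\in A$ forces $(p_s,p_t)\in A$ for all $1\le t<s\le k$ and $(q,b)\in A$ forces $(p_s,p_t)\in A$ for all $k\le s<t\le\ell$. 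In particular, as soon as $k\ge 3$ and $k\le\ell-2$ we get $(a'',a),(b'',b)\in A$, so $P$ is ``almost'' a dense $(\mathrm{in},\mathrm{out})$-connector. I would then perform at most two constant-size cuts: remove $\{p_2,p_3\}$ from the $a$-side and $\{p_{\ell-2},p_{\ell-1}\}$ from the $b$-side (and, in the branch where the forced arc pattern only breaks near one boundary, one further pair $\{p_{j-1},p_j\}$ exactly as in Case~2.2 of \Cref{make robust connectors}), obtaining at most two nontrivial middle subconnectors; for each, either the first nice bullet is now verifiable and holds, or we have reduced to a subconnector $P'$ with extension $C'$ that we can certify in polynomial time to be a \emph{dense $(\mathrm{in},\mathrm{out})$-connector} (check that $P'$ is a dense path and that the two endpoint arcs are present). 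A careful accounting of these at most two cuts keeps $|U_P|\le 10$ and the number of surviving $(W\cup U_P)$-connectors at most two.

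It remains, in the dense $(\mathrm{in},\mathrm{out})$ branch, to establish the second bullet of niceness for $C'$ without testing it directly. Suppose there were a witness $v\in V\setminus V(C')$ with $(v,\alpha),(v,\beta)\in A$ (writing $\alpha,\beta$ for the endpoints of $P'$) and a dense $v$-to-$\beta$ path $Q$ in $D[(V\setminus V(C'))\cup\{\alpha,\beta\}]$. If $Q$ visits $\alpha$, its $\alpha$-$\beta$ suffix is a dense $\alpha$-$\beta$ path avoiding $V(C')\setminus\{\alpha,\beta\}$, contradicting robustness of $P'$. If $Q$ avoids $\alpha$, I would derive a contradiction from the hypothesis that $D[V(C')]$ is realizable (\Cref{robust has internal solution}): in any realization, \Cref{lem:special} pins the labels of the bridges $\{\alpha,\alpha'\}$ and $\{\beta,\beta'\}$ to one or two forced values; combining this with $(v,\alpha)\in A$, $(v,\beta)\in A$, the density of $P'$, and $Q$, one produces either a dense $\alpha$-$\beta$ path outside $C'$ (again contradicting robustness, after possibly one more constant-size boundary cut to rule out the degenerate re-entry pattern) or a temporal path realizing a non-arc inside $C'$. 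The symmetric argument handles part~(ii) with $(\alpha,v),(\beta,v)\in A$. Observations~\ref{robust property is independent} and~\ref{nice property is independent} ensure all of this is insensitive to later enlargements of $W$ that stay outside $P'$, so the output set $U_P$ behaves well when we later union it over all $X^*$-connectors.

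The main obstacle is exactly this last step: the second bullet is phrased through the \emph{nonexistence} of a dense path, which is NP-hard to check in general (cf.\ the footnote on \textsc{Multi Colored Clique}), so we cannot verify niceness by brute force. The crux of the argument is therefore to show that, for a robust dense $(\mathrm{in},\mathrm{out})$-connector whose extension induces a realizable subinstance, a violation of the second bullet is \emph{impossible} — i.e.\ it always collapses into a contradiction with robustness or with the realizability of $D[V(C')]$ — so that no testing is needed. Executing this reduction cleanly, in particular dealing with the pendant trees hanging off the connector and with witnessing dense paths that touch $C'$ only at $\alpha$ or $\beta$, is the technical heart of the proof; the remaining book-keeping (constant cut sizes, the bound $|U_P|\le10$, at most two surviving subconnectors) is routine.
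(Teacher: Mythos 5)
There is a genuine gap, and it sits exactly at the step you yourself flag as the "technical heart". First, in the branch where the first bullet of niceness fails, say with an offending internal vertex $q=p_k$ having arcs to (or from) both endpoints, your cuts $\{p_2,p_3\}$ and $\{p_{\ell-2},p_{\ell-1}\}$ do not touch $q$. By the very density that robustness (together with realizability of $D[V(C)]$) forces on the two sides of $q$, the vertex $q$ still has arcs to both endpoints $p_3$ and $p_{\ell-2}$ of the surviving middle subconnector, so the first bullet still fails for it; and that middle piece is not a dense path either, because density is only forced on each side of $q$ separately, never across $q$. So the middle subconnector is neither trivially checkable via the first bullet nor a dense $(\mathrm{in},\mathrm{out})$-connector, and property (ii) of the lemma is not achieved. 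The missing idea is the paper's key move: put $\{p,p'\}$ (the offending vertex and its neighbour) into $U_P$, so that $P$ splits into two subpaths each of which is a \emph{dense} path (hence a dense $(\mathrm{in},\mathrm{out})$-connector with respect to its own endpoints), and only then trim four boundary vertices off each piece as in the dense case; this is also where the budget $2+4+4=10$ comes from.

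Second, your treatment of the dense $(\mathrm{in},\mathrm{out})$ case asserts that a violation of the second bullet is outright impossible for a robust dense $(\mathrm{in},\mathrm{out})$-connector with realizable interior. That claim is false for $P$ itself: robustness only excludes dense $a$--$b$ paths outside $C$, and nothing prevents a far-away vertex $v$ with $(v,a),(v,b)\in A$ from having a dense path to $b$ that never meets $C$. Niceness is only obtained for the \emph{trimmed} subconnector $P'$ with endpoints $a'',b''$, and precisely because the guard arcs $(a,a'')$ and $(b'',b)$ now lie just outside its extension $C'$: any dense $(w,b'')$-path avoiding the edges of $P'$ must enter through the unique solid edge $\{b',b''\}$, hence either pass through $b$ (impossible since $(b'',b)\in A$ forces $(b,b'')\notin A$) or originate in $\{b'\}\cup V_{b'}$ (excluded via $(b',a'')\notin A$); the symmetric out-direction additionally needs realizability of $D[V(C)]$ to rule out $w\in\{a'\}\cup V_{a'}$. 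Your sketch instead routes the contradiction through label-pinning on the bridges (\Cref{lem:special}) and "a temporal path realizing a non-arc inside $C'$", which is never made concrete and does not substitute for this purely structural argument; the fact that internal connector vertices have exactly two neighbours in $V^*$ is the load-bearing observation, and it is absent from your proof. Also, your derived arcs in the offending case are $(a'',a)$ and $(b'',b)$, which is not the $(\mathrm{in},\mathrm{out})$ pattern $(a,a'')$, $(b'',b)$, so even the "almost dense $(\mathrm{in},\mathrm{out})$" reduction you invoke does not go through as stated.
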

\begin{proof}
Let~$a,a',a''$ be the first three vertices of~$P$ and let~$b'',b',b$ be the last three vertices of~$P$.
Moreover, let~$C$ be the extension of~$P$.
We distinguish three cases.

\textbf{Case 1:} $P$ is a dense (in,out)-connector\textbf{.}
We set~$U_P := \{a',a'',b',b''\}$.
Let~$P'$ be the subpath of~$P$ from~$a''$ to~$b''$.
If~$a'' = b''$ or~$\{a'',b''\}\in E$, then there is no subpath of~$P$ which is a~$(W\cup U_P)$-connector, which implies that~$U_P$ fulfills the desired properties.
Hence, assume that~$P'$ has length at least 2, which implies that~$P'$ is the only~$(W\cup U_P)$-connector that is a subpath of~$P$.
Moreover, note that~$P'$ is also a dense (in,out)-connector.
If~$P'$ is trivial, $U_P$ fulfills the desired properties.
Hence, in the following assume that~$P'$ in non-trivial.
This implies that~$(b'',a'')\notin A$, since~$P'$ is a dense (in,out)-connector.
We now show that~$P'$ is nice.
Let~$C'$ be the extension of~$P'$~and let~$w \in V \setminus V(C')$.
First, we show that if~$(w,a'')\in A$ and~$(w,b'')\in A$, then there is no dense~$(w,b'')$-path in~$D$ that avoids the edges of~$P'$.
Afterwards, we show that if~$(a'',w)\in A$ and~$(b'',w)\in A$, then there is no dense~$(a'',w)$-path in~$D$ that avoids the edges of~$P'$.

Suppose that~$(w,a'')\in A$ and~$(w,b'')\in A$.
Since~$P$ is a dense path in~$D$ and~$P'$ has length at least 4, $A$ does not contain the arc~$(b',a'')$ (since $(a'',b')\in A$).
Hence, if~$w \in \{b'\} \cup V_{b'}$, then there is no dense\nnew{~$(w,a'')$}-path in~$D$, since each such path would contain the vertex~$b'$.
For each other vertex of~$V \setminus V(C')$, each path in~$G$ from~$w$ to~$b''$ either traverses the entire path~$P'$, or goes over the vertex~$b$. 
This is due to the fact that each vertex in each connector has exactly two neighbors in~$V^*$. 
Thus, there is no dense~$(w,b'')$-path in~$D$ that avoids the edges of~$P'$, since such a path would contain the vertex~$b$, but $A$ does not contain the arc~$(b,b'')$ (since $(b'',b)\in A$).

Now suppose that~$(a'',w)\in A$ and~$(b'',w)\in A$.
First, we show that~$w\notin \{a'\} \cup V_{a'}$.
Since~$P$ is a dense path in~$D$ and~$P'$ has length at least 4, $A$ does not contain the arc~$(b'',a')$ (since $(a',b'')\in A$).
Note that this implies that~$(b'',w)\notin A$.
This is due to the assumption that~$D[V(C)]$ is realizable and the fact that each~$(b'',w)$-path in the tree~$G[V(C)]$ contains the vertex~$a'$.
Hence, $w \notin \{a'\} \cup V_{a'}$.
For each other vertex of~$V \setminus V(C')$, each path in~$G$ from~$a''$ to~$w$ either traverses the entire path~$P'$, or goes over the vertex~$a$. 
This is due to the fact that each vertex in each connector has exactly two neighbors in~$V^*$. 
Thus, there is no dense\nnew{~$(a'',w)$}-path in~$D$ that avoids the edges of~$P'$, since such a path would contain the vertex~$a$, but $A$ does not contain the arc~$(a'',a)$ (since $(a,a'')\in A$).

Thus, $P'$ is nice, which implies that~$U_P$ fulfills the desired properties.

\textbf{Case 2:} $P$ is not a dense (in,out)-connector\textbf{.}
Note that, in this case, we can check in polynomial time, whether~$P$ is nice.
If this is the case, the set~$U_P := \emptyset$ trivially fulfills the desired property.
Hence, assume in the following that~$P$ is not nice.
This implies that there is some vertex~$p\in V(P)$ with (i)~$(a,p)\in A$ and~$(b,p)\in A$, or (ii)~$(p,a)\in A$ and~$(p,b)\in A$.
Due to symmetry, assume that~$(a,p)\in A$ and~$(b,p)\in A$.
Moreover, let~$p'$ be the predecessor of~$p$ in~$P$.
This then implies that the unique~$(a,p')$-path in~$P$ and the unique~$(b,p)$-path in~$P$  are dense.
Hence, there are at most two subpaths of~$P$ that are~$(W\cup \{p,p'\})$-connectors.
Namely, the subpath~$P_1$ from~$a$ to~$p'$ and the subpath~$P_2$ from~$b$ to~$p$.
By the above, each of these at most two resulting connectors is trivial or a dense (in,out)-connector.
If~$P_i$ exists and is non-trivial, let~$U_{P_i} := \emptyset$.
Otherwise, let~$U_{P_i}$ be as defined in the case for dense (in,out)-connectors.
Thus, by setting~$U_P := \{p,p'\} \cup U_{P_1} \cup U_{P_2}$, we obtain at most two~$(W \cup U_P)$-connectors which are subpaths of~$P$ and both of these subpaths are trivial or nice.
Moreover, $U_P$ has size at most~$10$, since $U_{P_i}$ has size at most~$4$ as shown above.
\end{proof}

Based on all these results, we now show the main result of this section.
\begin{proposition}\label{compute w star}
In polynomial time, we can detect that~$D$ is a not realizable, or compute a set~$W^*$ with~$X^*\subseteq W^* \subseteq V^*$ of size~$\Oh(\fes)$, such that each~$W^*$-connector is nice.
\end{proposition}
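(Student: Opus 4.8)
The plan is to build $W^*$ from $X^*$ by iteratively adding, for each connector encountered, the small vertex sets guaranteed by Lemmas~\ref{make robust connectors} and~\ref{make nice connectors}, while carefully accounting for the total number of vertices added so that $|W^*| \in \Oh(\fes)$ and each final $W^*$-connector is nice. First I would recall the starting facts: $|X^*| \in \Oh(\fes)$ and, by \Cref{properties of small sets}, there are $\Oh(\fes)$ many $X^*$-connectors; moreover every vertex of $V^* \setminus W$ (for any $W$ with $X^* \subseteq W \subseteq V^*$) lies on a unique $W$-connector, so the connectors partition the ``non-core-of-$W$'' part of $V^*$ into edge-almost-disjoint paths.

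The key steps, in order, would be: (1)~Set $W := X^*$ and process each $X^*$-connector $P$ one at a time. If $P$ is trivial, do nothing. Otherwise apply \Cref{make robust connectors} to obtain $U_P$ with $|U_P| \le 6$, and add $U_P$ to $W$; by \Cref{robust property is independent} and \Cref{robust property is hereditary}, the at most two subconnectors of $P$ that survive are trivial or robust, and adding the (finitely many) $U_P$'s from \emph{other} connectors later does not destroy robustness since those vertices lie outside $P$. (2)~For each surviving non-trivial robust subconnector $P'$ with extension $C'$, first check in polynomial time whether $D[V(C')]$ is realizable (it is a tree, so this is the algorithm of \Cref{th:treeCombinatorial} or \Cref{lem:treeLP}); if not, by \Cref{robust has internal solution} output that $D$ is not realizable. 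Otherwise apply \Cref{make nice connectors} to get $U_{P'}$ with $|U_{P'}| \le 10$, add it to $W$, and conclude that the at most two further subconnectors are trivial or nice. (3)~By \Cref{nice property is independent}, these nice subconnectors remain nice after all subsequent additions of vertices lying outside them, so once the whole process terminates every non-trivial $W^*$-connector is nice (trivial connectors are handled separately in the main algorithm, or one can observe that a trivial connector has $\le 3$ edges and can be absorbed into the prelabeled set). Finally set $W^* := W$.

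For the size bound I would argue as follows: we start with $\Oh(\fes)$ $X^*$-connectors, and each application of \Cref{make robust connectors} splits one connector into at most two and adds at most $6$ vertices; then each application of \Cref{make nice connectors} splits a connector into at most two and adds at most $10$ vertices. Since each original connector thus gives rise to a constant number of subconnectors and triggers a constant number of vertex additions, the total number of vertices added over all $\Oh(\fes)$ original connectors is $\Oh(\fes)$, hence $|W^*| = |X^*| + \Oh(\fes) \in \Oh(\fes)$; by \Cref{properties of small sets} this also bounds the number of $W^*$-connectors by $\Oh(\fes)$. All of Lemmas~\ref{make robust connectors} and~\ref{make nice connectors}, the realizability check on tree extensions, and the bookkeeping run in polynomial time, and the process is executed $\Oh(\fes) \le \Oh(n)$ times, so the whole procedure is polynomial.

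The main obstacle I expect is the \emph{monotonicity/independence} argument: after we add $U_P$ for a later connector $P$, we must be sure that connectors already certified ``robust'' or ``nice'' stay so, and that connectors not yet processed are not inadvertently cut in a way that invalidates the earlier analysis. This is exactly what Observations~\ref{robust property is independent}, \ref{robust property is hereditary}, and~\ref{nice property is independent} are designed to handle, so the proof is really a matter of invoking them in the right order and checking that when a connector $P$ of the \emph{current} $W$ is itself a subpath of an original $X^*$-connector, the vertex sets $U_P$ we add are always subsets of $V(P)$, hence disjoint from the interiors of connectors processed in other branches. The second, more mundane subtlety is ensuring the polynomial-time realizability test on each tree extension $D[V(C)]$ is actually invoked (as flagged in the remark preceding \Cref{make nice connectors}), so that the hypothesis ``$D[V(C)]$ is realizable'' of \Cref{make nice connectors} is legitimately available; I would fold this check into step~(2) above and note that a failure there correctly reports a no-instance via \Cref{robust has internal solution}.
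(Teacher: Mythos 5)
Your plan follows the paper's proof essentially step for step: start from $X^*$, run the robustifying phase via Lemma~\ref{make robust connectors}, run the ``nice-ifying'' phase via Lemma~\ref{make nice connectors}, invoke Observations~\ref{robust property is independent} and~\ref{nice property is independent} to argue that certifying one connector is never undone by processing another, insert the tree-realizability check justified by Observation~\ref{robust has internal solution} before using Lemma~\ref{make nice connectors}, and account for size by noting that each of the $\Oh(\fes)$ original $X^*$-connectors triggers $\Oh(1)$ vertex additions.

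There is one concrete loose end that your ``Finally set $W^* := W$'' does not close. The proposition's conclusion is that \emph{each} $W^*$-connector is nice, and ``nice'' is only defined for non-trivial connectors; a trivial connector is therefore never nice, and your $W^*$ may still have trivial connectors. You acknowledge this parenthetically and correctly suggest that trivial connectors could be ``absorbed into the prelabeled set,'' but you leave that as a remark rather than folding it into the construction. The paper's proof makes this precise with a final step: after building $W_2$ (your $W$), it sets $W^* := W_2 \cup \bigcup\{V(P) : P \text{ a trivial } W_2\text{-connector}\}$. Since there are $\Oh(\fes)$ many $W_2$-connectors (by Observation~\ref{properties of small sets}) and each trivial one has constantly many vertices, this preserves $|W^*| \in \Oh(\fes)$, and by Observation~\ref{nice property is independent} the nice subconnectors stay nice. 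After that step every surviving $W^*$-connector is non-trivial and hence eligible (and indeed certified) to be nice, which is what the proposition's wording actually demands. Add this step to your construction and the argument matches the paper's.
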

\begin{proof}
First, we describe how to obtain a set~$W_1$ with~$X^*\subseteq W_1 \subseteq V^*$, such that each~$W_1$-connector is trivial or robust.
Compute the set of~$X^*$-connectors, initialize~$W_1:=X^*$, and iterate over each non-trivial~$X^*$-connector~$P$.
Compute the set~$U_P$ of vertices of~$P$ according to~\Cref{make robust connectors}.
Note that the set of~$W_1$-connectors and the set of~$(W_1\cup U_P)$-connectors differ only by a constant number of subpaths of~$P$.
Then, each subpath of~$P$ which is a~$(W_1\cup U_P)$-connector is trivial or robust due to~\Cref{make robust connectors}.
Moreover, for all other~$(W_1\cup U_P)$-connectors, the properties of being trivial or being robust are preserved due to~\Cref{robust property is independent}.
Now, add all vertices of~$U_P$ to~$W_1$ and proceed with the next~$X^*$-connector (which is also a~$W_1$-connector).
Hence, after the iteration over all~$X^*$ connectors, the resulting set~$W_1$ preserves the stated property that each~$W_1$-connector is trivial or robust.
Moreover, since~$U_P$ has constant size for each~$X^*$-connector, and there are $\Oh(\fes)$~$X^*$-connectors, $W_1$ has size~$\Oh(\fes)$.
This also implies that there are $\Oh(\fes)$~$W_1$-connectors.

Second, we describe how to obtain a set~$W_2$ with~$W_1\subseteq W_2 \subseteq V^*$, such that each~$W_2$-connector is trivial or nice.
We essentially do the same as in the construction of~$W_1$.
Compute the set of~$W_1$-connectors, initialize~$W_2:=W_1$ and iterate over each non-trivial~$W_1$-connector~$P$.
Note that by definition of~$W_1$, $P$ is robust.
Let~$C$ be the extension of~$P$.
Check whether~$D[V(C)]$ is realizable.
Note that this can be done in polynomial time, since~$D[V(C)]$ is a tree.
If this is not the case, then output that~$D$ is not realizable.
This is correct due to~\Cref{robust has internal solution}.
If~$D[V(C)]$ is realizable, proceed as follows.
Compute the set~$U_P$ of vertices of~$P$ according to~\Cref{make nice connectors}.
Note that the set of~$W_2$-connectors and the set of~$(W_2\cup U_P)$-connectors differ only by a constant number of subpaths of~$P$.
Then, each subpath of~$P$ which is a~$(W_2\cup U_P)$-connector is trivial or nice due to~\Cref{make nice connectors}.
Moreover, for all other~$(W_2\cup U_P)$-connectors, the properties of being trivial or being nice are preserved due to~\Cref{nice property is independent}.
Now, add all vertices of~$U_P$ to~$W_2$ and proceed with the next~$W_1$-connector (which is also a~$W_2$-connector).
Hence, after the iteration over all~$W_1$ connectors, the resulting set~$W_2$ preserves the stated property that each~$W_2$-connector is trivial or nice.
Moreover, since~$U_P$ has constant size for each~$W_1$-connector, and there are $\Oh(\fes)$~$W_1$-connectors, $W_2$ has size~$\Oh(\fes)$.

Finally, the set~$W^*$ is obtained from~$W_2$ by adding all vertices of trivial~$W_2$-connectors.
Since each trivial connector contains a constant number of vertices, $W^*$ has size~$\Oh(\fes)$ and each~$W^*$-connector is nice due to~\Cref{nice property is independent}.
Note that all operations described run in polynomial time.
This thus proves the statement.
\end{proof}

\subsection{Computing labelings for prelabeled nice connectors}

In this subsection, we show the following.

\newcommand{\Ext}{\texttt{Ext}}
\newcommand{\Int}{\texttt{Int}}

\begin{lemma}\label{connector labelings}
Let~$W$ with~$X^*\subseteq W \subseteq V^*$.
Moreover, let~$P$ be a nice~$W$-connector with endpoints~$a$ and~$b$, let~$a'$ be the successor of~$a$ in~$P$, and let~$b'$ be the predecessor of~$b$ in~\tnew{$P$}.
Moreover, let~$\alpha_a^1,\alpha_a^2,\alpha_b^1,\alpha_b^2 \in \{i\cdot 2n \mid i \in \mathbb{N}\}$ be (not necessarily distinct) natural numbers, and let~$C$ denote the extension of~$P$.
In polynomial time, we can compute a set~$L_P$ of $\Oh(1)$~labelings~$\lambda_P$ of~$E(C)$ with~$\lambda_P(\{a,a'\}) = \{\alpha_a^1,\alpha_a^2\}$ and~$\lambda_P(\{b,b'\}) = \{\alpha_b^1,\alpha_b^2\}$, such that if there is a realization~$\lambda^*$ for~$D$ with~$\lambda^*(\{a,a'\}) = \{\alpha_a^1,\alpha_a^2\}$ and~$\lambda^*(\{b,b'\}) = \{\alpha_b^1,\alpha_b^2\}$ and which is minimal on both~$\{a,a'\}$ and~$\{b,b'\}$, then there is a labeling~$\lambda_P\in L_P$ for which~$\lambda^* \ltimes \lambda_P$ \tnew{realizes}~$D$. \end{lemma}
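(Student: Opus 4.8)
The plan is to reduce the realization problem for the extension $C$ of the nice connector $P$ to a bounded number of \emph{tree} instances (with two pre-labeled edges each), which can then be solved by \Cref{cor:prelabeled_tree}. Since $G[V(C)]$ is a tree (it is the extension of a path whose internal vertices each have only their path-neighbours in $V^*$, together with the pendant trees hanging off them), the only thing that prevents us from directly applying \Cref{cor:prelabeled_tree} to $D[V(C)]$ is that temporal paths in a realization $\lambda^*$ of $D$ may \emph{enter or leave} $C$ through the two boundary edges $\{a,a'\}$ and $\{b,b'\}$, and $C$ must be labelled consistently with those interactions. The niceness of $P$ is exactly what makes these interactions bounded: by \Cref{robust has internal solution}, the restriction $\lambda^*_C$ already realizes all arcs of $D[V(C)]$; by \Cref{compute unique entrance point}, for every arc between an internal vertex of $C$ and an external vertex (or $a$ or $b$) we can determine in polynomial time which of the two boundary edges $\{a,a'\}$, $\{b,b'\}$ a witnessing dense path must use; and moreover each boundary edge receives at most two labels in a realization that is minimal on it (it is not in a triangle, so \Cref{lem:tree-atmost2} applies), and these labels are $\alpha_a^1\le\alpha_a^2$ and $\alpha_b^1\le\alpha_b^2$.

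First I would classify, for a hypothetical realization $\lambda^*$ restricted to $C$ and respecting the given pre-labels on the two boundary edges, the finitely many ``interaction patterns''. For the edge $\{a,a'\}$ there is the question of whether some temporal path from outside enters $C$ across $\{a,a'\}$ at time $\alpha_a^1$ or at time $\alpha_a^2$ (or both, or neither), and symmetrically whether some temporal path inside $C$ exits across $\{a,a'\}$ at time $\alpha_a^1$ or $\alpha_a^2$; likewise for $\{b,b'\}$. In addition there is the question of whether some temporal path crosses both boundary edges, entering at $\alpha_a^j$ and exiting at $\alpha_b^k$ (this requires $\alpha_a^j<\alpha_b^k$, which is decidable from the given numbers), or the other direction. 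Using \Cref{compute unique entrance point} on all relevant arcs of $D$ (arcs from $a$ or $b$ or an external vertex to an internal vertex, and vice versa), I can, in polynomial time, compute for each internal vertex $v$ of $C$ exactly which boundary edge any dense path witnessing an incoming/outgoing arc of $v$ must traverse; this pins down, for each interaction pattern, exactly which internal vertices must be reachable-from / able-to-reach the boundary at time $\alpha_a^1$, at time $\alpha_a^2$, etc. There are only $O(1)$ such patterns because there are at most two boundary edges, two labels each, and a constant number of ``direction'' options for through-paths.

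For each of the $O(1)$ interaction patterns I would then build a tree instance $D_P$: take $D[V(C)]$ and attach to $a'$ (resp.\ $b'$) a constant number of new pendant leaves playing the role of ``virtual external vertices'', exactly as in the splitting construction for special bridges (\Cref{lem:split-special}) — an out-leaf recording the set of internal vertices that must be reachable from the boundary if a path enters at the lower label, an analogous leaf for the higher label, an in-leaf recording the set that must reach the boundary, etc., with all reachabilities among the new leaves and to/from $V(C)$ set so that a realization of $D_P$ in which $\{a,a'\}$ gets labels $\{\alpha_a^1,\alpha_a^2\}$ and $\{b,b'\}$ gets $\{\alpha_b^1,\alpha_b^2\}$ exactly encodes that interaction pattern. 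Because $\alpha_a^1,\alpha_a^2,\alpha_b^1,\alpha_b^2$ are multiples of $2n$, \Cref{cor:prelabeled_tree} applies (the tree $D_P$ has fewer than $2n$ vertices, so its other labels fit into the gaps), and we can in polynomial time decide whether $D_P$ has a \minlab realization agreeing with these pre-labels and, if so, extract one; restricting that realization back to $E(C)$ yields a candidate labeling $\lambda_P$, which we put into $L_P$. I would then argue soundness — any $\lambda_P\in L_P$ combined with a $\lambda^*$ that uses the corresponding pattern gives a realization, since the virtual leaves correctly over-approximate/reproduce the boundary behaviour — and completeness: given an actual realization $\lambda^*$ of $D$ minimal on the two boundary edges, it induces \emph{some} interaction pattern (the patterns are exhaustive by construction), and by the same shifting argument as in the proof of \Cref{lem:split-special} one can replace $\lambda^*|_{E(C)}$ by the $\lambda_P$ our algorithm produced for that pattern without changing any reachability, so $\lambda^*\ltimes\lambda_P$ realizes $D$.

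The main obstacle I expect is making the case distinction on interaction patterns genuinely exhaustive and proving that the information captured by the virtual leaves is \emph{sufficient} — i.e.\ that two realizations of $D$ inducing the same pattern are interchangeable on $E(C)$. The delicate point is the same one handled by ``plausible reachability'' in \Cref{lem:split-special}: when a path enters $C$ at the lower label $\alpha_a^1$ of $\{a,a'\}$, the set of internal vertices it can then reach must coincide with the set recorded by the out-leaf, and niceness (via \Cref{compute unique entrance point}) is what guarantees there is a \emph{unique} such set, so that the out-leaf is well-defined; I would need to verify that for each pattern the analogous uniqueness holds for through-paths in both directions and for the in-leaves, possibly detecting non-realizability (and then this connector simply contributes no labeling, or we report a no-instance globally) when it fails. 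The rest — building the $O(1)$ gadget trees, invoking \Cref{cor:prelabeled_tree}, and the shifting/merging argument — is routine given the machinery already developed for special bridges and trees.
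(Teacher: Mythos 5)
Your plan matches the paper's proof essentially step for step: both reduce the labeling of the connector extension $C$ to a constant number of tree instances obtained by attaching virtual boundary leaves to the ends of $P$ (the paper attaches $v_c^{\leq},v_c^{\geq}$ to $c\in\{a,b\}$ and $f_{c'}^{\leq},f_{c'}^{\geq}$ to $a',b'$), enumerate $\Oh(1)$ ``consistent types'' encoding which boundary labels carry incoming/outgoing/through traffic, solve each via \Cref{cor:prelabeled_tree} (the $2n$-multiple spacing being exactly what makes that applicable), and argue correctness using niceness and \Cref{compute unique entrance point}. The obstacles you flag — exhaustiveness of the pattern enumeration and uniqueness of the reachability sets recorded by the virtual leaves — are precisely what the paper's Claim about $V_{c'}^{<},V_{c'}^{\leq},V_{c'}^{\geq},V_{c'}^{>}$ (and its polynomial-time verifiability) is designed to discharge, so your anticipated difficulties are real but handled by the same machinery you already sketch.
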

\begin{proof}
Without loss of generality assume that~$\alpha_a^1 \leq \alpha_a^2$ and~$\alpha_b^1 \leq \alpha_b^2$.
Moreover, let~$\Ext := V \setminus (V(C) \setminus \{a,b\})$ denote the \emph{external vertices} of~$C$ and let~$\Int := V(C) \setminus \{a,b\}$ denote the \emph{internal vertices} of~$C$.
\newcommand{\md}{\mathcal{D}_P}To show the statement, we compute a constant number of instances~$\md$ of our problem, where for each~$D_i \in \md$, the solid graph~$G_i$ of~$D_i$ is a tree, and~\nnew{$D_i$} is a supergraph of~$D[V(C)]$.
For each such instance~$D_i$, we then check whether there is a labeling~$\lambda_i$ that realizes~$D_i$ with~$\lambda_i(\{a,a'\}) = \{\alpha_a^1,\alpha_a^2\}$ and~$\lambda_i(\{b,b'\}) = \{\alpha_b^1,\alpha_b^2\}$.
As we will show, if there is a realization~$\lambda^*$ for~$D$ with~$\lambda^*(\{a,a'\}) = \{\alpha_a^1,\alpha_a^2\}$ and~$\lambda^*(\{b,b'\}) = \{\alpha_b^1,\alpha_b^2\}$ which is minimal on both~$\{a,a'\}$ and~$\{b,b'\}$, then there is some~$D_i\in \md$, such that~$\lambda^* \ltimes \widehat{\lambda_i}$ realizes~$D$, where~$\widehat{\lambda_i}$ is the restriction to the edges of~$E(C)$ of an arbitrary realization~$\lambda_i$ of~$D_i$ with $\lambda_i(\{a,a'\}) = \{\alpha_a^1,\alpha_a^2\}$ and~$\lambda_i(\{b,b'\}) = \{\alpha_b^1,\alpha_b^2\}$.

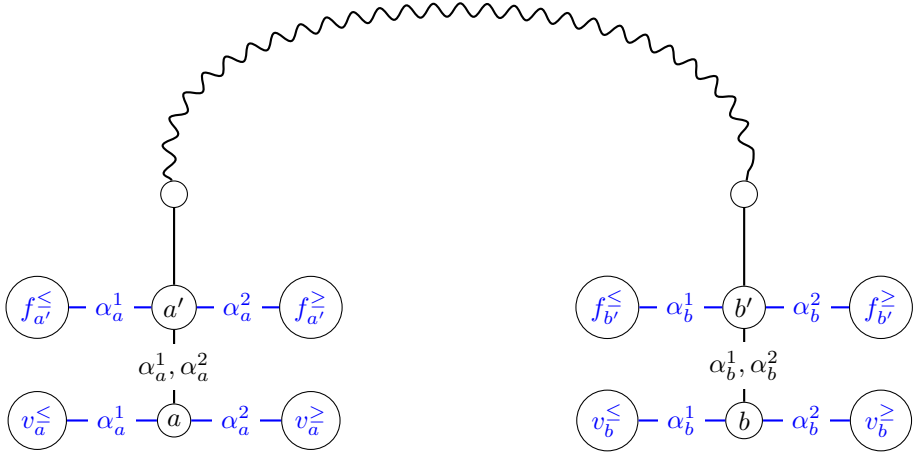
\begin{figure}
\centering
\newcommand{\connector}{
\tikzstyle{k}=[circle,fill=white,draw=black,minimum size=10pt,inner sep=2pt]

\node[k] (x) at (0,0) {$a$};
\node[k] (u) at (0,1) {$a'$};
\node[k] (u2) at (0,2) {};

\node[k] (y) at (5,0) {$b$};
\node[k] (v) at (5,1) {$b'$};
\node[k] (v2) at (5,2) {};

\draw [thick] (x) -- (u) node [midway, fill=white] {$\alpha_a^1,\alpha_a^2$};
\draw [thick] (y) -- (v) node [midway, fill=white] {$\alpha_b^1,\alpha_b^2$};

\draw [thick] (u) -- (u2);
\draw [thick] (v) -- (v2);

\draw [thick, bend left=100,decorate, decoration={snake}] (u2) to (v2);
}

\begin{tikzpicture}[scale=1.5]
\connector

\newcommand{\myred}{blue}

\node[\myred,k] (fal) at ($(u)+ (-1.2,0)$) {$f_{a'}^{\leq}$};
\node[\myred,k] (val) at ($(x)+ (-1.2,0)$) {$v_{a}^{\leq}$};
\node[\myred,k] (fbl) at ($(v)+ (-1.2,0)$) {$f_{b'}^{\leq}$};
\node[\myred,k] (vbl) at ($(y)+ (-1.2,0)$) {$v_{b}^{\leq}$};

\node[\myred,k] (fag) at ($(u)+ (1.2,0)$) {$f_{a'}^{\geq}$};
\node[\myred,k] (vag) at ($(x)+ (1.2,0)$) {$v_{a}^{\geq}$};
\node[\myred,k] (fbg) at ($(v)+ (1.2,0)$) {$f_{b'}^{\geq}$};
\node[\myred,k] (vbg) at ($(y)+ (1.2,0)$) {$v_{b}^{\geq}$};

\draw [\myred,thick] (fal) -- (u) node [midway, fill=white] {$\alpha_a^1$};
\draw [\myred,thick] (fag) -- (u) node [midway, fill=white] {$\alpha_a^2$};
\draw [\myred,thick] (fbl) -- (v) node [midway, fill=white] {$\alpha_b^1$};
\draw [\myred,thick] (fbg) -- (v) node [midway, fill=white] {$\alpha_b^2$};

\draw [\myred,thick] (val) -- (x) node [midway, fill=white] {$\alpha_a^1$};
\draw [\myred,thick] (vag) -- (x) node [midway, fill=white] {$\alpha_a^2$};
\draw [\myred,thick] (vbl) -- (y) node [midway, fill=white] {$\alpha_b^1$};
\draw [\myred,thick] (vbg) -- (y) node [midway, fill=white] {$\alpha_b^2$};

\end{tikzpicture}
\caption{An visualization of the solid graph and the labeling for the additional solid edges for the the auxiliary instance~$D_\pa$ with the consistent type~$\pa = (J,A_J)$ with~$J = \{v_c^{\veebar} \mid c\in \{a,b\}, \veebar \in \{\leq, \geq\}\}$.}
\label{consistent instance}
\end{figure}

To properly define the instances of~$\md$, we first compute four sets~$V_{c'}^<,V_{c'}^\leq,V_{c'}^\geq,V_{c'}^>$ of vertices of~$\Int$ for each vertex~$c\in \{a,b\}$.
Let~$c\in \{a,b\}$.
We define the set~$V_{c'}^<$ ($V_{c'}^>$) as the set of all vertices~$\vin$ of~$\Int$ for which~$(\vin,c)\in A$ (for which~$(c,\vin)\in A$).
Moreover, we define the set~$V_{c'}^\leq$ ($V_{c'}^\geq$) as a smallest non-empty set of vertices of~$\Int$, such that there is at least one vertex~$\vout \in (V \setminus V(C)) \cup \{a,b\}$ for which~$V_{c'}^\leq$ ($V_{c'}^\geq$) is exactly the set of vertices~$\vin$ of~$\Int$ fulfilling (i)~$(\vin,\vout)\in A$ ($(\vout,\vin)\in A$) and (ii)~$\{c,c'\}$ is the edge of~$\{\{a,a'\}, \{b,b'\}\}$ outputted by the algorithm behind~\Cref{compute unique entrance point} for the arc~$(\vin,\vout)$ (for the arc~$(\vout,\vin)$).

\begin{claim}\label{properties leq sets}
Assume that~$D$ is realizable.
It holds that~$V_{c'}^\leq \subseteq V_{c'}^<$, $V_{c'}^\geq \subseteq V_{c'}^>$, each of these four sets contains~$c'$, and~$V_{c'}^< \cap V_{c'}^> = \{c'\}$.
Moreover, let~$\vout\in \Ext$ for which there is at least one~$\vin \in \Int$ with $(\vin,\vout)\in A$ (with $(\vout,\vin)\in A$), such that the edge~$\{c,c'\}$ is the edge of~$\{\{a,a'\}, \{b,b'\}\}$ outputted by the algorithm behind~\Cref{compute unique entrance point} for the arc~$(\vin,\vout)$ (for the arc~$(\vout,\vin)$).
Then, one of~$V_{c'}^<$ or~$V_{c'}^\leq$ (one of~$V_{c'}^>$ or~$V_{c'}^\geq$) is exactly the set of vertices~$\vin$ of~$\Int$ for which~$\{c,c'\}$ is the edge of~$\{\{a,a'\}, \{b,b'\}\}$ outputted by the algorithm behind~\Cref{compute unique entrance point} for the arc~$(\vin,\vout)$ (for the arc~$(\vout,\vin)$).
\end{claim}
\begin{claimproof}
First, we show that~$V_{c'}^\leq \subseteq V_{c'}^<$. 
Assume towards a contradiction that this is not the case and let~$\vin$ be an arbitrary vertex of~$V_{c'}^\leq \setminus V_{c'}^<$. 
By definition of~$V_{c'}^\leq$, there is some vertex~$\vout \in \Ext$ such that~$(\vin,\vout)\in A$ and in each realization for~$D$, each temporal~$(\vin,\vout)$-path traverses the edge~$\{c,c'\}$.
This then implies that there is a temporal~$(\vin,c)$-path, which implies that~$(\vin,c)\in A$.
This contradicts the fact that~$\vin\notin V_{c'}^<$.
Thus, $V_{c'}^\leq \subseteq V_{c'}^<$.
Similarly, we can show that~$V_{c'}^\geq \subseteq V_{c'}^>$.

Moreover, note that~$c' \in V_{c'}^< \cap V_{c'}^>$, since~$\{c,c'\}$ is a solid edge in~$G$.
Additionally, there is no vertex~$\vin \neq c'$ that is contained in~$V_{c'}^< \cap V_{c'}^>$, as otherwise, by definition of~$V_{c'}^<$ and~$V_{c'}^>$, \tnew{$A$~contains} both arcs~$(\vin,c)$ and~$(c,\vin)$, which is not possible by definition of a connector.

Now, let~$\vout\in \Ext$ for which there is at least one~$\vin \in \Int$ with $(\vin,\vout)\in A$ (with $(\vout,\vin)\in A$), such that the edge~$\{c,c'\}$ is the edge of~$\{\{a,a'\}, \{b,b'\}\}$ outputted by the algorithm behind~\Cref{compute unique entrance point} for the arc~$(\vin,\vout)$ (for the arc~$(\vout,\vin)$).
We show that one of~$V_{c'}^<$ or~$V_{c'}^\leq$ (one of~$V_{c'}^>$ or~$V_{c'}^\geq$) is exactly the set of vertices~$\vin$ of~$\Int$ for which~$\{c,c'\}$ is the edge of~$\{\{a,a'\}, \{b,b'\}\}$ outputted by the algorithm behind~\Cref{compute unique entrance point} for the arc~$(\vin,\vout)$ (for the arc~$(\vout,\vin)$).
Let~$M$ denote \tnew{the set of vertices}~$\vin$ of~$\Int$ for which~$\{c,c'\}$ is the edge of~$\{\{a,a'\}, \{b,b'\}\}$ outputted by the algorithm behind~\Cref{compute unique entrance point} for the arc~$(\vin,\vout)$ (for the arc~$(\vout,\vin)$).
Assume towards a contradiction that~$M \notin \{V_{c'}^<,V_{c'}^\leq\}$ ($M \notin \{V_{c'}^>,V_{c'}^\geq\}$).
Analogously to the proof that~$V_{c'}^\leq \subseteq V_{c'}^<$ ($V_{c'}^\geq \subseteq V_{c'}^>$), one can show that~$M \subseteq V_{c'}^<$ ($M \subseteq V_{c'}^>$).
Thus, $M$ is a proper subset of~$V_{c'}^<$ ($V_{c'}^>$).
This implies that~$V_{c'}^\leq$ is a proper subset of~$V_{c'}^<$ ($V_{c'}^\geq$ is a proper subset of~$V_{c'}^>$).
Hence, the edge~$\{c,c'\}$ needs to be traversed at at least three different time steps to ensure that all three distinct sets~$V_{c'}^<$, $V_{c'}^\leq$, and~$M$ can reach distinct external vertices (all three distinct sets~$V_{c'}^>$, $V_{c'}^\geq$, and~$M$ can be reached from distinct external vertices).
This contradicts the fact that edge~$\{c,c'\}$ receives at most two labels in each realization for~$D$.
\end{claimproof}
Note that the properties described in~\Cref{properties leq sets} can be checked in polynomial time.
Hence, if at least one of the properties does not hold, we can correctly detect in polynomial time that~$D$ is not realizable.
If this is the case, we define~$L_P$ as the empty set, which is correct.

Hence, assume in the following that all described properties of \Cref{properties leq sets} hold.

We now describe how to define the instances of~$\md$.
We define four \emph{potential vertices}~$J:=\{v_c^{\veebar} \mid c\in \{a,b\}, \veebar \in \{\leq, \geq\}\}$.
Moreover, we define a set of three \emph{potential arcs}~$A_J := \{(v_a^{\leq},b),(a,v_b^{\geq}),(v_a^{\leq},v_b^{\geq})\}$.
Let~$Q \subseteq J$ and let~$A_Q \subseteq A_J$.
We call~$(Q,A_Q)$ a~\emph{consistent type} if~$Q \cup \{a,b\}$ contains all endpoints of the arcs of~$A_Q$.

Let~$A_Q\subseteq A_J$.
We define the instance~$D_{(J,A_Q)}$ as follows (see~\Cref{consistent instance}):
the vertex set of~$D_{(J,A_Q)}$ is~$V(C) \cup J \cup \{f_{c'}^{\veebar} \mid c \in\{a,b\}, \veebar \in \{\leq, \geq\}\}$ and

\begin{enumerate}
\item\label{item-induced arcs} $D_{(J,A_Q)}$ contains all arcs from~$A(V(C))$,
\item\label{item-dense arcs} if~$(a,b)$ is an arc of~$D$ (that is, if~$P$ is a dense~(in,out)-connector), $D_{(J,A_Q)}$ contains the arcs~$(f_{a'}^\leq, b)$, $(a, f_{b'}^\geq)$, and~$(f_{a'}^\leq, f_{b'}^\geq)$,

\item\label{item-solid edges} $D_{(J,A_Q)}$ contains the arcs~$(v_{c}^{\veebar},c),(c,v_{c}^{\veebar}),(f_{c'}^{\veebar},c')$ and~$(c,f_{c'}^{\veebar})$ (that is, the solid edges~$\{c, v_{c}^{\veebar}\}$ and~$\{c',f_{c'}^{\veebar}\}$) for each~$c\in \{a,b\}$ and each~$\veebar \in \{\leq, \geq\}$,
\item\label{item-new vert connections} $D_{(J,A_Q)}$ contains the arcs~$(f_{c'}^\leq, f_{c'}^\geq)$ and~$(v_c^\leq, v_c^\geq)$ for each~$c\in \{a,b\}$,
\item\label{item-forced special} $D_{(J,A_Q)}$ contains the arcs~$(c',v_{c}^\geq)$, $(c,f_{c'}^\geq)$, $(v_{c}^\leq,c')$, and~$(f_{c'}^\leq,c)$ for each~$c\in \{a,b\}$,

\item\label{item-early in to end} if~$(v_a^\leq,b) \in A_Q$, $D_{(J,A_Q)}$ contains all arcs from~$\{v_a^\leq,f_{a'}^\geq\} \times \{b, f_{b'}^\geq\}$, 
\item\label{item-early out from start} if~$(a,v_b^\geq) \in A_Q$, $D_{(J,A_Q)}$ contains all arcs from~$\{a, f_{a'}^\leq\} \times \{v_{b}^\geq, f_{b'}^\leq\}$, 
\item\label{item-early in and early out} if~$(v_a^\leq,v_b^\geq) \in A_Q$, $D_{(J,A_Q)}$ contains all arcs from~$\{v_a^\leq,f_{a'}^\geq\} \times \{v_{b}^\geq, f_{b'}^\leq\}$,

\item\label{item-early in} for each~$c\in \{a,b\}$, $D_{(J,A_Q)}$ contains the arc~$(\vin, f_{c'}^\geq)$ for each vertex~$\vin\in V_{c'}^<$,
\item\label{item-late out} for each~$c\in \{a,b\}$, $D_{(J,A_Q)}$ contains the arc~$(f_{c'}^\leq,\vin)$ for each vertex~$\vin\in V_{c'}^>$, 
\item\label{item-late in} for each~$c\in \{a,b\}$, $D_{(J,A_Q)}$ contains the arcs~$(\vin, f_{c'}^\leq)$, $(\vin, f_{c'}^\geq)$, and~$(\vin, v_{c}^\geq)$ for each vertex~$\vin \in V_{c'}^\leq$, and
\item\label{item-early out} for each~$c\in \{a,b\}$, $D_{(J,A_Q)}$ contains the arcs~$(f_{c'}^\leq,\vin)$, $(f_{c'}^\geq,\vin)$, and~$(v_{c}^\leq,\vin)$ for each vertex~$\vin\in V_{c'}^\geq$.
\end{enumerate}

Note that $D_{(J,A_Q)}$ has a tree as solid graph, since (i)~we did not add any new arcs between vertices of~$V(C)$, (ii)~we added at most one arc between any two vertices of~$V(D_{(J,A_Q)}) \setminus V(C)$, and (iii)~due to~\Cref{properties leq sets}, for each~$c\in \{a,b\}$, $V_{c'}^< \cap V_{c'}^> = \{c'\}$.
The latter implies that the solid edges defined in~\Cref{item-solid edges} are the only solid edge between any vertex of~$V(D_{(J,A_Q)}) \setminus V(C)$ and any vertex of~$V(C)$.

For each consistent type~$\pa=(Q,A_Q)$, we define the instance~$D_\pa$ by~$D_\pa := D_{(J,A_Q)}[V_\pa]$, where~$V_\pa := V(C) \cup Q \cup \{f_{c'}^{\leq} \mid v_c^{\geq} \in Q\} \cup \{f_{c'}^{\geq} \mid v_c^{\leq} \in Q\}$.
Since~$D_{(J,A_Q)}$ has a tree as solid graph, this implies that~$D_\pa$ also has a tree as solid graph.

We now define the set~$L_P$ of labelings for the edges of~$C$.
Initially, $L_P$ is the empty set.
Afterwards, we iterate over each consistent type~$\pa=(Q,A_Q)$.
For each such pair~$\pa$, we check whether there is a realization~$\lambda_\pa'$ for~$D_\pa$ with~$\lambda_\pa'(\{a,a'\}) = \{\alpha_a^1,\alpha_a^2\}$ and~$\lambda_\pa'(\{b,b'\}) = \{\alpha_b^1,\alpha_b^2\}$  which is minimal on both~$\{a,a'\}$ and~$\{b,b'\}$.
If this is the case, we add the labeling~$\lambda_\pa$ to~$L_P$, where~$\lambda_\pa$ is the restriction of~$\lambda_\pa'$ to the edges of~$C$.
This completes the construction of~$L_P$.

Note that~$L_P$ has a constant size, since there are at most~$2^4$ options for~$Q$ and at most~$2^3$ options for~$A_Q$.
Moreover, each such labeling can be computed in polynomial time, since the initial computation took polynomial time and for each consistent type~$\pa$, we can check in polynomial time whether there is such a desired realization~$\lambda_\pa'$ for~$D_\pa$ due to~\Cref{cor:prelabeled_tree}.

It thus remains to show that if there is a realization~$\lambda^*$ for~$D$ with~$\lambda^*(\{a,a'\}) = \{\alpha_a^1,\alpha_a^2\}$ and~$\lambda^*(\{b,b'\}) = \{\alpha_b^1,\alpha_b^2\}$  which is minimal on both~$\{a,a'\}$ and~$\{b,b'\}$, then there is a labeling~$\lambda_P \in L_P$, such that~$\lambda^*\ltimes \lambda_P$ realizes~$D$.
Assume that there is a minimum realization~$\lambda^*$ for~$D$ with~$\lambda^*(\{a,a'\}) = \{\alpha_a^1,\alpha_a^2\}$ and~$\lambda^*(\{b,b'\}) = \{\alpha_b^1,\alpha_b^2\}$  which is minimal on both~$\{a,a'\}$ and~$\{b,b'\}$, as otherwise, the statement trivially holds.
Our goal is to show that there is a consistent type $\pa=(Q,A_Q)$ such that (i)~there is a sought realization for~$D_\pa$ (that is, there is a labeling~$\lambda_\pa\in L_P$) and (ii)~$\lambda^*\ltimes \lambda_\pa$ realizes~$D$.

We are now ready to define the consistent type~$\pa=(Q,A_Q)$ for~$\lambda^*$.
First, we define~$Q$.
For each~$c\in \{a,b\}$ where~$\alpha_c^1 < \alpha_c^2$, we do the following:
\begin{itemize}
\item We add the vertex~$v^{\leq}_c$ to~$Q$ if there is an arc~$(u,w)\in A$, such that for each temporal $(u,w)$-path under~$\lambda^*$, the edge~$\{c,c'\}$ is traversed from~$c$ to~$c'$ at time~$\alpha_c^2$.
\item Similarly, we add the vertex~$v^{\geq}_c$ to~$Q$ if there is an arc~$(u,w)\in A$, such that for each temporal $(u,w)$-path under~$\lambda^*$, the edge~$\{c,c'\}$ is traversed from~$c'$ to~$c$ at time~$\alpha_c^2$.
\end{itemize}
Next, we define~$A_Q$.
\begin{itemize}
\item We add the arc~$(v^{\leq}_a,b)$ to~$A_Q$ if~$\alpha_a^1 < \alpha_a^2$ and there is an arc~$(u,w)\in A$, such that for each temporal $(u,w)$-path under~$\lambda^*$, the edge~$\{a,a'\}$ is traversed from~$a$ to~$a'$ at time~$\alpha_a^2$ and the edge~$\{b,b'\}$ is traversed at a later time.
\item We add the arc~$(a, v^{\geq}_b)$ to~$A_Q$ if~$\alpha_b^1 < \alpha_b^2$ and there is an arc~$(u,w)\in A$, such that for each temporal $(u,w)$-path under~$\lambda^*$, the edge~$\{b,b'\}$ is traversed from~$b'$ to~$b$ at time~$\alpha_b^1$ and the edge~$\{a,a'\}$ is traversed at an earlier time.
\item We add the arc~$(v^\leq_a, v^{\geq}_b)$ to~$A_Q$ if~$\alpha_a^1 < \alpha_a^2 < \alpha_b^1 < \alpha_b^2$ and there is an arc~$(u,w)\in A$, such that for each temporal $(u,w)$-path under~$\lambda^*$, the edge~$\{a,a'\}$ is traversed from~$a$ to~$a'$ at time~$\alpha_a^2$ and the edge~$\{b,b'\}$ is traversed from~$b'$ to~$b$ at time~$\alpha_b^1$.
\end{itemize}
This completes the definition of~$\pa$.
\begin{claim}\label{claim reachable late is exactly geq set}
Let~$c\in \{a,b\}$, and let~$\vout\in\Ext$ such that~$(\vout, c')\in A$ ($(c',\vout)\in A$) and each temporal~$(\vout,c')$-path (each temporal~$(c',\vout)$-path) under~$\lambda^*$ traverses the edge~$\{c,c'\}$ from~$c$ to~$c'$ (from~$c'$ to~$c$) at time~$\alpha_c^2$.
Then, $V_{c'}^\geq$ ($V_{c'}^\leq$) is exactly the set of vertices~$\vin$ of~$\Int$ with~$(\vout,\vin) \in A$ ($(\vin,\vout) \in A$) that can be reached by~$\vout$ (that can reach~$\vout$) only via the edge~$\{c,c'\}$.
\end{claim}
\begin{claimproof}
We show that~$V_{c'}^\geq$ is exactly the set of vertices of~$\Int$ that~$\vout$ can reach via any temporal path under~$\lambda^*$ that traverses the edge~$\{c,c'\}$. (The case for~$V_{c'}^\leq$ can then be shown analogously.)
Assume towards a contradiction that this is not the case.
Hence, due to~\Cref{properties leq sets}, $V_{c'}^\geq \neq V_{c'}^>$ and $V_{c'}^>$ is exactly the set of vertices of~$\Int$ that~$\vout$ can reach via any temporal path under~$\lambda^*$ that traverses the edge~$\{c,c'\}$.
Since~$V_{c'}^\geq \neq V_{c'}^>$, there is a vertex~$\vout_2 \in \Ext$ for which~$V_{c'}^>$ is exactly the set of vertices of~$\Int$ that~$\vout_2$ can reach via any temporal path under~$\lambda^*$ that traverses the edge~$\{c,c'\}$.
Since~$V_{c'}^\geq$ is non-empty, there is at least one temporal path that starts at~$\vout_2$ and traverses the edge~$\{c,c'\}$ at time at most~$\alpha_c^2$ from~$c$ to~$c'$.
Such a path can then be extended to reach all vertices that~$\vout$ can reach via temporal paths that go over~$\{c,c'\}$, which would imply~$V_{c'}^\geq = V_{c'}^>$, a contradiction.
\end{claimproof}

We now show the statement in two steps.

\begin{claim}
There is a realization~$\lambda_\pa'$ for~$D_\pa$ with~$\lambda_\pa'(\{a,a'\}) = \{\alpha_a^1,\alpha_a^2\}$ and~$\lambda_\pa'(\{b,b'\}) = \{\alpha_b^1,\alpha_b^2\}$ which is minimal on both~$\{a,a'\}$ and~$\{b,b'\}$.
\end{claim}
\begin{claimproof}
Let~$G_\pa$ be the solid graph of~$D_\pa$.
We define the a labeling~$\lambda_\pa'$ for the edges of~$G_\pa$ as follows.
For each edge~$e$ of~$E(C)$, we set~$\lambda_\pa'(e) := \lambda^*(e)$ and for each~$c\in \{a,b\}$, we set
\begin{itemize}
\item $\lambda_\pa'(\{v_c^\leq, c\}) := \alpha_c^1$ and~$\lambda_\pa'(\{f_{c'}^\geq, c'\}) := \alpha_c^2$ if~$v_c^\leq \in Q$ and
\item $\lambda_\pa'(\{v_c^\geq, c\}) := \alpha_c^2$ and~$\lambda_\pa'(\{f_{c'}^\leq, c'\}) := \alpha_c^1$ if~$v_c^\geq \in Q$.
\end{itemize}
This completes the definition of~$\lambda_\pa'$.
Note that~$\lambda_\pa'(\{a,a'\}) = \{\alpha_a^1,\alpha_a^2\}$ and~$\lambda_\pa'(\{b,b'\}) = \{\alpha_b^1,\alpha_b^2\}$.
Hence, to show the statement, it remains to show that~$\lambda_\pa'$ realizes~$D_\pa$ and that for each~$c\in \{a,b\}$ with~$\alpha_c^1 < \alpha_c^2$, the edge~$\{c,c'\}$ is a special edge under~$D_\pa$.
The latter follows by construction of~$\pa$:
If for~$c\in \{a,b\}$, the realization~$\lambda^*$ of~$D$ uses two distinct labels~$\alpha_c^1< \alpha_c^2$ and is minimal on both~$\{a,a'\}$ and~$\{b,b'\}$, then there is some vertex pair~$(u,w)$ with~$(u,w)\in A$, such that each~$(u,w)$-path under~$\lambda^*$ traverses the edge~$\{c,c'\}$ at time~$\alpha_c^2$, as otherwise, the label~$\alpha_c^2$ can be removed from~$\{c,c'\}$ in~$\lambda^*$.
This then implies that~$v_c^\leq \in Q$ or~$v_c^\geq \in Q$.
Due to symmetry assume that~$v_c^\leq \in Q$.
By construction, $D_\pa$ contains the arcs~$(v_c^\leq, c')$ and~$(c, f_{c'}^\geq)$ but not the arc~$(v_c^\leq, f_{c'}^\geq)$.
Consequently, $\{c,c'\}$ is a special bridge in~$D_\pa$, which implies that in each minimum realization for~$D_\pa$, edge~$\{c,c'\}$ receives two labels.

We now show that~$\lambda_\pa'$ realizes~$D_\pa$.

\begin{enumerate}[a)]
\item Note that for each~$c\in \{a,b\}$, each temporal path under~$\lambda_\pa'$ that reaches~$c'$ at a time step smaller than~$\alpha_c^1$ can be extended at the end to reach the vertices~$f_{c'}^\leq$, $f_{c'}^\geq$, $v_{c}^\geq$, and~$c$ (if they exist) by the subpaths with label sequence~$(\alpha_c^1)$, $(\alpha_c^2)$, $(\alpha_c^1,\alpha_c^2)$, and~$(\alpha_c^1)$, respectively, but cannot be extended to reach the vertex~$v_c^\leq$.
\item Similarly, each temporal path under~$\lambda_\pa'$ that reaches~$c'$ at a time step exactly~$\alpha_c^1$ can be extended at the end to reach the vertices~$f_{c'}^\geq$ and~$c$ by the respective edges with label~$\alpha_c^2$, but cannot be extended to reach~$f_{c'}^\leq$, $v_{c}^\leq$, or~$v_{c}^\geq$.
\item All temporal path under~$\lambda_\pa'$ that reaches~$c'$ at a time step larger than~$\alpha_c^1$ cannot be extended to reach any vertex of~$f_{c'}^\leq$, $f_{c'}^\geq$, $v_{c}^\leq$, $v_{c}^\geq$, and~$c$, since each edge incident with~$\{c,c'\}$ has all labels either smaller-equal~$\alpha_c^1$ or larger-equal~$\alpha_c^2$ (as otherwise, $\{c,c'\}$ would be part of a triangle in~$G$), and each subpath leading to any of the vertices of~$f_{c'}^\leq$, $f_{c'}^\geq$, $v_{c}^\geq$, and~$c$ uses an edge first of label at most~$\alpha_c^2$. 
\end{enumerate}

The similar property also holds for the temporal paths that start from vertex~$c'$ either early or late.
 
\begin{enumerate}[a)]
\setcounter{enumi}{3}
\item Each temporal path under~$\lambda_\pa'$ that leaves~$c'$ at a time step larger than~$\alpha_c^2$ can be extended at the start to begin at any of the vertices~$f_{c'}^\leq$, $f_{c'}^\geq$, $v_{c}^\leq$, and~$c$ (if they exist), but cannot be extended to begin at vertex~$v_c^\geq$.
\item Similarly, each temporal path under~$\lambda_\pa'$ that leaves~$c'$ at a time step exactly~$\alpha_c^2$ can be extended at the start to begin with any of the vertices~$f_{c'}^\leq$ and~$c$, but cannot be extended to start from any of the vertices~$f_{c'}^\geq$, $v_{c}^\leq$, or~$v_{c}^\geq$.
\item All temporal path under~$\lambda_\pa'$ that leave~$c'$ at a time step smaller than~$\alpha_c^2$ cannot be extended at the start to begin from any of the vertices~$f_{c'}^\leq$, $f_{c'}^\geq$, $v_{c}^\leq$, $v_{c}^\geq$, and~$c$, since each edge incident with~$\{c,c'\}$ has all labels either smaller-equal~$\alpha_c^1$ or larger-equal~$\alpha_c^2$ (as otherwise, $\{c,c'\}$ would be part of a triangle in~$G$). 
\end{enumerate}

We now show that each arc of~$D_\pa'$ is realized.
To this end, we may implicitly assume that all vertices we talk about exist in~$D_\pa'$. 
\begin{itemize}
\item Note that all arcs described in~\Cref{item-induced arcs} are realized since~$\lambda^*$ realizes~$D$, $P$ is a robust connector, and~$D[V(C)] = D_\pa[V(C)]$.
\item Since~$\lambda^*$ realizes~$D$, if~$(a,b)\in A$, then there is a temporal path under~$\lambda^*$ (and thus~$\lambda_\pa'$) from~$a$ to~$b$ that only uses edges in~$C$.
The subpath~$R$ from~$a'$ to~$b'$ of this path thus leaves~$a'$ with a label of at least~$\alpha_a^2$ and reaches~$b'$ at the latest with a label smaller than~$\alpha_b^1$.
Hence, by the above argumentation, $R$~can (simultaneously) be extended to start with any of~$a$ or~$f_{a'}^\leq$, and end with any of~$b$ or~$f_{b'}^\geq$.
Thus, for each arc~$(u,w)$ described in~\Cref{item-dense arcs}, there is a temporal path from~$u$ to~$w$ in~$D_\pa$ under~$\lambda_\pa'$.
\item All solid edges described in~\Cref{item-solid edges} receive a label under~$\lambda_\pa'$, which implies that also the respective arcs are realized.
\item The arcs of~\Cref{item-new vert connections} are realized by the temporal paths~$(v_c^\leq, c, v_c^\geq)$ and~$(f_{c'}^\leq, c', f_{c'}^\geq)$ with label sequence~$(\alpha_c^1,\alpha_c^2)$, since~$\alpha_c^1 < \alpha_c^2$ for each~$c\in \{a,b\}$ with~$\{v_c^\leq, v_c^\geq\} \subseteq Q$.
\item Similarly, the arcs of~\Cref{item-forced special} are realized by the temporal paths~$(c',c,v_c^\geq)$, $(c,c',f_{c'}^\geq)$, $(v_c^\leq, c, c')$, and~$(f_{c'}^\leq, c', c)$, each with label sequence~$(\alpha_c^1,\alpha_c^2)$, since~$\alpha_c^1 < \alpha_c^2$ for each~$c\in \{a,b\}$ with~$\{v_c^\leq, v_c^\geq\} \subseteq Q$.
\item If~$(v_a^\leq,b)\in A_Q$, then by choice of~$A_Q$, there is some vertex pair~$(u,w)$ with~$(u,w)\in A$, such that each temporal $(u,v)$-path under~$\lambda^*$ traverses the edge~$\{a,a'\}$ from~$a$ to~$a'$ at time~$\alpha_a^2$ and reaches~$b$ at a later time step.
Consider a subpath~$R$ from~$a'$ to~$b'$ of one such~$(u,v)$-path~$R'$. 
Since~$R'$ traversed the edge~$\{a,a'\}$ at time~$\alpha_a^2$, the path~$R$ traverses its first edge in time strictly larger than~$\alpha_c^2$.
Moreover, $R'$ reaches vertex~$b'$ prior to time step~$\alpha_b^2$.
Hence, the path~$R$ can (simultaneously) be extended to start from any vertex of~$\{v_a^\leq,f_{a'}^\geq\}$ and end at any vertex of~$\{b,f_{b'}^\geq\}$.
Thus, all arcs of~\Cref{item-early in to end} are realized.
\item Similarly, one can show that all arcs of~\Cref{item-early out from start} are realized, due to the existence of some vertex pair~$(u,w)$ with~$(u,w)\in A$, such that each temporal $(u,v)$-path under~$\lambda^*$ traverses the edge~$\{b,b'\}$ from~$b'$ to~$b$ at time~$\alpha_b^1$ and visits~$a$ at an earlier time step.
\item Similarly, one can show that all arcs of~\Cref{item-early in and early out} are realized, due to the existence of some vertex pair~$(u,w)$ with~$(u,w)\in A$, such that each temporal $(u,v)$-path under~$\lambda^*$ traverses the edge~$\{a,a'\}$ from~$a$ to~$a'$ at time~$\alpha_a^2$ and also traverses the edge~$\{b,b'\}$ from~$b'$ to~$b$ at time~$\alpha_b^1$.
\item By definition, $V_{c'}^<$ are exactly the vertices of~$\Int$ that have an arc to~$c$ in~$D$.
Let~$\vin \in V_{c'}^<$.
Since~$(\vin,c)\in A$, there is a temporal~$(\vin,c)$-path~$R$ in~$G[V(C)]$ under~$\lambda^*$ and thus~$\lambda_\pa'$.
This path thus reaches~$c'$ at a time strictly smaller than~$\alpha_c^2$.
The subpath from~$\vin$ to~$c'$ can thus be extended to end in~$f_{c'}^\leq$ by traversing the edge~$\{f_{c'}^\leq,c'\}$ at time~$\alpha_c^2$.
Hence, each arc of~\Cref{item-early in} is realized.
\item Similar to the argument that each arc of~\Cref{item-early in} is realized, we can show that each arc of~\Cref{item-late out} is realized.
\item Suppose that~$v_c^\geq \in Q$.By definition of~$Q$, this implies that there is some~$\vout \in \Ext$ for which there is some vertex~$w \in V$ with~$(\vout,w)\in A$, such that each~$(\vout,w)$-path under~$\lambda^*$ traverses the edge~$\{c,c'\}$ from~$c$ to~$c'$ at time~$\alpha_c^2$.
This implies that~$(\vout, c')\in A$, and that each temporal~$(\vout,c')$-path under~$\lambda^*$ traverses the edge~$\{c,c'\}$ from~$c$ to~$c'$ at time~$\alpha_c^2$.

Due to~\Cref{claim reachable late is exactly geq set}, $V_{c'}^\geq$ is exactly the set of vertices~$\vin$ of~$\Int$ with~$(\vout,\vin) \in A$ that can be reached by~$\vout$ only via the edge~$\{c,c'\}$.
Consequently, all arcs specified in~\Cref{item-late in} are realized, since for each~$\vin\in V_{c'}^\geq$ each temporal $(\vout,\vin)$-path under~$\lambda^*$ traverses the edge~$\{c,c'\}$ at time~$\alpha_c^2$ and thus implies the existence of a temporal~$(c',\vin)$-path starting at a time step strictly larger than~$\alpha_c^2$.

\item Similar to the proof that each arc of~\Cref{item-late in} is realized, we can show that each arc of~\Cref{item-early out} is realized.
\end{itemize}
Note that only the \tnew{arcs described above} are realized.
This is mainly due to the fact that~$D[V(C)] = D_\pa[V(C)]$.
\end{claimproof}

This implies that a labeling~$\lambda_\pa$ for~$D_\pa$ is contained in~$L_P$.
We now complete the proof by showing that~$\lambda^*\ltimes \lambda_\pa$ realizes~$D$.

\begin{claim}
The labeling~$\lambda^*\ltimes \lambda_\pa$ realizes~$D$.
\end{claim}
\begin{claimproof}
First, we observe some properties about the labeling~$\lambda_\pa$.
Let~$\lambda_\pa'$ be the realization for~$D_\pa$ from which we obtained the labeling~$\lambda_\pa$.
For each~$c\in \{a,b\}$, where~$v_c^\leq \in Q$, the edge~$\{v_c^\leq,c\}$ receives label~$\alpha_c^1$ and the edge~$\{f_{c'}^\geq,c'\}$ receives label~$\alpha_c^2$.
This is due to the fact that~$\lambda_\pa(\{c,c'\}) = \{\alpha_c^1,\alpha_c^2\}$ and~$A$ contains the arcs~$(v_c^\leq,c')$ and~$(c,f_{c'}^\geq)$ but not the arc~$(v_c^\leq,f_{c'}^\geq)$. 
Similarly, if~$v_c^\geq \in Q$, the edge~$\{v_c^\geq,c\}$ receives label~$\alpha_c^2$ and the edge~$\{f_{c'}^\leq,c'\}$ receives label~$\alpha_c^1$.

We now show that~$\lambda^*\ltimes \lambda_\pa$ realizes~$D$.

First consider the arcs between the vertices of~$V(C)$.
Since~$D[V(C)]= D_\pa[V(C)]$ and since~$P$ is robust connector (that is, there is no dense path between~$a$ and~$b$ outside of~$V(C)\setminus \{a,b\}$), $\lambda^*\ltimes \lambda_\pa$  surely realizes exactly the arcs between the vertices of~$V(C)$ in~$D$.

Next, consider the arcs between one vertex~$\vin$ of~$\Int = V(C) \setminus \{a,b\}$ and one vertex~$\vout$ of~$\Ext \setminus \{a,b\} = V \setminus V(C)$.
Let~$\vout\in \Ext \setminus \{a,b\}$ and let~$\vin\in \Int$ with~$(\vout,\vin)\in A$ (the case for~$(\vin,\vout)\in A$ can be shown analogously).
Moreover, let~$c\in \{a,b\}$ such that the edge~$\{c,c'\}$ is the edge outputted by~\Cref{compute unique entrance point} for the arc~$(\vout,\vin)$.
This implies that each temporal~$(\vout,\vin)$-path under~$\lambda^*$ traverses the edge~$\{c,c'\}$ from~$c$ to~$c'$.
We consider two cases.

\textbf{Case 1:} each temporal~$(\vout,\vin)$-path under~$\lambda^*$ traverses the edge~$\{c,c'\}$ from~$c$ to~$c'$ at time~$\alpha_c^2$\textbf{.}
Hence, due to~\Cref{claim reachable late is exactly geq set}, $V_{c'}^\geq$ are exactly the vertices of~$\Int$ that can be reached from~$\vout$ over the edge~$\{c,c'\}$.
In particular, $\vin \in V_{c'}^\geq$.
Moreover, by construction of~$Q$, the vertex~$v_c^\leq$ was added to~$Q$.
Since~$\lambda_\pa'(\{v_c^\leq, c\}) = \alpha_c^1$, each path from~$v_c^\leq$ to~$c'$ reaches vertex~$c'$ at time exactly~$\alpha_c^2$ under~$\lambda_\pa'$.
Consequently, since (i)~$V_{c'}^\geq$ are exactly the vertices of~$\Int$ that vertex~$v_c^\leq$ can reach in~$D_\pa$ and (ii)~$\lambda_\pa'$ realizes~$D_\pa$, for exactly the vertices~$V_{c'}^\geq$, there is a temporal path under~$\lambda_\pa'$ starting at a time step larger than~$\alpha_c^2$ from~$c'$ to these vertices of~$\Int$.
Hence, $(\vout,\vin)$ is realized by~$\lambda^*\ltimes \lambda_\pa$.

\textbf{Case 2:} at least one temporal~$(\vout,\vin)$-path under~$\lambda^*$ traverses the edge~$\{c,c'\}$ from~$c$ to~$c'$ at time~$\alpha_c^1$\textbf{.}
This case follows similarly with respect to the set~$V_{c'}^>$.

Finally, consider the arcs between two vertices~$\vout_1$ and~$\vout_2$ of~$\Ext$.
Let~$(\vout_1,\vout_2)\in A$.
If at least one temporal~$(\vout_1,\vout_2)$-path under~$\lambda^*$ does not traverse any edge of~$C$, then this path still exists under~$\lambda^* \ltimes \lambda_\pa$.
Hence, assume in the following that each temporal~$(\vout_1,\vout_2)$-path under~$\lambda^*$ does traverses some edge of~$C$.
In particular, this implies that~$P$ is a dense (in,out)-connector and each such~$(\vout_1,\vout_2)$-path under~$\lambda^*$ traverses the edge~$\{a,a'\}$ from~$a$ to~$a'$ at time~$\alpha_a^i\in \{\alpha_a^1,\alpha_a^2\}$ and traverses the edge~$\{b,b'\}$ from~$b'$ to~$b$ at time~$\alpha_b^j\in \{\alpha_b^1,\alpha_b^2\}$ with~$\alpha_a^i < \alpha_b^j$.

We show that there is a temporal~$(a,b)$-path under~$\lambda_\pa$ that traverses the edge~$\{a,a'\}$ from~$a$ to~$a'$ at time~$\alpha_a^i$ and traverses the edge~$\{b,b'\}$ from~$b'$ to~$b$ at time~$\alpha_b^j$. 
By construction of~$A_Q$, $(a,v_b^\geq)$ is contained in~$A_Q$ if~$\alpha_b^j = \alpha_b^1$, $(v_a^\leq,b)$ is contained in~$A_Q$ if~$\alpha_a^i = \alpha_b^2$, and $(v_a^\leq,v_b^\geq)$ is contained in~$A_Q$ if~$\alpha_a^i = \alpha_b^2$ and~$\alpha_b^j = \alpha_b^1$.  
Based on these arcs that are contained in~$A_Q$ and thus in~$D_\pa$, and the fact that~$D_\pa$ is realized by~$\lambda_\pa'$, there is a temporal~$(a,b)$-path under~$\lambda_\pa$ that traverses the edge~$\{a,a'\}$ from~$a$ to~$a'$ at time~$\alpha_a^i$ and traverses the edge~$\{b,b'\}$ from~$b'$ to~$b$ at time~$\alpha_b^j$.

Note that only \tnew{the arcs described} are realized by~$\lambda^* \ltimes \lambda_\pa$.
Hence, $\lambda^* \ltimes \lambda_\pa$ realizes~$D$.
\end{claimproof}
Consequently, the set~$L_P$ of labelings of the edges of~$C$ fulfills the desired properties and can be computed in polynomial time. 
\end{proof}

\subsection{The Final Algorithm}
Our algorithm works as follows by using the main results of the last subsections:
Theoretically, these results correctly detect that~$D$ is not realizable, or perform some preprocessing.
In the description of the algorithm, we will implicitly assume that they perform the desired preprocessing, as in the case where we detect that~$D$ is not realizable, our algorithm can simply output that~$D$ is not realizable. 

A pseudo code of our algorithm is shown in~\Cref{alg:algorithmFes}.

\iflong
        \begin{algorithm}[t]
            \SetAlgoNoEnd
            \DontPrintSemicolon
            \caption{FPT Algorithm for parameter~$\fes$.}
            \label{alg:algorithmFes}
            \SetKwInOut{Input}{Input}\SetKwInOut{Output}{Output}
            \Input{An input-instance~$D=(V,A)$ of our undirected problem.}
            \Output{\texttt{Yes} if and only if~$D$ is a yes-instance of the respective version of the problem.}
            
            Compute the solid graph~$G=(V,E)$ of~$D$.\;
            Compute an arbitrary feedback edge set~$F$ of minimum size, as well as the set~$X^*$.\;
            Apply all splitting operations and reduction rules.\; 
            
            Compute $W^* \subseteq V^*$ according to~\Cref{compute w star} and the set~$\mc := \{P_1, \dots, P_{|\mc|}\}$ of~$W^*$-connectors.\;
            
            $E^*$ $\gets$ $\{e\in E\mid e \cap W^* \neq \emptyset\} \cup \bigcup_{w\in W^*} E(T_w)$\;
          
            $\mathcal{L}$ $\gets$ $\{i\cdot 2 n \mid i \in \Oh(\fes^2)\}$\;
          
            \ForEach{$\lambda \colon E^* \to 2^{\mathcal{L}}$ with~$|\lambda(e)| \in \Oh(\fes)$ for each edge~$e\in E^*$}{

            \ForEach{$P \in \mc$}{
                Compute the set~$L_P$ of labelings for the edges of the extension of the connector~$P$ that agree with~$\lambda$ on the labels of the first and the last edge of~$P$ according to~\Cref{connector labelings}.\;            
            }

            \ForEach{$(\lambda_{P_1}, \dots, \lambda_{P_{|\mc|}}) \in L_{P_1}\times  \dots \times L_{P_{|\mc|}}$}{
                       
              $\lambda^*$ $\gets$ $\lambda \ltimes \lambda_{P_1}\ltimes \dots\ltimes \lambda_{P_{|\mc|}}      $

            \lIf{$\lambda^*$ realizes~$D$}{
                \Return{Yes} 
            }
            }

            }
              \Return{No} 
             
        \end{algorithm}
        \fi

First, we apply the operation behind~\Cref{shrink pendant} to ensure that all pendant trees are small with respect to the degree of their root in~$G[V^*]$.
Afterwards, compute a set~$W^*$ according to~\Cref{compute w star} and the collection~$\mc$ of all~$W^*$-connectors.
Moreover, let~$E^*$ denote all edges incident with vertices of~$W^*$ and all edges that are part of a pendant tree having its root in~$W^*$.
Note that this includes the first and the last edge of each~$W^*$-connector.
Let~$\mathcal{L}:=\{i\cdot 2 n \mid i \in \Oh(\fes^2)\}$.
Iterate over all possible labelings~\tnew{$\lambda\colon E^* \to 2^{\mathcal{L}}$} that assign~$\Oh(\fes)$ labels to each edge of~$E^*$.
For each such labeling~$\lambda$, compute for each~$W^*$-connector~$P$ the set of labelings~$L_P$ for the edges of the extension of~$P$ according to~\Cref{connector labelings} for~$\{\alpha_a^1,\alpha_a^2\} = \lambda(\{a,a'\})$ and~$\{\alpha_b^1,\alpha_b^2\} = \lambda(\{b,b'\})$.
Iterate over all possible combinations of such connector labelings, that is, iterate over each~$(\lambda_1,\dots,\lambda_{|\mc|})\in L_{P_1} \times \dots \times L_{P_{|\mc|}}$.
For each such combination, consider the labeling~$\lambda^* := \lambda \ltimes \lambda_{P_1} \ltimes \dots \ltimes \lambda_{P_{|\mc|}}$.
If~$\lambda^*$ realizes~$D$, then correctly output that~$D$ is realizable.
Otherwise, if no considered labeling~$\lambda^*$ \tnew{realizes}~$D$, output that~$D$ is not realizable.

Recall that~$W^*$ has size~$\Oh(\fes)$, which implies that~$E^*$ has size~$\Oh(\fes)$ due to~\Cref{properties of small sets} and since we assumed that the operation behind~\Cref{shrink pendant} was applied.
Moreover, recall \tnew{that}~$L_{P_i}$ has constant size for each~$P_i\in \mc$.
Since (i)~all preprocessing steps run in polynomial time, (ii)~there are $|\mathcal{L}|^{|E^*| \cdot \Oh(\fes)} \subseteq \fes^{\Oh(\fes^2)}$~possible candidates for~$\lambda$, and (iii)~for each candidate~$\lambda$, we check~$\Oh(1)^{|\mc|} \subseteq 2^{\Oh(\fes)}$ labelings~$\lambda^*$, the whole algorithm runs in $\fes^{\Oh(\fes^2)}\cdot n^{\Oh(1)}$~time.  
  
  It thus remains to prove the correctness.
  Assume towards a contradiction that the algorithm is not correct.
  That is, $D$ is realizable but the algorithm \tnew{does} not find a labeling~$\lambda^*$ that realizes~$D$.
  We first observe the following.
  
  \begin{observation}
If~$D$ is realizable, then there is a~\minlab realization for~$D$, where the labels of the edges of~$E^*$ are from~$\mathcal{L}$. 
\end{observation}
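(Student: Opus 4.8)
The plan is to show that if $D$ is realizable, there is a frugal realization $\lambda^*$ such that all edges of $E^*$ receive only labels from the arithmetic progression $\mathcal{L} = \{i\cdot 2n \mid i \in \Oh(\fes^2)\}$. First I would invoke the earlier structural result that in each minimal (hence frugal) realization every edge of $G$ receives $\Oh(\fes)$ labels; together with the fact that $|E^*| \in \Oh(\fes)$ (a consequence of \Cref{properties of small sets} and \Cref{shrink pendant}), this means the edges of $E^*$ collectively carry only $\Oh(\fes^2)$ distinct labels in any frugal realization. The idea is then to ``spread out'' these labels so that each of them becomes a multiple of $2n$, while keeping all other labels of the realization sandwiched in the gaps between consecutive values of $\mathcal{L}$.

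The key step is the following rescaling argument. Take an arbitrary frugal realization $\lambda$ of $D$. Let $t_1 < t_2 < \cdots < t_r$ be the distinct labels used on edges of $E^*$ under $\lambda$, where $r \in \Oh(\fes^2)$. For each $i$, define a new value $t_i' := i \cdot 2n$. Now relabel the whole temporal graph by the order-preserving map that sends each $t_i$ to $t_i'$ and, for any label $s$ of an edge \emph{not} in $E^*$ with $t_i < s < t_{i+1}$, inserts $s$ into the open interval $(t_i', t_{i+1}') = (i\cdot 2n, (i+1)\cdot 2n)$ while preserving the relative order of all such intermediate labels (labels below $t_1$ go into $(0, 2n)$, labels above $t_r$ go above $r\cdot 2n$). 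Since there are at most $n$ vertices and each edge has at most $n$ labels in a minimal realization (by \Cref{max number label}), and more precisely fewer than $2n$ distinct labels in total in a frugal realization of a tree-like remainder — in any case a number bounded so that fewer than $2n-1$ intermediate labels fall into any single gap — there is always enough room in each length-$2n$ gap to accommodate all the intermediate labels in the correct order. The resulting labeling $\lambda^*$ is order-isomorphic to $\lambda$ on the labels, hence realizes exactly the same reachability relation $D$; it remains frugal because order-preserving relabelings preserve frugality (replacing a label set by a smaller one before would still be possible after, and vice versa); and by construction every label on an edge of $E^*$ lies in $\mathcal{L}$.

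The main obstacle I anticipate is making precise the counting that guarantees each gap $(i\cdot 2n, (i+1)\cdot 2n)$ is wide enough. The cleanest route is to observe that a frugal (minimal) realization uses at most $|E|\cdot n \le n^2$ distinct labels globally, but that bound is too crude for a gap of width $2n$; instead one should argue more carefully that between two \emph{consecutive} labels of $E^*$ the number of intermediate labels on the remaining edges is at most $2n - 2$ — which follows because after applying the preprocessing of \Cref{shrink pendant} and \Cref{compute w star}, every edge outside $E^*$ lies on a $W^*$-connector or in a reduced pendant tree, and on any such tree a frugal realization uses at most two labels per edge and the relevant subgraph has fewer than $n$ edges, so fewer than $2n$ labels occur there in total. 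If isolating the gaps individually is awkward, an even simpler fallback is to choose the progression with step $2n$ but allow $r \le |E| \cdot n$ (still polynomial, and still $\Oh(\fes^2)$ once the $\Oh(\fes)$-size bound on $E^*$ and the $\Oh(\fes)$-labels-per-edge bound are combined), and note that the total number of labels in a minimal realization is less than $2n$ per tree component so the insertion always succeeds. I would write the lemma's proof to lean on these two already-established facts — $\Oh(\fes)$ labels per edge and $|E^*| \in \Oh(\fes)$ — and present the rescaling as a one-paragraph order-preserving substitution, deferring the width estimate to a short remark.
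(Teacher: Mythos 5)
Your approach matches the paper's: bound the distinct labels on $E^*$ by $\Oh(\fes^2)$ (from $|E^*| \in \Oh(\fes)$ edges with $\Oh(\fes)$ labels each in a frugal realization), bound the distinct labels outside $E^*$ by $2(n-1) < 2n$ (at most $n-1$ such edges, each lying outside any triangle so receiving at most two labels by Lemma~\ref{lem:tree-atmost2}), and then apply an order-preserving relabeling sending the $E^*$-labels to multiples of $2n$ so the remaining labels fit in the gaps. The paper reaches the $n-1$ bound more directly by observing that $F \subseteq E^*$ and $G - F$ is a tree, whereas you route through the connector/pendant-tree decomposition; the counting is identical, and the hedging about the $n^2$ fallback and the alternative progression can safely be dropped.
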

This follows by two facts: 
First, in every~\minlab realization, each of the~$\Oh(\fes)$ edges in~$E^*$ receives at most $\Oh(\fes)$~labels.
Hence, in each~\minlab realization, the set of all labels assigned to edges of~$E^*$ has size at most~$\Oh(\fes^2)$.
Second, since each edge of~$E\setminus E^*$ receives at most two labels in each minimal realization and there are at most~$n$ edges in~$E\setminus E^*$ (since the feedback edge set~$F$ is a subset of~$E^*$ and~$G-F$ is a tree), by first removing all empty time steps from a~\minlab realization, and afterwards introducing sufficiently many empty time steps, we can ensure that the labels assigned to~$E^*$ are multiples of~$2n$ and more precisely, are from~$\mathcal{L}$. 

Since we assumed that~$D$ is realizable,  there is, thus, a~\minlab realization~$\lambda'$ for~$D$ that only assigns labels of~$\mathcal{L}$ to the edges of~$E^*$.
We thus consider the labeling~$\lambda$ which is the restriction of~$\lambda'$ to the edges of~$E^*$.
Now, for each~$W^*$-connector~$P$, let~$L_P$ be the corresponding labeling for the edges of the extension of~$P$ that agrees with~$\lambda'$ on the labels of the first and the last edge of~$P$ according to~\Cref{connector labelings}.
Since~$\lambda'$ realizes~$D$, \Cref{connector labelings} implies that there is some~$\lambda_P$ such that~$\lambda' \ltimes \lambda_P$ still realizes~$D$.
Since all connectors are pairwise edge-disjoint, there thus exists~$(\lambda_1,\dots,\lambda_{|\mc|})\in L_{P_1} \times \dots \times L_{P_{|\mc|}}$ for which~$\lambda^* := \lambda' \ltimes \lambda_1 \ltimes  \dots \ltimes \lambda_{|\mc|}$ realizes~$D$.
By the fact that~$\lambda$ is the restriction of~$\lambda'$ to the edges of~$E^*$, $\lambda^*$ equals~$\lambda \ltimes \lambda_1 \ltimes  \dots \ltimes \lambda_{|\mc|}$.
This contradicts the assumption that our algorithm \tnew{does} not find a realization~$\lambda^*$ for~$D$.  

This thus proves~\Cref{fpt simplified} and~\Cref{fes algo}.

\fi
  
\iflong
\subparagraph{Remarks on kernelization.}
All our preprocessing steps run in polynomial time and lead to the set~$W^*$ for which each connector is nice.
Moreover, there are only $\Oh(\fes)$~edges incident with vertices of~$W^*$ or vertices of pendant trees that have their root in~$W^*$.
Our preprocessing thus nearly achieves a polynomial problem kernel for our problem.
Only the size of the connectors prevent this kernelization algorithm so far.

We conjecture that (by an even deeper analysis) it is possible to, in polynomial time, replace each connector individually by a connector of constant size.
This would then, together with our previous reduction rules imply a kernel for our problem with $\Oh(\fes)$~vertices and solid edges and $\Oh(\fes^2)$~dashed arcs.
Based on our exact algorithm (see~\Cref{exact algos}), this would then also imply a running time of~$2^{\Oh(\fes^2)} \cdot n^{\Oh(1)}$ by simply solving the kernel via the exact algorithm.
        \fi

        \section{The Complexity of Directed Reachability Graph Realizability}\label{sec:harddir}
Finally, we consider the complexity of versions of~\DRGDlong.
\iflong

\fi
As already discussed by Döring~\cite{D25}, each directed graph~$D$ can be the strict reachability graph of a simple directed temporal graph\iflong{} (by assigning label 1 to each arc of~$D$)\fi.
Hence, \any\str\DRGD and \simp\str\DRGD are always yes-instances and can thus be solved in polynomial time.

\iflong
We now show that all DAGs and all transitive graphs are trivial yes-instances for each version of the problem.
\else
We show that all DAGs are trivial yes-instances for each version of the problem.
\fi

\iflong
\begin{lemma}
\else
\begin{lemma}
\fi\label{easy parts}
An instance~$D=(V,A)$ of~\DRGD is a yes-instance, if \iflong
\begin{itemize}
\item we consider \any\str\DRGD or \simp\str\DRGD,
\item $D$ is a DAG, or
\item $D$ is a transitive graph. 
\end{itemize}
\else
we consider \any\str\DRGD or \simp\str\DRGD, or if $D$ is a DAG.
\fi
\end{lemma}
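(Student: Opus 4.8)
The plan is to verify each of the three listed cases by exhibiting an explicit directed temporal graph whose reachability graph is exactly~$D$, where for directed temporal graphs each snapshot is a directed graph and a temporal path must respect the arc directions as well as the time constraints.

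For the first case (\any\str\DRGD or \simp\str\DRGD), I would simply invoke the observation already recalled in the text: assign the single label~$1$ to every arc of~$D$, i.e.\ put all arcs of~$D$ into a single snapshot~$G_1$. Then the strict temporal paths of this one-layer directed temporal graph are precisely the arcs of~$D$ (a strict path of length~$\ge 2$ would need two strictly increasing time steps, which is impossible with only one layer), so the strict reachability graph is exactly~$D$. This labeling is simple (one label per arc), so it witnesses yes for both variants simultaneously.

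For the DAG case, the idea is to use a topological order. Fix a topological ordering~$v_1,\dots,v_n$ of~$V$, so that every arc~$(v_i,v_j)\in A$ satisfies~$i<j$. Assign to each arc~$(v_i,v_j)$ the single label~$i$ (equivalently, put~$(v_i,v_j)$ in snapshot~$i$). I would then argue: (a)~every arc of~$D$ is realized directly, since~$(v_i,v_j)$ is present in snapshot~$i$; (b)~no non-arc is realized, because a temporal path from~$v_i$ to~$v_j$ must traverse arcs with strictly increasing tail-indices (each arc~$(v_k,v_\ell)$ has label~$k$ and leads to a vertex~$v_\ell$ with~$\ell>k$, so the next arc has a strictly larger label and starts at a strictly later vertex in the order), hence reachability from~$v_i$ to~$v_j$ implies a directed path in~$D$ from~$v_i$ to~$v_j$, which since~$D$ is a DAG means~$(v_i,v_j)$ is already... no, wait --- a directed path need not be an arc. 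Here is the cleaner argument: this labeling is proper, because two arcs sharing a tail~$v_i$ both get label~$i$ only if they share the tail, but a \emph{proper} directed temporal graph (per the paper's definition) forbids two adjacent arcs from sharing a label only when they form a path of length~$2$; two arcs out of the same tail do not form a path. However the real point is that I do not need properness for the DAG case --- the statement only asserts~$D$ is a yes-instance of \emph{some} variant, so it suffices to realize~$D$ as \emph{a} directed temporal graph. So I keep the labeling above and just check the reachability relation equals~$A$: each arc is realized; and any temporal path~$v_{i_0},v_{i_1},\dots,v_{i_t}$ uses arcs with labels~$i_0<i_1<\dots<i_{t-1}$ which are automatically strictly increasing, so it is simultaneously a valid strict and non-strict temporal path and corresponds to a directed path in~$D$. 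Since we need the reachability graph to be exactly~$D$, and~$D$ is already transitive-closed\,? --- it need not be. So this construction only works if~$D$ equals its own transitive closure. Thus the DAG case and the transitive case must be handled together: the construction realizes the \emph{transitive closure} of~$D$, so it gives a correct realization precisely when~$D$ is transitive. For a general DAG one instead wants a labeling that does \emph{not} create the transitive shortcuts; the standard trick is to use a \emph{layered} construction where arc~$(v_i,v_j)$ receives a label that is large enough to block concatenation --- e.g.\ label arcs~$(v_i,v_j)$ with value depending on both endpoints so that any two-arc concatenation would require a ``gap'' that is not present. The hard part will be getting this blocking right for arbitrary DAGs.

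Concretely, for the DAG case I would assign arc~$(v_i,v_j)$ the label~$2\cdot(\text{some fixed large number})$ chosen via a potential function on the topological order that makes every length-$2$ subpath ``time-inconsistent'' unless the corresponding shortcut arc is present; alternatively, and more simply, observe that the directed line digraph / subdivision trick works: replace reasoning by assigning arc~$(v_i,v_j)$ the interval of a single time step~$t_{ij}$ where the map~$(i,j)\mapsto t_{ij}$ is chosen so that~$t_{ij}\ge t_{ki}$ never holds for~$k<i<j$ --- this is achievable by a careful ordering of arcs since~$D$ is acyclic (process arcs in an order extending the partial order ``$(k,i)$ before $(i,j)$'', and assign strictly decreasing times, so that a path cannot be extended). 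I would then verify: every arc realized; no non-arc realized because extending a temporal path is impossible by the strictly-decreasing choice; done. For the transitive case, transitivity of~$D$ means~$D$ equals the reachability graph of its own underlying digraph under the all-label-$1$ assignment followed by transitive closure --- but more directly, a transitive digraph~$D$, restricted to its strongly connected components, is an equivalence relation on each component and a strict partial order between components; one can realize each strongly connected component as a ``clique of reciprocal arcs'' labeled so that everyone reaches everyone (two interleaved Hamiltonian cycles with increasing labels, as in the proof of Theorem~\ref{hardness proper}), and arrange the components in topological order with a time gap between consecutive components, so that cross-component reachability is exactly the transitive-partial-order relation. The main obstacle throughout is ensuring that \emph{no spurious reachability} is created when concatenating across the gadget boundaries; this is handled by inserting sufficiently large empty-time gaps between the labels used inside one component/level and those used in the next, exactly as in the splitting and merging arguments of Section~\ref{sec:bridge}. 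I expect the bookkeeping for the transitive case (handling strongly connected components that are not singletons) to be the most delicate step, while the DAG and the strict cases are immediate.
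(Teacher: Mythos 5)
Your treatment of the first two cases is essentially the paper's: labeling every arc $1$ for the strict/simple case, and, for the DAG case, a labeling that makes any two consecutive arcs time-inconsistent. Your arc-ordering formulation (assign strictly decreasing labels to arcs processed in a topological order of the ``can-follow'' relation) is a minor variant of the paper's vertex-based construction (iterate from sinks to sources, giving the out-arcs of the $i$-th vertex labels in $[i\cdot n+1, i\cdot n + |B|]$), and both succeed for the same reason. You do leave the DAG labeling at the level of a plan rather than a pinned-down assignment, but the idea is right.

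The genuine gap is in the transitive case, where you propose a decomposition into strongly connected components realized via interleaved Hamiltonian cycles, separated by time gaps, and you yourself flag the bookkeeping as the delicate unfinished step. Besides being unfinished, that route imports machinery (the Hamiltonian-cycle trick from the undirected hardness proof) that was stated for odd-size sets of size at least~$9$ and would need adaptation both to the directed setting and to small SCCs. The paper's argument is a one-liner that you missed: take \emph{any} bijection $\lambda\colon A\to[|A|]$, i.e., give every arc a distinct single label. Every arc is realized by its own label, and any temporal path is a directed path in $D$ (it can only use arcs that exist and received labels), so its endpoints are joined by an arc of $D$ precisely because $D$ is transitive. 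No SCC decomposition, no Hamiltonian cycles, no gap management is needed. I would replace your transitive-case sketch with this observation; as a bonus, the same one-line argument also yields that the labeling is happy, so transitivity gives a yes-instance simultaneously for all variants.
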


\iflong
Here, a directed graph~$D=(V,A)$ is \emph{transitive}, if for every distinct  vertices~$u,v,w\in V$ with~$(u,v)\in A$ and~$(v,w)\in A$, the arc~$(u,w)$ is also in~$A$.
\fi

\iflong
\begin{proof}
The first point was already shown by Döring~\cite{D25}: By assigning label 1 to every arc of~$D$, each arc is realized and each strict temporal path has length at most one.
Hence, this trivial labeling realizes~$D$.

For the second point consider an arbitrary topological ordering~$\pi$ of~$D$.
One can realize~$D$ by a happy labeling~$\lambda$ as follows: 
Iterate over the topological ordering~$\pi$ from the sinks towards the sources.
For the~$i$th vertex~$v$ of~$\pi$, label the outgoing arcs~$B$ of~$v$ via an arbitrary bijection between~$B$ and~$[i \cdot n + 1, i \cdot n + |B|]$.
Since each vertex has less than~$n$ outgoing arcs, this labeling assigns a unique label to each arc.
Hence, $\lambda$ is a happy labeling.
Moreover, no temporal path under~$\lambda$ has length more than one, since each path of~$D$ of length at least two, the assigned labels are strictly decreasing.
This implies that~$\lambda$ realizes~$D$.

For the last point consider an arbitrary bijection~$\lambda$ between~$A$ and~$[1,|A|]$.
Clearly, each arc receives one label and~$\lambda$ is a happy labeling.
It remains to show that there is no temporal path between any two vertices~$u$ and~$v$ for which~$D$ does not contain the arc~$(u,v)$.
Suppose that there is such a temporal path~$P$.
Since this path can only use arcs that received a label under~$\lambda$ (that is, since~$P$ only uses arcs of~$D$), $P$ is a path in~$D$, which implies that~$(u,v)$ is an arc of~$D$.
This is due to the fact that~$D$ is a transitive graph.
Hence, $\lambda$ realizes~$D$.
\end{proof}

Next, we show that on directed graphs that are close to being a DAG, \pro\str\DRGD, \happy\str\DRGD, and each version of~\nstr\DRGD under consideration is~\NP-hard. 
To this end, we first make the following observations about the possibility of labeling arcs in these settings.
Our first observation is a generalization of~\cite[Lemma~4.1]{D25}.

\begin{lemma}\label{directed cycle}
Let~$D$ be an instance of~\pro\DRGD, \happy\DRGD, or some version of~\nstr\DRGD.
Moreover, let~$C$ be an induces directed cycle of length at least~$3$ in~$D$.
Then, for each realization of~$D$, there is at least one arc of~$C$ that receives no label.
\end{lemma}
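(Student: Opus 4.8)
The plan is to argue by contradiction: suppose $\lambda$ is a realization of $D$ in which every arc of the induced directed cycle $C = (v_0, v_1, \dots, v_{k-1}, v_0)$ receives at least one label, where $k \geq 3$. Since $C$ is induced, for every pair of non-consecutive vertices $v_i, v_j$ on $C$, at least one of $(v_i,v_j)$ or $(v_j,v_i)$ is absent from $A$ (in fact both are absent whenever they are not cyclically adjacent). I want to use this to force a contradiction with the fact that consecutive arcs of $C$ each carry a label.

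First I would handle the happy and proper cases together, and then the non-strict case. In a proper (hence also happy) labeling, no two arcs sharing a vertex may carry a common label; in the non-strict case, if two consecutive arcs $(v_{i-1},v_i)$ and $(v_i,v_{i+1})$ carried a common label then there would be a non-strict temporal path from $v_{i-1}$ to $v_{i+1}$, which is forbidden since $(v_{i-1},v_{i+1}) \notin A$ (using $k \geq 3$ so these are distinct non-adjacent vertices — or adjacent only when $k=3$, in which case $(v_{i-1},v_{i+1}) = (v_{i+1},v_{i+1})$ is still not an arc as self-loops are excluded, and more to the point the other direction is the cycle arc). So in every case, consecutive arcs of $C$ carry disjoint label sets. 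Now define, for each arc $e_i = (v_i, v_{i+1})$ of $C$ (indices mod $k$), the value $h_i = \max \lambda(e_i)$ (well-defined since $\lambda(e_i) \neq \emptyset$) and $\ell_i = \min \lambda(e_i)$. The key step is: because there is no temporal path $v_i \to v_{i+2}$ (the arc $(v_i,v_{i+2})$ is absent — again $C$ induced and of length $\geq 3$), traversing $e_i$ and then $e_{i+1}$ must be impossible, which forces $h_i < \ell_{i+1}$ is \emph{not} quite right; rather it forces that one cannot first take any label of $e_i$ and then a larger-or-equal (non-strict) / strictly larger (strict) label of $e_{i+1}$. In the non-strict setting this means $h_i < \ell_{i+1}$, i.e. $\max \lambda(e_i) < \min \lambda(e_{i+1})$ for all $i$. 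Chaining this cyclically gives $h_0 < \ell_1 \leq h_1 < \ell_2 \leq \cdots < \ell_0 \leq h_0$, a contradiction.

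For the proper/happy case the same chaining argument applies verbatim once I observe that in a proper labeling strict and non-strict temporal paths coincide, so "no temporal path $v_i \to v_{i+2}$" again yields $\max\lambda(e_i) < \min\lambda(e_{i+1})$ — indeed if $\max\lambda(e_i) \geq \min\lambda(e_{i+1})$ then picking a label $a \in \lambda(e_i)$ and a label $b \in \lambda(e_{i+1})$ with $a \leq b$ (possible) and noting $a \neq b$ by properness gives $a < b$, hence a strict temporal path. The cyclic chain of strict inequalities $\max\lambda(e_0) < \max\lambda(e_1) < \cdots < \max\lambda(e_{k-1}) < \max\lambda(e_0)$ is impossible. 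So in all the listed settings, at least one arc of $C$ must be unlabeled.

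The main obstacle I anticipate is purely bookkeeping around small cases and degenerate indices: when $k = 3$, the vertex $v_{i+2}$ equals $v_{i-1}$, so the "absent arc $(v_i, v_{i+2})$" is the arc $(v_i, v_{i-1})$, which is the reverse of the cycle arc $(v_{i-1}, v_i)$ and indeed absent since $C$ is induced and $D$ has no self-loops or $2$-cycles on induced cycles — I should state carefully why an induced directed cycle of length $3$ has no chords, namely that a chord would be an arc between two of its vertices not already present, but all three ordered pairs in the "forward" direction are cycle arcs and the three "backward" pairs are exactly the chords that inducedness forbids. Beyond that, I'd want to be explicit that $\lambda(e_i)\neq\emptyset$ is exactly the assumption being contradicted, so the whole argument is "assume all arcs labeled, derive a cyclic chain of strict inequalities on their maximum labels, contradiction."
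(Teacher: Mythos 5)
Your overall approach is the same as the paper's: chain strict inequalities between consecutive arc labels around the cycle, using inducedness to force each step, and conclude that a cyclic chain of strict inequalities is impossible. However, you have a consistent max/min direction error in translating the no-path constraint into inequalities, and as written the middle of your argument does not follow from what precedes it. You state the constraint correctly in words: since $(v_i,v_{i+2})\notin A$, one must not be able to take a label $a$ of $e_i$ followed by a larger-or-equal label $b$ of $e_{i+1}$. But you then write that this yields $h_i < \ell_{i+1}$, i.e.\ $\max\lambda(e_i) < \min\lambda(e_{i+1})$. That is backwards: if that held, every pair $a\in\lambda(e_i)$, $b\in\lambda(e_{i+1})$ would satisfy $a<b$, which produces a strict temporal path $v_i\to v_{i+1}\to v_{i+2}$ --- exactly what you need to forbid. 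The condition ``for all $a\in\lambda(e_i)$, $b\in\lambda(e_{i+1})$, not $a\le b$'' is equivalent to $\min\lambda(e_i) > \max\lambda(e_{i+1})$, i.e.\ $\ell_i > h_{i+1}$. The same slip recurs in your proper/happy paragraph, where you claim that $\max\lambda(e_i)\ge\min\lambda(e_{i+1})$ lets you pick $a\le b$; that implication is false (take $\lambda(e_i)=\{10\}$ and $\lambda(e_{i+1})=\{1\}$). The correct premise for the existence of $a\le b$ is $\min\lambda(e_i)\le\max\lambda(e_{i+1})$.

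Once you flip the inequalities, the cyclic chain must run \emph{backwards} around $C$, which is precisely what the paper does: from $\max\lambda(e_{i+1}) < \min\lambda(e_i)$ for all $i$, one gets $h_1 < \ell_0 \le h_0 < \ell_{k-1} \le h_{k-1} < \cdots \le h_1$, a contradiction. The contradiction you reach is formally the same, but the premises you fed into it were wrong, so the derivation needs repair. Everything else in your proposal is sound: the preliminary observation that consecutive arcs of $C$ must carry disjoint label sets (though note it is subsumed by the corrected strict inequality), the $k=3$ chord analysis, and the remark that properness collapses strict and non-strict reachability.
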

\begin{proof}
Let~$C= (v_1, \dots, v_\ell, v_1)$.
Assume towards a contradiction that this is not the case and let~$\lambda$ be a proper or non-strict directed realization where each arc of~$C$ receives at least one label.
Since the cycle is induced, for each~$i$, $(v_i,v_{i+2})$ is not an arc of~$D$.
Hence, $\max \lambda((v_{i+1},v_{i+2})) \leq \min \lambda((v_{i},v_{i+1}))$ for each~$i \in [1,\ell]$.\footnote{For convenience, $\ell+1 = 1$ and~$\ell+2 = 2$.}
Since~$\lambda$ is a proper or non-strict realization, we even get that~$\max \lambda((v_{i+1},v_{i+2})) < \min \lambda((v_{i},v_{i+1}))$.
This leads to a contradiction, since $\max \lambda((v_{2},v_{3})) < \min \lambda((v_{1},v_{2})) \leq \max \lambda((v_{1},v_{2})) < \min \lambda((v_{\ell},v_{1})) \leq \dots \leq \max \lambda((v_{2},v_{3}))$.
\end{proof}

Let~$D =(V,A)$ be a directed graph.
We say that an arc~$(u,v)\in A$ is~\emph{triangulated} in~$D$, if there is a vertex~$x\in V$ such that~$(u,x)\in A$ and~$(x,v)\in A$.

\begin{lemma}\label{all nontriangulated}
Let~$D$ be an instance of any version of~\DRGD.
In each realization of~$D$, each arc~$(u,v)$ of~$D$ that is not triangulated receives at least one label.
\end{lemma}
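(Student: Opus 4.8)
The statement to prove is: in each realization of $D$, every arc $(u,v)$ that is not triangulated receives at least one label. The plan is to argue by contradiction: suppose $\lambda$ is a realization of $D$ and $(u,v)$ is a non-triangulated arc that receives no label under $\lambda$. Since $(u,v)\in A$, there must be a temporal path from $u$ to $v$ in the temporal graph $(D,\lambda)$ (recall that for directed realizations every temporal path is a dense path in $D$). This temporal path cannot be the single arc $(u,v)$, because that arc has no label; hence it has length at least two, so it passes through at least one intermediate vertex $x$.

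\textbf{Key steps.} First I would observe that the temporal path $P$ from $u$ to $v$ being a dense path in $D$ means in particular that for its first internal vertex $x$ we have $(u,x)\in A$, and for the last arc of $P$ into $v$, the predecessor $x'$ of $v$ satisfies $(x',v)\in A$. Second, because $P$ is dense, we also have the arc from the first internal vertex $x$ to the last predecessor $x'$ (or $x=x'$ if $P$ has length exactly two), and crucially $(u,v)\in A$ is consistent with density. The decisive point is that density of $P$ forces $(u,x)\in A$ and $(x,v)\in A$ where $x$ is \emph{any} internal vertex of $P$ — in particular taking $x$ to be any single fixed internal vertex gives a witness that $(u,v)$ is triangulated, contradicting the hypothesis. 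So the argument reduces to: a temporal $u$–$v$ path of length $\geq 2$ exists, it has an internal vertex $x$, and by denseness $(u,x),(x,v)\in A$, so $(u,v)$ is triangulated.

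\textbf{Main obstacle.} The only subtlety — and the step I would be most careful about — is justifying that the temporal path realizing $(u,v)$ is genuinely a dense path in $D$, so that $(u,x)$ and $(x,v)$ are indeed arcs of $D$ rather than merely pairs connected by temporal paths. This is exactly the statement made in the preliminaries that "for directed realizations of $D$, each temporal path is a dense path in $D$"; I would cite that directly. One must also handle the degenerate possibility that $(u,v)$ has no label but there is still a temporal path of length $\geq 2$ — this is precisely the case we exploit, so no issue arises. A second minor point: one should note $u\neq v$ and that $x$ can be chosen distinct from both $u$ and $v$ (an internal vertex of a path is by definition distinct from the endpoints), which is needed for the definition of "triangulated." I expect the whole proof to be three or four sentences once the dense-path fact is invoked.

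\begin{proof}
Let $\lambda$ be a realization of $D$ and suppose for contradiction that some arc $(u,v)\in A$ receives no label under $\lambda$ yet is not triangulated in $D$. Since $(u,v)\in A$, the temporal graph $(D,\lambda)$ contains a temporal path $P$ from $u$ to $v$. As $(u,v)$ has no label, $P$ cannot consist of the single arc $(u,v)$, so $P$ has length at least two and therefore has an internal vertex $x$ with $x\notin\{u,v\}$. Since every temporal path in a directed realization of $D$ is a dense path in $D$, the arcs $(u,x)$ and $(x,v)$ both belong to $A$. But then $x$ witnesses that $(u,v)$ is triangulated in $D$, a contradiction. Hence every non-triangulated arc receives at least one label in every realization of $D$.
\end{proof}
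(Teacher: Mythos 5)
Your proof is correct and takes essentially the same route as the paper's: contradiction, observe the temporal $u$–$v$ path must have length at least two (since $(u,v)$ is unlabeled), take an internal vertex $x$, and conclude $(u,x),(x,v)\in A$. The only cosmetic difference is that you invoke the preliminaries' dense-path observation to get $(u,x),(x,v)\in A$, whereas the paper re-derives that locally by noting the subpaths of $P$ up to and from $x$ are themselves temporal paths and hence witness arcs of $\mathcal{R}(\mathcal{G})=D$; your version is, if anything, slightly more explicit about why $P$ has an internal vertex at all.
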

\begin{proof}
Assume towards a contradiction that there is a directed temporal graph~$(G,\lambda)$ that realizes~$D$ such that there is an arc~$(u,v)$ of~$D$ which is not triangulated in~$D$ and receives no label.
Since~$(G,\lambda)$ realizes~$D$ and~$D$ contains the arc~$(u,v)$, there is a temporal path~$P$ from~$u$ to~$v$ in~$(G,\lambda)$.
Let~$x$ be an arbitrary vertex of~$P$ distinct from~$u$ and~$v$.
The existence of the temporal path~$P$ implies that there is a temporal path~$P_1$ from~$u$ to~$x$ in~$(G,\lambda)$ and that there is a temporal path~$P_2$ from~$x$ to~$v$ in~$(G,\lambda)$.
Since the reachability graph of~$(G,\lambda)$ is exactly~$D$, this implies that~$D$ contains the arcs~$(u,x)$ and~$(x,v)$.
This contradicts the assumption that~$(u,v)$ is not triangulated.
\end{proof}

Based on these two lemma, we can now show the stated hardness results. 
\fi

\iflong\else
We now show that all versions of~\DRGD besides~\any\str\DRGD and~\simp\str\DRGD become NP-hard even on input graphs that are close to being a DAG.
\fi

\iflong
\begin{theorem}
\else
\begin{theorem}
\fi\label{hardness directed fes}
\pro\DRGD, \happy\DRGD, and all considered versions of~\nstr\DRGD are \NP-hard on graphs with a constant size feedback arc set.
Moreover, none of these versions of~\DRGD can be solved in $2^{o(|V| + |A|)} \cdot n^{\Oh(1)}$~time, unless the ETH fails.
\end{theorem}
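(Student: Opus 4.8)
The plan is to give a polynomial-time reduction from \SAT (which is \NP-hard and, by~\cite{T84}, not solvable in $2^{o(|\phi|)}\cdot|\phi|^{\Oh(1)}$~time unless the ETH fails) to the directed problem, producing a graph~$D$ that is ``almost acyclic'' in the sense that a constant number of arc deletions makes it a DAG. The idea is to mimic the structure of the undirected reduction from \Cref{hardness proper}, but to replace the role played by solid edges and undirected temporal paths with directed gadgets built around a constant number of small induced directed cycles. The directed versions under consideration all have the crucial property, established by \Cref{directed cycle}, that in every realization at least one arc of each induced directed cycle of length $\ge 3$ receives no label, and by \Cref{all nontriangulated}, every non-triangulated arc must receive a label. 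Together these two lemmas give exactly the ``at most one of two alternatives can be labeled'' behaviour that drives the undirected hardness proof, so we can transplant the variable gadget: for each variable~$x$ we build a constant-size induced directed cycle whose two ``decision arcs'' play the roles of $\{x,c_x\}$ and $\{\ol,c_x\}$, with the property that labeling the first encodes $x=\texttt{true}$ and labeling the second encodes $x=\texttt{false}$, while the remaining cycle arcs force at least one of them to be unlabeled.

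First I would fix the small feedback arc set: introduce $\Oh(1)$ globally shared ``back arcs'' (for instance, arcs from a terminal-side vertex back toward a source-side vertex) that close the directed cycles used to enforce the decision constraint, and arrange everything else so that after deleting these $\Oh(1)$ arcs the graph is a DAG. Concretely, each variable cycle will be made from a bounded-length path of fresh arcs plus exactly one shared back arc, so all $|X|$ variable cycles reuse a single feedback arc; similarly, each clause gadget will route a source~$s_i$ to a terminal~$t_i$ via a directed path through one chosen variable gadget, using only forward arcs, so the clause gadgets contribute no cycles. The remaining dashed (unreachability) arcs are added exactly as needed to force the realizing temporal paths to have the intended shape — e.g.\ adding arcs $(w,\top)$, $(u,c_x)$ etc.\ so that the only dense path realizing $(u,\top)$ goes through a labeled decision arc of a satisfying literal's variable gadget. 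The forward direction (satisfying assignment $\Rightarrow$ realization) is handled by the explicit labeling: label the chosen decision arcs, label the clause-path arcs with a small window of distinct consecutive time steps (ordered so that no two adjacent labeled arcs coincide, giving a happy labeling), and verify by inspection of the constant-size gadget that exactly the prescribed arcs are realized and no spurious temporal path appears. The backward direction (realization $\Rightarrow$ satisfying assignment) uses \Cref{directed cycle} and \Cref{all nontriangulated} to argue that for each variable at most one decision arc is labeled, and then, as in the proof of \Cref{hardness proper}, traces the forced temporal $s_i$-$t_i$ path to find a satisfied literal per clause.

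The hard part will be getting a \emph{single} gadget that simultaneously works for all of \pro\DRGD, \happy\DRGD, and every \nstr\DRGD variant, because the constraints differ: in the proper/happy cases no two adjacent arcs may share a label, whereas in the non-strict case adjacent arcs sharing a label would create triangulated shortcuts that must not be realized. The clean way, mirroring \Cref{hardness proper}, is to design the gadget so that (i)~the intended realization is already happy (single label per arc, no two adjacent arcs sharing a label), so it is simultaneously a valid proper and a valid non-strict realization, and (ii)~every ``bad'' configuration that a non-strict realization could exploit (labeling both decision arcs of a variable with the same label, or creating an unwanted temporal path via a shared label on adjacent arcs) is independently ruled out by an induced directed cycle — invoking \Cref{directed cycle} — or by the absence of a triangulating arc — invoking \Cref{all nontriangulated}. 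I expect the bookkeeping of exactly which dashed arcs to include so that no spurious reachability is realized in \emph{any} of these models, while keeping the feedback arc set of constant size and the maximum degree bounded, to be the most delicate and error-prone part of the construction.

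Finally, since the constructed graph has $|V|+|A|=\Oh(|\phi|)$ by construction (each variable and clause contributes only a constant number of vertices and arcs, and the shared feedback arcs are $\Oh(1)$ in total), a $2^{o(|V|+|A|)}\cdot n^{\Oh(1)}$~algorithm for any of these directed variants would yield a $2^{o(|\phi|)}\cdot|\phi|^{\Oh(1)}$~algorithm for \SAT, contradicting the ETH; this gives the claimed ETH lower bound together with \NP-hardness.
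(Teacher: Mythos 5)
Your proposal follows essentially the same approach as the paper's proof: reduce from \SAT, encode each variable's truth value by which of two ``decision arcs'' receives a label, enforce the mutual-exclusion constraint via \Cref{directed cycle} applied to a long induced cycle in which both decision arcs are the only triangulated ones, route each clause's $(s_i,\top)$ arc through a satisfied literal's decision arc, and close all cycles through a constant number of shared back-arcs (the paper uses exactly three: $(w^*,v^*)$, $(\top,\top')$, $(v_{L'},v_S)$) so the feedback arc set stays $\Oh(1)$. The one concrete detail you flag as ``delicate bookkeeping'' but do not pin down is precisely what the paper resolves with auxiliary vertices $v_S$, $v_L$, $v_{L'}$: each potential shortcut arc (such as $(s_i,\top)$, $(\ell,\top)$, $(s_i,\ell')$) is made the \emph{unique} triangulated arc of a small induced directed cycle, so that \Cref{directed cycle} combined with \Cref{all nontriangulated} forces it to be unlabeled, leaving only the intended realizing path through the variable gadget.
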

\iflong
\begin{proof}
We again reduce from~\SAT.
Let~$\phi$ be an instance of~\SAT with variables~$X$ and clauses~$Y$.
We construct a directed graph~$D=(V,A)$ as follows:
For each variable~$x\in X$, $V$ contain the vertices~$x, x', a_x, b_x, \ol, {\ol}', a_{\ol}$, and~$b_{\ol}$.
For each clause~$y_i\in Y$, $V$ contains the vertex~$s_i$.
Additionally, $V$ contains the vertices~$v^*,w^*,\top,\top',v_S,v_L,v_{L'}$.
Next, we describe the arcs of~$D$.
For each variable~$x\in X$, $A$ contains the arcs
\begin{itemize}
\item $(v^*,x),(x,x'),(x',\ol),({\ol},{\ol}'),({\ol}',w^*)$,
\item $(x,a_x),(a_x,\top),(x,b_x),(b_x,x'),(x,\top),(x',\top),(x',v_{L'}),(v_{L},x)$, and
\item $({\ol},a_{\ol}),(a_{\ol},\top),({\ol},b_{\ol}),(b_{\ol},{\ol}'),({\ol},\top),({\ol}',\top),({\ol}',v_{L'}),(v_{L},{\ol})$.
\end{itemize}
Let~$y_i$ be a clause of~$Y$.
For each literal~$\ell$ of~$y_i$, $A$ contains the arcs~$(s_i,\ell),(s_i,\ell')$, and~$(s_i,b_\ell)$.
Additionally, $A$ contains the arcs~$(s_i,\top)$ and~$(v_S,s_i)$.
Finally, $A$ contains the arcs~$(w^*,v^*), (\top,\top'), (\top',v_S), (\top',v_{L})$, and~$(v_{L'},v_{S})$.
This completes the construction of~$D$.

\begin{claim}
The set of arcs~$\{(w^*,v^*), (\top,\top'), (v_{L'},v_S)\}$ is a feedback arc set of~$D$.
\end{claim}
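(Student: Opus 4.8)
The claim is that deleting the three arcs $(w^*,v^*)$, $(\top,\top')$, and $(v_{L'},v_S)$ from $D$ makes it acyclic. The plan is to exhibit a topological ordering of $D - \{(w^*,v^*),(\top,\top'),(v_{L'},v_S)\}$; if such an ordering exists the graph is a DAG and the three arcs form a feedback arc set. I would build the ordering in ``layers'' reflecting the way information flows through the gadget: the source vertices on one side, then the variable vertices, then the top-of-clause machinery, and finally $\top'$, $v_S$, $v_{L}$, $v_{L'}$ at the end, with $v^*$ and $w^*$ placed so that every remaining arc points forward.

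First I would inspect the out-neighborhoods to identify the natural ``early'' vertices. After removing $(v_{L'},v_S)$, the vertex $v_{L'}$ has no outgoing arc at all, and after removing $(\top,\top')$, the vertex $\top$ has no outgoing arc either; similarly $w^*$ loses its only outgoing arc $(w^*,v^*)$. I would then propose concretely the ordering
\[
v_{L},\; v_S,\; (s_i)_{i},\; v^*,\; (x,\,x',\,a_x,\,b_x,\,\ol,\,{\ol}',\,a_{\ol},\,b_{\ol})_{x\in X},\; w^*,\; \top',\; \top,\; v_{L'},
\]
or a suitable refinement of it, and check that every arc of $D$ other than the three removed ones goes left-to-right. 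The arcs to verify split into a handful of classes: the ``chain'' arcs $(v^*,x),(x,x'),(x',\ol),(\ol,{\ol}'),({\ol}',w^*)$ inside each variable block; the arcs from $x,x',\ol,{\ol}'$ into $a_\ell,b_\ell,\top$; the clause arcs $(s_i,\ell),(s_i,\ell'),(s_i,b_\ell),(s_i,\top)$ and $(v_S,s_i)$; the arcs $(v_L,x),(v_L,\ol),(x',v_{L'}),({\ol}',v_{L'})$; and the ``top'' arcs $(\top',v_S),(\top',v_L),(v_{L'},v_S),(a_\ell,\top),(b_\ell,\cdot)$. For each class I would point out in one line why it respects the order (e.g. $v_S$ precedes all $s_i$, which precede all literal vertices $\ell$; every $a_\ell$ and $b_\ell$ sits after its literal block but before $\top$; $v_L$ is the very first vertex so all arcs out of it go forward; $v_{L'}$ is the very last, so all arcs into it go forward; $\top'$ precedes $\top$ and $v_S$ only via the removed arc $(\top,\top')$, so after removal $\top'$ can go after $w^*$ and before $\top$ and $v_{L'}$).

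The only mild subtlety — and the main thing to get right — is the placement of $v_S$, $v_L$, $v_{L'}$, $v^*$, $w^*$, $\top'$, $\top$ relative to each other and to the clause/variable blocks, since these ``global'' vertices interact with every variable or clause. In particular one must check there is no cycle among $\{v_S, v_L, v_{L'}, \top, \top', v^*, w^*\}$ together with short paths through a single variable or clause block; the removal of $(v_{L'},v_S)$ breaks the $v_S \to s_i \to \cdots$-free cycle $v_{L'}\to v_S$, the removal of $(\top,\top')$ breaks any cycle routed $\top'\to v_S\to s_i\to\top\to\top'$ or $\top'\to v_L\to x\to x'\to\top\to\top'$, and the removal of $(w^*,v^*)$ breaks the variable-internal loop $v^*\to x\to x'\to\ol\to{\ol}'\to w^*\to v^*$. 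Once these three ``back edges'' are gone, I expect every remaining directed path to be monotone in the proposed layer order, so no directed cycle can survive. I would conclude by stating that since the ordering is a valid topological sort of $D$ minus the three arcs, that three-element set is a feedback arc set, establishing the claim.
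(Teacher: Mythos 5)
Your plan — exhibit an explicit topological ordering of $D$ minus the three arcs — is a perfectly valid route to the claim. It is in fact a concrete cousin of the paper's own argument, which instead iteratively peels off sources and sinks: in $D'$ the vertices $w^*,\top,v_{L'}$ are sinks and $v^*,\top'$ are sources; once $\top'$ and $v_{L'}$ are removed $v_S$ becomes a source, then the $s_i$ become sources, and what is left decomposes into one DAG per variable. Either method works; yours just front-loads all the bookkeeping into a single list. The problem is that the specific list you wrote is not a topological ordering, and the error is not a small permutation that "refinement" would fix.

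There are two genuine backward arcs. First, the arcs $(\top',v_S)$ and $(\top',v_L)$ are not in the removed set, so any valid ordering must place $\top'$ before both $v_S$ and $v_L$. You place $\top'$ after $w^*$, near the end, while $v_L$ and $v_S$ are the first two vertices, so both arcs point backward. Your remark that ``$\top'$ precedes $\top$ and $v_S$ only via the removed arc $(\top,\top')$'' confuses the incoming arc $(\top,\top')$ with the outgoing arcs $(\top',v_S)$ and $(\top',v_L)$, which survive the deletion and are the real constraints on $\top'$. Second, within each variable block the arc $(b_x,x')$ forces $b_x$ before $x'$, but your block order $x,\,x',\,a_x,\,b_x,\ldots$ places $b_x$ after $x'$ (and likewise $b_{\ol}$ after ${\ol}'$). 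A correct ordering is
\[
\top',\; v_S,\; v_L,\; (s_i)_{i},\; v^*,\; (x,\,a_x,\,b_x,\,x',\,\ol,\,a_{\ol},\,b_{\ol},\,{\ol}')_{x\in X},\; w^*,\; \top,\; v_{L'},
\]
and running through the arc classes you list against this corrected order does close the proof. But as written, moving $\top'$ from one end of the order to the other is not a refinement of your ordering; it reflects a missed constraint, so the argument has a real gap until that is fixed.
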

\begin{claimproof}
Consider the graph~$D'$ obtained by removing these three arcs from~$D$.
By construction, for each~$(u,v)\in \{(w^*,v^*), (\top,\top'), (v_{L'},v_S)\}$, $(u,v)$ is the only arc going out of~$u$ in~$D$ and $(u,v)$ is the only arc going into~$v$ in~$D$.
This implies that~$u$ is a sink in~$D'$ and~$v$ is a source in~$D'$.
To show that~$D'$ is a DAG it thus remains to show that~$D'':=D'-\{w^*,v^*,\top,\top',v_{L'},v_S\}$ is a DAG.
Note that for each clause~$y_i\in Y$, $(v_S,s_i)$ is the only arc going into~$s_i$ in~$D$.
Hence, each vertex of~$Q_S:= \{s_i\mid y_i\in Y\}$ is a source in~$D''$.
To show that~$D''$ is a DAG it thus remains to show that~$D''':=D''-Q_S$ is a DAG.
By construction, $D'''$ consists of one connected component~$V_x$ for each variable~$x\in X$, each of which is a DAG itself. 
Consequently, $D'$ is a DAG.
\end{claimproof}

$(\Rightarrow)$
Let~$\pi$ be a satisfying assignment for~$\phi$.
We show that there is a directed happy temporal graph~$\mg:= (D',\lambda)$ with reachability graph~$D$.

We set~$\lambda((w^*,v^*)) = 8, \lambda((\top,\top')) = 13, \lambda((\top',v_L)) = \lambda((\top',v_S)) = 12$, and~$\lambda((v_{L'},v_{S})) = 4$.
For each variable~$x\in X$, we set~$\lambda((v^*,x)) = 7$ and~$\lambda((\ol',w^*)) = 9$.
Moreover, for each~$\ell \in \{x,\ol\}$, we set~
$\lambda((\ell,a_\ell)) = 1, \lambda((a_\ell,\top)) =   \lambda((\ell',\top)) = 14, \lambda((\ell,b_\ell)) = 3, \lambda((b_\ell,\ell')) = 15, \lambda((\ell',v_{L'})) = 5$, and~$\lambda((v_{L},\ell)) = 11$.
If~$\pi(x) = \texttt{true}$, we set~$\lambda((x,x')) = 6$ and~$\lambda((x',\ol)) = 5$.
Otherwise, that is, if~$\pi(x) = \texttt{false}$, we set~$\lambda((x',\ol)) = 11$ and~$\lambda((\ol,\ol')) = 10$.
Finally, for each clause~$y_i\in Y$, we set~$\lambda((v_S,s_i)) = 3$.
Moreover, for each literal~$\ell$ of~$y_i$, we set~$\lambda((s_i,\ell)) = 2$.

All other arcs receive no label.
This completes the construction of~$\lambda$.

Note that~$\lambda$ is happy. We now show that the reachability graph of~$\mg$ equals~$D$.
To show this, it suffices to show that (i)~for each arc~$(u,v)\in A$ that receives no label under~$\lambda$, there is a temporal~$(u,v)$-path under~$\lambda$, and (ii)~for each non-arc~$(u,v)\notin A$, there is no temporal~$(u,v)$-path under~$\lambda$.

First, we show that all arcs of~$D$ are realized.
Let~$(u,v)$ be an arc of~$D$ that receives no label under~$\lambda$.
By construction of~$\lambda$, $(u,v)$ has one of four types:
\begin{itemize}
\item there is some variable~$x\in X$ and some~$\ell\in \{x,\ol\}$, such that~$u= \ell$ and~$v= \ell'$,
\item there is some variable~$x\in X$ and some~$\ell\in \{x,\ol\}$, such that~$u= \ell$ and~$v= \top$,
\item there is some clause~$y_i\in Y$ and some literal~$\ell$ contained in~$y_i$, such that~$u= s_i$ and~$v \in \{\ell', b_\ell\}$, or
\item there is some clause~$y_i\in Y$, such that~$u= s_i$ and~$v =  \top$.
\end{itemize}

If~$(u,v)$ is of the first type, the arc is realized by the temporal path~$(\ell,b_\ell,\ell')$ with labels~$(3,15)$.
If~$(u,v)$ is of the second type, the arc is realized by the temporal path~$(\ell,a_\ell,\top)$ with labels~$(1,14)$.
If~$(u,v)$ is of the third type, the arc is realized by the temporal path~$(s_i,\ell,b_\ell,\ell')$ with labels~$(2,3,15)$ (or its subpath).
Hence, consider the fourth type and let~$y_i\in Y$.
Since~$\pi$ satisfies~$\phi$, there is some literal~$\ell$ of~$y_i$, such that~$\ell$ is satisfied by~$\pi$.
This implies that the arc~$(\ell,\ell')$ received a label~$\alpha$ under~$\lambda$.
By definition, $\alpha \in \{6,10\}$.
Hence, the arc~$(s_i,\top)$ is realized by the path~$(s_i,\ell,\ell',\top)$ with labels~$(2,\alpha,14)$.
Consequently, each arc of~$D$ is realized.

Next, we show that no non-arc of~$D$ is realized.
To this end, we analyze the structure of temporal paths of length more than one under~$\lambda$.
Let~$P$ be a temporal path of length at least two under~$\lambda$ and let~$u$ and~$v$ be the endpoints of~$P$.
We show that~$(u,v)$ is an arc of~$D$.
By definition of~$\lambda$, for every vertex~$u \in \{\top,\top',v_S,v_L,v_{L'},v^*,w^*\} \cup \{s_i\mid y_i\in Y\}$, each arc going into~$u$ has a strictly larger label than each arc going out of~$u$.
Thus, $P$ contains none of these vertices as an intermediate vertex.
In other words, the set of intermediate vertices of~$P$ is a subset of~$\{\ell,\ell',a_\ell,b_\ell \mid x\in X, \ell \in \{x,\ol\}\}$.

Moreover, $P$ cannot traverse the arc~$(v^*,x)$ for any variable~$x\in X$, since~$v^*$ is not an intermediate vertex of~$P$ as each outgoing arc of~$x$ receives a strictly smaller label. 
Similarly, $P$ traverses neither the arc~$(x',\ol)$ nor the arc~$(\ol',w^*)$.
This implies that there is some variable~$x\in X$ and some literal~$\ell \in \{x,\ol\}$, such that~$P$ contains only vertices of~$Q_\ell = \{\ell,\ell',a_\ell,b_\ell,\top\} \cup S_\ell$, where~$S_\ell$ denotes the vertices of~$\{s_i\mid y_i\in Y, \ell \in y_i\}$.
By construction of~$D$, $D[Q_\ell]$ is a DAG with sources~$S_\ell$ and unique sink~$\top$.
Moreover, $b_\ell$ is the only vertex of~$Q_\ell$ that has no arc towards~$\top$.
Clearly, $P$ cannot be a temporal path from~$b_\ell$ to~$\top$, since each arc going into~$\top$ has label~$13$ and the unique arc~$(b_\ell,\ell')$ going out of~$b_\ell$ receives label~$14$.
Hence, if~$P$ is an~$(u,v)$-path with~$v = \top$, then~$(u,v)$ is an arc of~$D$.
Similarly, for each vertex~$s_i \in S_\ell$, $a_\ell$ is the only vertex of~$Q_\ell \setminus S_\ell$ for which~$(s_i,a_\ell)$ is not an arc of~$D$.
Clearly, $P$ cannot be a temporal path from~$s_i$ to~$a_\ell$, since each arc going out of~$s_i$ has label~$2$ and the unique arc~$(\ell,a_\ell)$ going into~$a_\ell$ receives label~$1$.
Hence, if~$P$ is a~$(u,v)$-path with~$u\in S_i$, then~$(u,v)$ is an arc of~$D$.

Thus, in the following, we only have to consider the case where~$P$ contains only vertices of~$\{\ell,\ell',a_\ell,b_\ell\}$.
The only path of length at least two in~$D[\{\ell,\ell',a_\ell,b_\ell\}]$ is the path~$(\ell,b_\ell,\ell')$.
Thus, the only remaining option for~$P$ is to be this unique path.
Since~$(\ell,\ell')$ is an arc of~$D$, there is an arc between the endpoints of~$P$ in~$D$.
Consequently, no non-arc of~$D$ is realized, which implies that~$\lambda$ realizes~$D$.

$(\Leftarrow)$
Let~$\mg:= (D,\lambda\colon A \to 2^{\mathbb{N}})$ be a directed temporal graph with (i)~proper labeling~$\lambda$ and strict reachability graph equal to~$D$ or (ii)~non-strict reachability graph equal to~$D$.
We show that there is a satisfying assignment~$\pi$ for~$\phi$.
For each variable~$x\in X$, we set \[\pi(x) := \begin{cases}\texttt{true} & \lambda((x,x')) \neq \emptyset \text{ and}\\
\texttt{false} & \text{otherwise.} \end{cases}\]

In the following, we show that~$\pi$ satisfies~$\phi$.
To this end, we first show that~$\lambda((\ol,\ol')) = \emptyset$ if~$\pi(x) = \texttt{true}$, that is, if~$\lambda((x,x')) \neq \emptyset$.
This statement follows form~\Cref{directed cycle,all nontriangulated} and the fact that~$C:=(v^*,x,x',\ol,\ol',w^*,v^*)$ is an induced directed cycle in~$D$ for which~$(x,x')$ and~$(\ol,\ol')$ are the only triangulated arcs.
Hence, if~$\lambda((\ol,\ol')) \neq \emptyset$, then~$\pi(x) = \texttt{false}$.

Based on this observation, we now show that~$\pi$ satisfies~$\phi$.
Let~$y_i$ be a clause of~$Y$.
Since~$\mg$ realizes~$D$ and~$(s_i,\top)$ is an arc of~$D$, there is a temporal path~$P$ from~$s_i$ to~$\top$ in~$\mg$.
Recall that~$s_i$ has only the out-neighbor~$\top$ and the out-neighbors~$\{\ell, \ell', b_\ell\}$ for each literal~$\ell$ of~$y_i$.
Hence, the temporal path~$P$ only visits these vertices, since~$P$ is a dense path, that is, since there is an arc from~$s_i$ to each other vertex of~$P$. 
Moreover, $P$ cannot visit any vertex~$b_\ell$, since no such vertex has an arc towards~$\top$.
This implies that~$P$ is either the path~$(s_i, \top)$, or for some literal~$\ell$ of~$y_i$, $P \in \{(s_i, \ell, \top),(s_i, \ell', \top),(s_i, \ell,\ell',\top)\}$.
 
First, note that the direct arc~$(s_i,\top)$ does not receive a label under~$\lambda$.
This is due to~\Cref{directed cycle,all nontriangulated} and the fact that~$C:=(s_i,\top,\top',v_S,s_i)$ is an induced directed cycle in~$D$ for which~$(s_i,\top)$ is the only triangulated arc.
This implies that~$P \neq (s_i,\top)$.

Hence, there is a literal~$\ell$ of~$y_i$ such that~$P \in \{(s_i, \ell, \top),(s_i, \ell', \top),(s_i, \ell,\ell',\top)\}$.
Similarly to the above, $(\ell,\top)$ does not receive a label under~$\lambda$, since~$C:=(\ell,\top,\top',v_L,\ell)$ is an induced directed cycle in~$D$ for which~$(\ell,\top)$ is the only triangulated arc.
This implies that~$P \neq (s_i,\ell,\top)$.
Analogously, $(s_i,\ell')$ does not receive a label under~$\lambda$, since~$C:=(s_i,\ell',v_{L'},v_S,s_i)$ is an induced directed cycle in~$D$ for which~$(s_i,\ell')$ is the only triangulated arc.
This implies that~$P \neq (s_i,\ell',\top)$. 
As a consequence, $P = (s_i, \ell, \ell', \top)$, which implies that~$\lambda((\ell, \ell')) \neq \emptyset$.
If~$\ell = x$ for some variable~$x\in X$, then~$\pi(x) = \texttt{true}$ by definition of~$\pi$, which implies that clause~$y_i$ is satisfied by~$\pi$.
Otherwise, that is, if~$\ell = \ol$ for some variable~$x\in X$, then~$\lambda((x, x')) = \emptyset$, since~$\lambda((\ol, \ol')) \neq \emptyset$.
Hence~$\pi(x) = \texttt{false}$ by definition of~$\pi$, which implies that clause~$y_i$ is satisfied by~$\pi$.
In both cases, clause~$y_i$ is satisfied by~$\pi$.
Consequently, $\pi$ satisfies~$\phi$.

\subparagraph{ETH lower-bound.}
Note that~$D$ has~$\Oh(|\phi|)$  vertices and arcs.
Since~\SAT cannot be solved in $2^{o(|\phi|)} \cdot |\phi|^{\Oh(1)}$~time, unless the ETH fails~\cite{T84}, this implies the stated running-time lower-bound for the stated versions of~\DRGD.
\end{proof}
\fi 

This thus settles the complexity of each version of \DRGD under consideration.
\iflong Moreover, an FPT algorithm for any of these versions when parameterized by the size of a smallest feedback arc set is impossible, unless P = NP. \fi

\ifold

\todomi{shall we provide the following result or shall we skip it?}
We now show that, similar to the undirected versions, these problem versions are also NP-hard on input graphs of constant maximum degree.

\begin{theorem}\label{hardness directed}
\pro\str\DRGD, \happy\str\DRGD, and each version of~\nstr\DRGD is \NP-hard on graphs with a constant maximum degree.
\end{theorem}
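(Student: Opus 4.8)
\textbf{Proof plan for Theorem~\ref{hardness directed}.}
The plan is to mimic the degree-bounded reduction from \SAT used for \DRGD in \Cref{hardness directed fes}, but to replace the three ``global'' vertices that are responsible for unbounded degree, namely~$\top$ (and to a lesser extent~$v_S$, $v_L$, $v_{L'}$), by constant-degree gadgets. In the reduction of \Cref{hardness directed fes}, the only vertices of superconstant degree are~$\top$ (which collects arcs from every literal and every clause vertex) and the auxiliary sink/source vertices. The key point that makes the reduction work is captured entirely by \Cref{directed cycle} and \Cref{all nontriangulated}: certain arcs are \emph{forced} to carry a label (non-triangulated arcs) and certain arcs of induced directed cycles are forced to carry \emph{no} label. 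So the task is to re-engineer the ``$\top$-collecting'' part so that the same forcing statements go through while every vertex has degree~$\Oh(1)$.

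First I would replace the single terminal~$\top$ by a bounded-degree directed binary tree (or a directed path with side-branches) whose leaves are the in-points used by the literal- and clause-arcs, with all internal edges oriented toward a root~$\top'$; similarly replace each of~$v_S, v_L, v_{L'}$ by analogous bounded-degree ``distribution trees'' oriented away from their roots. To keep the forcing arguments intact, I would add the \emph{transitive closure} of these trees as dashed arcs (so that the trees are transitive tournaments on their vertex sets), exactly as in \Cref{easy parts}: a transitive subgraph can always be realized by a happy labeling that creates no long temporal paths, and crucially every arc of such a tree that is \emph{not} triangulated still behaves as a forced-label arc while the ``closing'' feedback arcs of the few relevant induced cycles remain non-triangulated and hence still forced to be unlabeled. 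Next I would re-create the small number of induced directed cycles used in the correctness proof of \Cref{hardness directed fes} (the cycles $(v^*,x,x',\ol,\ol',w^*,v^*)$, $(s_i,\top,\top',v_S,s_i)$, $(\ell,\top,\top',v_L,\ell)$, $(s_i,\ell',v_{L'},v_S,s_i)$) by routing them through \emph{specific} leaves/roots of the new trees; each such cycle can be made induced and of length~$\Oh(1)$, since the path of internal tree edges between the two relevant leaves can be ``short-circuited'' by a dedicated bounded-degree chain of its own, used by that cycle only. Then the exact same case analysis as in \Cref{hardness directed fes} shows that for each clause~$y_i$ the forced temporal path from~$s_i$ to~$\top$ must pass through some literal edge~$(\ell,\ell')$ carrying a label, which yields the satisfying assignment; conversely, the happy labeling constructed there extends to the new gadgets essentially verbatim, using small consecutive-label blocks on each bounded-degree tree so that no temporal path of length~$>1$ leaves a tree.

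I would then argue the degree bound: every variable gadget~$V_x$ already has constant degree; every clause vertex~$s_i$ has constant out-degree (three literals) and, after the replacement, constant in-degree (one distribution-tree leaf); every vertex of a distribution/collection tree has degree at most~$3$ by construction; the side-chains that realize the individual induced cycles have degree~$\Oh(1)$. Since \SAT instances already have constant degree and we only add~$\Oh(1)$ gadget vertices per variable and per clause plus~$\Oh(1)$ global-tree vertices per leaf, the whole graph~$D$ has maximum degree~$\Oh(1)$, and $|V|,|A|\in\Oh(|\phi|)$, so the ETH lower bound of \cite{T84} carries over as well. The correctness proof for \pro\DRGD and \happy\DRGD uses only \Cref{directed cycle} and \Cref{all nontriangulated}, which hold for all these variants, and for the \nstr\DRGD variants the very same argument applies because \Cref{directed cycle} is stated for the non-strict case too.

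The main obstacle I expect is keeping the relevant directed cycles \emph{induced} and of bounded length after exploding~$\top$ and the source vertices into trees: the dashed-arc transitive closures I add to make the trees realizable could accidentally triangulate a feedback arc (destroying the ``must be unlabeled'' forcing) or create spurious chords in one of the cycles (destroying ``induced''). Handling this carefully — by giving each of the~$\Oh(1)$ critical cycles its own private short-circuit chain that is \emph{not} part of any transitive-closure block, and by checking that no added dashed arc connects two vertices of such a chain other than consecutively — is the delicate part; everything else is a routine adaptation of the proof of \Cref{hardness directed fes}, so I would only sketch those parts and refer back to it for the bulk of the case analysis.
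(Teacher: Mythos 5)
The statement you are addressing appears only inside the paper's disabled \texttt{ifold} conditional, and its proof body there is an empty \texttt{todom} placeholder, so there is no paper proof to compare against. I can therefore only assess your plan on its own terms.

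You correctly identify that the obstacle to constant maximum degree in the reduction of \Cref{hardness directed fes} is the collector vertex~$\top$ (and, to a lesser extent, $v_S$, $v_L$, $v_{L'}$), but your proposed fix does not remove it. You suggest replacing~$\top$ by a bounded-degree directed tree with all arcs oriented toward a root~$\top'$, and then adding the transitive closure of that tree as dashed arcs so that the tree's reachability is spelled out in~$D$. This is self-defeating: the transitive closure of any rooted tree on $m$ vertices contains an arc from every non-root vertex to the root, so the root has in-degree $m-1$ in~$D$, which is linear in the number of clauses and literals and hence unbounded. The same is true of any vertex near the root. Whether you think of this as a transitive tournament or merely as the tree's transitive closure is immaterial; the bounded-degree requirement applies to the input graph $D$, not to the underlying temporal graph, and the transitive-closure arcs are exactly what is needed for the forcing arguments of \Cref{directed cycle} and \Cref{all nontriangulated} to apply to the tree, so you cannot drop them. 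In short, the tree replacement just moves the unbounded degree from the original $\top$ to the root of your tree.

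The approach that does work is to localize rather than fatten~$\top$: give each clause~$y_i$ its own constant-size collecting gadget~$\top_i$ (and per-clause copies of the auxiliary source/sink vertices used to build the induced cycles), and exploit the \SAT restriction that each variable occurs exactly twice positively and twice negatively, so that each literal vertex~$\ell$ needs arcs only toward the~$\Oh(1)$ collectors of the clauses containing~$\ell$. The induced-cycle structure of \Cref{hardness directed fes} can then be replicated clause by clause with constant-degree vertices throughout, and the combination of \Cref{directed cycle} and \Cref{all nontriangulated} forces the same satisfiability encoding as before. Note also a small notational mismatch: for proper and happy labelings there is no strict/non-strict distinction, so the statement should refer to \pro\DRGD and \happy\DRGD rather than \pro\str\DRGD and \happy\str\DRGD, matching the paper's usage in \Cref{hardness directed fes}.
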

\begin{proof}
\todom{}
\end{proof}
\fi

\section{Conclusion}  
We studied~\RGDlong and gave for both directed and undirected temporal graphs the complete picture for the classical complexity of all settings, answering this open problem posed by~\cite{DBLP:journals/tcs/CasteigtsCS24} and~\cite{D25}.
For~\URGD, we additionally showed that the problem can be solved in FPT-time for the feedback edge set number~$\fes$ of the solid graph.
As we showed, this parameter cannot be replaced by smaller parameters like feedback vertex set number or treedepth of the solid graph, unless FPT~=~W[2].

There are several directions for future work:
First, it would be interesting to see whether (some) versions of~\URGD admit a polynomial kernel for~$\fes$.
Another interesting task is to determine whether~\pro\URGD, \happy\URGD, or~\nstr\URGD admits an FPT algorithm for feedback vertex set or treedepth.
This is not excluded by our W[2]-hardness result, since that reduction only worked for~\any\str\URGD and~\simp\str\URGD. \iflong
It would also be interesting to see whether~\pro\URGD, \happy\URGD, or~\nstr\URGD can be solved in polynomial time on solid graphs that are triangle-free.\fi

Finally, one could analyze the parameterized complexity of~\DRGDlong with respect to directed graph parameters.

\bibliographystyle{plainurl}
\bibliography{temporal}

\end{document}